\crefname{appsec}{Appendix}{Appendices}
\theoremstyle{plain}
\newtheorem{thm}{Theorem}[section]
\newtheorem{theorem}[thm]{Theorem}
\newtheorem{lemma}[thm]{Lemma}
\newtheorem{corollary}[thm]{Corollary}
\newtheorem{clm}[thm]{Claim}
\newtheorem{fact}[thm]{Fact}
\theoremstyle{definition}
\newtheorem{definition}[thm]{Definition}
\newtheorem*{ass*}{Assumption}
\newtheorem*{assumption*}{Assumption}
\theoremstyle{remark}
\newtheorem{remark}[thm]{Remark}
\crefname{lem}{Lemma}{Lemmas}
\crefname{lemma}{Lemma}{Lemmas}
\crefname{thm}{Theorem}{Theorems}
\crefname{theorem}{Theorem}{Theorems}
\crefname{defn}{Definition}{Definitions}
\crefname{definition}{Definition}{Definitions}
\crefname{fact}{Fact}{Facts}
\crefname{fact}{Fact}{Facts}
\crefname{clm}{Claim}{Claims}
\crefname{claim}{Claim}{Claims}
\crefname{prop}{Proposition}{Propositions}
\crefname{proposition}{Proposition}{Propositions}
\crefname{algocf}{Algorithm}{Algorithms}
\newcommand{\E}{\mathbb{E}}
\newcommand{\Ent}{\mathrm{Ent}}
\newcommand{\norm}[1]{\left\lVert #1 \right\rVert}
\newcommand{\ceil}[1]{\left\lceil #1 \right\rceil}
\newcommand{\floor}[1]{\left\lfloor #1 \right\rfloor}
\newcommand{\poly}{\mathrm{poly}}
\newcommand{\eps}{\varepsilon}
\newcommand{\N}{\mathbb{N}}
\newcommand{\R}{\mathbb{R}}
\renewcommand{\AA}{\mathcal{A}}
\newcommand{\BB}{\mathcal{B}}
\newcommand{\CC}{\mathcal{C}}
\newcommand{\DD}{\mathcal{D}}
\newcommand{\FF}{\mathcal{F}}
\newcommand{\II}{\mathcal{I}}
\newcommand{\KK}{\mathcal{K}}
\newcommand{\MM}{\mathcal{M}}
\newcommand{\PP}{\mathcal{P}}
\newcommand{\QQ}{\mathcal{Q}}
\renewcommand{\SS}{\mathcal{S}}
\newcommand{\TT}{\mathcal{T}}
\newcommand{\XX}{\mathcal{X}}
\renewcommand{\QQ}{\mathcal{K}}
\newcommand{\uu}{u}
\renewcommand{\complement}{\mathsf{c}}
\newcommand{\bb}{b}
\newcommand{\kl}[2]{D_{\mathrm{KL}}\left(#1 \,\Vert\, #2\right)}
\newcommand{\tv}[2]{d_{\mathrm{TV}}\left(#1 , #2\right)}
\newcommand{\chitwo}[2]{\chi^2\left(#1 \,\Vert\, #2\right)}
\newcommand{\bkl}{\varphi_{\mathrm{kl}}}
\newcommand{\Ber}{\mathrm{Ber}}
\newcommand{\maxcut}{\mathsf{max\text{-}cut}}
\newcommand{\coorora}{\mathsf{Coordinate\ Oracle}}
\newcommand{\subcora}{\mathsf{Subcube\ Oracle}}
\newcommand{\stanora}{\mathsf{General\ Oracle}}
\newcommand{\pairora}{\mathsf{Pairwise\ Oracle}}
\newcommand{\idtest}{\textsc{ID-TEST}}
\newcommand{\alg}{\mathsf{Alg}}
\newcommand{\ora}[1]{\mathsf{Ora}[#1]}
\begin{document}

\title{Complexity of High-Dimensional Identity Testing with Coordinate Conditional Sampling}

\author{
Antonio Blanca\thanks{Department of Computer Science and Engineering, Pennsylvania State University. Email: ablanca@cse.psu.edu. Research supported in part by NSF grant CCF-1850443.}
\and
	Zongchen Chen\thanks{School of Computer Science, Georgia Institute of Technology. Email: chenzongchen@gatech.edu.} 
	\and 
	Daniel \v{S}tefankovi\v{c}\thanks{Department of Computer Science, University of Rochester. Email: stefanko@cs.rochester.edu. Research supported in part by NSF grant CCF-1563757.}
	\and Eric Vigoda\thanks{Department of Computer Science, University of California, Santa Barbara. Email: vigoda@ucsb.edu. Research supported in part by NSF grant CCF-2147094.}
}
\date{\today}

\maketitle

\begin{abstract}%
	We study the identity testing problem for high-dimensional distributions. 
	Given as input an explicit distribution~$\mu$, an $\varepsilon>0$, and 
	access to sampling oracle(s) for a hidden distribution~$\pi$, the goal in identity testing is to distinguish whether the two distributions $\mu$ and $\pi$ are identical or are at least $\eps$-far apart.
	When there is only access to full samples from the hidden distribution $\pi$,
	it is known that exponentially many samples (in the dimension) may be needed for identity testing, and hence previous works have studied identity testing with additional access to various ``conditional'' sampling oracles.
	We consider a significantly weaker conditional sampling oracle, which we call the $\coorora$,
	and provide a computational and statistical characterization of the identity testing problem in this new model.
	
	We prove that if an analytic property known as approximate tensorization of entropy holds for an $n$-dimensional visible distribution $\mu$,
	then there is an efficient identity testing algorithm for any hidden distribution~$\pi$ using $\widetilde{O}(n/\eps)$ queries to the $\coorora$.
	Approximate tensorization of entropy is a pertinent condition as recent works have established it for a large class of high-dimensional distributions. 
	We also prove a computational phase transition: 
	for a well-studied class of $n$-dimensional distributions, specifically
	sparse antiferromagnetic Ising models over $\{+1,-1\}^n$, we show that in the regime where approximate tensorization of entropy fails, there is no
	efficient identity testing algorithm unless $\mathsf{RP}=\mathsf{NP}$.
	We complement our results with 
	a matching $\Omega(n/\eps)$
	statistical lower bound for the sample complexity of identity testing in the $\coorora$ model.
\end{abstract}

\maketitle

\section{Introduction}

A fundamental problem in statistics and machine learning is the identity testing problem (also known as the goodness-of-fit problem).  Roughly speaking, we are explicitly given a visible distribution $\mu$ and oracle access to samples from an unknown/hidden distribution $\pi$; the goal is to determine if these distributions are identical using as few samples from $\pi$ as possible.  

The complexity of identity testing for general distributions is now well-understood; this includes conditions on the visible and hidden distributions which enable efficient identity testing; see~\cite{Canonne-survey,Canonne22} for a comprehensive survey.
An intriguing line of work considers a different perspective: what additional assumptions \emph{on the sampling oracle} for the hidden distribution are required to ensure efficient identity testing. We present tight results with more modest oracle assumptions than considered previously.

Let us begin with a formal definition of the classical identity testing framework.
Let $\XX$ be a finite state space of size $N = |\XX|$, and let $d(\cdot, \cdot)$ denote a metric or divergence between distributions over $\XX$; the standard choices for $d(\cdot, \cdot)$ are total variation distance (TV distance) or Kullback–Leibler divergence (KL divergence).
For a distribution $\mu$ over $\XX$ and a parameter $\eps > 0$, denote by $\idtest(d,\eps; \mu)$ the identity testing problem for $\mu$: given as input the full description of the visible distribution $\mu$, and given access to a sampling oracle for an unknown distribution $\pi$, our goal is to distinguish between the cases $\pi = \mu$ vs.~$d(\pi,\mu) \ge \eps$
with probability at least $2/3$.  

For a distribution $\mu$ over $\XX$, there are efficient identity testing algorithms with sample complexity $O(\sqrt{N}/\eps^2)$ which
matches, asymptotically, the information-theoretic lower bound; see~\cite{VV,Paninski} for landmark results and \cite{Chan2014,ADK15,VV,DK16,Goldreich16,DGKPP21} for other relevant works. 
(We recall that the sample or query complexity of an identity testing algorithm is the number of queries it sends to the sampling oracle.) 

In practice, data is often high-dimensional, 
which raises the question of whether identity testing can be solved more effectively for high-dimensional distributions;
this will be our focus.
To be more precise, let $\QQ = \{1,\dots,k\}$ be a label (spin/color) set and let $\XX = \QQ^n$ be a product space of dimension~$n$. 
We study the identity testing problem $\idtest(d,\eps; \mu)$ for $n$-dimensional distributions $\mu$ over $\XX$.

Identity testing for high-dimensional distributions has recently attracted some attention, see, e.g.,~\cite{DP17,DDK19,testing-colt,testing-colt2,CDKS20,BGKV21,BCY22+}. 
The focus is on visible distributions $\mu$ that have a $\poly(n)$ size description or parametrization; otherwise one could not hope to design efficient testing algorithms.
Such distributions include product distributions (including the uniform distribution), Bayesian nets, and undirected graphical models (also known as spin systems) among~others.

The goal is to design identity testing algorithms with $\poly(n)$ sample complexity and running times.
It is known, however, that identity testing may require a super-polynomial (in $n$) number of samples \cite{testing-colt,testing-colt2}.
(The algorithms for the general identity testing problem have sample complexity $\Omega(k^{n/2}/\varepsilon^2)$ in the high-dimensional setting since $|\XX| = k^n$.)

Consequently, in order to design efficient algorithms, there are two types of further conditions that one may attach to the identity testing problem. 
The first approach is to restrict the unknown distribution~$\pi$ to be in some particular class of distributions;  a natural example is to require that $\pi$ is from the same class as $\mu$.
For example, \cite{BGKV21} studies the setting where both $\mu$ and $\pi$ are product distributions, \cite{CDKS20,DP17} requires $\mu$ and $\pi$ to be Bayesian nets, \cite{DDK19} studies the problem when $\mu$ and $\pi$ are Ising models. More recently, \cite{BCY22+} considers the case where $\mu$ is a product distribution, and $\pi$ is a Bayesian net.
While such an approach leads to fruitful results for testing high-dimensional distributions, 
it is not ideal from a practical perspective, where $\pi$ can be, for example, a ``noisy'' version of $\mu$ and may not necessarily belong to a nice class of distributions.

An alternative approach to overcome the apparent intractability of identity testing in the high-dimensional setting is to assume access to stronger sampling oracles from the hidden distribution~$\pi$; specifically, access to conditional sampling oracles for $\pi$ 
(in addition to the sampling oracle for $\pi$). 
This approach for high-dimensional distributions is the focus of this paper. 

There are several types of conditional sampling oracles, and here we mention the most popular choices.
The first is the general conditional sampling oracle---see~\cite{CRS,CFGM,FJOPS}---which given any subset $\mathcal{X}'$ of the space $\XX$ generates a sample from the projection of $\pi$ to $\mathcal{X}'$; that is, the oracle returns an element $x$ from $\mathcal{X}'$ with probability $\pi(x)/\pi(\mathcal{X}')$.
This oracle is not well-suited for the high-dimensional setting because the query subset $\mathcal{X}'$ could be exponentially large in $n$, and thus one could not hope to formulate the queries to the oracle efficiently (unless restricted to a special class of subsets~$\mathcal{X}'$). 

The second is the pairwise conditional sampling oracle ($\pairora$) which takes a pair of configurations and generates a sample from the distribution restricted to these two choices: given $x, y \in \XX$ the oracle returns $x$ with probability $\pi(x)/(\pi(x)+\pi(y))$ and $y$ otherwise; see~\cite{CRS}. The queries for $\pairora$ can be easily formulated for high-dimensional distributions, and identity testing has been studied in this setting. Recently,~\cite{Nar} provided an identity testing algorithm for the $\pairora$ model with $\widetilde{O}(\sqrt{n}/\varepsilon^2)$ sample complexity and a matching statistical lower bound; the $\widetilde{O}$ notation hides poly-logarithmic factors in $n$ and $1/\varepsilon$.

The other conditional oracle previously studied in the high-dimensional setting is the subcube conditional sampling oracle ($\subcora$) introduced by 
Bhattacharyya and Chakraborty
\cite{BC} and also studied in \cite{CCKLW,CJLW}. 
A query to the $\subcora$ consists of a subset
$\Lambda \subseteq [n] = \{1,\dots,n\}$ of variables and a configuration $x \in \QQ^{\Lambda}$ on $\Lambda$. If $\pi(x) > 0$, the $\subcora$ returns a sample $x' \in \QQ^{[n] \setminus \Lambda}$ from the conditional distribution $\pi(\cdot \,|\, x)$ (see~Definition~\ref{def:subcora}).
For the $\subcora$, an identity testing algorithm using $\widetilde{O}(n^2/\eps^2)$ queries was given in~\cite{BC};  improved algorithms were presented for uniformity testing in~\cite{CCKLW} and for testing juntas in~\cite{CJLW}.

In this work, we study identity testing for high-dimensional distributions under
a weaker conditional sampling oracle, which we call the $\coorora$.
The $\coorora$ corresponds to the $\subcora$ restricted to query sets $\Lambda$ where $|\Lambda|=n-1$; that is, we fix the configuration at all but one coordinate and look at the conditional distribution at this particular coordinate given a fixed configuration on the remaining coordinates. Hence, access to the $\coorora$ is a much weaker assumption than access to the $\subcora$. 
{
	We also note that the $\subcora$ model can be significantly harder to simulate. For instance, for the classical ferromagnetic Ising model simulating the $\coorora$ is trivial, but sampling conditionally on arbitrary configurations, as required by the $\subcora$, is 
	computationally hard~\cite{GJ-Ising}.}

Access to the $\coorora$ is also a weaker assumption than access to the $\pairora$ in the following sense.
When $k=2$ and $\XX=\{0,1\}^n$, 
$\coorora$ access corresponds to $\pairora$ access restricted to pairs of configurations that differ in \emph{exactly one coordinate}.
When $k\ge 3$, one can simulate an $\delta$-approximate $\coorora$ with $\pairora$ access in $\poly(k,\log(1/\delta))$ time (or a perfect one with $\poly(k)$ expected time) using a Markov chain; see~Remark~\ref{rmk:pair-coord} for the details.
{
	In addition, as in the case of the $\subcora$, simulating the $\pairora$ can be computationally more demanding than simulating the $\coorora$. For example, 
	in the context of the ferromagnetic Ising model on an $n$-vertex bounded degree graphs, 
	a query to the $\coorora$ will require $O(1)$ random bits, but 
	queries to the $\pairora$ may require $\Omega(n)$ random bits.}

We provide a computational and statistical characterization of the identity testing problem in the $\coorora$ model.
Our focus is on imposing no conditions on the hidden distribution $\pi$, other than access to $\coorora$, and explore which conditions on the visible distribution $\mu$ are necessary and sufficient for identity testing.
We mention that the $\coorora$ oracle has already been implicitly used in \cite{CCKLW} for uniformity testing (i.e., the special case of testing whether $\pi$ is the uniform distribution).

\medskip
\noindent\textbf{Algorithmic results.\ }
For our algorithmic work we consider the identity testing problem
under KL divergence, which we denote by $\kl{\cdot}{\cdot}$ and is formally defined in Section \ref{sec:prelim}. 
From an algorithmic perspective, the choice of KL divergence is a natural one since, by Pinsker's inequality, 
a testing algorithm for $\idtest(\kl{\cdot}{\cdot},2\eps^2; \mu)$ yields one for $\idtest(\tv{\cdot}{\cdot},\eps; \mu)$ (i.e., for identity testing under TV distance) albeit with potentially sub-optimal sample complexity and running time; the reverse is not true in general.

We start by introducing a key analytic property for the visible distribution, known as {\em approximate tensorization of entropy}~\cite{CMT}, which we will show is a sufficient (and essentially also necessary) condition for efficient identity testing in the high-dimensional setting.
Approximate tensorization of entropy roughly states that the entropy of a distribution is bounded by the sum of the average conditional entropies at each coordinate.  

\begin{definition}[Approximate Tensorization of Entropy]
	\label{def:AT}
	A distribution $\mu$ fully supported on $\KK^n$ satisfies \emph{approximate tensorization of entropy} with constant $C$ if 
	for any distribution $\pi$ over~$\QQ^n$: 
	\begin{equation}
	\label{eqn:AT-defn}
	\kl{\pi}{\mu} \le C \sum\nolimits_{i=1}^n \E_{x \sim \pi_{n \setminus i}} \Big[ \kl{\pi_i(\cdot \mid x)}{\mu_i(\cdot \mid x)} \Big], 
	\end{equation}
	where 
	$\pi_{n \setminus i}(\cdot)$ denotes the marginal distribution of $\pi$ on $[n] \setminus \{i\}$, 
	and $\pi_i(\cdot \mid x)$ and $\mu_i(\cdot \mid x)$ denote the marginals of $\pi$ and $\mu$, respectively, on the $i$-th coordinate conditional on $x$.
\end{definition}

\noindent
The constant $C$ achieves the minimum $C=1$ when $\mu$ is a product distribution. More details about approximate tensorization and equivalent formulations are provided in~Section \ref{sec:AT-defn}. 

Approximate tensorization of entropy is known to imply optimal mixing times of single-site update Markov chains, known as the Gibbs sampler or Glauber dynamics~\cite{Cesi,CLV-STOC21}. It is also used to establish modified log-Sobolev inequalities and the concentration of Lipschitz functions under the distribution~\cite{BGmlsi,MTbook}. 

There are a plethora of recent results establishing approximate tensorization in a wide variety of settings. In particular, \cite{CLV20} showed that the spectral independence condition introduced by \cite{ALO20} implies approximate tensorization of entropy for 
sparse undirected graphical models (i.e., spin systems on bounded degree graphs). Furthermore, recent works showed that spectral independence (and hence approximate tensorization) is implied by certain forms of correlation decay~\cite{CLV20,CGSV21,FGYZ21}, path coupling for local Markov chains~\cite{BCCPSV22,Liu}, and the stability of the partition function~\cite{CLV-FOCS21}.  
As such, approximate tensorization is now known to hold with constant $C = O(1)$ (independent of $n$) for a variety of high-dimensional distributions; see, e.g.,~\cite{CLV-FOCS21,BCCPSV22,Liu,chen2022algorithms,galanis2022fast,friedrich2022spectral}.

We show that approximate tensorization of the visible distribution $\mu$ yields an efficient identity testing algorithm, 
provided access to the $\coorora$ and the $\stanora$ for the hidden distribution~$\pi$. 
Access to the $\stanora$ oracle (i.e., to independent full samples from $\pi$) is a standard assumption for testing under conditional sampling oracles.
In particular, the $\stanora$ corresponds to $\subcora$ restricted to $\Lambda=\emptyset$,
so access to the $\subcora$ implies access to the $\stanora$, and previous work under the $\pairora$ assumes access to the $\stanora$ as well. 

For our algorithmic result, we have four additional basic assumptions on the visible distribution~$\mu$. Specifically, we require that:
\begin{enumerate}[(i)]
	\setlength{\itemsep}{0pt}
	\item $\mu$ has a description (parametrization) of $\poly(n)$ size;
	\item the $\coorora$ can be implemented efficiently for the \emph{visible} distribution $\mu$; 
	\item $\mu$ is $\eta$-{\emph balanced}: there is a lower bound $\eta$ so that the conditional probability of any label $a\in \QQ$ at any coordinate $i$, fixing any configuration on $[n] \setminus \{i\}$, is at least $\eta$ (see Section~\ref{def:balanced});
	\item $\mu$ is fully supported on $\QQ^n$. 
\end{enumerate}

\noindent
We discuss these assumptions in detail below (see Remark~\ref{rmk:assumptions}). Our algorithmic result for the $\coorora$ model follows.

\begin{theorem}
	\label{thm:alg-main}
	Given a distribution $\mu$ over $\XX = \QQ^n$ satisfying (i)-(iv) and \emph{Approximate Tensorization} with constant $C$, there is a testing algorithm 
	for $\idtest(\kl{\cdot}{\cdot},\eps; \mu)$
	with access to the $\coorora$ and $\stanora$ with $\widetilde{O}(n/\eps)$ sample complexity and polynomial running time.
\end{theorem}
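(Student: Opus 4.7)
The plan is to use Approximate Tensorization (AT) to reduce the $n$-dimensional testing problem to estimating an average over random local conditional divergences, each of which can be probed by a single $\coorora$ query. Concretely, by \cref{def:AT}, under the alternative $\kl{\pi}{\mu} \ge \eps$ we have
\[
\E_{i \in [n],\, x \sim \pi}\!\left[\kl{\pi_i(\cdot \mid x_{[n] \setminus \{i\}})}{\mu_i(\cdot \mid x_{[n] \setminus \{i\}})}\right] \;\ge\; \frac{\eps}{C n},
\]
whereas the same expectation vanishes under the null $\pi = \mu$. Hence it suffices to estimate this average to additive accuracy $\Theta(\eps/n)$.

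The tester runs for $T$ trials; in each trial $t$ it draws $x^{(t)} \sim \pi$ via $\stanora$, picks a uniformly random $i_t \in [n]$, and issues a $\coorora$ query at the fixed assignment $x^{(t)}_{[n] \setminus \{i_t\}}$ to obtain $a^{(t)} \sim \pi_{i_t}(\cdot \mid x^{(t)}_{[n] \setminus \{i_t\}})$. Crucially, $x^{(t)}_{i_t}$ is itself an independent sample from the same local conditional given $x^{(t)}_{[n]\setminus\{i_t\}}$, so each trial yields two i.i.d.\ samples from the hidden $\pi_{i_t}(\cdot \mid \cdot)$ while $\mu_{i_t}(\cdot \mid \cdot)$ is known explicitly by the assumption on $\mu$. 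A natural collision-based chi-squared estimator is
\[
Z_t \;=\; \frac{\one\!\left[x^{(t)}_{i_t} = a^{(t)}\right]}{\mu_{i_t}\!\left(a^{(t)} \mid x^{(t)}_{[n]\setminus\{i_t\}}\right)} \;-\; 1,
\]
which satisfies $\E[Z_t \mid x^{(t)}, i_t] = \chitwo{\pi_{i_t}(\cdot\mid x^{(t)}_{[n]\setminus\{i_t\}})}{\mu_{i_t}(\cdot\mid x^{(t)}_{[n]\setminus\{i_t\}})}$; combining the standard inequality $\chi^2 \ge D_{\mathrm{KL}}$ with AT then gives unconditionally $\E[Z_t] \ge \kl{\pi}{\mu}/(Cn) \ge \eps/(Cn)$ under the alternative and $\E[Z_t] = 0$ under the null. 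The test accepts ``$\pi = \mu$'' iff $\bar Z := T^{-1}\sum_{t=1}^T Z_t < \eps/(2Cn)$.

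The $\eta$-balanced assumption supplies $|Z_t| \le 1/\eta$ almost surely, which is the key boundedness ingredient needed for Bernstein-type tails. The principal hurdle is justifying $T = \widetilde{O}(n/\eps)$: a naive Chebyshev/second-moment bound only yields $T = O(n^2/\eps^2)$, because the per-trial variance of the raw collision estimator above is $\Theta(1)$ and does not shrink with the target gap $\eps/n$. Reaching $\widetilde{O}(n/\eps)$ requires operating in the regime of Bernstein's inequality where the linear (range) term dominates the variance term, which typically demands either a refined estimator whose conditional variance scales with its conditional mean (for example a squared-difference or Hellinger-type variant in place of the collision chi-squared), or a careful structural bound that controls the variance in terms of the local divergence $\kl{\pi_{i_t}(\cdot\mid \cdot)}{\mu_{i_t}(\cdot \mid \cdot)}$ itself; in either case the $\eta$-balanced hypothesis plays a crucial role by keeping the estimator bounded and the relevant conditional probabilities bounded away from $0$. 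This variance/concentration bookkeeping is, I expect, the main technical obstacle. Once it is in place, the algorithm issues $T$ $\stanora$ queries and $T$ $\coorora$ queries for a total of $\widetilde{O}(n/\eps)$ queries, and since each $Z_t$ is polynomial-time computable (by the assumption that $\mu$'s coordinate conditionals are efficiently computable), the overall running time is polynomial, as required.
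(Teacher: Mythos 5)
Your reduction via approximate tensorization to the quantity $\E_{(i,x)}[\kl{\pi_i(\cdot\mid x)}{\mu_i(\cdot\mid x)}]$ and your observation that each $\coorora$ query gives (together with the $i$-th coordinate of the $\stanora$ sample) two i.i.d.\ draws from the local conditional are both correct, and match the paper's starting point. However, the concentration step you flag as ``the main technical obstacle'' is not a bookkeeping detail---it is the crux of the theorem, and your plan for it does not work.

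The single-trial collision estimator $Z_t=\one[x_{i_t}^{(t)}=a^{(t)}]/\mu_{i_t}(a^{(t)}\mid\cdot)-1$ has \emph{null-case} per-trial variance $\Var[Z_t]=k-1=\Theta(1)$: the $\coorora$ output is a Bernoulli-like sample, and one collision per trial cannot have variance shrinking with the target gap $\eps/(Cn)$. Chebyshev then needs $T=\Omega(n^2/\eps^2)$, and Bernstein's inequality does not rescue this: with range $R=1/\eta$ and deviation threshold $t=\Theta(\eps/(Cn))$, the variance term $\sigma^2=\Theta(1)$ dominates $Rt=\Theta(\eps/(\eta n))$ throughout the non-trivial parameter regime, so the exponent $Tt^2/(2\sigma^2+2Rt/3)$ is $\Theta(T\eps^2/n^2)$ and only becomes $\Omega(1)$ when $T=\Omega(n^2/\eps^2)$. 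Your hope for ``a refined estimator whose conditional variance scales with its conditional mean'' cannot be realized in the one-sample-per-pair regime either: under the null the conditional mean of any unbiased $\chi^2$- or squared-Hellinger-type statistic is $0$, so variance scaling with the mean would force the per-pair variance to be $0$, contradicting the fact that a single $\coorora$ query is a genuine random draw. In short, there is no estimator of the per-coordinate divergence with $O(1)$ samples per pair whose null variance is $o(1)$, so direct averaging cannot reach $\widetilde{O}(n/\eps)$.

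The paper circumvents exactly this by \emph{not} estimating the average. It invokes a reverse-Markov/levels lemma (\cref{lem:reverse-Markov}): if $\E_{(i,x)}[\kl{p^x_i}{q^x_i}]\ge\eps'$ and each term is bounded by $\ln(1/\eta)$, then for some level $\ell\le L=O(\log(n/\eps))$, a $1/(2^\ell(L+1))$ fraction of pairs has $\kl{p^x_i}{q^x_i}\gtrsim 2^\ell\eps'$. The algorithm then (i) samples $T_\ell=O(2^\ell L)$ pairs $(i,x)$ via $\stanora$ to find a heavy pair at level $\ell$, and (ii) for each candidate pair runs a dedicated per-coordinate KL tester (\cref{lem:KL-id-test,lem:Ber-KL-id-test}) making $\widetilde{O}(1/(2^\ell\eps'))$ adaptive $\coorora$ queries \emph{at that fixed pinning}. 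The two costs multiply to $\widetilde{O}(L/\eps')=\widetilde{O}(Cn/\eps)$, uniformly over $\ell$, because finding a rare heavy pair requires more trials but each such pair is cheaper to test (its divergence is larger). This adaptive query reuse on a single pinning---querying $\coorora$ many times with the \emph{same} $(i,x)$---is the decisive mechanism your one-shot-per-trial proposal lacks, and without it the $\widetilde{O}(n/\eps)$ bound is out of reach.
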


\noindent   
We refer the reader to Theorem \ref{thm:alg-main-detailed} for a more precise theorem statement indicating the explicit dependence on $C$ and $\eta$ in the sample complexity.  See also Section~\ref{subsec:application} for applications of Theorem \ref{thm:alg-main} to several well-studied high-dimensional distributions.

We shall see in what follows that our algorithmic result for the $\coorora$ model in Theorem \ref{thm:alg-main} is tight, both statistically and computationally; that is, we establish a matching $\Omega(n/\eps)$ sample complexity lower bound and show that there is a class of high-dimensional distributions
where identity testing is computationally hard in exactly the same settings where approximate tensorization of entropy does not hold. 
(These results also hold for the easier problem of testing under TV; see Theorems~\ref{thm:comp-hardness} and~\ref{thm:lb-coor-tv}.)

A surprising feature of our algorithm is that it bypasses sampling from visible distribution $\mu$; it does not even require the concentration of any statistics under $\mu$. As in some of the previous algorithms for high-dimensional testing---e.g., those in~\cite{DDK19,CCKLW}---
our algorithm starts by ``localizing'' the testing problem (i.e., reducing it 
to a one-dimensional setting).
For this, we crucially use the Approximate Tensorization of entropy of the visible distribution.

We then consider the problem of testing general (one-dimensional) distributions under KL divergence.
It turns out that this problem has been largely overlooked in the literature (the aforementioned known results 
for identity testing are all under TV distance). This is likely because there are pairs of distributions with infinite KL divergence but arbitrarily small TV distance, and so testing under KL divergence is considered unsolvable in a worst-case sense;~see \cite{DKW18}.
However, we can aim for algorithms with sample complexities that depend on the visible distribution; i.e., instance-specific bounds instead of worst-case ones, as done in~\cite{VV,DK16,BCG}. 

We provide here an algorithm for the classical identity testing problem (that is, only access to the $\stanora$ is assumed) for general distributions under KL divergence; the sample complexity of our algorithm depends on the visible distribution (see~Lemma \ref{lem:KL-id-intro}).
This is a key technical development towards establishing Theorem \ref{thm:alg-main}, and one we believe could be of independent interest.

\begin{remark}
	\label{rmk:assumptions}
	    We pause now to discuss assumptions (i)-(iv) in Theorem~\ref{thm:alg-main}. 
	As mentioned, condition (i) is necessary as otherwise one can not hope to design testing algorithms with $\poly(n)$ running times.
	Condition (ii) formally states that for any coordinate $i$, and any fixed assignment $\sigma$ for the other $n-1$ coordinates, we can compute the conditional distribution at $i$ given $\sigma$ in polynomial time.
	This is equivalent to requiring that a step of the Gibbs Sampler Markov chain for $\mu$ can be implemented efficiently; we believe
	(ii) is a mild~assumption.
	
	The notion of $\eta$-balancedness 
	in condition (iii) is a byproduct of working with KL-divergence and is closely related to other coordinate marginal conditions 
	that are required for efficient learning and sampling; specifically, under the assumption that $\mu$ has full support, it is equivalent to the notions $\delta$-biased 
	in~\cite{klivans2017learning} and of $b$-marginally bounded distributions from~\cite{CLV-FOCS21,BCCPSV22}.
	Finally, we note that condition (iv) is also a byproduct of working with KL divergence but can be relaxed; we could require instead that the support of $\pi$ is a subset of the support of $\mu$. We emphasize that these conditions are all for the visible distribution $\mu$, and that we impose no restrictions on the hidden distribution $\pi$ (other than oracle access).
	For example, the uniform distribution, product distributions, and undirected graphical models (e.g., the Ising and hard-core models) satisfy the conditions in Theorem~\ref{thm:alg-main}.
\end{remark}

\noindent\textbf{Computational hardness results.\ }
We show next that the algorithmic result in Theorem \ref{thm:alg-main} is computationally tight. 
In particular, for the antiferromagnetic Ising model (defined below), we establish the following computational phase transition for identity testing in the $\coorora$ model: (i) when approximate tensorization holds the problem can be solved efficiently, and (ii) when approximate tensorization does not hold, there is no polynomial-time testing algorithm unless $\mathsf{RP}=\mathsf{NP}$. 

We do not directly prove that identity testing is hard when approximate tensorization fails.
We show instead that the same strong correlations that cause approximate tensorization to fail, 
combined with the hardness of identifying the ground states of the model in the presence of strong correlations, imply the hardness of identity testing.
(The ground states are the most likely configurations in the model, and for the antiferromagnetic Ising model correspond to the maximum cuts of the graph.)

We introduce the Ising model next, which is the simplest and most well-studied example of an undirected graphical model.  
Given a graph $G=(V,E)$, the set of configurations of the model is denoted by $\Omega=\{+1,-1\}^V$.  For a real-valued parameter~$\beta$, 
the probability of a configuration $\sigma\in\Omega$ is given by the Gibbs or Boltzmann distribution:
\begin{equation}
\label{eq:ising:def}
\mu_{G,\beta}(\sigma) = \frac{1}{Z_{G,\beta}} \cdot \exp\Big(\beta\sum\nolimits_{\{v,w\}\in E} \sigma_v\sigma_w\Big),
\end{equation}
where the normalizing constant $Z_{G,\beta}$ is known as the partition function.
When $\beta>0$ the model is ferromagnetic/attractive and when~$\beta<0$ then the model is antiferromagnetic/repulsive; see~Section~\ref{subsec:sparse:ising} for a more general definition of the model.

The antiferromagnetic Ising model undergoes an intriguing computational phase transition at the threshold $\beta_c(d) = -\frac{1}{2} \ln (\frac{d}{d-2})$ for the parameter $\beta$.
This threshold corresponds to the so-called uniqueness/non-uniqueness phase transition on the infinite $d$-regular tree defined as follows.
Let $p^+_\ell$ denote the marginal probability that the root of the complete $d$-regular tree of depth $\ell$ (i.e., the tree where all internal vertices have degree $d$ and all leaves are on the same level)
has label/spin $+1$ when one fixes the leaves to the all $+1$ configuration. 
Similarly, let $p^-_\ell$ denote the analogous marginal probability for the root to be $+1$ when the leaves are instead fixed to the all $-1$ configuration. When $\beta<\beta_c(d)$, then in the limit as $\ell\rightarrow\infty$ the two marginals are the same, i.e., $\lim_{\ell\rightarrow\infty} p^+_\ell = \lim_{\ell\rightarrow\infty} p^-_\ell$; this is known as the {\em tree uniqueness region} since implies that there is a unique Gibbs distribution for the infinite $d$-regular tree.  On the other hand, when $\beta>\beta_c(d)$ then the limits are different; this is called the {\em tree non-uniqueness region} as there are multiple Gibbs distributions the infinite $d$-regular tree.
A key consequence for general graphs is the following rough statement: in graphs of maximum degree at most $d$, when $\beta < \beta_c(d)$ long-range correlations die off, whereas 
when $\beta > \beta_c(d)$ long-range correlations persist and marks the onset of hardness for several computational problems (e.g., counting and sampling) on graphs of degree at most $d$; see~\cite{sly2008uniqueness} for further details.

For constant $d \ge 3$ and all $0>\beta>\beta_c(d)$, 
the 
approximate sampling and counting (i.e., approximating the partition function $Z_{G,\beta}$) problems can be solved efficiently on any graph of maximum degree $d$~\cite{CLV-STOC21}. 
Moreover, approximate tensorization holds in this regime, and hence Theorem \ref{thm:alg-main} applies for identity testing in the $\coorora$ model.
In contrast, it is also known that when $\beta<\beta_c(d)$ there are no polynomial-time approximate sampling or counting algorithms unless $\mathsf{RP}=\mathsf{NP}$~\cite{SlySun,GSV:ising}.  

We establish here the computational hardness of identity testing in the $\coorora$ model in the same parameter regime $\beta < \beta_c(d)$, which thereby exhibits a similar computational phase transition for identity testing for the class of antiferromagnetic Ising models. 

\begin{theorem}
	\label{thm:comp-hardness}
	For sufficiently large constant $d \ge 3$ and constant $\beta<0$, 
	consider identity testing for the family of antiferromagnetic Ising models on $n$-vertex graphs of max degree $d$ with parameter~$\beta$. 
	\begin{enumerate}[(i)]
		\setlength{\itemsep}{0pt}
		\item If $\beta > {\beta_c(d)}$, then there exists a polynomial-time algorithm 
		for identity testing under KL divergence with access to the $\coorora$ and the $\stanora$ with sample complexity $\widetilde{O}(n/\eps)$;
		\item If $\beta < \beta_c(d)$, then there is no polynomial-time algorithm for identity testing under TV distance (and hence under KL divergence) with access to the $\coorora$ and the $\stanora$ unless $\mathsf{RP}=\mathsf{NP}$.
	\end{enumerate}
\end{theorem}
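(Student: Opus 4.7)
For the \textbf{Easy Regime} (i), the plan is to directly apply \cref{thm:alg-main} to the antiferromagnetic Ising distribution $\mu_{G,\beta}$ on graphs $G$ of maximum degree $d$. Three hypotheses must be verified. First, approximate tensorization of entropy with a constant $C=C(\beta,d)$ holds throughout the tree uniqueness regime $\beta>\beta_c(d)$ via the spectral independence framework of \cite{CLV-STOC21}. Second, $\mu_{G,\beta}$ is $\eta$-balanced: at any vertex $v$ and any boundary configuration, the conditional probability of either spin at $v$ takes the form $e^{\pm\beta s_v}/(e^{\beta s_v}+e^{-\beta s_v})$ with $|s_v|\le d$, so $\eta\ge(1+e^{2|\beta|d})^{-1}$, which is a positive constant. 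Third, the $\coorora$ for $\mu_{G,\beta}$ is trivially implementable in $O(d)$ time from this closed form, and $\mu_{G,\beta}$ is fully supported. Invoking \cref{thm:alg-main} then yields the claimed $\widetilde{O}(|V|/\eps)$ query complexity and polynomial runtime.

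For the \textbf{Hard Regime} (ii), the plan is a Turing reduction from an $\mathsf{NP}$-hard problem to identity testing with $\coorora+\stanora$ access. The natural starting point is the hardness of approximating the partition function of the antiferromagnetic Ising model on bounded-degree graphs in the non-uniqueness regime, established in \cite{SlySun,GSV:ising} through random $d$-regular bipartite graph gadgets and an ultimate reduction from MAX-CUT. Supposing toward contradiction the existence of a polynomial-time identity tester $\mathsf{Alg}$, the plan is to construct, from an instance of the hard problem, an explicit antiferromagnetic Ising visible distribution $\mu$ together with efficient implementations of $\coorora$ and $\stanora$ for a hidden distribution $\pi$, in such a way that the accept/reject output of $\mathsf{Alg}$ encodes the answer to the $\mathsf{NP}$-hard instance, yielding an $\mathsf{RP}$ algorithm for an $\mathsf{NP}$-hard problem.

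The central technical obstacle is the efficient implementation of the $\stanora$ for $\pi$ inside the reduction. The natural choice $\pi=\mu$ would amount to producing unconditional samples from an antiferromagnetic Ising model in the non-uniqueness regime, which is itself $\mathsf{NP}$-hard, so $\stanora$ access cannot be provided directly in this way. The plan is therefore to choose $\pi$ as a distribution that we \emph{can} sample from efficiently (and whose coordinate conditionals can also be computed), while arranging that $\tv{\pi}{\mu}$ crosses the threshold $\eps$ precisely when the underlying $\mathsf{NP}$-hard instance crosses its decision boundary. Natural candidates for $\pi$ include the restriction of $\mu$ to one of the dominant phases in the non-uniqueness regime, or the distribution induced by a short run of Glauber dynamics started from a planted configuration. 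In either case, verifying the two required distance estimates, namely $\pi=\mu$ on the yes side and $\tv{\pi}{\mu}\ge\eps$ on the no side, reduces to a careful analysis of the phase structure of $\mu_{G,\beta}$ below the uniqueness threshold, and this phase-structural analysis is the most delicate step of the argument.
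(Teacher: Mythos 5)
Your plan for the easy regime matches the paper: cite approximate tensorization from spectral independence (Lemma~\ref{lem:Ising-AT}), verify $\eta$-balance, and apply Theorem~\ref{thm:alg-main}. That part is fine.

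For the hard regime, your high-level plan (Turing reduction from MAX-CUT, observing that the real obstacle is sampling the hidden $\pi$) is correct and matches the paper's structure. However, there are several concrete gaps between your plan and a workable proof.

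First, the choice of $\pi$. You propose either the restriction of $\mu$ to a dominant phase or a short Glauber run from a planted configuration. Neither of these encodes the MAX-CUT decision into the distance $\tv{\pi}{\mu}$. The paper instead defines $\pi$ as $\mu_{\widehat{F},\beta}(\cdot \mid Y(\sigma) \in \Omega_0)$, where $\Omega_0$ is a carefully constructed set of ``phase vectors'' that equals the full set $\Omega$ exactly when $\maxcut(K) < k$, so that $\pi = \mu$ on the yes side and $\tv{\pi}{\mu} = 1 - o(1)$ on the no side (Lemma~\ref{lemma:tv:1}). The MAX-CUT threshold is built into $\Omega_0$ through an auxiliary graph $F$ obtained by attaching two high-degree vertices $s,t$ and many parallel edges to the MAX-CUT instance $K$; this forces the maximum cut of $F$ to be unique iff $\maxcut(K) < k$. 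Your single-phase restriction does not carry the bit of information the reduction needs.

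Second, you do not address the degree constraint. The graph $F$ above is a multigraph of unbounded degree, yet the theorem concerns graphs of maximum degree $d = O(1)$. The paper replaces each vertex of $F$ by the degree-reducing gadget of Sly (Section~\ref{subsec:vertexgadget}), a near-$d$-regular random bipartite graph with attached trees whose roots serve as ``ports''. Phase coexistence in the non-uniqueness regime on this gadget is what makes the phase vector of a configuration on $\widehat{F}$ a meaningful proxy for a cut of $F$. Without a degree-reduction step, the reduction establishes hardness only for unbounded-degree antiferromagnetic Ising, which is already known and does not match the theorem statement.

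Third, the sampler for $\pi$ is not ``natural'' in any easy sense. Even after restricting to $\Omega_0$, one must (a) show the dominant phase vectors carry essentially all mass (Lemma~\ref{lemma:dominant:phase}), (b) sample the port configuration conditional on a phase vector via rejection sampling against a product approximation (Lemmas~\ref{lemma:phase}--\ref{lemma:prod:close}), and (c) sample the interior of each gadget conditional on its ports. Step (c) is where the paper's key technical contribution lies: an FPRAS/sampler for $Z_{G_v,\beta}(\sigma_{P_v})$ (Lemma~\ref{lemma:sampling:couting}) that splits into two regimes -- Koehler--Lee--Risteski's low-rank Ising algorithm when $\beta \ge -1/(10\sqrt{d})$, and polymer-model methods (\`a la Jenssen--Keevash--Perkins) when $\beta \le -c\log d / d$ -- both relying on the expansion of the random bipartite gadget. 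Your plan waves at ``efficient implementation'' without identifying what technique would make that step go through; a short Glauber run from a planted configuration will not produce a distribution close to $\mu_{\widehat{F},\beta}(\cdot \mid \mathcal Y)$ with any provable guarantee in the non-uniqueness regime, since Glauber mixes slowly there precisely by design.

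In short, the reduction-from-MAX-CUT skeleton is right, but you are missing the specific encoding of the cut threshold into $\Omega_0$ (and the auxiliary graph $F$), the degree-reduction gadget, and the algorithmic machinery for sampling the hidden distribution, all three of which are load-bearing in the paper's proof.
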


\noindent
There are few analogous computational hardness results for identity testing; most lower-bound results in this setting are information-theoretic. The few examples appeared in~\cite{testing-colt,testing-colt2}, and 
these earlier results apply to the identity testing problem 
with access only to $\stanora$ and require both the hidden and visible models to be Ising models. 
In our current setting, the visible model is an Ising model, but the hidden is an arbitrary high-dimensional distribution. This is a significant conceptual difference, and the techniques from~\cite{testing-colt,testing-colt2} do not easily extend (see~Remark~\ref{rmk:diff}).

At a high level, as in~\cite{testing-colt}, we prove the hardness result in Theorem \ref{thm:comp-hardness}(ii) using a reduction from the maximum cut problem. That is, given a graph $G=(V,E)$, we construct a testing instance that if solved, would find the maximum cut of $G$.
In this approach, constructing a testing instance of small degree is a key challenge, and the ``degree reducing'' gadgets from~\cite{testing-colt,testing-colt2} no longer work in our setting.

Instead, we use a gadget introduced in \cite{Sly} to establish the computational hardness of approximate counting antiferromagnetic spin systems. 
An interesting technical aspect of our proof is that we are required to design polynomial-time sampling algorithms to simulate the hidden oracles.
This is difficult for us because sampling antiferromagnetic Ising models throughout the non-uniqueness regime, i.e., for all $\beta < \beta_c(d)$, is a notoriously hard problem (the problem is $\mathsf{NP}$-hard even for regular graphs).
We manage to design efficient sampling algorithms for our testing instances
using the recent algorithmic result of Koehler, Lee, and Risteski~\cite{KLR} that give an approximate sampling algorithm for Ising models when the edge interaction matrix has low rank, in conjunction with the sampling methods from~\cite{JKP} that use polymer models. 
A detailed overview of our reduction is given in Section~\ref{subsec:hardness}.
We mention that in the reductions in~\cite{testing-colt,testing-colt2}, sampling is trivial, since there it is assumed that $\beta \ll \beta_c(d)$ (specifically, $|\beta| d=\Omega(\log{n})$) and the instance is bipartite, so the Gibbs distribution concentrates in the configurations that align with the bi-partition; see Remark~\ref{rmk:diff} for a detailed account of the novelties in our reduction to establish Theorem \ref{thm:comp-hardness}(ii).

Finally, we mention that the hardness result in~Theorem \ref{thm:comp-hardness}(ii) extends to \emph{any} conditional sampling oracle that could be implemented in polynomial time for the antiferromagnetic Ising model, and thus applies to identity testing in the $\pairora$ model, complementing the algorithmic results from~\cite{CRS,Nar}. On the other hand, they do not extend to the $\subcora$ model
since we do not know how to simulate this oracle efficiently.

\medskip
\noindent\textbf{Statistical lower bounds.\ }
We present next an information-theoretic lower bound for identity testing problem in the $\coorora$ model that matches the sample complexity of our testing algorithm for this model (Theorem \ref{thm:alg-main}). 
Our lower bound is for the special case of uniformity testing under TV distance when $k=2$; 
i.e., the visible distribution is the uniform distribution over~$\{0,1\}^n$.

\begin{theorem}
	\label{thm:lb-coor-tv}
	Let $\mu$ be the uniform distribution over $\{0,1\}^n$.
	Then, any algorithm for the problem $\idtest(\tv{\cdot}{\cdot},\eps; \mu)$ with access to both the $\coorora$ and the $\stanora$ requires $\Omega(n/\eps^2)$~samples. 
\end{theorem}

\noindent
A direct corollary of this result is that solving the identity testing problem under KL divergence requires $\Omega(n/\eps)$ samples in the $\coorora$ model, thus showing that the sample complexity of our algorithm in Theorem \ref{thm:alg-main} is asymptotically tight (up to logarithmic in $n$ and $1/\varepsilon$ factors).

Our proof of Theorem \ref{thm:lb-coor-tv} follows a well-known strategy. We construct a family of ``bad'' distributions~$\BB$, each of which has TV distance at least~$\eps$ from the uniform distribution $\mu$ over $\{0,1\}^n$. The lower bounds follow from the fact that, for this carefully constructed family $\BB$,
one can not distinguish between sequences of independent samples from $\mu$ or from a distribution $\pi$ chosen uniformly at random from $\BB$. 
However, since our setting is adaptive, i.e., the choice of conditional queries of the testing algorithm may depend on the output to previous ones,
we need to consider \emph{query histories}, as in \cite{CRS,Nar}. 
(Roughly speaking, a query history is a sequence of queries that the testing algorithm sends the oracle along with the outputs from the oracle.)
To show that two query histories are indistinguishable (under $\mu$ or $\pi$), we use ideas from \cite{CRS} and the so-called \emph{hybrid argument} in cryptography; see~\cite{hybrid}. 

\medskip
\noindent\textbf{New results for the $\subcora$ model.\ } While the main focus of this work is the study of identity testing under weaker oracle assumptions (i.e., the $\coorora$ model),
we also provide new results for identity testing in the previously studied $\subcora$ model. Our first results for this model is an improved identity testing algorithm.

\begin{theorem}
	\label{thm:alg-subcora}
	Let $\mu$ be an $\eta$-balanced distribution fully supported on $\KK^n$ that has a $\poly(n)$ size parameterization. If we can compute the marginal probability at any coordinate conditioned on any partial configuration on any subset of coordinates, then there is an identity testing algorithm for $\idtest(\kl{\cdot}{\cdot},\eps; \mu)$ for the $\subcora$ model with $\widetilde{O}(n/\eps)$ sample complexity and running time that depends on the time it takes to compute the coordinate conditional marginals.
\end{theorem}

This algorithm, compared to the best-known algorithm for the $\subcora$ model in~\cite{BC}, additionally requires that $\mu$ is $\eta$-balanced,
but improves the sample complexity significantly from $\widetilde{O}(n^2/\eps^2)$ to $\widetilde{O}(n/\eps)$.
In addition, compared to Theorem \ref{thm:alg-main}, this result for the $\subcora$ does \emph{not} require approximate tensorization of entropy.
In fact, we point out several relevant settings where Theorem \ref{thm:alg-subcora} applies, but approximate tensorization fails (or we do not know if it holds) and hence Theorem \ref{thm:alg-main} does not apply: 
undirected graphical models (e.g., Ising model) on trees, Bayesian networks, mixtures of product distributions, and high-temperature Ising models and monomer-dimer models (i.e., weighted matchings) on arbitrary graphs. 
We remark that, similar to Theorem \ref{thm:alg-main}, Theorem \ref{thm:alg-subcora} also holds under the weaker assumption that the support of $\mu$ contains the support of $\pi$;
see Theorems~\ref{thm:alg-subc} and~\ref{thm:robust-alg-subc} for more details.

We also provide a matching lower bound for uniformity testing in the $\subcora$. 

\begin{theorem}
	\label{thm:lb-subc-kl}
	Let $\mu$ be the uniform distribution over $\{0,1\}^n$.
	Then, any algorithm for the problem $\idtest(\kl{\cdot}{\cdot},\eps; \mu)$ with access to the $\subcora$ requires $\Omega(n/\eps)$ samples. 
\end{theorem}

\noindent
Note that when $\mu$ is the uniform distribution, then $\eta=\Theta(1)$, so
Theorems \ref{thm:alg-subcora} and \ref{thm:lb-subc-kl} provide asymptotically matching sample complexity bounds for identity testing under KL divergence.
Interestingly, if one considers uniformity testing under TV distance and $\subcora$ access, then the recent work \cite{CCKLW} shows that $\widetilde{O}(\sqrt{n}/\eps^2)$ oracle queries suffice. 
As far as we know, it is unclear if the testing algorithm from \cite{CCKLW} with sublinear sample complexity can be used for other high-dimensional distributions, e.g., general product distributions.

Our last result concerns \emph{tolerant} identity testing in the $\subcora$ model.
In this problem, the goal is to distinguish between the cases $\kl{\pi}{\mu} \le \delta$ and $\kl{\pi}{\mu} \ge \delta+\eps$ for $\delta,\eps > 0$;
identity testing corresponds to $\delta = 0$.
We show that, under the same assumptions as in Theorem \ref{thm:alg-subcora}, one can estimate $\kl{\pi}{\mu}$ within additive error $\eps$ using $\widetilde{O}(n^4/\eps^4)$ queries to the $\subcora$. 

\begin{theorem}
	\label{thm:intro:kl-estimate-subc}
	Let $\mu$ be an $\eta$-balanced visible distribution fully supported on $\KK^n$ that has a $\poly(n)$ size parametrization. 
	Suppose we can compute the marginal probability for $\mu$ at any coordinate conditioned on any partial configuration on a subset of coordinates. 
	Given access to the $\subcora$ for a hidden distribution $\pi$, 
	there is an algorithm that for any $\eps > 0$ computes $\widehat{S}$ such that, with probability at least $2/3$, we have
	$
	| \widehat{S} - \kl{\pi}{\mu} | \le \eps.
	$
	The sample complexity of the algorithm is 
	$
	\widetilde{O}\left( {n^4}/{\eps^4} \right). 
	$
	The running time of the algorithm depends on the time it takes to compute the coordinate conditional marginals. 
\end{theorem}

\section{Overview of Techniques}

We present proof overviews for our main results in the $\coorora$ model:
our testing algorithm (Theorem \ref{thm:alg-main}), the computational hardness (Theorem \ref{thm:comp-hardness}(ii)), and the lower bound (Theorem \ref{thm:lb-coor-tv}).

\subsection{Algorithmic result for \texorpdfstring{$\coorora$}{Coordinate Oracle} model:~\texorpdfstring{Theorem \ref{thm:alg-main}}{Theorem 1.2}}
\label{subsec:ov-alg}

Suppose $\mu$ is the visible distribution and let $\pi$ be an arbitrary distribution over $\QQ^n$. 
If approximate tensorization of entropy holds for $\mu$ with constant $C$, the following holds:
\[
\kl{\pi}{\mu} \le Cn \; \E_{(i,x)} \left[ \kl{p^x_i}{q^x_i} \right],
\]
where $i \in [n]$ is a uniformly random coordinate, $x \in \QQ^{n\setminus i}$ is generated from the marginal distribution $\pi_{n \setminus i}$ of $\pi$ on $[n] \setminus \{i\}$, $p^x_i = \pi_i(\cdot \mid x)$, and $q^x_i = \mu_i(\cdot \mid x)$ (see~Definition~\ref{def:AT}).
Therefore, to distinguish between the cases $\pi = \mu$ and $\kl{\pi}{\mu}\ge \eps$, it suffices to distinguish between:
\[
p^x_i = q^x_i \text{~for all pairs $(i,x)$}
\quad \text{vs.} \quad
\E_{(i,x)} \left[ \kl{p^x_i}{q^x_i} \right] \ge \frac{\eps}{Cn}.
\]
This is the first step towards localizing the testing problem to a single coordinate.
Now, under the $\eta$-balanced assumption for $\mu$, we have that $0 \le \kl{p^x_i}{q^x_i} \le \ln(1/\eta)$.
Hence, if $\E_{(i,x)} \left[ \kl{p^x_i}{q^x_i} \right] \ge \frac{\eps}{Cn}$, 
one can show via a reverse Markov inequality that there exists an integer $\ell \ge 1$, such that $2^\ell = O(n)$ and
\begin{equation}\label{eq:rev:mark}
{\Pr}_{(i,x)}\big( \kl{p^x_i}{q^x_i} \ge  2^{\ell} \cdot \frac{\eps}{2Cn} \big) \ge \frac{1}{2^\ell D},
\end{equation}
where $D = \Theta\big(\log(\frac{n \cdot \log(1/\eta)}{\varepsilon})\big)$; see~Lemma \ref{lem:reverse-Markov} for a precise statement.

With~\eqref{eq:rev:mark}, it is not difficult to find a pair $(i,x)$ such that $\kl{p^x_i}{q^x_i} \ge  2^{\ell}\cdot \frac{\eps}{2 Cn}$. 
This can done by first generating $O(\poly(n) \cdot D)$ 
pairs $(i_t,x_t)$, by choosing $i_t \in [n]$ uniformly at random and then using the
$\stanora$ to sample the partial configuration $x_t$ on $[n]\setminus\{i_t\}$.
Then, we can exhaustively check for each $\ell$ (note that $\ell = O(\log n)$) whether among the generated pairs $(i_t,x_t)$'s there is one, say $(i,x)$, satisfying that $\kl{p^x_i}{q^x_i} \ge  2^{\ell}\cdot\frac{\eps}{2 Cn}$. By~\eqref{eq:rev:mark}, this will likely be the case.

In conclusion, we reduce (or localized identity testing for $\mu$ to solving identity testing for the one-dimensional distributions $p:=p^x_i$ and $q:=q^x_i$ on a domain of size $k$ with respect to KL divergence. 
We assume $q$ can be computed for the visible distribution $\mu$ (this is assumption (iii) in Theorem \ref{thm:alg-main}), and we have access to a sampling oracle for $p^x_i$ using the $\coorora$ for~$\pi$. 

As mentioned earlier, in the distribution testing literature, testing under KL divergence has been overlooked. This is because there are pairs of distributions with infinite KL divergence but arbitrarily small TV distance, which entails that identity testing requires arbitrarily many samples even though the KL divergence is arbitrarily large.
For example, this happens when $q$ is the distribution on a single point $0$ and $p$ is the Bernoulli distribution with arbitrarily small mean~\cite{DKW18}.
As such, identity testing under KL divergence has been considered unsolvable in the sense of worst-case sample complexity for arbitrary $p$ and~$q$. 

However, the identity testing problem under KL divergence makes perfect sense for specific visible distributions $q$
if we are interested in the instance-specific sample complexity instead of the worst-case one, as in~\cite{VV} under TV distance. Namely, for a given distribution $q$, what is the number of samples required, potentially depending on $q$, for the identity testing problem for $q$ under KL divergence? 
We give next a first attempt at solving this problem. The sample complexity of our testing algorithm depends on the minimum probability $\eta = \min_{a \in \KK} q(a)$. 

\begin{lemma}
	\label{lem:KL-id-intro}
	Let $k \in \N^+$ 
	and let $\eps > 0$, $\eta \in (0,1/2]$. 
	Given a visible distribution $q$ over domain $\QQ$ of size $k$ such that $q(a) \ge \eta$ for any $a \in \KK$, and given sample access to an unknown distribution $p$ over $\QQ$, there exists a polynomial-time identity testing algorithm that distinguishes with probability at least $2/3$ between the cases
	$p = q$ or $\kl{p}{q} \ge \eps$.
	The sample complexity of the identity testing algorithm is
	$
	O\big(\min\big\{ \frac{1}{\eps \sqrt{\eta}}, \frac{\sqrt{k} \ln(1/\eta)}{\eps^2} \big\} \big)
	$
	for $k \ge 3$ and 
	$
	O\big( \frac{\ln(1/\eta)}{\eps} \big) 
	$
	for $k=2$.
\end{lemma}

\noindent
We remark that the dependency on $\eta$ in the sample complexity is inevitable; see Remark~\ref{rmk:eta:dep}.

A natural first approach to identity testing under KL divergence to prove~Lemma \ref{lem:KL-id-intro} is a reduction to testing under TV distance via the so-called reversed Pinsker's inequality: $\kl{p}{q} \le (2/\eta)\tv{p}{q}^2$ (see Lemma \ref{lem:kl-chi2-l2}). 
The sample complexity of such algorithm is $O(\sqrt{k}/(\eps\eta))$. This is not optimal: for example, if $q$ is the uniform distribution over $\QQ$ one has $\eta = 1/k$ and so the sample complexity is $O(k^{3/2}/\eps)$, but one would expect the sample complexity to be $O(\sqrt{k}/\eps)$,
by analogy to what happens for testing in TV distance. 
A better reduction is to testing under $\ell_2$ distance via the inequality: $\kl{p}{q} \le (1/\eta) \norm{p-q}_2^2$ (see Lemma \ref{lem:kl-chi2-l2}). 
We then need to distinguish between $p=q$ and $\norm{p-q}_2 \ge \sqrt{\eps\eta}$, which allows us to apply results from \cite{DK16} and obtain an algorithm with sample complexity of $O(\norm{q}_2/(\eps\eta))$; this time we get the $O(\sqrt{k}/\eps)$ sample complexity bound when $q$ is the uniform distribution. 

However, two major challenges ought to be solved for this approach to work.
First, while $\norm{q}_2$ can be bounded for certain specific distributions $q$ (e.g., the uniform distribution), in general we do not have a bound for $\norm{q}_2$.
This can be solved via the \emph{flattening} method from \cite{DK16} which, roughly speaking, constructs a new testing instance (i.e., distributions $p'$ and $q'$ over $\QQ'$) that is equivalent to the initial one with the additional property that $\norm{q'}_2$ is small. 
The idea is to divide ``heavy'' elements (those $a\in\QQ$ with large density $q(a)$) into many copies so that $q'(a') \approx 1/\ell$ for all $a' \in \QQ'$ where $\ell = |\QQ'|$; i.e., $q'$ is close to~uniform. 

The second challenge is that, even if when $\norm{q}_2$ small, the dependency on $\eta$ could still be inverse-polynomial. This is particularly problematic when $\eta$ decays sharply as $k$ grows; e.g., $\eta = 2^{-k}$. 
To overcome this, we divide the domain $\QQ$ into two parts, those with small density $q(a) < \zeta$ and those with large density $q(a) \ge \zeta$ for some parameter $\zeta$. We then deal with the two parts separately by running different testing algorithms on each. 
This can be viewed as a simple application of the \emph{bucketing} technique from~\cite{BFFKRW} using only two buckets.

While flattening and bucketing were previously known, the novelty of our approach is to combine them to get a stronger bound for the sample complexity, specifically by selecting the right scale $\ell$ for flattening and the right threshold $\zeta$ for bucketing. 
Our bound achieves $O(\sqrt{k}/\eps)$ for $q$ with $\eta = \Theta(1/k)$ such as the uniform distribution, and also maintains a $\sqrt{k}$ dependency even for biased $q$ of tiny $\eta$, with only a logarithmic dependency on $1/\eta$. 
For details see Lemmas \ref{lem:KL-id-test} and \ref{lem:Ber-KL-id-test}.

\subsection{Computational hardness in the \texorpdfstring{$\coorora$}{Coordinate Oracle} model:~\texorpdfstring{Theorem \ref{thm:comp-hardness}(ii)}{Theorem 1.3}}
\label{subsec:hardness}

We establish hardness of the identity testing problem as stated in Theorem \ref{thm:comp-hardness}(ii) for the antiferromagnetic Ising model with $\coorora$ and $\stanora$ access via a reduction from the maximum cut problem.
Let $\{G=(V_{G},E_{G}),k\}$ be an instance of the maximum cut problem. That is, we want to check whether $\maxcut(G) < k$ or $\maxcut(G) \ge k$.
In our reduction,
we construct an identity testing instance for the antiferromagnetic Ising model, feed it as input to a presumed testing algorithm, and claim that the output of the algorithm solves $\{G=(V_{G},E_{G}),k\}$ with probability at least $2/3$; this is not possible unless $\mathsf{RP}=\mathsf{NP}$.

We start by constructing the multi-graph $F = (V_F,E_F)$ by adding two special vertices $s$ and $t$ to $G$; i.e., $V_F = V_G \cup \{s,t\}$.
These two vertices are connected
with $N^2-k$ edges, where $N = |V_G|$. We also add $N$ edges between $s$ and each vertex of $V_G$ and do the same for $t$ so that:
\begin{enumerate}
	\setlength{\itemsep}{0pt}
	\item When $\maxcut(G) < k$, the cut $(\{s,t\},V_G)$ of size $2N^2$ is the unique maximum cut of $F$; 
	\item When $\maxcut(G) \ge k$, there exists another cut in $F$, other than $(\{s,t\},V_G)$, of size~$\ge 2N^2$.
\end{enumerate}
This is because for $S \subset V_G$, the cut $(S \cup \{s\},V_G\setminus S \cup \{t\})$ of $F$ will have size:
$\maxcut(G) + |S|N + |V_G\setminus S|N + N^2-k = 2N^2 + \maxcut(G) - k, 
$
which is $\ge 2N^2$ only when $\maxcut(G) \ge k$.

We consider the antiferromagnetic Ising model on $F$.
There is a natural bijection between the cuts of $F$ and the configurations of the Ising model.
In particular, each cut $(S,V_F\setminus S)$ of $F$ corresponds to exactly two Ising configurations: vertices in $S$ are assigned $+1$ and those in $V_F\setminus S$ are assigned $-1$ (and vice versa). From the definition of the model (see~\eqref{eq:ising:def}) we also see that the ``ground states'' of the antiferromagnetic Ising model on $F$, that is the configurations of maximum probability in the Gibbs distribution, correspond precisely to the maximum cuts of $F$.

Let $\Omega$ be the set of all cuts of $F$ and let $\Omega_0$ be the set of all cuts $(S,V_F\setminus S)$ of $F$ 
\emph{except} those where $s \in S$, $t \in V_F\setminus S$, and the corresponding cut for $G$, i.e., $(S\setminus\{s\},V_F\setminus \{S,t,s\})$, has size $\ge k$.
This way, if $\maxcut(G) < k$, then $\Omega_0 = \Omega$, and
if $\maxcut(G) \ge k$, then $\Omega \setminus \Omega_0$ contains
the cuts of $F$ corresponding to cuts of $G$ of size $\ge k$.

We set the visible distribution of our testing instance to be the Gibbs distribution $\mu_{{F},\beta}$ of the antiferromagnetic Ising model on ${F}$ with $\beta < \beta_c(d) < 0$ in the tree non-uniqueness region.
The hidden distribution will be $\mu_{{F},\beta} (\cdot \mid  \Omega_0)$; that is, $\mu_{{F},\beta}$ conditioned on configurations that correspond to cuts in $\Omega_0$.
Our construction ensures that 	
if $\maxcut(G) < k$, then $\Omega = \Omega_0$ and so $\mu_{{F},\beta}  (\cdot \mid \Omega_0) = \mu_{{F},\beta}$.
Moreover, when
$\maxcut(G) \ge k$, we have $\Omega \neq \Omega_0$ and $\mu_{{F},\beta}  (\cdot \mid \Omega_0) \neq \mu_{{F},\beta}$.
In fact, it can be shown that the TV distance
between $\mu_{{F},\beta}  (\cdot \mid \Omega_0)$ and $\mu_{{F},\beta}$ is $1-o(1)$;
intuitively, this is because $\Omega \setminus \Omega_0$ contains large cuts of $F$ that account for a non-trivial portion of the probability mass of $\mu_{{F},\beta}$.

Our reduction is then completed by generating samples from $\mu_{{F},\beta}  (\cdot \mid \Omega_0)$ and giving these samples and $\mu_{{F},\beta}$ to the identity testing algorithm as input. 
The testing algorithm is guaranteed to succeed with probability at $2/3$.
If the algorithm detects that the samples did not come from $\mu_{{F},\beta}$, it means that $\maxcut(G) \ge k$; otherwise, it means that $\maxcut(G) < k$.
Hence, we have a polynomial running time algorithm that solves the maximum cut problem with probability at least $2/3$, which is not possible unless $\mathsf{RP}=\mathsf{NP}$.

There are two important complications in this approach. First, $F$ is a multi-graph of unbounded degree, and our goal is to establish hardness for the class of antiferromagnetic Ising models graphs of maximum degree $d = O(1)$ when $\beta < \beta_c(d)$. Second, we do not know how to generate samples from $\mu_{{F},\beta}  (\cdot \mid \Omega_0)$  efficiently in polynomial time. 

Let us address first how we solve the issue of $F$ being a multi-graph with large maximum degree. For this, we use a ``degree reducing'' gadget;
the one we use was introduced in~\cite{Sly} to establish the hardness of approximate counting and sampling antiferromagnetic spin systems.
Specifically, each vertex of $F$ is replaced by a gadget $H$ which 
consists of a (nearly) $d$-regular random bipartite graph with a relatively small number of trees attached to it.
Being more precise, the leaves of each tree will be identified with unique vertices on the same side of the bipartite graph; see~Section~\ref{subsec:vertexgadget} for the precise construction.
The root of these trees are called \emph{ports} and are used to connect
the gadgets as dictated by the edges of $F$. 
This results in a simple $d$-regular graph $\widehat{F}$. 

A key feature of the gadget $H$ is that 
in the tree non-uniqueness region $\beta < \beta_c(d)$, a sample from $\mu_{H,\beta}$ will have mostly $+1$'s on one side of $H$ and mostly $-1$'s on the other, or vice versa.
Hence there are two possible ``phases'' for the gadget which we use to simulate the spin of the corresponding vertex in $F$; i.e., the phase of the gadget 
is mapped to the spin of the corresponding vertex of $F$.
Therefore, in a configuration in $\widehat{F}$,
the phase of all the gadgets determine a cut for $F$, and thus one for $G$. 
Consequently, the reduction described above from the maximum cut problem to identity testing using $F$ can be done using $\widehat{F}$ instead.

The second technical complication is that we are required to sample from $\mu_{\widehat{F},\beta}(\cdot\mid\Omega_0)$. 
For this, we observe first that sampling a phase assignment from $\mu_{\widehat{F},\beta}(\Omega_0)$ is straightforward (see Lemma~\ref{lemma:dominant:phase}). 
We then sample the port configuration given the phase vector from $\Omega_0$.
This is done via a rejection sampling procedure by noting that the marginal distribution on the ports is within $o(1)$
total variation distance of a suitably defined product distribution. 
Once the port configuration is sampled within the desired accuracy, we sample the configuration on each gadget (independently) given the configuration of the ports. 
For this we use a hybrid approach: we use the recent algorithm from~\cite{KLR} for low-rank Ising models for one range of values of $\beta$ (i.e., when $|\beta| \sqrt{d} = O(1)$) and polymer models---see~\cite{JKP}---for the other. To use these algorithms, we fleshed out the spectrum of the incidence matrix of the gadget. Note that simulating the $\coorora$  for the Ising model is straightforward as the spin probability is a function of the number of neighboring $+1$ and $-1$ spins.

\begin{remark}
	\label{rmk:diff}
	In~\cite{testing-colt}, hardness of identity testing was established when both the visible and hidden distributions are antiferromagnetic Ising models on graphs of bounded degree also via a reduction from the maximum cut problem. As such, we believe it is meaningful to detail the conceptual and technical differences, as well as some similarities, between the reduction described above and the one from~\cite{testing-colt}.
	Conceptually, in~\cite{testing-colt} the hidden distribution $\pi$ is assumed to be from the same class as $\mu$,
	so the testing problem in consideration is easier. 
	In fact, this problem is not hard for all $\beta < \beta_c(d) < 0$ since when $|\beta|d = O(\log n)$ it can be solved using the learning algorithm from~\cite{klivans2017learning} to learn $\pi$.
	Only when $|\beta|d = \Omega(\log n)$, 
	this variant of identity testing becomes computationally hard, and this is precisely what is established in~\cite{testing-colt}.
	Our goal here is to show hardness throughout the entire non-uniqueness regime $\beta < \beta_c(d)$ (not only for $|\beta|d = \Omega(\log n)$),
	so the hidden distribution in our reduction can not be an Ising model. Our hidden distribution $\mu_{\widehat{F},\beta}(\cdot\mid\Omega_0)$ is instead a \emph{conditional} antiferromagnetic Ising distribution, 
	and, as noted, sampling from it is challenging.
	
	At a technical level, the necessary assumption in~\cite{testing-colt} that $|\beta|d = \Omega(\log n)$ simplifies matters significantly.
	In particular, the degree reducing gadgets there simply consist of random regular bipartite graphs; when $\beta d = \omega(\log n)$, sampling from the antiferromagnetic Ising model on these gadgets is trivial since $1-o(1)$ of the probability mass is concentrated on two trivial configurations ($+1$ in one side of the bipartite graph, $-1$ in the other side and vice versa).
	When $\beta < \beta_c(d)$, the correlations in the model are super-polynomially weaker; i.e., there is no such strong concentration in the ground states. 
	As such, we must use a more sophisticated degree-reducing gadget (the one from~\cite{Sly} as discussed earlier),
	and consider the phase of the gadget to simulate spin assignments to vertices.
	In terms of similarities, the construction of the multi-graph $F$ from the max-cut instance detailed above is nearly identical to the construction in~\cite{testing-colt}, i.e., $F$ is essentially the same, but $\widehat{F}$ is not since we must to use a different gadget.
\end{remark}

\section{Preliminaries}
\label{sec:prelim}

In this section we gather a number of standard definitions and results that we will refer to in our proofs. 
Let $k,n \in \N^+$ be integers. 
Let $\QQ = \{1,\dots, k\}$ denote a finite alphabet set of size $k$, and let $\pi$ be an arbitrary distribution over $\QQ^n$.
Throughout the paper, we use $n$ in the subscript and superscript to represent the set $[n] = \{1,\dots, n\}$ and use $n \setminus i$ to represent the set $[n] \setminus \{i\}$ to ease the notation.

\subsection{Coordinate conditional sampling oracle}
We recall next the formal definitions of the various sampling oracles discussed in the paper.

\begin{definition}
	\label{def:subcora}
	The sampling oracles for the hidden distribution $\pi$ are defined as follows:
	\begin{itemize}
		\item \emph{General Sampling Oracle} ($\stanora$): Generate a sample $x$ from $\pi$.
		\item \emph{Coordinate Conditional Sampling Oracle} ($\coorora$): Given $i \in [n]$ and $x \in \QQ^{n \setminus i}$ as inputs to the oracle:
		\begin{itemize}
			\item If $\pi(X_{n \setminus i} = x) > 0$, the oracle samples $a \in \QQ$ from the conditional marginal distribution $\pi(X_i = \cdot \mid X_{n \setminus i} = x)$;
			\item If $\pi(X_{n \setminus i} = x) = 0$, the oracle outputs $a \in \QQ$ arbitrarily.
		\end{itemize}
		\item \emph{Subcube Conditional Sampling Oracle} ($\subcora$): Given $\Lambda \subseteq [n]$ and $x \in \QQ^{\Lambda}$ as inputs to the oracle:
		\begin{itemize}
			\item If $\pi(X_\Lambda = x) > 0$, the oracle samples $x' \in \QQ^{[n] \setminus \Lambda}$ from the conditional distribution $\pi(X_{V \setminus \Lambda} = \cdot \mid X_\Lambda = x)$;
			\item If $\pi(X_\Lambda = x) = 0$, the oracle outputs $x'\in \QQ^{V \setminus \Lambda}$ arbitrarily.
		\end{itemize}
		\item \emph{Pairwise Conditional Sampling Oracle} ($\pairora$):
		Given $x,y \in \QQ^n$, the oracle returns~$x$ with probability $\pi(x)/(\pi(x)+\pi(y))$ and $y$ otherwise.
	\end{itemize}
\end{definition}

We provide next two brief remarks noting that access to a $\coorora$ is a weaker assumption than access to a $\pairora$ or a $\subcora$.

\begin{remark}\label{rmk:pair-coord}
	$\pairora$ is generally a stronger oracle than $\coorora$.
	When $k=2$ and the state space is the binary hypercube $\{0,1\}^n$, this is obvious since the $\coorora$ essentially generates samples conditioned in the set $\{x,y\}$ where $x$ and $y$ differ in \emph{exactly one coordinate}, while $\pairora$ can handle any pair vectors $x,y \in \QQ^n$.
	If $k\ge 3$ is a constant (independent of $n$), then one can also simulate an $\eps$-approximate $\coorora$ with $\pairora$ access in $\poly(k,\log(1/\eps))$ time (or a perfect one with $\poly(k)$ expected time) using a Markov chain. 
	Given a query $(i,x)$ where $i \in [n]$ and $x \in \QQ^{n \setminus i}$, 
	to generate a random value at the coordinate $i$ conditional on $x$, 
	one can simulate the Markov chain that in each step picks an element $a \in \KK$ uniformly at random and lets $a_{t+1} = a$ with probability $\mu(x_{i,a}) / (\mu(x_{i,a}) + \mu(x_{i,a_t}))$ and $a_{t+1} = a_t$ otherwise; $x_{i,a}$ denotes the vector with the $i$-th coordinating being $a$ and all other coordinates given by $x$. 
	Every step of the Markov chain can be perfectly implemented with the $\pairora$, and a simple coupling argument shows that the $\eps$-mixing time is $\poly(k,\log(1/\eps))$. 
	For perfect sampling with $\poly(k)$ expected time, one can use the Coupling from the Past Method; see~\cite{CFTP}.
\end{remark}

\begin{remark}\label{rmk:subcube}
	The $\subcora$ subsumes the $\coorora + \stanora$ combination implying that:
	\begin{itemize}
		\item Algorithms with both $\coorora + \stanora$ access give algorithms for the setting where there is $\subcora$ access;
		\item Lower bounds for the $\subcora$ model imply lower bounds the $\coorora+\stanora$.
	\end{itemize} 
\end{remark}

\subsection{Identity testing}
We provide next the formal definition of the identity testing problem
for a distribution $\mu$ over $\QQ^n$.
Let $d$ be any metric or divergence for distributions over $\QQ^n$. 

\smallskip
\noindent\textbf{$\idtest(d,\eps; \mu)$}.\\
\noindent\textbf{Input}: Description of a distribution $\mu$ over $\QQ^n$; \\
\noindent\textbf{Provided}: Access to 
$\coorora + \stanora$ for 
an unknown distribution $\pi$ over $\QQ^n$. \\
\noindent\textbf{Goal}:
Determine whether $\pi = \mu$ or $d(\pi, \mu) \ge \eps$.
\smallskip

Let $\FF$ denote a family of distributions (with varying dimensions), each of which is supported on $\QQ^n$ for some integer $n \in \N^+$ and can be represented with $\poly(n)$ parameters.  
We say an algorithm $\AA$ is an identity testing algorithm for the family $\FF$ if for every $\mu \in \FF$ it solves $\idtest(d,\eps; \mu)$ with probability at least $2/3$.  
Note that the unknown distribution $\pi$ does not necessarily belong to the family $\FF$.


\subsection{Coordinate balancedness and marginal boundedness}
\label{def:balanced}
We say a distribution $\mu$ supported on $\QQ^n$ is {\em $\eta$-balanced}, if for every $i \in [n]$, every $x \in \QQ^{n \setminus i}$ with $\mu(X_{n \setminus i} = x) > 0$, and every $a \in \QQ$, one has
\[
\text{either~~} \mu(X_i = a \mid X_{n \setminus i} = x) = 0, 
\quad
\text{or~~} \mu(X_i = a \mid X_{n \setminus i} = x) \ge \eta. 
\]
On the other hand,
we say the distribution $\mu$ is {\em $b$-marginally bounded} if for every $\Lambda \subseteq [n]$, every $x \in \QQ^{\Lambda}$ with $\mu(X_{\Lambda} = x) > 0$, every $i \in [n] \setminus \Lambda$, and every $a \in \QQ$, one has
\[
\text{either~~} \mu(X_i = a \mid X_{\Lambda} = x) = 0,
\quad
\text{or~~} \mu(X_i = a \mid X_{\Lambda} = x) \ge b. 
\]
Note that marginal boundedness is a generalization of coordinate balance, and in particular, if in the definition of $b$-marginally bounded one restricts to $\Lambda$ where $|\Lambda|=n-1$ then we obtain $b$-balanced.  Hence, any $b$-marginally bounded distribution is also $b$-balanced. 
Moreover, if $\mu$ has full support, then both notions are equivalent.
See also Remark~\ref{rmk:marginal-bound} for a weaker version of marginal boundedness. 
Related notions to marginal boundedness 
appeared in~\cite{klivans2017learning,CLV-STOC21}.

\subsection{Approximate tensorization of entropy}
\label{sec:AT-defn}

Let $\mu$ be a distribution over $\QQ^n$. 
For any non-negative function $f: \QQ^n \to \R_{\ge 0}$, the expectation of $f$ is defined to be $\mu(f) = \sum_{x \in \QQ^n} \mu(x) f(x)$, and the (relative) entropy of $f$ is defined as $$
\Ent_\mu(f) = \mu(f \ln f) - \mu(f) \ln(\mu(f)),
$$ 
with the convention that $0 \ln 0 = 0$. 

Given a coordinate $i$ and a partial configuration $x \in \QQ^{n\setminus i}$ on all coordinates but $i$, one can define the entropy of the function $f$ with respect to the conditional distribution $\mu_i(\cdot \mid x)$, which we denote by $\Ent_i^x(f)$. 
Furthermore, we regard $\Ent_i^x(f)$ as a function of $x$ and its expectation, when $x$ is generated from $\mu_{n\setminus i}$, is denoted as $\mu[\Ent_i(f)]$. 
We are now ready to give the formal definition of approximate tensorization of entropy in the functional inequality form, as in~\cite{Cesi,CMT,CP}. 

\begin{definition}[Approximate Tensorization of Entropy: Functional Form]
	\label{def:AT-fun}
	We say that a distribution $\mu$ over $\QQ^n$ satisfies approximate tensorization of entropy with constant $C$ if 
	for any nonnegative function $f: \QQ^n \to \R_{\ge 0}$ one has
	\begin{equation}
		\label{eq:e:func}
	\Ent (f) \le C \sum_{i=1}^n \mu[\Ent_i(f)]. 
	\end{equation}
\end{definition}

As mentioned in the introduction, 
approximate tensorization is an important tool for proving functional inequalities like the modified log-Sobolev inequality (MLSI); it is is also useful for deriving optimal mixing time bounds for the Glauber dynamics.
Although it is most often stated in this functional inequality form, 
mainly because of several useful analytic properties, 
in this paper we will consider its probabilistic version, as in \cite{Marton19,GSS19}.  

For two distributions $\mu$ and $\pi$ over a discrete state space $\QQ^n$, we write $\pi \ll \mu$ if $\mu(x) = 0$ implies $\pi(x) = 0$ for any $x \in \QQ^n$, i.e., the support of $\pi$ is contained in the support of $\mu$. 
The Kullback--Leibler (KL) divergence is defined as
\[
\kl{\pi}{\mu} = \sum_{x \in \QQ^n} \pi(x) \ln \left( \frac{\pi(x)}{\mu(x)} \right). 
\]
The following definition of approximate tensorization is slightly more general than~Definition~\ref{def:AT} from the introduction.   
\begin{definition}[Approximate Tensorization of Entropy: Probabilistic Form]
	\label{def:AT-prob}
	We say a distribution $\mu$ over $\QQ^n$ satisfies approximate tensorization of entropy with constant $C$ if 
	for any distribution $\pi$ over $\QQ^n$ such that $\pi \ll \mu$ one has
	\begin{equation}
	\label{eqn:AT-defn-prob}
	\kl{\pi}{\mu} \le C \sum_{i=1}^n \E_{x \sim \pi_{n \setminus i}} \Big[ \kl{\pi_i(\cdot \mid x)}{\mu_i(\cdot \mid x)} \Big]. 
	\end{equation}
\end{definition}

Note that in Definition~\ref{def:AT} we required that $\mu$ has full support, instead of the more general assumption $\pi \ll \mu$.
We remark that in \eqref{eqn:AT-defn-prob} the partial configuration $x \in \QQ^{n \setminus i}$ is drawn from $\pi$ rather than $\mu$. 
It is easy to check that the two definitions Definitions~\ref{def:AT-fun} and~\ref{def:AT-prob} are equivalent to each other by letting $f = \pi/\mu$; see \cite{Marton19}. 

\begin{remark}
	To the best of our knowledge, there is no known analog of the probabilistic form of approximate tensorization of entropy 
	\eqref{eqn:AT-defn-prob} for other $f$-divergences, even when the visible distribution $\mu$ is a product measure. 
	An analog of~\eqref{eq:e:func} the functional form of the same condition, does exist for the variance functional, which can be translated into a more intricate, weighted version of \eqref{eqn:AT-defn-prob} for $\chi^2$-divergence, with the entropy replaced by variance; the weights depend on ratios of the two densities and can be exponentially large. Although it is possible to establish such an inequality for certain high-dimensional distributions, how to use it for algorithmic purposes remains unclear. This contributes to our rationale for selecting KL in our work. 
\end{remark}	

\section{Identity Testing via Approximate Tensorization}

For integer $k\ge 2$ and real $C \ge 1, \eta>0$, let $\FF_k(C,\eta)$ denote the family of all distributions over $\QQ^n$ (for any $n \in \N^+$) with $\poly(n)$ many parameters that are $\eta$-balanced and satisfy approximate tensorization of entropy with constant $C$. 
The goal of this section is to give an identity testing algorithm for the family $\FF_k(C,\eta)$ in terms of the KL divergence. 
We observe that this also implies a tester for the TV distance by the Pinsker's inequality.

For applications in Section~\ref{subsec:application} all the parameters $k,C,\eta$ are constants independent of $n$.
In the theorem below, however, we consider a more general setting where these parameters are functions of the dimension $n$ with only mild assumptions on their growth rate. 
This allows us to have a clearer picture on the sample complexity and the dependency on all the parameters involved.

\begin{theorem}
	\label{thm:alg-main-detailed}
	Let $k = k(n)$ be an integer and $C = C(n) \ge 1$, $\eta = \eta(n) \in (0,1/2]$ be reals.
	Suppose that $$\max\{\log C, \log\log(1/\eta)\} = O(\log n).$$ 
	Then, there is an identity testing algorithm for the family $\FF_k(C,\eta)$ with query access to both the $\coorora$ and the $\stanora$ and for KL divergence with distance parameter $\eps > 0$. 
	The query complexity of the identity testing algorithm is 
	\[
	O\left( \min\left\{ \frac{C}{\sqrt{\eta}} \cdot \frac{n}{\eps} \log^3 \left(\frac{n}{\eps}\right),\, C^2 \sqrt{k} \log\Big(\frac{1}{\eta}\Big) \cdot \frac{n^2}{\eps^2} \log^2 \left(\frac{n}{\eps}\right) \right\} \right). 
	\]
	The running time of the algorithm is polynomial in all parameters ($1/\eta$ for the first bound, and $\log(1/\eta)$ for the second) and also proportional to the time of computing the conditional marginal distributions $\mu_i(\cdot \mid x)$ for any $i \in [n]$ and any feasible $x \in \QQ^{n \setminus i}$. 
	Furthermore, if $k=2$, i.e., we have a binary domain $\QQ = \{0,1\}$, the query complexity can be improved to
	\[
	O\left( C\log\Big(\frac{1}{\eta}\Big) \cdot \frac{n}{\eps} \log^3 \left(\frac{n}{\eps}\right) \right).
	\]
\end{theorem}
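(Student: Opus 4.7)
}

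The plan is to combine approximate tensorization of entropy with the single-coordinate KL identity tester of \cref{lem:KL-id-intro} through a level-set / bucketing reduction, as sketched in \cref{subsec:ov-alg}. First I would observe that if $\pi = \mu$ then trivially $p_i^x = q_i^x$ for every feasible pair $(i,x)$ (where $p_i^x = \pi_i(\cdot\mid x)$ and $q_i^x = \mu_i(\cdot\mid x)$), whereas if $\kl{\pi}{\mu}\ge\eps$ then approximate tensorization (\cref{def:AT-prob}) yields
\[
\E_{i\sim[n],\,x\sim\pi_{n\setminus i}}\!\left[\kl{p_i^x}{q_i^x}\right] \;\ge\; \frac{\eps}{Cn},
\]
where $i$ is uniform on $[n]$. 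Crucially, the random pair $(i,x)$ can be generated with a single call to $\stanora$ (draw a full sample from $\pi$ and erase a uniformly random coordinate).

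Next I would convert this in-expectation bound into a tail bound via a dyadic decomposition of the nonnegative random variable $Y:=\kl{p_i^x}{q_i^x}$. Writing $\tau := \eps/(Cn)$ and $L := \lceil\log_2(n/\eps)\rceil + O(1)$, standard manipulations show that if $\E[Y]\ge\tau$ then either $\Pr[Y\ge \Omega(n)]$ is already $\Omega(1)$, or there exists some $\ell\in\{0,1,\dots,L\}$ with
\[
\Pr\!\left[\,Y\ge 2^{\ell}\tau\,\right] \;\ge\; \frac{c}{2^{\ell}\,L}
\]
for an absolute constant $c>0$. (Cap $Y$ at $O(n)$ using $\eta$-balance, which gives $Y\le\log(1/\eta)\cdot n$, and the assumption $\log\log(1/\eta)=O(\log n)$ absorbs this cap into $L$.)

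The algorithm then enumerates $\ell = 0, 1, \ldots, L$. For each $\ell$ it draws $m_\ell = \Theta(2^\ell L\log L)$ pairs $(i,x)$ via $\stanora$, and for each pair invokes the single-coordinate tester of \cref{lem:KL-id-intro} on the pair of distributions $(p_i^x, q_i^x)$ over $\QQ$ with distance parameter $\eps_\ell := 2^{\ell}\tau/4$ and failure probability $O(1/(m_\ell L))$. The queries required by this inner tester are supplied by $\coorora$ (fix coordinates to $x$ and query $i$), while $q_i^x$ is computed directly from the explicit description of $\mu$. The algorithm accepts if every one of these inner tests accepts, and rejects otherwise.

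Completeness is immediate: if $\pi=\mu$ every inner test faces $p_i^x = q_i^x$ and accepts with probability $\ge 2/3$; a union bound over the $O(\sum_\ell m_\ell) = \poly(n/\eps)$ inner tests (with appropriate boosting) gives overall acceptance probability $\ge 2/3$. For soundness, fix the $\ell^\star$ guaranteed by the tail bound; among the $m_{\ell^\star}$ sampled pairs, at least one pair with $Y\ge 2^{\ell^\star}\tau$ appears with probability $\ge 1 - (1-c/(2^{\ell^\star}L))^{m_{\ell^\star}} \ge 1 - e^{-\Omega(\log L)}$, and on that pair the inner tester rejects with probability $\ge 2/3$.

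The total query complexity is $\sum_{\ell=0}^{L} m_\ell \cdot s_\ell$ where $s_\ell$ is the inner sample complexity at accuracy $\eps_\ell = \Theta(2^\ell\eps/(Cn))$. Plugging in the two bounds of \cref{lem:KL-id-intro}, namely $s_\ell = O(1/(\eps_\ell\sqrt\eta))$ and $s_\ell = O(\sqrt{k}\log(1/\eta)/\eps_\ell^{\,2})$, the $2^\ell$ factors partially cancel and summing over $\ell \le L = O(\log(n/\eps))$ gives the two stated bounds in the $k\ge 3$ case; using the binary-domain bound $s_\ell = O(\log(1/\eta)/\eps_\ell)$ from the same lemma gives the improved $k=2$ bound. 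The running time cost per pair is the time to compute the conditional marginal $\mu_i(\cdot\mid x)$ plus $\poly(s_\ell,k,\log(1/\eta))$.

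\textbf{Main obstacle.} The conceptual reduction is straightforward once \cref{lem:KL-id-intro} is in hand; the technical work is the bucketing step and the bookkeeping of confidence parameters, ensuring that (i) the union bound over $\sum_\ell m_\ell$ inner tests does not blow up the sample complexity by more than polylogarithmic factors, and (ii) one correctly exploits $\eta$-balance to cap the per-pair KL divergence so that the range of relevant $\ell$ is only $O(\log(n/\eps))$ rather than unbounded. A minor subtlety is that the partial configurations $x$ are drawn from $\pi_{n\setminus i}$ (not $\mu_{n\setminus i}$), so it is essential to use the probabilistic form of approximate tensorization (\cref{def:AT-prob}) and to notice that $\stanora$ for $\pi$ is exactly what is required to sample from $\pi_{n\setminus i}$; $\coorora$ alone would not suffice to source the pairs $(i,x)$.
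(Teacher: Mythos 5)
Your plan is essentially the paper's own proof: reduce via approximate tensorization to $\E_{(i,x)}[\kl{p_i^x}{q_i^x}]\ge\eps/(Cn)$, bucket the nonnegative random variable $Y=\kl{p_i^x}{q_i^x}$ dyadically (the paper's \cref{lem:reverse-Markov}), enumerate the $O(\log(n/\eps))$ buckets using $\stanora$ to source pairs $(i,x)$ and $\coorora$ to feed the inner single-coordinate KL tester (\cref{lem:KL-id-test,lem:Ber-KL-id-test}). The observation that $\stanora$, not $\coorora$, is what is needed to sample pairs from $\pi'$, and that the probabilistic form of tensorization is what makes this work, is exactly the right point of emphasis.

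Two small technical slips worth noting. First, your cap on $Y$ is stated as $\log(1/\eta)\cdot n$; the correct bound from $\eta$-balance and $p_i^x\ll q_i^x$ is simply $Y\le\ln(1/\eta)$ (one domain of size $k$, not $n$ of them) — harmless here, but it is what the paper uses to set $L=\lceil\log_2(\ln(1/\eta)/\eps')\rceil$. Second, and more substantively, you take $m_\ell=\Theta(2^\ell L\log L)$ pairs so that the probability of missing a good pair at level $\ell^\star$ is $L^{-\Omega(1)}$. That extra $\log L$ is unnecessary: only one value of $\ell$ matters (the one guaranteed by the tail bound), so a \emph{constant} success probability for the pair-finding step suffices, and the paper takes $T_\ell=2^{\ell+2}(L+1)$ accordingly. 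With your $m_\ell$ the final count picks up an extra $\log L=\Theta(\log\log(n/\eps))$ factor, slightly overshooting the stated bounds; dropping the $\log L$ recovers them exactly.
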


\subsection{Algorithm}

Before presenting our algorithm, we first give 
a well-known fact, e.g., see \cite[Section 8.2.4]{goldreich2017introduction} and \cite[Proposition 6.7]{Nar}.
\begin{lemma}
	\label{lem:reverse-Markov}
	Let $\eps, M > 0$ be reals and let $L = \ceil{\log_2(M/\eps)}$. 
	If $Y$ is a non-negative random variable such that $Y \le M$ always 
	and $\E Y \ge \eps$, then there exists a non-negative integer $\ell \le L$ such that
	\[
	\Pr(Y \ge 2^{\ell-1} \eps) \ge \frac{1}{2^\ell (L+1)}.
	\]
\end{lemma}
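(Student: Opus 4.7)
My plan is to prove the contrapositive via a dyadic layer-cake decomposition of the range of $Y$, combined with pigeonhole across the $L+1$ scales. Suppose for contradiction that $\Pr(Y \ge 2^{\ell-1}\eps) < \frac{1}{2^\ell(L+1)}$ for every $\ell \in \{0, 1, \ldots, L\}$; the goal is to derive $\E Y < \eps$ and contradict the hypothesis.

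To set up, I partition the range $[0, M]$ into the ``base'' piece $[0, \eps/2)$ together with the dyadic shells $B_\ell = [2^{\ell-1}\eps, 2^\ell \eps)$ for $\ell = 0, 1, \ldots, L$. Since $L = \ceil{\log_2(M/\eps)}$ ensures $2^L \eps \ge M$, these pieces cover all of $[0, M]$ and hence exhaust the support of $Y$. I then upper bound $\E Y$ by
\[
\E Y \;\le\; \frac{\eps}{2}\cdot \Pr(Y < \eps/2) \;+\; \sum_{\ell=0}^L (2^\ell \eps) \cdot \Pr(Y \in B_\ell),
\]
using the trivial bound that $Y$ is at most the right endpoint of the bucket it falls into.

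Next I substitute the assumed tail bound. Since $\Pr(Y \in B_\ell) \le \Pr(Y \ge 2^{\ell-1}\eps) < \frac{1}{2^\ell(L+1)}$, each bucket contribution is at most $\frac{\eps}{L+1}$, and summing across the $L+1$ levels together with the base contribution of at most $\eps/2$ yields an upper bound on $\E Y$ that must be below $\eps$, contradicting $\E Y \ge \eps$. The whole argument is elementary reverse-Markov-style bookkeeping; no sophisticated machinery is needed.

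The only delicate point I anticipate is matching the stated constant $\frac{1}{2^\ell(L+1)}$ exactly: a naive pigeonhole after the decomposition tends to lose a factor of $2$. To tighten, I would use the telescoping identity $\Pr(Y \in B_\ell) = q_\ell - q_{\ell+1}$ (where $q_\ell = \Pr(Y \ge 2^{\ell-1}\eps)$) and an Abel summation, or alternatively absorb the base interval into the lowest shell, so that the bucket sum collapses to something strictly less than $\eps$ rather than merely less than $3\eps/2$. This is a routine constant-chasing step rather than a conceptual obstacle, and the strategy mirrors the standard reverse-Markov argument cited from \cite{Nar}.
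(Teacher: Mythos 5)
Your strategy has the right shape, and the Abel/telescoping fix you flag does close the argument, but let me compare carefully against the paper's proof and point out where you need to be precise. The paper sidesteps the factor-of-$2$ loss from the start by invoking the layer-cake identity $\E Y = \int_0^M \Pr(Y\ge y)\,dy$ and splitting the integral at $\eps/2$ and at the dyadic points $2^{\ell-1}\eps$: the piece $[2^{\ell-1}\eps,\,2^\ell\eps]$ has length $2^{\ell-1}\eps$ and integrand at most $q_\ell := \Pr(Y\ge 2^{\ell-1}\eps) < \frac{1}{2^\ell(L+1)}$, so each dyadic piece contributes $< \frac{\eps}{2(L+1)}$ and the total is $< \frac{\eps}{2} + (L+1)\cdot\frac{\eps}{2(L+1)} = \eps$. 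Your Abel summation on the bucket decomposition recovers exactly these quantities, so that route is sound — but only once you also tighten the base bound to $\E[Y\,\mathbf{1}_{Y<\eps/2}] \le \frac{\eps}{2}\bigl(1-q_0\bigr)$ rather than just $\frac{\eps}{2}$; without the $-\frac{\eps q_0}{2}$ cancellation, the leftover boundary term $\eps q_0$ from telescoping keeps the total at $\frac{\eps}{2} + \frac{(L+2)\eps}{2(L+1)} > \eps$ for every $L\ge 0$. Your alternative of simply "absorbing the base interval into the lowest shell" does not by itself recover the lost factor: on the merged bucket $[0,\eps)$ the only available endpoint bound is $\E[Y\,\mathbf{1}_{Y<\eps}] \le \eps\cdot\Pr(Y<\eps)$, which is $O(\eps)$ rather than $O(\eps/(L+1))$, so the $\tfrac{3\eps}{2}$ total is unchanged. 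Net: the conceptual approach matches the paper's; carry it out through $\int_0^M \Pr(Y\ge y)\,dy$ (or equivalently via Abel summation with the sharper $(1-q_0)$ base estimate), not through the naive bucket inequality or the shell-merging shortcut.
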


\begin{proof}
	Suppose for sake of contradiction that for all $0 \le \ell \le L$ it holds
	\[
	\Pr(Y \ge 2^{\ell-1} \eps) < \frac{1}{2^\ell (L+1)}.
	\]
	Notice that $2^L \eps \ge M$. 
	Then we have
	\begin{align*}
	\E Y 
	&= \int_{0}^M \Pr(Y \ge y) dy
	= \int_0^{\eps/2} \Pr(Y \ge y) dy + \sum_{\ell=0}^{L} \int_{2^{\ell-1} \eps}^{2^\ell \eps} \Pr(Y \ge y) dy\\
	&\le \frac{\eps}{2} + \sum_{\ell=0}^L (2^\ell \eps - 2^{\ell-1} \eps) \Pr(Y \ge 2^{\ell-1} \eps) \\
	&< \frac{\eps}{2} + \sum_{\ell=0}^L 2^{\ell-1} \eps \cdot \frac{1}{2^\ell (L+1)}
	= \eps,
	\end{align*}
	which is a contradiction.
\end{proof}

For $i \in [n]$ and $x \in \QQ^{n \setminus i}$, we define $q^x_i = \mu_i(\cdot \mid x)$
to be a distribution over $\QQ$ induced by the pair $(i,x)$ from $\mu$, where we think of $i$ and $x$ as the parameters. 
Similarly, we define $p^x_i = \pi_i(\cdot \mid x)$ with respect to $\pi$. 

Recall that the approximate tensorization of entropy for $\mu$ can be written as
\begin{equation*}
\kl{\pi}{\mu} \le C \sum_{i=1}^n \E_{x \sim \pi_{n \setminus i}} \Big[ \kl{\pi_i(\cdot \mid x)}{\mu_i(\cdot \mid x)} \Big]
= Cn \; \E_{(i,x)} \left[ \kl{p^x_i}{q^x_i} \right],
\end{equation*}
where $i \in [n]$ is a uniformly random coordinate and $x$ is generated from the marginal distribution $\pi_{n \setminus i}$.
Therefore, the original identity testing problem boils down to the following testing problem:
\[
{\Pr}_{(i,x)}(p^x_i = q^x_i) = 1
\quad \text{v.s.} \quad
\E_{(i,x)} \left[ \kl{p^x_i}{q^x_i} \right] \ge \eps'
\]
where $\eps' = \eps/(Cn)$. 
Notice that $\kl{p^x_i}{q^x_i} \le \ln(1/\eta)$ for all $(i,x)$ assuming $\eta$-balancedness and $p \ll q$. 
By Lemma \ref{lem:reverse-Markov} it further boils down to the following sequence of testing problems:
let $L = \ceil{\log_2(\ln(1/\eta)/\eps')}$ and for each $\ell \le L$, for a random pair $(i,x)$, distinguish between
$p^x_i = q^x_i$ surely
versus
$${\Pr}_{(i,x)}\left( \kl{p^x_i}{q^x_i} \ge 2^{\ell-1} \eps' \right) \ge \frac{1}{2^\ell (L+1)}.$$

For this testing problem, we sample $(i,x)$ for $O\left( 2^\ell (L+1) \right)$ times so that we get to see the event $\kl{p^x_i}{q^x_i} \ge 2^{\ell-1} \eps'$,
and when it happens the problem is reduced to a classical identity testing setting on a finite state space where we can apply previously known identity testing algorithm. 
To accomplish this we also give an identity testing algorithm for the KL divergence, which is missing in the literature; see Lemmas \ref{lem:KL-id-test} and \ref{lem:Ber-KL-id-test}. 

We give a few more definitions before presenting our algorithm formally. 
For a distribution $\pi$ over $\XX = \QQ^n$, we define the set $\XX'$ by 
\[
\XX' = \left\{ (i,x): i \in [n],\, x \in \QQ^{n \setminus i} \right\}
\]
to be the set of all pairs $(i,x)$ where $i$ is one coordinate and $x$ contains the values of all coordinates other than $i$. 
We then define a distribution $\pi'$ over $\XX'$ by
\[
\pi'(i,x) = \frac{1}{n} \, \pi_{n \setminus i}(x) = \frac{1}{n} \, \pi \left( X_{n \setminus i} = x \right),
\]
so that a sample from $\pi'$ can be obtained in the following way: first pick $i \in [n]$ uniformly at random, and then sample $x$ from the marginal distribution $\pi_{n \setminus i}$.

\begin{algorithm}[t]
	\caption{Identity Testing for $\FF_k(C,\eta)$ for KL divergence}
	\label{alg:id-test}
	\KwIn{Description (parametrization) of a given distribution $\mu \in \FF_k(C,\eta)$, query access to both $\coorora$ and $\stanora$ for an unknown distribution $\pi$, and distance parameter $\eps > 0$.} 
	
	\smallskip
	$\eps' \gets \eps / (Cn)$\;
	$L \gets \ceil{\log_2(\ln(1/\eta)/\eps')}$\; 
	\For{$0 \le \ell \le L$}{
		$\eps_\ell \gets 2^{\ell-1} \eps'$ \Comment*[r]{Distance parameter}
		$\delta \gets 2^{-2L-6}$ \Comment*[r]{Failure probability}
		$T_\ell \gets 2^{\ell+2} (L+1)$ \Comment*[r]{Need $T_\ell$ samples of $(i,x)$ to see $\kl{p^x_i}{q^x_i} \ge \eps_\ell$}
		\For{$t = 1,2, \dots, T_\ell$}{
			Sample $(i,x)$ from $\pi'$ via $\stanora$ for $\pi$\; \label{line:sample-i-x}
			Call $\AA_{\textsc{kl-id}}$ from Lemmas \ref{lem:KL-id-test} and \ref{lem:Ber-KL-id-test} to distinguish between $p^x_i$ and $q^x_i$ with distance parameter $\eps_\ell$ and failure probability $\delta$
			(samples from $p^x_i$ are obtained via $\coorora$ for $\pi$) \Comment*[r]{Check whether $\kl{p^x_i}{q^x_i} \ge \eps_\ell$}
			\label{line:call-tester}
			
			\If{$\AA_{\textsc{kl-id}}$ returns No (i.e., $\kl{p^x_i}{q^x_i} \ge \eps_\ell$)}
			{\KwOut{No (i.e., $\kl{\pi}{\mu} \ge \eps$), and the algorithm ends\;}}
		}
	}


	\KwOut{Yes (i.e., $\pi = \mu$)}
\end{algorithm}

Our algorithm is given in Algorithm~\ref{alg:id-test}, which also appeared in the previous work \cite{CCKLW} for uniformity testing over the binary hypercube $\{0,1\}^n$. 
We now give our proof of Theorem \ref{thm:alg-main-detailed}.

\begin{proof}[Proof of~Theorem \ref{thm:alg-main-detailed}]
	Suppose first that $\pi = \mu$. 
	Then each time we call the KL tester in Line~\ref{line:call-tester}, it returns Yes with probability at least $1-\delta$ since $p^x_i = q^x_i$ for any $(i,x)$. 
	If every time the result is Yes then Algorithm~\ref{alg:id-test} will return Yes (i.e., $\pi = \mu$). 
	By a simple union bound, the probability that Algorithm~\ref{alg:id-test} mistakenly outputs No is at most 
	$$ \sum_{\ell=0}^L T_\ell \cdot \delta = \sum_{\ell=0}^L 2^{\ell+2} (L+1) \cdot 2^{-2L-6} \le 2^{L+3} (L+1) \cdot 2^{-2L-6} \le \frac{1}{8},$$
	where the last inequality is due to $L+1 \le 2^L$.  
	
	Next assume that $\kl{\pi}{\mu} \ge \eps$. Then by approximate tensorization of entropy we have 
	$$\E_{(i,x)} \left[ \kl{p^x_i}{q^x_i} \right] \ge \eps'$$ where $\eps' = \eps/(Cn)$. 
	By Lemma \ref{lem:reverse-Markov}, there exists a non-negative integer $\ell \le L$ such that
	$$\Pr\left( \kl{p^x_i}{q^x_i} \ge 2^{\ell-1} \eps' \right) \ge \frac{1}{2^\ell (L+1)}.$$
	For this $\ell$, the algorithm repeats for $T_\ell = 2^{\ell+2} (L+1)$ times to find such a pair $(i,x)$ via the general sampling oracle; the probability that the algorithm fails to find such $(i,x)$ is upper bounded by
	\[
	\left( 1-\frac{1}{2^\ell (L+1)} \right)^{T_\ell} \le 
	\exp\left( - \frac{T_\ell}{2^\ell (L+1)} \right)
	= e^{-4} \le \frac{1}{50}.  
	\]
	In the case that such a pair is successfully found, the KL tester in Line~\ref{line:call-tester} will return No with probability at least $1-\delta$, and hence then Algorithm~\ref{alg:id-test} returns No.
	Therefore, if Algorithm~\ref{alg:id-test} wrongly outputs Yes then either a good pair $(i,x)$ is not found, or the KL tester in Line~\ref{line:call-tester} makes a mistake on a good pair $(i,x)$. 
	The probability of outputting Yes is then upper bounded by $1/50 + \delta \le 1/8$. 
	
	Finally, Lemma \ref{lem:KL-id-test}, combined with the amplification technique for failure probability (e.g., see \cite[Lemma 1.1.1]{Canonne22}), implies that the number of samples required by Algorithm~\ref{alg:id-test} is at most
	\begin{align*}
	& \sum_{\ell = 0}^L T_\ell \cdot O\left( \min\left\{ \frac{\ln (1/\delta)}{\eps_\ell \sqrt{\eta}},\, \frac{\sqrt{k} \ln(1/\eta) \ln (1/\delta)}{\eps_\ell^2} \right\} \right) \\
	={}& \sum_{\ell = 0}^L O\left( \min\left\{ \frac{C L^2 n}{\eps \sqrt{\eta}},\, \frac{C^2 \sqrt{k} L^2 n^2 \ln(1/\eta)}{2^\ell \eps^2} \right\} \right) \\
	={}& O\left( \min\left\{ \frac{C L^3  n}{\eps \sqrt{\eta}},\, \frac{C^2 \sqrt{k} L^2 n^2 \ln(1/\eta)}{\eps^2} \right\} \right).
	\end{align*}
	Since $L = O(\log (n/\eps))$ under our assumptions $\log(C) = O(\log n)$ and $\log\log(1/\eta) = O(\log n)$, we obtain the sample complexity upper bound from the theorem. 
	For $k = 2$ the sample complexity is obtained in the same way, using Lemma \ref{lem:Ber-KL-id-test} instead.  
\end{proof}

\subsection{Identity testing for KL divergence on general domain}

In this and next subsection, we prove Lemma \ref{lem:KL-id-intro} from Section~\ref{subsec:ov-alg}, which is also a key subroutine of Algorithm~\ref{alg:id-test}. 
We first consider general $k\ge 2$ in this subsection, and then give an improved sample complexity bound for $k=2$ in the next subsection. 

\begin{lemma}
	\label{lem:KL-id-test}
	Let $k \in \N^+$ be an integer, 
	and let $\eps > 0$, $\eta \in (0,1/2]$ be reals. 
	Given a target distribution $q$ over domain $\QQ$ of size $k$ such that either $q(a) = 0$ or $q(a) \ge \eta$ for any $a \in \KK$, and given sample access to an unknown distribution $p \ll q$ over $\QQ$, there exists a polynomial-time identity testing algorithm that distinguishes with probability at least $2/3$ between the two cases 
	\begin{equation}
	p = q
	\quad\text{and}\quad
	\kl{p}{q} \ge \eps
	\end{equation}
	with sample complexity
	$
	O\Big( \min\big\{ \frac{1}{\eps \sqrt{\eta}},\, \frac{\sqrt{k} \ln(1/\eta)}{\eps^2} \big\} \Big). 
	$
\end{lemma}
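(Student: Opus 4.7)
The plan is to exhibit two separate testers, one giving the $\widetilde{O}(1/(\eps\sqrt\eta))$ bound and the other giving the $\widetilde{O}(\sqrt{k}\log(1/\eta)/\eps^{2})$ bound, and output whichever has smaller query complexity in the given parameter regime; boosting the success probability to $2/3$ is standard via median-of-independent-runs. Both testers rely on the basic inequalities $\chi^{2}(p\|q)\ge\kl{p}{q}\ge\eps$ under the alternative hypothesis and on the $\eta$-balance of $q$ to control variances.

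For the $\widetilde{O}(\sqrt{k}\log(1/\eta)/\eps^{2})$ bound, I would reduce to TV identity testing via a reverse Pinsker-style inequality: when $q(a)\ge\eta$ on its support and $p\ll q$,
\[
\kl{p}{q}\;\le\;c\,\log(1/\eta)\cdot\tv{p}{q}
\]
for an absolute constant $c$. To prove this, split $\KK$ into $S^{+}=\{a:p(a)>q(a)\}$ and $S^{-}$. The contribution from $S^{-}$ to $\kl{p}{q}$ is non-positive and may be discarded, while on $S^{+}$ one writes $p(a)=q(a)+(p(a)-q(a))$ and bounds the two resulting pieces using $\log x\le x-1$ and $\log(p(a)/q(a))\le\log(1/\eta)$ respectively. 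Then $\kl{p}{q}\ge\eps$ implies $\tv{p}{q}\ge\eps/(c\log(1/\eta))$, and one invokes a standard TV identity tester (e.g., Acharya--Daskalakis--Kamath or Valiant--Valiant) at that reduced distance. To obtain the single $\log(1/\eta)$ factor promised by the lemma rather than $\log^{2}(1/\eta)$, a cleaner route is to feed the sharper bound $\chi^{2}(p\|q)\ge\kl{p}{q}\ge\eps$ into a chi-squared-based identity tester directly, since the chi-squared tester's sample complexity scales with the non-centrality rather than its square root.

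For the $\widetilde{O}(1/(\eps\sqrt\eta))$ tester, which must be free of any $k$ dependency, I would use a bias-corrected chi-squared statistic
\[
T\;=\;\sum_{a\in\KK}\frac{(N_{a}-Nq(a))^{2}-N_{a}}{N\,q(a)},
\]
where $N_{a}$ counts the occurrences of $a$ in $N$ i.i.d.\ samples from $p$. Standard moment computations give $\E[T]=(N-1)\chi^{2}(p\|q)-1$, which is $-1$ under the null and at least $(N-1)\eps-1$ under the alternative. The $\eta$-balance of $q$, combined with the a priori bound $\chi^{2}(p\|q)\le 1/\eta$ for $p\ll q$, is used to bound $\Var[T]$ by a quantity scaling in $1/\eta$ rather than $k$; a Chebyshev inequality then certifies distinguishability as soon as $(N-1)\eps\gg\sqrt{1/\eta}$, i.e., $N=O(1/(\eps\sqrt\eta))$.

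The main technical obstacle is the variance analysis of $T$ in the second tester. One must bound fourth central moments of binomial counts weighted by $1/q(a)^{2}$ and verify that, under the $\eta$-balance assumption, the resulting contribution to $\Var[T]$ scales like $1/\eta$ instead of the naive $k/\eta^{2}$; this is a genuinely delicate calculation but follows standard chi-squared tester templates once one takes advantage of the fact that the cross-terms in the expansion of $(N_a - Nq(a))^4$ telescope nicely under $\eta$-balance. A secondary technical issue is routing the first tester through a chi-squared identity tester (rather than a pure TV tester) to avoid a $\log^{2}(1/\eta)$ factor in place of the desired $\log(1/\eta)$. With both testers in hand, the lemma's query complexity follows by taking the minimum, and the running time is clearly polynomial since the statistics are evaluable in time $O(Nk)$ given the explicit description of $q$.
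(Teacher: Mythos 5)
Your proposal has a genuine gap, concentrated in the $O(1/(\eps\sqrt\eta))$ tester, and the key missing tool is the \emph{flattening} technique of Diakonikolas--Kane, which is what both of the paper's algorithms actually rest on.

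For the first bound you propose the bias-corrected statistic
$T = \sum_a \frac{(N_a - Nq(a))^2 - N_a}{Nq(a)}$
and claim $\Var[T] = O(1/\eta)$, so Chebyshev against $\E[T] \approx N\chi^2(p\,\|\,q) \ge N\eps$ yields $N = O(1/(\eps\sqrt\eta))$. A direct moment computation (cleanest after Poissonization, where the per-element terms are independent) gives
$\Var[T] = 2\sum_a \frac{p(a)^2}{q(a)^2} + 4N\sum_a \frac{(p(a)-q(a))^2\,p(a)}{q(a)^2},$
and while the first sum is indeed $O(1/\eta)$ when $\chi^2$ is small, the second sum is bounded only by $(4N/\eta)\,\chi^2(p\,\|\,q)$. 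In the hard regime $\chi^2 \approx \eps$ this term is $\Theta(N\eps/\eta)$, and Chebyshev then requires $N\eps/\eta \ll (N\eps)^2$, i.e.\ $N = \Omega(1/(\eps\eta))$ --- a factor $1/\sqrt\eta$ worse than claimed. No telescoping of cross-terms saves you here; the $N\chi^2/\eta$ contribution is intrinsic to the $1/q(a)$-normalized statistic. The paper avoids it by first \emph{flattening}: each $a$ is split into $\lfloor q(a)/\eta\rfloor + 1$ copies, giving a new target $q'$ on a domain of size $O(1/\eta)$ with $q'(a') \in [\eta/2,\eta]$, hence $\|q'\|_2 = O(\sqrt\eta)$, while KL (being an $f$-divergence) is preserved. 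Running the DK16 $\ell_2$ tester on the flattened instance with parameter $\|p'-q'\|_2^2 \ge \eps\eta/2$ (via $\kl{p'}{q'} \le (2/\eta)\|p'-q'\|_2^2$) gives complexity $O(\|q'\|_2/(\eps\eta)) = O(1/(\eps\sqrt\eta))$. This flattening step is not an optimization --- it is the step that makes the bound possible.

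For the second bound, your reverse-Pinsker observation $\kl{p}{q} \le (1+\ln(1/\eta))\,\tv{p}{q}$ is correct and elegant (it follows exactly as you sketch, discarding the $S^-$ contribution and bounding the $S^+$ contribution by $(p(a)-q(a))(1+\ln(1/\eta))$), but it loses a $\log(1/\eta)$ factor through the TV tester, as you note. Your proposed fix --- route through a chi-squared tester using $\chi^2(p\,\|\,q) \ge \kl{p}{q} \ge \eps$ --- does not obviously recover the lost factor: that inequality gives the same $\eps$ lower bound on $\chi^2$, and the chi-squared statistic again has the variance issue above. The paper's route is different: after flattening to scale $1/k$, it splits the domain into heavy elements $\QQ_1 = \{a : q(a) \ge \zeta\}$ and light elements $\QQ_2$ with $\zeta = \eps/(10k\ln(2/\eta))$, establishes
$\kl{p}{q} \le \frac{1}{\zeta}\|p-q\|_2^2 + (\ln(2/\eta)-1)\,p(\QQ_2) + q(\QQ_2),$
and tests the two dangerous pieces separately --- the $p(\QQ_2)$ mass via a Bernoulli tester in $O(\ln(1/\eta)/\eps)$ samples, and the $\ell_2$ term via the DK16 tester with distance parameter $\Theta(\eps\zeta)$, giving $O(\sqrt k \ln(1/\eta)/\eps^2)$. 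This heavy/light split is the idea your sketch does not contain.
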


The running time is polynomial in $1/\eta$ if we apply the first bound, and $\log(1/\eta)$ for the second.

\begin{remark}
	\label{rmk:eta:dep}
	We remark that the dependency on $\eta$ in the sample complexity is inevitable.
	This is because, if $\eta \rightarrow 0$, $\tv{p}{q}$ could tend to $0$ but $\kl{p}{q}$ can be independent of $\eta$.
	For example, if $p$ and $q$ are Bernoulli random variables with means $\hat p,\hat q \in (0,1/2)$, respectively,
	one can have $\kl{p}{q} = 0.1$, for some $\hat p=\hat p(\hat q)$, with $\tv{p}{q} = |\hat p-\hat q| \to 0$ as $\hat q \to 0$.
	In Lemma \ref{lem:KL-id-intro} we show that the sample complexity for identity testing with respect to KL divergence depends, in the worst case, logarithmically on $1/\eta$. 
	In particular, for uniformity testing under KL divergence, Lemma \ref{lem:KL-id-intro} gives an $O(\sqrt{k}/\eps)$ sample complexity which matches what one would expect given the $O(\sqrt{k}/\eps^2)$ sample complexity for uniformity testing under TV distance and Pinsker's inequality.
\end{remark}

%

We need the following standard inequalities between statistical divergences;
see \cite{Canonne22} for more. 

\begin{lemma}
	\label{lem:kl-chi2-l2}
	Let $q$ be a distribution fully supported on a finite set $\QQ$, and let $\eta = \min_{a \in \QQ} q(a)$. Then for any distribution $p$ over $\QQ$ with $p \ll q$ it holds
	\[
	\kl{p}{q} 
	\le \chitwo{p}{q}
	\le \frac{1}{\eta}\, \norm{p-q}_2^2
	\le \frac{2}{\eta}\, \tv{p}{q}^2.
	\]
\end{lemma}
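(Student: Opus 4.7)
The plan is to prove the three inequalities in order, as each is a standard computation and together they form the advertised chain. Let $\eta = \min_{a \in \QQ} q(a) > 0$, and assume throughout that $p \ll q$ so that terms of the form $0/0$ vanish by convention.

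For the first inequality $\kl{p}{q} \le \chitwo{p}{q}$, I would apply the elementary bound $\ln(1+t) \le t$ with $t = (p(a)-q(a))/q(a) > -1$. Multiplying by $p(a)$ and summing over $a \in \QQ$ gives
\begin{equation*}
\kl{p}{q} = \sum_{a} p(a) \ln\!\left(1 + \frac{p(a)-q(a)}{q(a)}\right) \le \sum_a \frac{p(a)(p(a)-q(a))}{q(a)}.
\end{equation*}
The right-hand side equals $\sum_a (p(a)-q(a))^2/q(a) + \sum_a (p(a)-q(a)) = \chitwo{p}{q}$, using that $p$ and $q$ both sum to $1$.

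For the second inequality, I would simply factor out the pointwise bound $q(a) \ge \eta$ from each denominator of $\chitwo{p}{q} = \sum_a (p(a)-q(a))^2 / q(a)$, which yields the factor $1/\eta$ times $\norm{p-q}_2^2$.

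For the third inequality, the key observation is that since $\sum_a (p(a)-q(a)) = 0$, the positive and negative parts of $p-q$ each sum to $\tv{p}{q}$, so in particular $|p(a)-q(a)| \le \tv{p}{q}$ for every $a$. Therefore
\begin{equation*}
\norm{p-q}_2^2 = \sum_a (p(a)-q(a))^2 \le \tv{p}{q} \sum_a |p(a)-q(a)| = 2\,\tv{p}{q}^2,
\end{equation*}
which completes the chain. There is no real obstacle here: the only subtlety worth emphasizing is the uniform bound $|p(a)-q(a)| \le \tv{p}{q}$ in the last step, which is what prevents the naive use of Cauchy--Schwarz from costing an extra factor of $2$.
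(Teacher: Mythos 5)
Your proof is correct, and each of the three steps is the standard argument one would find in a textbook (the paper itself simply cites \cite{Canonne22} for these facts rather than proving them): $\ln(1+t)\le t$ for the KL/$\chi^2$ comparison, pointwise factoring of $q(a)\ge\eta$ for the $\chi^2$/$\ell_2$ comparison, and the sup-norm bound $\norm{p-q}_\infty \le \tv{p}{q}$ (rather than the weaker $\ell_2\le\ell_1$ bound, which would lose a factor of $2$) for the last step. One tiny remark: when $p(a)=0$ the quantity $t=(p(a)-q(a))/q(a)$ equals $-1$, not something strictly greater, but since the term is multiplied by $p(a)=0$ the inequality $p(a)\ln(1+t)\le p(a)t$ still holds with both sides zero, so the argument is unaffected.
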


%


Our KL tester uses the following identity testing algorithm from \cite{DK16} for $\ell_2$ distance, and also the flattening technique proposed there to reduce the $\ell_2$ norm. 
See also \cite[Theorem 2.2.2]{Canonne22} for an exposition of the algorithms and techniques.

\begin{lemma}[\cite{DK16}]
	\label{lem:best-l2-id-test}
	Given the distance parameter $\eps > 0$, full description of the target distribution $q$ with domain $\QQ$ of size $k$, and general sample access to an unknown distribution $p$ over $\QQ$, there exists a polynomial-time identity testing algorithm $\AA_{\ell_2\textsc{-id}}$ that distinguishes with probability at least $2/3$ between the two cases 
	\begin{equation}
	\norm{p-q}_2 \le \frac{\eps}{2}
	\quad\text{and}\quad
	\norm{p-q}_2 \ge \eps
	\end{equation}
	with sample complexity
	$
	O\Big( \max\big\{  \frac{\norm{q}_2}{\eps^2},\, \frac{1}{\eps} \big\} \Big). 
	$
\end{lemma}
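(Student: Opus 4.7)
The plan is to construct an explicit unbiased estimator of $\norm{p-q}_2^2$ and apply Chebyshev's inequality, following the standard approach of Chan--Diakonikolas--Valiant and Diakonikolas--Kane. First, I would apply the Poissonization trick: draw $M\sim\mathrm{Poisson}(m)$ samples rather than exactly $m$, so that the occurrence counts $N_a=|\{t:x_t=a\}|$ become independent Poisson random variables with parameters $mp(a)$. This loses only a constant factor in the sample complexity but makes the coordinates $a\in\QQ$ independent, which is what makes the variance calculation tractable.

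Next, define the DK16 statistic
\[
Z \;=\; \sum_{a\in\QQ}\bigl[(N_a-mq(a))^2 - N_a\bigr].
\]
Using $\E[N_a]=\Var(N_a)=mp(a)$, a one-line calculation shows that $\E[Z]=m^2\norm{p-q}_2^2$, so the tester is to accept iff $Z\le m^2\eps^2/2$. To bound $\Var(Z)$, use independence across $a$ under Poissonization together with the Poisson moment formulas $\E[(N-\lambda)^3]=\lambda$ and $\E[(N-\lambda)^4]=3\lambda^2+\lambda$; expanding $(N_a-mq(a))^2$ around the mean $mp(a)$ and simplifying yields
\[
\Var(Z) \;=\; 2m^2\,\norm{p}_2^2 \;+\; 4m^3\sum_{a\in\QQ} p(a)\,(p(a)-q(a))^2.
\]
The cross term is controlled by Cauchy--Schwarz combined with $\norm{\cdot}_4\le\norm{\cdot}_2$, giving $\sum_a p(a)(p(a)-q(a))^2 \le \norm{p}_2\,\norm{p-q}_2^2$, and then the triangle inequality $\norm{p}_2\le\norm{q}_2+\norm{p-q}_2$ lets me replace $\norm{p}_2$ by quantities involving only $\norm{q}_2$ and $\norm{p-q}_2$.

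The final step is to verify that $m=C\max(\norm{q}_2/\eps^2,\,1/\eps)$ is large enough, for a universal constant $C$, that Chebyshev separates the two cases at the threshold $m^2\eps^2/2$. In the close case $\norm{p-q}_2\le\eps/2$, I would bound $\Var(Z)/(m^2\eps^2/4)^2\le 1/3$; in the far case $\norm{p-q}_2\ge\eps$, I would bound $4\Var(Z)/(\E Z)^2 \le 1/3$. I expect the main obstacle, and the only non-routine part, to be the case analysis in the far regime: since $\norm{p}_2$ can be as large as $\norm{q}_2+\norm{p-q}_2$, one must split according to whether $\norm{q}_2$ or $\norm{p-q}_2$ dominates and verify that each of the two terms $\norm{q}_2/\eps^2$ and $1/\eps$ in the sample complexity expression handles the appropriate regime. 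Once this bookkeeping is complete, de-Poissonization (via the $O(\sqrt{m})$ concentration of $M$ around $m$) transfers the conclusion back to the exact-$m$-sample setting without changing the asymptotic bound.
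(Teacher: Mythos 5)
The paper never proves this lemma internally—it is imported directly from [DK16] (see also Canonne's survey, Theorem 2.2.2)—and your proposal is a correct self-contained reconstruction of precisely the argument in that source: the Poissonized chi-square statistic $Z=\sum_a[(N_a-mq(a))^2-N_a]$ with $\E[Z]=m^2\norm{p-q}_2^2$. Your variance formula is in fact an exact identity (per coordinate the cross terms cancel, giving $\Var(Z)=2m^2\norm{p}_2^2+4m^3\sum_a p(a)(p(a)-q(a))^2$), and the Cauchy--Schwarz bound $\sum_a p(a)(p(a)-q(a))^2\le\norm{p}_2\norm{p-q}_2^2$ together with $\norm{p}_2\le\norm{q}_2+\norm{p-q}_2$ and the Chebyshev case analysis does close at threshold $m^2\eps^2/2$ once $m=C\max\{\norm{q}_2/\eps^2,\,1/\eps\}$ for a sufficiently large constant $C$, so the plan goes through as outlined.
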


We are now ready to give our proof of Lemma \ref{lem:KL-id-test}. 

\begin{proof}[Proof of~Lemma \ref{lem:KL-id-test}]
	Without loss of generality we may assume that $q$ is fully supported over $\QQ$; i.e., $q(a) \ge \eta$ for all $a \in \KK$. 
	We establish the two bounds in the lemma separately using two testing algorithms depending on the range of parameters, both based on the $\ell_2$ tester in \cite{DK16}.
	To clarify, our testing algorithm will check the two bounds $1/(\eps\sqrt{\eta})$ and $(\sqrt{k}/\eps^2) \ln(1/\eta)$, find the smaller one, and run the algorithm for that bound.

	\medskip
	\noindent\textbf{Algorithm A.}
	We construct a new instance $p',q'$ (including the oracle for $p'$) of the identity testing problem from $p, q$ such that $\eta/2 \le q'(a) \le \eta$ for all $a \in \QQ'$, where $\QQ'$ is a new domain of size $k' = \Theta(1/\eta)$; this is achieved using the flattening technique from \cite{DK16}.  
	We show that the new identity testing problem with $p',q'$ is equivalent to the original one with $p,q$, requiring the same number of samples. And we can apply Lemma \ref{lem:best-l2-id-test} to the new instance $p',q'$ with nicer properties to obtain a better bound on the sample complexity instead of doing it directly to $p,q$. 
	For each $a \in \QQ$, define 
	\[
	k_a = \floor{\frac{q(a)}{\eta}} + 1.
	\]
	We split each $a \in \KK$ into $k_a$ distinct copies, denoted by $a_1,\dots,a_{k_a}$, which constitute the new domain $\QQ'$; namely, 
	\[
	\QQ' = \{a_i: a \in \KK,\, 1\le i \le k_a\}.
	\] 
	Notice that the size $k' = |\QQ'|$ of the new domain is bounded by 
	\[
	k' = \sum_{a \in \KK} k_a 
	\le \sum_{a \in \KK} \left(\frac{q(a)}{\eta} + 1\right)
	= \frac{1}{\eta} + k
	\le \frac{2}{\eta},
	\]
	where the last inequality follows from $\eta \le 1/k$ since $q$ is fully supported on $\QQ$. 
	The new target distribution $q'$ is given by for every $a \in \KK$ and $i \in [k_a]$,
	\[
	q'(a_i) = \frac{q(a)}{k_a}, 
	\]
	and similarly $p'$ is given by $p'(a_i) = p(a)/k_a$. 
	We can easily transform a sample from $p$ into a sample from $p'$: if we receive $a$ as a sample from $p$, then we can compute $k_a$ (since the target distribution $q$ is given in full description) and generate $i \in [k_a]$ uniformly at random, so that $a_i$ is a sample from the distribution $p'$. 
	The crucial fact here is that the KL divergence (more generally, any $f$-divergence) is preserved under flattening. 
	Indeed, observe that
	\[
	\kl{p'}{q'} 
	= \sum_{a \in \QQ} \sum_{i \in [k_a]} p'(a_i) \ln\frac{p'(a_i)}{q'(a_i)}
	= \sum_{a \in \QQ} \sum_{i \in [k_a]} \frac{p(a)}{k_a} \ln\frac{p(a)}{q(a)}
	= \kl{p}{q}.
	\]
	Thus, we only need to solve the identity testing problem for the flattened distribution $p'$ and $q'$. 
	Moreover, we observe that for all $a \in \KK$ and $i \in [k_a]$ it holds
	\[
	\frac{\eta}{2} \le q'(a_i) = \frac{q(a)}{k_a} \le \eta,
	\]
	since we have
	\[
	\frac{2q(a)}{\eta} \ge k_a = \floor{\frac{q(a)}{\eta}} + 1 \ge \frac{q(a)}{\eta},
	\]
	where the first inequality follows from $q(a) \ge \eta$. 
	Therefore, we observe that
	\[
	\norm{q'}_2^2 = \sum_{a_i \in \QQ'} q'(a_i)^2 
	\le \eta \sum_{a_i \in \QQ'} q'(a_i) = \eta.
	\]
	
	Note that $\kl{p'}{q'} \le (2/\eta) \norm{p'-q'}_2^2$ by Lemma \ref{lem:kl-chi2-l2}.  
	Applying Lemma \ref{lem:best-l2-id-test}, we are able to distinguish between $p'=q'$ versus $\norm{p'-q'}_2^2 \ge \eps \eta / 2$, and hence between $p'=q'$ versus $\kl{p'}{q'} \ge \eps$, using
	\[
	O\left( \max\left\{  \frac{\norm{q'}_2}{\eps \eta},\, \frac{1}{\sqrt{\eps \eta}} \right\} \right)
	= O\left( \frac{1}{\eps\sqrt{\eta}} \right)
	\]
	samples from the unknown distribution $p'$. 
	This then gives an identity testing algorithm for $p$ and $q$ for KL divergence using the same number of samples from $p$.

	\medskip
	\noindent\textbf{Algorithm B.}
	The previous algorithm works well when $1/\eta$ is not too large. 
	To get a better dependency on $1/\eta$ as in the second bound, more work is required.
	Our first step is still flattening the distributions, but up to the scale $1/k$ instead of $\eta$.
	This is done exactly in \cite{DK16} and \cite[Theorem 2.2.2]{Canonne22}. 
	Let $k_a = \floor{kq(a)}+1$ for each $a \in \QQ$ and let $q'(a_i) = q(a)/k_a$ for each $a \in \QQ$ and $i \in [k_a]$.
	The flattened distributions $p'$ and $q'$ satisfies the following properties:
	\begin{enumerate}[(a)]
		\item Given an explicit description of $q$, one can efficiently give an explicit description of the flattened distribution $q$;
		\item Given access to the sampling oracle for $p$, one can efficiently generate samples from the flattened distribution $p'$; 
		\item The KL divergence is preserved, i.e., $\kl{p'}{q'} = \kl{p}{q}$;
		\item The size of the new domain is $k' \le 2k$; 
		\item For every $a_i \in \QQ'$, we have $\eta/2 \le q'(a_i) \le 2/k'$; \label{itm:d}
		\item We have $\norm{q'}_2 \le \sqrt{2/k'}$. \label{itm:e}
	\end{enumerate}
	The proofs of these properties are the same as before or as in \cite[Theorem 2.2.2]{Canonne22} so we omit here. 
	We only mention the lower bound on $q'(a_i)$: 
	since $k_a = \floor{kq(a)}+1 \le kq(a) + q(a)/\eta$ we have that
	\[
	q'(a_i) = \frac{q(a)}{k_a} \ge \frac{q(a)}{kq(a) + \frac{q(a)}{\eta}} = \frac{\eta}{k\eta + 1} \ge \frac{\eta}{2}.  
	\]
	Therefore, it suffices to consider the identity testing problem with respect to distributions $p'$ and $q'$ satisfying properties (\ref{itm:d}) and (\ref{itm:e}). 
	For ease of notation, in the rest of the proof we assume that our $p,q$ are already flattened to satisfy (\ref{itm:d}) and (\ref{itm:e}), instead of writing $p',q'$ and $k'$. 
	
	Our second step is to divide elements in $\QQ$ into two classes, those with larger probability mass and those with smaller one, and to upper bound the KL divergence by dealing with the two classes separately. 
	Let 
	\[
	\zeta = \frac{\eps}{10 k \ln(2/\eta)}, 
	\]
	and let $\QQ_1 = \{a \in \KK: q(a) \ge \zeta \}$ and $\QQ_2 = \{a \in \KK: \eta/2 \le q(a) < \zeta \}$.
	Hence, $\{\QQ_1,\QQ_2\}$ forms a partition of $\QQ$.
	We upper bound the KL divergence of $p$ and $q$ as follows.
	Observe that
	\begin{equation}\label{eq:KL-1-2}
	\kl{p}{q} = \sum_{a \in \QQ_1} p(a) \ln \frac{p(a)}{q(a)} + \sum_{a \in \QQ_2} p(a) \ln \frac{p(a)}{q(a)}.
	\end{equation}
	For the second term, we have
	\begin{equation}\label{eq:KL-first}
	\sum_{a \in \QQ_2} p(a) \ln \left(\frac{p(a)}{q(a)}\right) \le 
	\sum_{a \in \QQ_2} p(a) \ln(2/\eta) = \left( \ln(2/\eta) \right) p(\QQ_2).
	\end{equation}
	For the first term, we have
	\begin{align}
	\sum_{a \in \QQ_1} p(a) \ln \left(\frac{p(a)}{q(a)}\right)
	&\le \sum_{a \in \QQ_1} p(a) \left( \frac{p(a)}{q(a)} - 1 \right) \nonumber\\
	&= \sum_{a \in \QQ_1} (p(a) - q(a)) \left( \frac{p(a)}{q(a)} - 1 \right)
	+ \sum_{a \in \QQ_1} q(a) \left( \frac{p(a)}{q(a)} - 1 \right) \nonumber\\
	&= \sum_{a \in \QQ_1} \frac{(p(a) - q(a))^2}{q(a)} + p(\QQ_1) - q(\QQ_1) \nonumber\\
	&\le \frac{1}{\zeta} \norm{p-q}_2^2 + q(\QQ_2) - p(\QQ_2), \label{eq:KL-second}
	\end{align}
	where the last inequality is because $q(a) \ge \zeta$ for $a \in \QQ_1$. 
	Therefore, combining \eqref{eq:KL-1-2}, \eqref{eq:KL-first} and \eqref{eq:KL-second} we obtain
	\begin{equation}\label{eq:KL-upper}
	\kl{p}{q} \le \frac{1}{\zeta} \norm{p-q}_2^2 + \left( \ln(2/\eta) - 1 \right) p(\QQ_2) + q(\QQ_2).
	\end{equation}
	In particular, \eqref{eq:KL-upper} directly implies the following fact. 
	\begin{fact}\label{fact:kl-2cases}
		If $\kl{p}{q} \ge \eps$, then 
		\[
		\text{either~~} p(\QQ_2) \ge \frac{\eps}{5\ln(2/\eta)}, \text{~~~~or~~} \norm{p-q}_2^2 \ge \frac{4}{5}\eps\zeta.
		\]
	\end{fact}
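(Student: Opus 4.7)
The plan is to derive the fact directly from the displayed upper bound on $\kl{p}{q}$ that was established just before the statement, together with a simple bound on $q(\QQ_2)$. The only missing ingredient is controlling the nuisance term $q(\QQ_2)$, which does not appear in either of the two alternative conclusions; the rest is bookkeeping with constants.

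First, I would bound $q(\QQ_2)$. By the definition $\QQ_2 = \{a \in \QQ : \eta/2 \le q(a) < \zeta\}$, every element of $\QQ_2$ has $q$-mass strictly less than $\zeta$, and $|\QQ_2| \le k$, so
\[
q(\QQ_2) \;<\; k\zeta \;=\; \frac{\eps}{10\ln(2/\eta)},
\]
using the definition $\zeta = \eps/(10k\ln(2/\eta))$. Substituting into the displayed bound yields
\[
\kl{p}{q} \;\le\; \frac{1}{\zeta}\, \norm{p-q}_2^2 \;+\; \bigl(\ln(2/\eta) - 1\bigr)\, p(\QQ_2) \;+\; \frac{\eps}{10\ln(2/\eta)}.
\]

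Next I would argue by contradiction. Suppose both conclusions of the fact fail, so that simultaneously $p(\QQ_2) < \eps/(5\ln(2/\eta))$ and $\norm{p-q}_2^2 < 4\eps\zeta/5$. Plugging these into the inequality above gives
\[
\kl{p}{q} \;<\; \frac{4\eps}{5} \;+\; \frac{\ln(2/\eta)-1}{5\ln(2/\eta)}\,\eps \;+\; \frac{\eps}{10\ln(2/\eta)} \;=\; \eps \;-\; \frac{\eps}{5\ln(2/\eta)} \;+\; \frac{\eps}{10\ln(2/\eta)} \;=\; \eps \;-\; \frac{\eps}{10\ln(2/\eta)},
\]
which is strictly less than $\eps$ (since $\eta \in (0,1/2]$ ensures $\ln(2/\eta) > 0$). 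This contradicts the hypothesis $\kl{p}{q} \ge \eps$, so at least one of the two alternatives must hold.

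There is no real obstacle here: the fact is essentially a packaging of the preceding algebra. The only thing to be careful about is that the constants $1/5$ and $4/5$ in the statement are chosen precisely so that the three contributions on the right-hand side (a $\tfrac{4}{5}\eps$ term from the $\ell_2^2$ part, an almost-$\tfrac{1}{5}\eps$ term from $p(\QQ_2)$, and the leftover $\eps/(10\ln(2/\eta))$ from bounding $q(\QQ_2)$) sum to strictly less than $\eps$; the small surplus $\eps/(10\ln(2/\eta))$ in the final line is exactly the slack produced by the $q(\QQ_2)$ bound, which is why the threshold $\zeta = \eps/(10k\ln(2/\eta))$ was chosen with the factor of $10$ up front.
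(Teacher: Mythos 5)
Your proof is correct and follows the same approach as the paper: argue by contradiction, bound $q(\QQ_2) \le \zeta k = \eps/(10\ln(2/\eta))$ using the definitions of $\QQ_2$ and $\zeta$, and substitute into \cref{eq:KL-upper} to force $\kl{p}{q} < \eps$. The only cosmetic difference is that the paper deliberately loosens the bound on $q(\QQ_2)$ to $\eps/(5\ln(2/\eta))$ so the $-1$ inside $(\ln(2/\eta)-1)$ cancels it exactly and the right-hand side sums to precisely $\eps$, whereas you keep the sharper $\eps/(10\ln(2/\eta))$ and land at $\eps - \eps/(10\ln(2/\eta))$; both yield the required strict contradiction.
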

	\noindent
	To see this, suppose on contrary that $p(\QQ_2) < \eps/(5\ln(2/\eta))$ and $\norm{p-q}_2^2 < \frac{4}{5}\eps\zeta$. 
	Since we know 
	\[
	q(\QQ_2) \le \zeta k = \frac{\eps}{10\ln(2/\eta)} < \frac{\eps}{5\ln(2/\eta)}, 
	\]
	we deduce from \eqref{eq:KL-upper} that
	\[
	\eps \le \kl{p}{q} < \frac{4}{5} \eps + \frac{1}{5}\eps = \eps,
	\]
	which is a contradiction.

	Our identity testing algorithm proceeds by conducting two tests independently. 
	In the first test, we try to distinguish between $p(\QQ_2) = q(\QQ_2)$ and $p(\QQ_2) \ge \eps/(5\ln(2/\eta))$ with failure probability $1/6$ and sample complexity $m_1 = O(\ln(1/\eta) / \eps)$. (If $\QQ_2 = \emptyset$ then we do nothing in this first stage.)
	To be more precise, let $X$ (respectively $Y$) be the indicator of the event that a sample drawn from $p$ (respectively $q$) is contained in $\QQ_2$. 
	So both $X$ and $Y$ are Bernoulli random variables, where the expectation $q(\QQ_2) > 0$ of $Y$ is known to the algorithm, while the expectation $p(\QQ_2)$ of $X$ is unknown but we have sample access to $X$ via samples from $p$. 
	We would like to distinguish between the two cases $p(\QQ_2) = q(\QQ_2)$, i.e.\ $X$ and $Y$ are the same, and $p(\QQ_2) \ge \eps/(5\ln(2/\eta))$, i.e.\ $X$ and $Y$ are far from each other since $q(\QQ_2) \le \eps/(10\ln(2/\eta))$. 
	This is a standard property testing problem for Bernoulli random variables.
	We use the testing algorithm from Lemma \ref{lem:Ber-id-test} for 
	$$ \gamma = \frac{\eps}{5q(\QQ_2)\ln(2/\eta)} - 1 \ge 1 $$ 
	with failure probability $1/6$, using $m_1 = O(\ln(1/\eta) / \eps)$ samples from $X$.

	In the second stage, we run the tester from Lemma \ref{lem:best-l2-id-test} to distinguish between $p=q$ and $\norm{p-q}_2^2 \ge \frac{4}{5}\eps\zeta$ with failure probability $1/6$. 
	Let $m_2$ be the number of samples that the $\ell_2$ tester uses, 
	and we obtain from Lemma \ref{lem:best-l2-id-test} that
	\[
	m_2 = O\left( \max\left\{ \frac{\norm{q}_2}{\eps \zeta},\, \frac{1}{\sqrt{\eps \zeta}} \right\} \right)
	= O\left( \frac{\sqrt{k} \ln(1/\eta)}{\eps^2} \right), 
	\]
	where we use the property (\ref{itm:e}) from flattening.  
	
	Suppose in both tests the outputs are Yes (i.e., $p(\QQ_2) = q(\QQ_2)$ in the first and $p=q$ in the second), then our identity testing algorithm will output Yes (i.e., $p=q$). 
	If in at least one test the output is No, then our identity testing algorithm will output No (i.e., $\kl{p}{q} \ge \eps$). 
	To finish up the proof, we still need to bound the failure probability and the number of samples needed for our testing algorithm. 
	Suppose first that $p=q$, and hence $p(\QQ_2) = q(\QQ_2)$. 
	Our testing algorithm wrongly outputs No if at least one of the two tests makes a mistake and outputs No. By a simple union bound, the probability of this is at most $1/6+1/6 = 1/3$.
	On the other hand, if $\kl{p}{q} \ge \eps$, then either $p(\QQ_2) \ge \eps/(5\ln(2/\eta))$ or $\norm{p-q}_2^2 \ge \frac{4}{5}\eps\zeta$ by Fact~\ref{fact:kl-2cases}, and so at least one of the two tests should output No if it does not make a mistake. Hence, the failure probability is at most $1/6$. 
	Finally, the number of samples we need is
	\[
	m_1 + m_2 = O\left( \frac{\ln(1/\eta)}{\eps} \right) + O\left( \frac{\sqrt{k} \ln(1/\eta)}{\eps^2} \right)
	= O\left( \frac{\sqrt{k} \ln(1/\eta)}{\eps^2} \right). 
	\]
	This establishes the second bound of the lemma. 
\end{proof}

\subsection{Identity testing for KL divergence on binary domain}

If $k=2$, i.e., we have a binary domain $\QQ = \{0,1\}$, then the sample complexity for the KL tester is better.

For $p \in [0,1]$, the Bernoulli distribution denoted by $\Ber(p)$ is the distribution over $\{0,1\}$ such that $\Pr(X = 1) = p$.
We record below the standard Chernoff bounds. 

\begin{lemma}[Chernoff bounds]
	\label{lem:Chernoff}
	Suppose $X_1,\dots,X_m$ are independent Bernoulli random variables from $\Ber(p)$ where $p \in [0,1]$. 
	Let $\hat{p} = \frac{1}{m} \sum_{i=1}^m X_i$ denote the sample mean. 
	Then for all $\delta \ge 0$,
	\begin{align*}
	\Pr\left( \hat{p} \le (1-\delta) p \right) &\le e^{-\delta^2 p m /2};\\
	\Pr\left( \hat{p} \ge (1+\delta) p \right) 
	&
	\le e^{-\delta^2 p m /(2+\delta)}
	\le
	\begin{cases}
	e^{-\delta^2 p m /3}, & 0\le \delta \le 1;\\
	e^{-\delta p m /3}, & \delta \ge 1.
	\end{cases}
	\end{align*}
\end{lemma}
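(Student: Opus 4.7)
The plan is to prove these Chernoff bounds by the standard exponential moment (Markov/Chernoff) method, then tidy up by applying elementary inequalities for $\ln(1+\delta)$. This is textbook material, so the proof is mostly routine; the only mild wrinkle is extracting the three separate stated forms of the upper-tail bound.

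First I would set $S = \sum_{i=1}^m X_i$, so $\hat p = S/m$ and $\E S = pm$. For any $t > 0$, Markov's inequality applied to $e^{tS}$ gives
\[
\Pr(S \ge (1+\delta)pm) \le e^{-t(1+\delta)pm}\,\E[e^{tS}] = e^{-t(1+\delta)pm}\prod_{i=1}^m \E[e^{tX_i}].
\]
Using $\E[e^{tX_i}] = 1 + p(e^t-1) \le e^{p(e^t-1)}$ (from $1+x \le e^x$), the right-hand side is at most $\exp\bigl(pm(e^t-1) - t(1+\delta)pm\bigr)$. Optimizing in $t$ by setting $t = \ln(1+\delta) \ge 0$ yields the classical bound
\[
\Pr(\hat p \ge (1+\delta)p) \le \left(\frac{e^\delta}{(1+\delta)^{1+\delta}}\right)^{pm}.
\]

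Next I would convert this to the three stated forms via elementary estimates on $g(\delta) := (1+\delta)\ln(1+\delta) - \delta$. For $\delta \ge 0$ the inequality $g(\delta) \ge \delta^2/(2+\delta)$ gives the uniform bound $e^{-\delta^2 pm/(2+\delta)}$; splitting cases at $\delta = 1$ then gives $e^{-\delta^2 pm/3}$ for $0 \le \delta \le 1$ and $e^{-\delta pm/3}$ for $\delta \ge 1$ (the latter because $\delta^2/(2+\delta) \ge \delta/3$ when $\delta \ge 1$). For the lower tail, the same MGF argument with $t < 0$ and the standard inequality $-\delta - (1-\delta)\ln(1-\delta) \le -\delta^2/2$ for $\delta \in [0,1]$ produces $\Pr(\hat p \le (1-\delta)p) \le e^{-\delta^2 pm/2}$ (and the bound is trivial when $\delta > 1$).

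There is no real obstacle here: both the MGF bound $\E[e^{tX_i}] \le e^{p(e^t-1)}$ and the analytic inequalities on $g(\delta)$ are standard. The only thing to be careful about is presenting the three pieces of the upper-tail bound cleanly from the single unified estimate; I would do this by first proving $e^{-\delta^2 pm/(2+\delta)}$ and then deriving the two case-split forms as immediate corollaries.
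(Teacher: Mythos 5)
Your proof is correct and is the standard MGF/Chernoff argument; the paper itself gives no proof of this lemma, simply recording it as a standard fact. One small check worth spelling out if you write this up in full: the inequality $g(\delta) = (1+\delta)\ln(1+\delta) - \delta \ge \delta^2/(2+\delta)$ follows because both sides and their first derivatives vanish at $\delta = 0$, and the second derivatives satisfy $1/(1+\delta) \ge 8/(2+\delta)^3$ for $\delta \ge 0$ (equivalently $(2+\delta)^3 \ge 8(1+\delta)$, which holds with equality at $0$ and has increasing difference); everything else you outline is routine and accurate.
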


The following is a folklore fact. 
\begin{lemma}
	\label{lem:Ber-id-test}
	Let $\gamma > 0$ be a real number.
	Given $q \in (0,1/(1+\gamma)]$ and sample access to $\Ber(p)$ with unknown $p \in [0,1]$, there exists a polynomial-time identity testing algorithm that distinguishes with probability at least $2/3$ between the two cases 
	\begin{equation}
	p = q
	\quad\text{and}\quad
	p \ge (1+\gamma) q
	\end{equation}
	with sample complexity
	\[
	O\left( \frac{1+\gamma}{\gamma^2 q} \right)
	= \begin{cases}
	O\left( \dfrac{1}{\gamma^2 q} \right), & 0 < \gamma \le 1;\\[1em]
	O\left( \dfrac{1}{(1+\gamma) q} \right), & \gamma \ge 1.
	\end{cases}
	\] 
\end{lemma}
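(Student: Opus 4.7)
The plan is a straightforward thresholded empirical-mean test using the Chernoff bounds of \cref{lem:Chernoff}. Draw $m$ independent samples $X_1,\dots,X_m$ from the unknown $\Ber(p)$, form the empirical mean $\hat p = \frac{1}{m}\sum_{i=1}^m X_i$, and accept the hypothesis $p=q$ iff $\hat p < (1+\gamma/2)\,q$. The threshold $(1+\gamma/2)q$ is chosen to sit at the midpoint (in the multiplicative sense) of the gap between the two hypotheses.

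For the soundness analysis, I would bound the two error probabilities using the two directions of \cref{lem:Chernoff}. If $p=q$, then $\hat p$ has mean $q$, and the upper-tail bound with $\delta=\gamma/2$ gives
\[
\Pr\bigl(\hat p \ge (1+\gamma/2)\,q\bigr) \le \exp\!\left(-\frac{(\gamma/2)^2 \,q\, m}{2+\gamma/2}\right),
\]
which is $\exp(-\Omega(\gamma^2 q m))$ when $\gamma\le 2$ and $\exp(-\Omega(\gamma q m))$ when $\gamma\ge 2$. If $p\ge (1+\gamma)q$, then $(1+\gamma/2)q = (1-\delta)\,(1+\gamma)q \le (1-\delta)\,p$ with $\delta = (\gamma/2)/(1+\gamma)$, and the lower-tail bound gives
\[
\Pr\bigl(\hat p \le (1+\gamma/2)\,q\bigr) \le \exp\!\left(-\frac{\delta^2 p\, m}{2}\right) \le \exp\!\left(-\frac{\gamma^2\, q\, m}{8(1+\gamma)}\right),
\]
using $p\ge (1+\gamma)q$. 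In both regimes (small $\gamma$ and large $\gamma$) the dominant constraint on $m$ comes out to $m = \Omega\!\bigl((1+\gamma)/(\gamma^2 q)\bigr)$, which matches the claimed sample complexity and gives the two stated specializations for $\gamma\le 1$ and $\gamma\ge 1$.

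Choosing the constant in $m = C(1+\gamma)/(\gamma^2 q)$ large enough to make each of the two error probabilities bounded by $1/6$, a union bound yields overall success probability at least $2/3$. Since the whole procedure just draws samples, averages them, and compares the result to a precomputed threshold (which requires only the known values $q$ and $\gamma$), the running time is polynomial. There is no real obstacle here beyond checking that the Chernoff exponents match in the two regimes $\gamma\le 1$ and $\gamma\ge 1$; the threshold $(1+\gamma/2)q$ is the natural choice and makes both bounds tight up to constants.
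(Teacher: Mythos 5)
Your proposal is correct and is essentially the same as the paper's proof: identical threshold $(1+\gamma/2)q$, identical Chernoff applications with the same $\delta$ values, and the same sample complexity of $O\bigl((1+\gamma)/(\gamma^2 q)\bigr)$. One tiny remark: no union bound is needed at the end, since exactly one of the hypotheses holds in each execution, so it suffices to bound each single error probability by $1/3$ (as the paper does), rather than by $1/6$.
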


\begin{proof}
	Let $\hat{p}$ denote the sample mean of $m$ independent samples from $\Ber(p)$ where
	\[
	m = \ceil{\frac{10(1+\gamma)}{\gamma^2 q}}.
	\] 
	If $\hat{p} \le (1+\gamma/2) q$, then the tester concludes $p = q$; otherwise, it concludes $p \ge (1+\gamma) q$. 
	
	Suppose first $p=q$. Then by the Chernoff bound Lemma \ref{lem:Chernoff} we have
	\[
	\Pr\left( \hat{p} \ge \left( 1+\frac{\gamma}{2} \right) q \right) 
	\le \exp\left( -\frac{\gamma^2 qm}{2(4+\gamma)} \right) \le \frac{1}{3}. 
	\]
	If $p \ge (1+\gamma) q$, then again by the Chernoff bound Lemma \ref{lem:Chernoff} we have
	\begin{multline*}
	\Pr\left( \hat{p} \le \left( 1+\frac{\gamma}{2} \right) q \right)
	\le \Pr\left( \hat{p} \le \left(\frac{1+\gamma/2}{1+\gamma}\right) p \right)
	= \Pr\left( \hat{p} \le \left(1 - \frac{\gamma}{2(1+\gamma)}\right) p \right) \\
	\le \exp\left( - \frac{\gamma^2 p m}{8(1+\gamma)^2} \right) 
	\le \exp\left( - \frac{\gamma^2 q m}{8(1+\gamma)} \right) 
	\le \frac{1}{3}.
	\end{multline*}
	Finally, for $0 < \gamma \le 1$ one has $(1+\gamma)/\gamma^2 \le 2/\gamma^2$, and for $\gamma \ge 1$ one has $(1+\gamma)/\gamma^2 \le 4/(1+\gamma)$, which completes the proof of the lemma. 
\end{proof}

We now give our testing algorithm for Bernoulli random variables. 
\begin{lemma}
	\label{lem:Ber-KL-id-test}
	Let $\eps > 0$ be a real number.
	Given $q \in (0,1)$ and sample access to $\Ber(p)$ with unknown $p \in [0,1]$, there exists a polynomial-time identity testing algorithm that distinguishes with probability at least $2/3$ between the two cases 
	\begin{equation}
	p = q
	\quad\text{and}\quad
	\kl{\Ber(p)}{\Ber(q)} \ge \eps
	\end{equation}
	with sample complexity
	\[
	O\left( \frac{\ln(1/\eta)}{\eps} \right)
	\]
	where $\eta = \min\{q, 1-q\}$.
\end{lemma}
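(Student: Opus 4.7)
The plan is to base the tester on the empirical mean $\hat p$ of $m = \Theta\bigl(\log(1/q)/\eps\bigr)$ i.i.d.\ samples from $\Ber(p)$, accepting iff $\kl{\Ber(\hat p)}{\Ber(q)} \le \eps/2$ and rejecting otherwise. Without loss of generality I would assume $q \le 1/2$, so that $\eta = q$; the case $q > 1/2$ reduces by swapping $0 \leftrightarrow 1$, which preserves both the Bernoulli KL and $\eta$.

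For completeness, when $p = q$ the sharp Bernoulli Chernoff--Hoeffding bound gives $\Pr[\hat p \ge \tau] \le e^{-m\kl{\Ber(\tau)}{\Ber(q)}}$ for any $\tau \ge q$, and symmetrically for the lower tail. Writing $\tau_- < q < \tau_+$ for the two roots of $\kl{\Ber(\tau)}{\Ber(q)} = \eps/2$, a union bound over the two tails yields $\Pr[\mathrm{reject}] \le 2e^{-m\eps/2} \le 1/6$ once $m \ge (2\ln 12)/\eps$, which is dominated by our budget.

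For soundness, suppose $\kl{\Ber(p)}{\Ber(q)} \ge \eps$. The elementary inequality $\kl{\Ber(p)}{\Ber(q)} \le \log\tfrac{1}{1-q} \le 2q$, valid whenever $p \le q \le 1/2$, shows that the subcase $p<q$ forces $q \ge \eps/2$, and then $m \gtrsim \log(2/\eps)/\eps$ is comfortably enough for the Chernoff bound on the upper tail of $\hat p$ to push the empirical mean above $\tau_-$ with probability $\ge 5/6$. In the main subcase $p > q$ we have $\tau_+ < p$, and Chernoff yields $\Pr[\mathrm{accept}] \le \Pr[\hat p \le \tau_+] \le \exp\bigl(-m\,\kl{\Ber(\tau_+)}{\Ber(p)}\bigr)$. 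Everything therefore hinges on the technical claim
\[
\kl{\Ber(\tau_+)}{\Ber(p)} \;\ge\; \frac{c\,\eps}{\log(1/q)},
\]
for a universal constant $c > 0$, which together with $m = C\log(1/q)/\eps$ for $C$ large pushes the Chernoff exponent above $\ln 6$.

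I would prove the key claim by splitting into two sub-regimes. In the near regime $p \le 2q$, the Taylor expansion $\kl{\Ber(x)}{\Ber(q)} \asymp (x-q)^2/(q(1-q))$ holds uniformly on $[q,2q]$, forcing $\tau_+ - q$ and $p - q$ to be of the common order $\sqrt{\eps q}$; a direct computation using the same quadratic approximation around $p$ then gives $\kl{\Ber(\tau_+)}{\Ber(p)} = \Omega(\eps)$, much stronger than what is needed. In the far regime $p > 2q$, the bound $\kl{\Ber(p)}{\Ber(q)} \le p\log(p/q) \le p\log(1/q)$ yields $p \ge \eps/\log(1/q)$, and the expansion
\[
\kl{\Ber(a)}{\Ber(b)} \;=\; b\,f(a/b) + O(b^2), \quad f(t) := t\log t + 1 - t,
\]
for small $a,b$, together with $f(t) \ge (1-\log 2)/2$ for $t \le 1/2$, delivers the required $\Omega(\eps/\log(1/q))$ lower bound after checking that the worst-case saturation $\kl{\Ber(p)}{\Ber(q)} = \eps$ forces $\tau_+/p \le 1/2$, via a first-order calculation using $\frac{d}{dx}\kl{\Ber(x)}{\Ber(q)} \asymp \log(1/q)$ throughout this regime. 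The main obstacle is executing this first-order calculation uniformly and stitching the two sub-regimes smoothly at the transition $p \approx 2q$, so that the constant $c$ is genuinely absolute and independent of all parameters; neither Pinsker nor the $\chi^2$ upper bound is tight enough to bypass this step.
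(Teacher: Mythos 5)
Your tester (accept iff $\kl{\Ber(\hat p)}{\Ber(q)} \le \eps/2$, where $\hat p$ is the empirical mean of $m = \Theta(\log(1/q)/\eps)$ samples) is a genuinely different route from the paper's. The paper uses an explicit interval test $\hat p \in [S,T]$ and, crucially, cases on the \emph{observable} quantity $\eps$ versus $q$: Case 1 ($\eps \le 2q$) is handled by the chi-squared bound $\kl{\Ber(p)}{\Ber(q)} \le 2(p-q)^2/q$ plus additive Chernoff; Cases 2 and 3 ($\eps > 2q$) observe that the bad $p$ must exceed $q$ (since $\kl{\Ber(p)}{\Ber(q)} \le 2q < \eps$ for $p \le q$), and indeed $p \ge 2q$, which lets the whole problem be offloaded to the elementary Bernoulli one-sided tester (\cref{lem:Ber-id-test}). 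That structure eliminates the need for any quantitative lower bound on $\kl{\Ber(\tau_+)}{\Ber(p)}$.

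Your soundness analysis, by contrast, hinges entirely on the unproven key claim $\kl{\Ber(\tau_+)}{\Ber(p)} \ge c\,\eps/\log(1/q)$, and there the far-regime argument has genuine gaps that you yourself flag. First, the step ``saturation forces $\tau_+/p \le 1/2$'' is false at the transition: convexity of $g(x) := \kl{\Ber(x)}{\Ber(q)}$ with $g(q)=0$ gives $g(\tau_+)/(\tau_+-q) \le g(p)/(p-q)$, hence $\tau_+ \ge (p+q)/2$, so at $p = 2q$ (and by continuity for $p$ slightly above $2q$) one has $\tau_+ > p/2$. Second, the assertion that $\frac{d}{dx}\kl{\Ber(x)}{\Ber(q)} \asymp \log(1/q)$ throughout the far regime is not correct near $x \approx 2q$: there $\frac{d}{dx}\kl = \ln\frac{x(1-q)}{q(1-x)} \approx \ln 2 + O(q)$, which is only $\Theta(1)$. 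Third, the expansion $\kl{\Ber(a)}{\Ber(b)} = bf(a/b) + O(b^2)$ with $b = p$ needs $p$ to be small, but all you have is the lower bound $p \ge \eps/\log(1/q)$, so the error term is not controlled when $p$ is bounded away from $0$ (though that sub-case can be patched separately). Your near-regime step also silently uses that the constants in $\kl{\Ber(x)}{\Ber(q)} \asymp (x-q)^2/(q(1-q))$ on $[q,2q]$ satisfy $c_u < 2c_l$ for a fixed $q$, so that the interval $[\tau_+,p]$ actually has length $\Omega(\sqrt{\eps q})$ rather than possibly collapsing; that is true but not immediate and you do not verify it. These are fixable (e.g., by lowering the acceptance threshold to $\eps/c$ for large $c$ and repartitioning the regimes), but as written the proposal does not establish the key claim, and the stitching difficulty at $p \approx 2q$ that you identify is a real obstruction, not a cosmetic one. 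The paper sidesteps all of this by reducing the $\eps > 2q$ regime to \cref{lem:Ber-id-test} rather than proving a uniform KL gap.
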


\begin{proof}
	We may assume without loss of generality that $q \le 1/2$ and $\eta = q$, as otherwise we can flip the Bernoulli. 
	For $q \in (0,1)$ and $p \in [0,1]$, we define
	\[
	\bkl(p,q) 
	= \kl{\Ber(p)}{\Ber(q)}
	= p \ln\frac{p}{q} + (1-p) \ln \frac{1-p}{1-q}.
	\]
	The testing algorithm is as follows.
	Let $S<T$ be parameters which, as will be clear soon, depend on the distance parameter $\eps$ and the mean $q$ of the given Bernoulli (note that both $\eps$ and $q$ are known to the algorithm). 
	Compute the sample mean $\hat{p}$ for $p$ using $m$ independent samples from $\Ber(p)$.
	The testing algorithm determines $p=q$ or $\bkl(p,q) \ge \eps$ by checking whether $\hat{p}$ belongs to the interval $[S,T]$ or not. 
	More specifically, if $\hat{p} \in [S,T]$, then it outputs $p = q$.
	If $\hat{p} \notin [S,T]$, then it outputs $\bkl(p,q) \ge \eps$. 
	We need to choose suitable $S$ and $T$ so that the algorithm is accurate with high probability and the number of samples required is minimized.
	Given $q \in (0,1/2]$ and $\eps>0$, we will consider three separate cases. 
	

	\medskip
	\noindent\textbf{Case 1: $\eps \le 2q$.}
	%
	We choose $S = q - \sqrt{\eps q/8}$, $T = q + \sqrt{\eps q/8}$, 
	and let 
	\[
	m = \ceil{\frac{64}{\eps}}.
	\] 
	be the number of samples.
	If $p=q$, then by the Chernoff bound (Lemma \ref{lem:Chernoff}) we have
	\[
	\Pr\left( \hat{p} \notin [S,T] \right)
	\le
	\Pr\left( |\hat{p} - q| \ge \sqrt{\frac{\eps q}{8}} \right) \le 2 \exp\left( - \frac{\eps m}{24} \right) \le \frac{1}{3},
	\]
	where we use $\sqrt{\eps q / 8} \le q/2$ by the assumption $\eps \le 2q$. 
	
	Now suppose $\bkl(p,q) \ge \eps$.
	By Lemma \ref{lem:kl-chi2-l2}, we have
	\[
	\eps \le \bkl(p,q) 
	\le \frac{2}{q}(p-q)^2,
	\]
	and hence either $p \le q - \sqrt{\eps q/2}$ or $p \ge q + \sqrt{\eps q/2}$.
	Suppose $p \le q - \sqrt{\eps q/2}$. 
	If $p=0$ then trivially $\Pr\left( \hat{p} \in [S,T] \right) = 0$ since $S = q - \sqrt{\eps q/8} \ge q/2 > 0$. 
	If $0 < p \le q - \sqrt{\eps q/2}$, then again by the Chernoff bound (Lemma \ref{lem:Chernoff}) we have
	\[
	\Pr\left( \hat{p} \in [S,T] \right)
	\le \Pr\left( \hat{p} \ge S \right)
	\le
	\exp\left( - \frac{(S-p)^2 m}{2p+(S-p)} \right)
	\le \exp\left( - \frac{\eps m}{16} \right) \le \frac{1}{3},
	\]
	where the second to last inequality follows from $S+p \le 2q$ and $(S-p)^2 \ge \eps q/8$. 
	If $p \ge q + \sqrt{\eps q/2}$, then Lemma \ref{lem:Chernoff} gives
	\[
	\Pr\left( \hat{p} \in [S,T] \right)
	\le \Pr\left( \hat{p} \le T \right)
	\le
	\exp\left( - \frac{(p-T)^2 m}{2p} \right)
	\le \exp\left( - \frac{\eps m}{32} \right) \le \frac{1}{3},
	\]
	where the second to last inequality follows from the fact that $(p-T)^2 / (2p)$ is minimized at $p = q + \sqrt{\eps q/2} \le 2q$ and hence
	\[
	\frac{(p-T)^2}{2p} \ge \frac{(q + \sqrt{\eps q/2}-T)^2}{2(q + \sqrt{\eps q/2})}
	\ge \frac{\eps}{32}.
	\]

	\medskip
	\noindent\textbf{Case 2: $2q < \eps \le 2q \ln(1/q)$.}
	(This case is possible only for $q < 1/e$.) 
	Again if $\bkl(p,q) \ge \eps$ then either $p \le q - \sqrt{\eps q/2}$ or $p \ge q + \sqrt{\eps q/2}$.
	But since $\eps > 2q$, we have $q - \sqrt{\eps q/2} < 0$ 
	and hence it must be $p \ge q + \sqrt{\eps q/2}$.
	This means that, we need to distinguish between $p = q$ versus
	\[
	p \ge q + \sqrt{\eps q/2} \ge 2q
	\]
	as $\eps > 2q$. 
	Therefore, we can apply the identity tester from Lemma \ref{lem:Ber-id-test} for $\gamma = 1$ with sample complexity
	\[
	m = O\left( \frac{1}{q} \right) 
	= O\left( \frac{\ln(1/q)}{\eps} \right)
	\]
	since $\eps \le 2q \ln(1/q)$.

	\medskip
	\noindent\textbf{Case 3: $\eps > \max \{2q,\, 2q \ln(1/q)\}$.}
	Just as in Case 2, if $\bkl(p,q) \ge \eps$ then one must have $p \ge q$, since $p < q$ implies 
	\[
	\bkl(p,q) 
	\le \bkl(0,q) 
	= \ln\left(\frac{1}{1-q}\right)
	\le \frac{q}{1-q} \le 2q.
	\]
	Since $p \ge q$, we have $1-p \le 1-q$ and thus 
	\[
	\eps \le \bkl(p,q) = p \ln\frac{p}{q} + (1-p) \ln \frac{1-p}{1-q}
	\le p \ln\frac{p}{q} \le p \ln\frac{1}{q}. 
	\]
	Therefore, it suffices to distinguish between $p=q$ and $p \ge \eps / \ln(1/q) > 2q$. 
	The identity tester from Lemma \ref{lem:Ber-id-test} for $\gamma = \eps/(q\ln(1/q)) - 1 \ge 1$ can achieve $2/3$ success probability with sample complexity
	\[
	m 
	= O\left( \frac{\ln(1/q)}{\eps} \right).
	\]
	This completes the proof of the lemma.
\end{proof}

\subsection{Applications}
\label{subsec:application}

Here we give several applications of Theorem \ref{thm:alg-main-detailed}. 

\subsubsection{Product distributions}
\label{subsub:prod}

For each $i \in [n]$ let $\mu_i$ be an arbitrary distribution over $\QQ$, 
and define a product distribution $\mu = \mu_1 \otimes \cdots \otimes \mu_n$ over $\QQ^n$. 
It is well-known that every product distribution satisfies approximate tensorization of entropy with an optimal constant $C=1$.

\begin{lemma}[\cite{Cesi,MSW03,CMT}]
	\label{lem:prod-AT}
	Let $\mu$ be any product distribution over $\QQ^n$. 
	For any distribution $\pi$ over $\QQ^n$ such that $\pi \ll \mu$, we have
	\[
	\kl{\pi}{\mu} \le \sum_{i=1}^n \E_{x \sim \pi_{n \setminus i}} \Big[ \kl{\pi_i(\cdot \mid x)}{\mu_i(\cdot \mid x)} \Big].
	\]
	Namely, every product distribution satisfies approximate tensorization of entropy with constant $1$. 
\end{lemma}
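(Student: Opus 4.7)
The plan is to derive the product-distribution approximate tensorization of entropy by combining the chain rule for KL divergence with the convexity of KL divergence in its first argument. I will work directly with the probabilistic form in \cref{def:AT-prob}; the functional form then follows automatically via $f = \pi/\mu$, as already noted in the excerpt.

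First, I would apply the chain rule for KL divergence with respect to the natural ordering of coordinates $1, 2, \ldots, n$:
\[
\kl{\pi}{\mu} = \sum_{i=1}^n \E_{y \sim \pi_{1:i-1}}\bigl[\kl{\pi_i(\cdot \mid y)}{\mu_i(\cdot \mid y)}\bigr],
\]
where $\pi_{1:i-1}$ denotes the marginal of $\pi$ on coordinates $\{1,\ldots,i-1\}$ and $\pi_i(\cdot \mid y)$ denotes the conditional marginal of $\pi$ at coordinate $i$ given the assignment $y$ on $\{1,\ldots,i-1\}$. Because $\mu$ is a product distribution, the conditional marginal $\mu_i(\cdot \mid y) = \mu_i(\cdot)$ is independent of $y$, and so each chain-rule term has the simpler form $\E_{y}\bigl[\kl{\pi_i(\cdot \mid y)}{\mu_i}\bigr]$.

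Second, I would compare $\pi_i(\cdot \mid y)$ with the fully conditioned marginals $\pi_i(\cdot \mid x)$, for $x \in \QQ^{n\setminus i}$. Writing $x = (y, z)$ with $z \in \QQ^{\{i+1,\ldots,n\}}$, marginalization gives the mixture identity
\[
\pi_i(\cdot \mid y) \;=\; \sum_{z} \pi(z \mid y)\, \pi_i(\cdot \mid y, z),
\]
so $\pi_i(\cdot \mid y)$ is a convex combination of the fully conditioned marginals $\pi_i(\cdot \mid x)$. By convexity of $\kl{\cdot}{\mu_i}$ in its first argument,
\[
\kl{\pi_i(\cdot \mid y)}{\mu_i} \;\le\; \sum_{z} \pi(z \mid y)\, \kl{\pi_i(\cdot \mid y, z)}{\mu_i}.
\]
Taking expectation over $y \sim \pi_{1:i-1}$ collapses the nested average into a single expectation over $x \sim \pi_{n\setminus i}$, yielding
\[
\E_{y \sim \pi_{1:i-1}}\bigl[\kl{\pi_i(\cdot \mid y)}{\mu_i}\bigr] \;\le\; \E_{x \sim \pi_{n\setminus i}}\bigl[\kl{\pi_i(\cdot \mid x)}{\mu_i(\cdot \mid x)}\bigr],
\]
where we once more used $\mu_i = \mu_i(\cdot \mid x)$.

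Summing over $i = 1, \ldots, n$ and combining with the chain-rule identity gives the stated inequality, with the optimal constant $C = 1$. There is no genuine obstacle here: both ingredients (chain rule for KL and convexity of KL in the first argument) are classical, and the only care needed is to verify that convexity points in the right direction, which it does precisely because we seek an upper bound on $\kl{\pi}{\mu}$ in terms of the fully conditioned quantities appearing in \cref{eqn:AT-defn-prob}.
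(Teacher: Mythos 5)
Your proof is correct. The paper cites \cite{Cesi,MSW03,CMT} without reproducing a proof, and your argument---the KL chain rule (which is exactly the paper's Eq.~\eqref{eq:ent-factorization}, instantiated for product $\mu$ so that $\mu_i(\cdot\mid y)=\mu_i$), followed by convexity of $\kl{\cdot}{\mu_i}$ applied to the mixture $\pi_i(\cdot\mid y)=\sum_z \pi(z\mid y)\,\pi_i(\cdot\mid y,z)$, and then collapsing the nested average---is precisely the standard argument those references carry out, merely in probabilistic/KL language rather than the functional form $\Ent_\mu(f)$. The convexity step is the probabilistic face of the contractivity of entropy under conditional expectation, and it does point in the needed direction; no gap.
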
 

For a product distribution $\mu$, define $\eta(\mu) = \min_{i \in [n]} \min_{a \in \QQ:\, \mu_i(a) > 0} \mu_i(a)$.
Observe that $\mu$ is $\eta(\mu)$-balanced. 
Let $\PP(\eta)$ denote the collection of all product distributions $\mu$ such that $\eta(\mu) \ge \eta$.
The following corollary follows immediately from Theorem \ref{thm:alg-main-detailed} and Lemma \ref{lem:prod-AT}.

\begin{corollary}
	Let $\eta \in (0,1/2]$ be real. 
	There is a polynomial-time identity testing algorithm for the family $\PP(\eta)$ of $\eta$-balanced product distributions with query access to both $\coorora$ and $\stanora$ and for KL divergence with distance parameter $\eps > 0$. 
	The query complexity of the identity testing algorithm is $O((n/\eps) \log^3(n/\eps))$.
\end{corollary}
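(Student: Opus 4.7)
The plan is to deduce this corollary as a direct specialization of \cref{thm:alg-main-detailed} once we feed in the tensorization constant supplied by \cref{lem:prod-AT}. First I would observe that every $\mu \in \PP(\eta)$ is by definition $\eta$-balanced, and that \cref{lem:prod-AT} guarantees that $\mu$ satisfies approximate tensorization of entropy with the optimal constant $C = 1$. Hence $\PP(\eta) \subseteq \FF_k(1,\eta)$, so \cref{thm:alg-main-detailed} applies.

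Next I would verify the algorithmic hypotheses of \cref{thm:alg-main-detailed}. For a product distribution $\mu = \mu_1 \otimes \cdots \otimes \mu_n$, the conditional marginal satisfies $\mu_i(\cdot \mid x) = \mu_i(\cdot)$ for every $i$ and every feasible $x \in \QQ^{n\setminus i}$, so the conditional marginals needed by \cref{alg:id-test} can be read off from the explicit description of $\mu$ in time $O(k)$. In particular the per-query cost is polynomial, and since $C = 1$ the mild growth assumptions $\max\{\log C,\log\log(1/\eta)\} = O(\log n)$ are satisfied (treating $\eta$ as a constant, as in the corollary's formulation).

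Finally I would plug $C = 1$ into the first of the two query-complexity bounds in \cref{thm:alg-main-detailed}, namely
\[
O\!\left( \frac{C}{\sqrt{\eta}} \cdot \frac{n}{\eps} \log^3\!\left(\frac{n}{\eps}\right) \right) \;=\; O\!\left( \frac{1}{\sqrt{\eta}} \cdot \frac{n}{\eps} \log^3\!\left(\frac{n}{\eps}\right) \right),
\]
which, absorbing the $\eta$-dependence into the constant, is exactly the $O((n/\eps)\log^3(n/\eps))$ bound claimed. The running-time claim is immediate from the $\poly(1/\eta)$ running time of \cref{thm:alg-main-detailed} combined with the trivial $O(k)$ per-marginal cost.

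There is essentially no obstacle here beyond bookkeeping: the content is entirely in \cref{thm:alg-main-detailed} and \cref{lem:prod-AT}, and the only thing to check is that product distributions trivially satisfy both the analytic hypothesis (tensorization with $C=1$) and the computational hypothesis (marginals are given directly). Because $C = 1$ is optimal, no logarithmic factors in $C$ appear, which is why the resulting bound matches the statement exactly.
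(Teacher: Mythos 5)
Your proposal is correct and matches the paper's own argument: the paper deduces the corollary directly from \cref{thm:alg-main-detailed} together with \cref{lem:prod-AT} (which gives $C=1$ for product distributions), absorbing the $\eta$-dependence into the constant. The bookkeeping checks you carry out (conditional marginals of a product distribution are just the coordinate marginals, and the growth hypotheses on $C$ and $\eta$ are trivially met) are exactly the implicit content of the one-line derivation in the paper.
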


\subsubsection{Sparse Ising models in the uniqueness region}
\label{subsec:sparse:ising}

An Ising model is a tuple $(G,\beta,h)$ where
\begin{itemize}
	\item $G = (V,E)$ is a finite simple graph;
	\item $\beta: E \to \R$ is a function of edge couplings;
	\item $h: V \to \R$ is a function of vertex external fields.
\end{itemize}
We may also view $\beta$ and $h$ as vectors; in particular, we write $\beta_{uv}$ to represent the edge coupling of an edge $\{u,v\} \in E$, and write $h_v$ to represent the external field of a vertex $v \in V$.

The Gibbs distribution of an Ising model $(G,\beta,h)$ is given by
\[
\mu_{(G,\beta,h)}(\sigma) = \frac{1}{Z_{(G,\beta,h)}} \exp \left( \sum_{\{u,v\} \in E} \beta_{uv} \sigma_u \sigma_v + \sum_{v\in V} h_v \sigma_v \right),
\quad \forall \sigma \in \{+,-\}^V,
\]
where 
\[
Z_{(G,\beta,h)} = \sum_{\sigma \in \{+,-\}^V} \exp \left( \sum_{\{u,v\} \in E} \beta_{uv} \sigma_u \sigma_v + \sum_{v\in V} h_v \sigma_v \right)
\]
is the partition function.


\begin{definition}[The family $\II\SS(\Delta,\delta,h^*)$ of Ising models in tree-uniqueness]
	For an integer $\Delta \ge 3$ and reals $\delta \in (0,1),h^* > 0$, let $\II\SS(\Delta,\delta,h^*)$ be the family of Gibbs distributions of Ising models $(G,\beta,h)$ satisfying:
	\begin{enumerate}[(1)]
		\item The maximum degree of $G$ is at most $\Delta$;
		\item We have $(\Delta-1) \tanh(\beta^*) \le 1- \delta$, where $\beta^* = \max_{\{u,v\} \in E} |\beta_{uv}|$ denotes the maximum edge coupling in absolute value; 
		\item For each $v \in V(G)$, we have $|h_v| \le h^*$. 
	\end{enumerate}
\end{definition}

Recent works towards establishing optimal mixing of Glauber dynamics have shown approximate tensorization of entropy for the family $\II\SS(\Delta,\delta,h^*)$. 

\begin{lemma}[\cite{CLV20,CLV-STOC21}]
	\label{lem:Ising-AT}
	For any integer $\Delta \ge 3$ and reals $\delta \in (0,1),h^* > 0$, 
	there exists a constant $C = C(\Delta,\delta,h^*) \ge 1$, such that
	every Ising distribution $\mu$ from the family $\II\SS(\Delta,\delta,h^*)$ satisfies approximate tensorization of entropy with constant $C$.
\end{lemma}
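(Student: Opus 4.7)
The plan is to obtain approximate tensorization by going through the now-standard two-step route: first establish spectral independence uniformly for every model in $\II\SS(\Delta,\delta,h^*)$, and then invoke the spectral-independence-to-entropy-factorization framework of \cite{CLV20,CLV-STOC21}. Spectral independence asks that the maximum eigenvalue of the (signed) pairwise influence matrix $\Psi$ of the distribution is bounded by an absolute constant $\eta=\eta(\Delta,\delta,h^*)$, uniformly over all pinnings of subsets of vertices. Together with a uniform lower bound $b=b(\Delta,\beta^*,h^*)>0$ on all conditional single-site marginals (i.e.\ $b$-marginal boundedness), these two properties yield approximate tensorization with a constant depending only on $\Delta,\delta,h^*$.

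Step one is the spectral independence bound. For an Ising model in the tree-uniqueness regime $(\Delta-1)\tanh(\beta^*)\le 1-\delta$, I would use Weitz's self-avoiding-walk tree construction: for each vertex $v$, the marginal of $v$ (under any pinning) equals the root marginal of the Ising model on the SAW tree rooted at $v$, with matching edge couplings and boundary fields. On this tree the standard log-marginal-ratio recursion is a contraction with rate at most $(\Delta-1)\tanh(\beta^*) \le 1-\delta$, so pairwise influences decay geometrically in graph distance. Summing the geometric series along paths bounds every row sum of $|\Psi|$, hence its $\ell_\infty$ operator norm, by $O(1/\delta)$; this controls the spectral radius as well and gives spectral independence with constant depending only on $\Delta$ and $\delta$. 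Because the contraction survives arbitrary pinnings (pinnings only modify boundary conditions on the SAW tree), this holds uniformly.

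Step two is marginal boundedness. Bounded maximum degree $\Delta$ together with $|\beta_{uv}|\le\beta^*$ and $|h_v|\le h^*$ implies that for every vertex $v$ and every pinning $\tau$ of any subset $\Lambda\ni v$'s neighbors, the conditional probability $\mu(\sigma_v=\pm\mid \tau)$ lies in $[b,1-b]$ for some $b=b(\Delta,\beta^*,h^*)>0$; this follows from the explicit local formula since the log-odds at $v$ is bounded in absolute value by $\Delta\beta^*+h^*$. Thus every $\mu\in\II\SS(\Delta,\delta,h^*)$ is $b$-marginally bounded uniformly.

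Step three, and the technical heart, is the implication spectral independence $+$ marginal boundedness $\Rightarrow$ approximate tensorization; this is where the work of \cite{CLV20,CLV-STOC21} is applied as a black box. The argument proceeds by induction on the number of pinned vertices: uniform spectral independence under all pinnings yields an entropy contraction for the down step of the down-up walk, and $b$-marginal boundedness controls the loss incurred when a random coordinate is conditioned on, allowing the entropy decay to be compounded across all $n$ levels with a multiplicative constant $C=C(\Delta,\delta,h^*)$. The main obstacle is precisely this step: spectral independence is a one-step (eigenvalue) estimate, while approximate tensorization is a global functional inequality, and bridging them requires the entire local-to-global machinery of \cite{CLV-STOC21}, which in turn crucially uses that spectral independence persists under every pinning — a property we verified in step one because Weitz's SAW tree is well-defined under arbitrary pinnings.
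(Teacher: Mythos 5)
The paper does not prove this lemma; it states it with a citation to \cite{CLV20,CLV-STOC21}, exactly the two works whose content you reconstruct. Your outline correctly matches the intended argument: \cite{CLV20} establishes spectral independence for Ising models in the tree-uniqueness regime via a contraction (along the self-avoiding-walk tree) of the log-marginal-ratio recursion, giving a row-sum bound on the influence matrix that is stable under arbitrary pinnings; marginal boundedness is immediate from bounded degree, bounded couplings, and bounded fields; and \cite{CLV-STOC21} supplies the local-to-global machinery that turns uniform-under-pinnings spectral independence plus marginal boundedness into approximate tensorization with a constant depending only on $\Delta,\delta,h^*$. Your identification of step three as the technical heart — the step one must treat as a black box here — is accurate, as is your observation that the persistence of spectral independence under pinnings is the hypothesis that makes the local-to-global induction go through. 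In short, your proposal is a correct account of the argument the citation points to; the paper itself does not reprove it.
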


We then deduce the following corollary from Theorem \ref{thm:alg-main-detailed} and Lemma \ref{lem:Ising-AT}. 

\begin{corollary}
	Suppose $\Delta \ge 3$ is an integer and $\delta \in (0,1), h^* > 0$ are reals. 
	There is a polynomial-time identity testing algorithm for the family $\II\SS(\Delta,\delta,h^*)$ of Ising models with query access to both $\coorora$ and $\stanora$ and for KL divergence with distance parameter $\eps > 0$. 
	The query complexity of the identity testing algorithm is $O((n/\eps) \log^3(n/\eps))$.
\end{corollary}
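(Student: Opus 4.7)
The plan is to obtain this corollary as a direct instantiation of \cref{thm:alg-main-detailed}, with \cref{lem:Ising-AT} providing the approximate tensorization hypothesis. Concretely, I need to verify three ingredients for every $\mu \in \II\SS(\Delta,\delta,h^*)$: (i) approximate tensorization of entropy with a constant $C$ that does not depend on $n$; (ii) $\eta$-balance with a constant $\eta$ also independent of $n$; and (iii) the $\coorora$ for the visible distribution $\mu$ can be implemented in polynomial time.

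Ingredient (i) is handed to us directly by \cref{lem:Ising-AT}, which provides a constant $C = C(\Delta,\delta,h^*) \ge 1$ depending only on $\Delta$, $\delta$, $h^*$. For ingredient (ii), I would observe that for any vertex $v$ and any $\tau \in \{+,-\}^{V \setminus v}$, the conditional marginal of the Ising Gibbs distribution is
\begin{equation*}
\mu_v(\sigma_v \mid \tau) = \frac{\exp\bigl(\sigma_v(h_v + \sum_{u \sim v} \beta_{uv} \tau_u)\bigr)}{2\cosh\bigl(h_v + \sum_{u \sim v} \beta_{uv} \tau_u\bigr)}.
\end{equation*}
The linear form inside the exponential is bounded in absolute value by $h^* + \Delta \beta^*$, where $\beta^* \le \tanh^{-1}(1/(\Delta-1))$ is bounded by the tree-uniqueness condition $(\Delta-1)\tanh(\beta^*) \le 1-\delta$. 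Hence both values $\mu_v(\pm \mid \tau)$ are at least $\eta := (1 + e^{2(h^* + \Delta \beta^*)})^{-1}$, which depends only on $\Delta,\delta,h^*$. This gives $\eta$-balance, and in fact $\mu$ is fully supported so the $\pi \ll \mu$ assumption is automatic. Ingredient (iii) is immediate: computing $\mu_v(\cdot \mid \tau)$ requires only inspecting the at most $\Delta$ neighbors of $v$ and evaluating the closed-form above, so one step of $\coorora$ for $\mu$ runs in $O(\Delta)$ time.

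With these verified, I invoke \cref{thm:alg-main-detailed} with $k=2$ (binary spin set $\{+,-\}$) and apply the improved binary-domain bound, yielding query complexity
\begin{equation*}
O\!\left( C \log\!\tfrac{1}{\eta}\cdot \tfrac{n}{\eps}\log^3\!\tfrac{n}{\eps} \right) = O\!\left(\tfrac{n}{\eps}\log^3\!\tfrac{n}{\eps}\right),
\end{equation*}
where the final equality absorbs the $n$-independent factors $C$ and $\log(1/\eta)$ into the hidden constant. The hypotheses $\log C = O(\log n)$ and $\log\log(1/\eta) = O(\log n)$ from \cref{thm:alg-main-detailed} are trivially satisfied since $C$ and $\eta$ are constants. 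Polynomial running time follows from ingredient (iii). Honestly, there is no significant obstacle here; the only place to be careful is to check that $\beta^*$ (and hence $\eta$) is genuinely controlled by $\Delta$ and $\delta$ alone, which is what the tree-uniqueness condition on $(\Delta-1)\tanh(\beta^*)$ guarantees.
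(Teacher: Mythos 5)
Your proposal is correct and follows essentially the same route as the paper, which deduces the corollary by instantiating \cref{thm:alg-main-detailed} with \cref{lem:Ising-AT}; the paper simply leaves the verification of $\eta$-balance and the polynomial-time implementability of the $\coorora$ for $\mu$ implicit, and your explicit check of both (via the closed-form conditional marginal and the bound $|L| \le h^* + \Delta\beta^*$ with $\beta^*$ controlled by the tree-uniqueness condition) fills in exactly those routine details. Nothing is missing and no alternative technique is introduced.
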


\subsubsection{Distributions satisfying Dobrushin uniqueness condition}
\label{subsubsec:Dob}

Let $\mu$ be a distribution over $\QQ^n$. 
For $i,j \in [n]$, the Dobrushin influence of $i$ on $j$ is given by
\[
a_{u,v} = \max_{(x,x') \in \CC_{i,j}}\, \tv{\mu_j(\cdot \mid X_{n \setminus j} = x)}{\,\mu_j(\cdot \mid X_{n \setminus j} = x')}, 
\]
where $\CC_{i,j}$ denotes the collection of all pairs $(x,x')$ of vectors in $\QQ^{n \setminus i}$ such that $\mu(X_{n \setminus j} = x) > 0$, $\mu(X_{n \setminus j} = x') > 0$, and $x,x'$ either are the same or differ exactly at the coordinate $i$. 
The Dobrushin influence matrix $A$ is an $n \times n$ matrix with entries given as above. 
Note that $A$ is not symmetric in general. 

For $b\in (0,1/2]$, we say the distribution $\mu$ is $b$-marginally bounded if for every $\Lambda \subseteq [n]$, every $x \in \QQ^{\Lambda}$ with $\mu(X_{\Lambda} = x) > 0$, every $i \in [n] \setminus \Lambda$, and every $a \in \QQ$, one has
\[
\text{either~~} \mu(X_i = a \mid X_{\Lambda} = x) \ge b 
\quad
\text{or~~} \mu(X_i = a \mid X_{\Lambda} = x) = 0. 
\]
Note that though seemingly similar, the notion of marginal boundedness is not the same as the coordinate balancedness defined in Section~\ref{def:balanced}. 
We observe that any $b$-marginally bounded distribution is also $b$-balanced.

For $\delta \in (0,1)$ and $b \in (0,1/2]$, let $\DD(\delta,b)$ be the family of all distributions over $\QQ^n$ satisfying the following conditions:
\begin{enumerate}[(1)]
	\item The Dobrushin influence matrix $A$ of $\mu$ satisfies $\norm{A}_2 \le 1-\delta$;
	\item $\mu$ is $b$-marginally bounded.
\end{enumerate}
Marton proved that every distribution from the family $\DD(\delta,b)$ satisfies approximate tensorization of entropy.

\begin{lemma}[\cite{Marton19}]
	\label{lem:Marton}
	Suppose $\delta \in (0,1)$ and $b \in (0,1/2]$ are reals. 
	Every distribution $\mu$ from the family $\DD(\delta,b)$ satisfies approximate tensorization of entropy with constant $C = 1/(b \delta^2)$.
\end{lemma}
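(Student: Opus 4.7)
The plan is to follow Marton's original argument, which combines the chain rule for KL divergence with a coupling estimate controlled by the Dobrushin influence matrix $A$. Fix an arbitrary $\pi \ll \mu$. I would first decompose $\kl{\pi}{\mu}$ along a uniformly random ordering $\sigma$ of $[n]$ using the entropy chain rule,
\[
\kl{\pi}{\mu} = \E_{\sigma} \sum_{i=1}^n \E_{y \sim \pi_{\sigma_{<i}}}\!\Big[\kl{\pi_{\sigma_i}(\cdot \mid y)}{\mu_{\sigma_i}(\cdot \mid y)}\Big],
\]
where $\sigma_{<i} = \{\sigma_1,\dots,\sigma_{i-1}\}$ and $\pi_{\sigma_{<i}}$ denotes the marginal of $\pi$ on these coordinates. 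Each summand conditions on only a \emph{prefix} of coordinates, whereas the right-hand side of \cref{eqn:AT-defn-prob} conditions on the full complement $[n] \setminus i$; bridging these two forms of conditioning is the heart of the proof.

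To perform this bridge, I would couple $\mu_{\sigma_i}(\cdot \mid y)$ with $\mu_{\sigma_i}(\cdot \mid x)$, where $x \in \QQ^{n \setminus \sigma_i}$ is an extension of $y$ drawn from $\pi(\cdot \mid y)$. By the definition of Dobrushin influence, the total variation distance between these two conditional laws is bounded, coordinate by coordinate, by sums of entries of $A$ against indicators that each yet-unrevealed coordinate has been resampled. Expanding the difference in KL caused by swapping the conditioning yields an error term that is, essentially, an inner product between a vector of coordinate-wise $\pi$-vs-$\mu$ TV discrepancies and a column of $A$. The marginal boundedness hypothesis is used here to justify the linearization of $\log$: since every single-coordinate conditional marginal of $\mu$ puts mass at least $b$ on every element of its support, the ratio between $\mu_{\sigma_i}(\cdot \mid y)$ and $\mu_{\sigma_i}(\cdot \mid x)$ is controlled by $1/b$, and the squared TV distances appearing in the error term can be converted into conditional KL divergences via \cref{lem:kl-chi2-l2}, contributing a factor of $1/b$.

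The final step aggregates over $i$ and $\sigma$ and applies Cauchy--Schwarz together with the operator-norm bound $\norm{A}_2 \le 1-\delta$. The averaging over the random permutation is essential here, as it symmetrizes the role of each coordinate and allows the operator norm of $A$ (rather than just a row sum) to yield a genuine $\ell_2$-type contraction on the vector of local discrepancies. Extracting the factor $1/\delta^2$ then follows from an AM-GM / Cauchy--Schwarz inequality of the form ``$(1-\delta)xy \le \frac{1-\delta}{2}(x^2+y^2)$'' applied to the mixed inner product, which absorbs one copy of $x$ (the aggregated KL) into the left-hand side. Combined with the $1/b$ from the previous step, this yields the stated constant $C = 1/(b\delta^2)$.

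The main obstacle is organizing the bookkeeping so that the two conversions---prefix-to-full conditioning, and TV-to-KL---interact cleanly with the $\ell_2 \to \ell_2$ action of $A$. Concretely, one must introduce a single vector of ``local discrepancies'' indexed by coordinates, express both the chain-rule remainder and the Dobrushin coupling error in terms of this vector, and then show that $A$ acts on it in exactly the way that the uniformly random permutation averages demand. The non-symmetry of $A$ forces one to work with its operator norm rather than its spectral radius, and care is needed to ensure that the $1/b$ factor appears only once rather than compounding with the $1/\delta^2$, which is what makes the final constant $1/(b\delta^2)$ tight rather than worse.
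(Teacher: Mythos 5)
The paper does not actually prove \cref{lem:Marton}; it cites \cite{Marton19} and takes the inequality as a black box. So there is no internal proof to compare against, only the question of whether your reconstruction of Marton's argument is sound and complete.

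Your outline correctly identifies the skeleton: the entropy chain rule along a prefix ordering (the paper's own \cref{eq:ent-factorization} is exactly the fixed-ordering version), the need to pass from prefix conditioning to full-complement conditioning, the role of the Dobrushin influence matrix in controlling the discrepancy created by that change of conditioning, the reverse-Pinsker inequality (\cref{lem:kl-chi2-l2}) to convert squared TV back into conditional KL at a cost of $1/b$, and the operator-norm bound $\norm{A}_2 \le 1-\delta$ combined with an absorption step to produce a $1/\delta^2$. These are indeed the moving parts of Marton's proof.

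However, the sketch has a genuine gap, and you partially flag it yourself. The step that you describe as ``organizing the bookkeeping'' --- namely, defining a single coordinate-indexed vector of local discrepancies and showing that the chain-rule remainder is bounded by $A$ applied to that vector in an $\ell_2$ sense, uniformly over the random ordering --- is not a bookkeeping detail but the entire technical content. Without spelling out how each discrepancy term inherits a matrix entry $a_{j,i}$ from the coupling and how these sum coherently into $\langle v, A v\rangle$ for a fixed vector $v$, one cannot tell whether the argument closes. In particular, your stated aggregation inequality $(1-\delta)xy \le \tfrac{1-\delta}{2}(x^2+y^2)$ does not produce a $1/\delta^2$; what is needed is a square-root absorption of the form ``if $\sqrt{E} \le \sqrt{M} + (1-\delta)\sqrt{E}$ then $E \le M/\delta^2$,'' and for that to apply one must first establish that the aggregated conditional KL on the right can be bounded by a square root of itself times a term with coefficient $1-\delta$. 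That inequality is exactly what the unspecified bookkeeping has to deliver, and it is where the non-symmetry of $A$ and the need to avoid a second factor of $1/b$ actually bite.
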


The following corollary follows from Theorem \ref{thm:alg-main-detailed} and Lemma \ref{lem:Marton}. 

\begin{corollary}\label{coro:dob}
	Suppose $\delta \in (0,1)$ and $b \in (0,1/2]$ are reals. 
	There is a polynomial-time identity testing algorithm for the family $\DD(\delta,b)$ with query access to both the $\coorora$ and the $\stanora$ and for KL divergence with distance parameter $\eps > 0$. 
	The query complexity of the identity testing algorithm is $O((n/\eps) \log^3(n/\eps))$.
\end{corollary}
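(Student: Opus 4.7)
The plan is to invoke \cref{thm:alg-main-detailed} with the constants supplied by Marton's theorem (\cref{lem:Marton}), verifying that each hypothesis of \cref{thm:alg-main-detailed} is implied by membership in the family $\DD(\delta,b)$. Concretely, given $\mu \in \DD(\delta,b)$, \cref{lem:Marton} immediately gives that $\mu$ satisfies approximate tensorization of entropy with constant $C = 1/(b\delta^2)$. It remains to check that $\mu$ is $\eta$-balanced in the sense of \cref{def:balanced} for a suitable $\eta$.

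For the balance condition, I would observe (as noted in \cref{subsubsec:Dob}) that $b$-marginal boundedness is exactly the strengthening of $b$-balance in which the conditioning set is arbitrary rather than of size $n-1$. In particular, restricting the definition of $b$-marginal boundedness to $\Lambda$ with $|\Lambda|=n-1$ recovers $b$-balance verbatim, so $\mu$ is $\eta$-balanced with $\eta = b$. This places $\mu$ in the family $\FF_k(C,\eta) = \FF_k(1/(b\delta^2),\,b)$ to which \cref{thm:alg-main-detailed} applies.

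Next I would check that the parameters satisfy the growth assumption $\max\{\log C,\log\log(1/\eta)\} = O(\log n)$ of \cref{thm:alg-main-detailed}. Since $\delta$ and $b$ are constants independent of $n$, both $C$ and $1/\eta$ are $O(1)$, so the assumption holds trivially. Substituting $C,\eta = \Theta(1)$ into the first of the two query-complexity bounds from \cref{thm:alg-main-detailed}, namely $O\bigl((C/\sqrt{\eta})(n/\eps)\log^3(n/\eps)\bigr)$, yields the stated $O((n/\eps)\log^{3}(n/\eps))$ query complexity. The running-time claim then follows from the polynomial running-time guarantee of \cref{thm:alg-main-detailed}, provided the conditional marginals $\mu_i(\cdot \mid x)$ for the visible distribution can be computed in polynomial time; this is a standard assumption for distributions in $\DD(\delta,b)$ since one step of Glauber dynamics for $\mu$ coincides with one $\coorora$ query to $\mu$.

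There is really no conceptual obstacle here: the corollary is a direct composition of \cref{lem:Marton} (tensorization for Dobrushin-uniqueness distributions) with \cref{thm:alg-main-detailed} (identity testing under approximate tensorization). The only point requiring care is the semantic distinction between \emph{marginal boundedness} and \emph{balance}, which is why the corollary is stated using the stronger marginal-boundedness hypothesis of $\DD(\delta,b)$; once this is unpacked, the reduction to \cref{thm:alg-main-detailed} is immediate.
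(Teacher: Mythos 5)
Your proposal is correct and takes the same route as the paper, which simply states the corollary follows from \cref{thm:alg-main-detailed} combined with \cref{lem:Marton}; you have supplied the natural unpacking: Marton gives $C = 1/(b\delta^2)$, the $b$-marginal-boundedness hypothesis specializes to $\eta = b$ balance (as the paper notes immediately after \cref{def:balanced}), both are $O(1)$ in $n$ so the mild growth hypothesis is vacuous, and the first query-complexity bound of \cref{thm:alg-main-detailed} collapses to $O((n/\eps)\log^3(n/\eps))$.
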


For Ising models, there is also a stronger version of Dobrushin uniqueness in literature. 

\begin{definition}[The family $\II\SS_{\mathrm{D}}(\delta,h^*)$ of Ising models in Dobrushin-uniqueness]
	For $\delta \in (0,1)$ and $h^* > 0$, let $\II\SS_{\mathrm{D}}(\delta,h^*)$ be the family of Gibbs distributions of Ising models $(G,\beta,h)$ satisfying:
	\begin{enumerate}[(1)]
		\item For each $v \in V(G)$, we have $\sum_{u \in N(v)} |\beta_{uv}| \le 1-\delta$; 
		\item For each $v \in V(G)$, we have $|h_v| \le h^*$. 
	\end{enumerate}
\end{definition}

Notice that in the Ising model we have $a_{u,v} \le \tanh (|\beta_{uv}|) \le |\beta_{uv}|$ for $\{u,v\} \in E$ and $a_{u,v} = 0$ for non-edges. 
So we have $\II\SS_{\mathrm{D}}(\delta,h^*) \subseteq \DD(\delta,b)$ for $b \ge 1/(e^{2(h^*+1)}+1)$. 
Hence, the following corollary follows immediately from Corollary~\ref{coro:dob}.

\begin{corollary}
	Suppose $\delta \in (0,1)$ and $h^* > 0$ are reals. 
	There is a polynomial-time identity testing algorithm for the family $\II\SS_{\mathrm{D}}(\delta,h^*)$ of Ising models with query access to both $\coorora$ and $\stanora$ and for KL divergence with distance parameter $\eps > 0$. 
	The query complexity of the identity testing algorithm is $O((n/\eps) \log^3(n/\eps))$.
\end{corollary}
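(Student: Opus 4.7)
The plan is to deduce this corollary immediately from Corollary~\ref{coro:dob} by showing the inclusion $\II\SS_{\mathrm{D}}(\delta,h^*) \subseteq \DD(\delta,b)$ for a suitably chosen $b = b(\delta,h^*) \in (0,1/2]$, after which the identity tester and its $O((n/\eps)\log^3(n/\eps))$ query complexity transfer directly. So the only real work is verifying the two defining conditions of $\DD(\delta,b)$ for every $\mu \in \II\SS_{\mathrm{D}}(\delta,h^*)$.

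First, I would bound the Dobrushin influence matrix $A$. For Ising models, a standard computation of the single-site conditional marginals (two Bernoulli distributions differing only in the local field by $2|\beta_{uv}|$) gives the pointwise bound $a_{u,v} \le \tanh(|\beta_{uv}|) \le |\beta_{uv}|$ on edges and $a_{u,v}=0$ off edges, as noted in the paragraph above the corollary. Hence, using the Dobrushin hypothesis $\sum_{u \in N(v)} |\beta_{uv}| \le 1 - \delta$, both the maximum row sum and (by the analogous bound on columns, since $|\beta_{uv}| = |\beta_{vu}|$) the maximum column sum of $A$ are at most $1-\delta$. The inequality $\|A\|_2 \le \sqrt{\|A\|_1\|A\|_\infty}$ then yields $\|A\|_2 \le 1-\delta$, which is condition~(1) of $\DD(\delta,b)$.

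Second, I would verify $b$-marginal boundedness. For any $\Lambda \subseteq [n]$, $v \notin \Lambda$, and any feasible $x \in \{+,-\}^\Lambda$, the conditional marginal $\mu(X_v = \pm 1 \mid X_\Lambda = x)$ is obtained by summing out the remaining spins; it is a Bernoulli whose logit is the conditional expectation of $h_v + \sum_{u \sim v} \beta_{uv} X_u$ given $X_\Lambda = x$. Since $|h_v| \le h^*$ and $\big|\sum_{u \sim v} \beta_{uv} X_u\big| \le \sum_{u\sim v}|\beta_{uv}| \le 1-\delta < 1$ pointwise (and hence in conditional expectation), the effective field has magnitude at most $h^*+1$, giving
\[
\mu(X_v = \pm 1 \mid X_\Lambda = x) \;\ge\; \frac{e^{-(h^*+1)}}{e^{h^*+1}+e^{-(h^*+1)}} \;=\; \frac{1}{e^{2(h^*+1)}+1} \;=:\; b.
\]
Thus $\mu$ is $b$-marginally bounded for this universal $b = b(h^*) \in (0,1/2]$, establishing condition~(2).

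Combining the two steps shows $\mu \in \DD(\delta,b)$ with $b = 1/(e^{2(h^*+1)}+1)$, so Corollary~\ref{coro:dob} applies verbatim and yields the claimed algorithm and query complexity. There is no genuine obstacle here; the only minor care needed is the passage from the pointwise bound $|\sum_{u \sim v} \beta_{uv} X_u| \le 1-\delta$ to a bound on the conditional expectation in the marginal-boundedness step, which is immediate by Jensen, and the use of $\|A\|_2 \le \sqrt{\|A\|_1 \|A\|_\infty}$ in the Dobrushin norm step.
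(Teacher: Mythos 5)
Your proposal takes exactly the same route the paper does: show $\II\SS_{\mathrm{D}}(\delta,h^*) \subseteq \DD(\delta,b)$ with $b = 1/(e^{2(h^*+1)}+1)$ and then invoke Corollary~\ref{coro:dob}; the paper states this inclusion tersely while you fill in the two verifications, and both are correct. One small wording fix in the marginal-boundedness step: the conditional marginal $\mu(X_v = \cdot \mid X_\Lambda = x)$ is not a Bernoulli whose logit is $\E[h_v + \sum_{u\sim v}\beta_{uv}X_u \mid X_\Lambda = x]$ (the sigmoid is neither convex nor concave, so Jensen does not give you that), but rather $\mu(X_v=1\mid X_\Lambda=x) = \E\bigl[\sigma\bigl(h_v + \sum_{u\sim v}\beta_{uv}X_u\bigr)\mid X_\Lambda=x\bigr]$, a convex combination of full-conditional probabilities; since each of those is at least $1/(e^{2(h^*+1)}+1)$ by the pointwise field bound $|h_v + \sum_{u\sim v}\beta_{uv}X_u| \le h^* + 1$, the average is too, which gives the same $b$ with no appeal to Jensen.
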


\subsection{Identity testing for TV distance}
\label{subsec:tv-vs-kl}

One of the main goals of this paper is to give efficient identity testing algorithms \emph{without any restriction} on the noisy, unknown distribution $\pi$. 
However, since we work with KL divergence in most parts of our algorithmic results, one assumption we have to make is that the support of the hidden distribution $\pi$ is contained in that of the visible $\mu$, denoted by $\pi \ll \mu$. 
This is necessary for the KL divergence $\kl{\pi}{\mu}$ to be finite.
However, we emphasize that this assumption is fairly mild and does not introduce any restriction in many settings for the following two reasons.
(1) In many cases the visible distribution $\mu$ is already fully supported on $\QQ^n$ and hence the hidden one $\pi$ can be arbitrary, e.g., $\mu$ is the uniform distribution or from an Ising model.
(2) Testing algorithms for KL divergence can be easily applied as a black box to obtain identity testing algorithms for TV distance, where in the latter we do not require $\pi \ll \mu$.
Here we show how our identity testing algorithm Algorithm~\ref{alg:id-test} can be used to test for TV distance. 

\begin{lemma}
	Suppose $\AA_{\textsc{kl-id}}$ is an identity testing algorithm for a family $\FF$ of distributions with query access to both $\coorora$ and $\stanora$ and for KL divergence with distance parameter $\eps > 0$. 
	The query complexity of $\AA_{\textsc{kl-id}}$ is $m(n,1/\eps)$ and the running time of $\AA_{\textsc{kl-id}}$ is polynomial in $n$ and $1/\eps$.  
	Then there exists a polynomial-time identity testing algorithm $\AA_{\textsc{tv-id}}$ for $\FF$ with the same query access and for TV distance with distance parameter $\eps > 0$.
	The query complexity of $\AA_{\textsc{tv-id}}$ is $O(m(n,2/\eps^2) + 1/\eps)$.
\end{lemma}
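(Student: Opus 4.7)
The plan is to reduce TV-identity testing to KL-identity testing via Pinsker's inequality, combined with an initial ``support-check'' phase that handles the one obstruction to absolute continuity: the possibility that $\pi$ places nontrivial mass outside $S := \mathrm{supp}(\mu)$. Pinsker's inequality $\kl{p}{q} \ge 2\,\tv{p}{q}^2$ requires $p \ll q$, so we will separately detect ``far due to support'' and ``far due to total variation restricted to $S$''.

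Concretely, $\AA_{\textsc{tv-id}}$ proceeds in two phases. \textbf{Phase 1 (support check):} draw $\Theta(1/\eps)$ independent samples from $\stanora$ for $\pi$; if any sample falls outside $S$, output ``far''. \textbf{Phase 2 (KL test on $\pi$ restricted to $S$):} define $\pi' := \pi(\,\cdot \mid X \in S)$ and run $\AA_{\textsc{kl-id}}$ on the visible distribution $\mu$ with distance parameter $\eps^2/2$, providing it simulated oracle access to $\pi'$ by rejection sampling from $\pi$'s oracles. That is, each $\stanora$ query is answered by repeatedly calling $\pi$'s $\stanora$ until a value in $S$ is returned, and each $\coorora(i,x)$ query is answered by repeatedly calling $\pi$'s $\coorora(i,x)$ until the returned spin $a$ satisfies $(x \oplus_i a) \in S$. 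Output whatever $\AA_{\textsc{kl-id}}$ returns.

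The correctness analysis splits naturally. If $\pi = \mu$ then $\pi' = \mu$, Phase~1 never triggers, no rejection ever occurs, and $\AA_{\textsc{kl-id}}$ outputs ``equal'' with probability $\ge 2/3$. Suppose instead $\tv{\pi}{\mu} \ge \eps$. If $\pi(S^{\complement}) \ge \eps/2$, a standard Chernoff bound on $\Theta(1/\eps)$ i.i.d.\ Bernoullis shows Phase~1 outputs ``far'' with high probability. Otherwise $\pi(S^{\complement}) < \eps/2$, so $\tv{\pi}{\pi'} = \pi(S^{\complement}) < \eps/2$ and the triangle inequality gives $\tv{\pi'}{\mu} > \eps/2$; since $\pi' \ll \mu$, Pinsker's inequality yields $\kl{\pi'}{\mu} \ge 2(\eps/2)^2 = \eps^2/2$, so $\AA_{\textsc{kl-id}}$ outputs ``far'' with probability $\ge 2/3$. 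Boosting both phases to success probability $5/6$ via constantly many independent repetitions combines to the claimed $2/3$.

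The main obstacle, and the only point where the stated $O(m(n,2/\eps^2) + 1/\eps)$ budget is nontrivial, is bounding the query overhead of the rejection-sampling simulation in Phase~2. For $\stanora$ this is immediate: the per-call acceptance probability is $\pi(S) \ge 1 - \eps/2 \ge 1/2$, so the expected number of $\pi$-queries per simulated $\pi'$-query is at most $2$. For $\coorora(i,x)$ the per-call acceptance probability is $P(i,x) = \sum_a \pi_i(a\mid x)\,\one[(x \oplus_i a) \in S]$, which can be tiny for worst-case $(i,x)$. However, writing $A(i,x) = \pi(\{y : y_{n\setminus i} = x,\, y \in S\})$, one has $\pi'_{n\setminus i}(x) = A(i,x)/\pi(S)$ and $P(i,x) = A(i,x)/\pi_{n\setminus i}(x)$, whence
\[
\E_{(i,x) \sim \pi'}\!\left[\frac{1}{P(i,x)}\right] \;=\; \frac{1}{n}\sum_{i,x} \frac{A(i,x)}{\pi(S)} \cdot \frac{\pi_{n \setminus i}(x)}{A(i,x)} \;=\; \frac{1}{\pi(S)} \;\le\; 2.
\]
Since the natural $\coorora$-query pattern used by $\AA_{\textsc{kl-id}}$ (e.g.\ \cref{alg:id-test}) samples each $(i,x)$ through a $\stanora$ draw from $\pi'$, the total expected number of $\pi$-oracle calls in Phase~2 is $O(m(n, 2/\eps^2))$, and together with the $O(1/\eps)$ Phase-1 calls this matches the claimed bound.
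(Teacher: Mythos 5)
Your two-stage decomposition and use of Pinsker's inequality match the paper exactly, but the way you implement stage two introduces a gap that the paper's proof avoids. You simulate the oracles for $\pi' = \pi(\cdot \mid S)$ via rejection sampling from the oracles for $\pi$, and you correctly identify the resulting difficulty: a single $\coorora(i,x)$ query can have arbitrarily small (even zero) acceptance probability $P(i,x)$, since nothing prevents $\pi_{n\setminus i}(x) > 0$ while almost all of that mass sits outside $S$. Your fix---computing $\E_{(i,x)\sim \pi'}[1/P(i,x)] = 1/\pi(S) \le 2$---bounds the expected overhead only when the query pair $(i,x)$ is itself drawn from $\pi'$, and you explicitly fall back on the query pattern of a specific tester (\cref{alg:id-test}) to justify this. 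That is not what the lemma claims: $\AA_{\textsc{kl-id}}$ is a black box, and a valid KL tester is free to probe adversarial pinnings $(i,x)$ for which rejection sampling never terminates (or does so only after $\omega(m)$ calls). Your argument therefore proves a weaker statement than the one in the lemma, and the bound $O(m(n,2/\eps^2) + 1/\eps)$ is not established as a worst-case query count.

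The paper sidesteps rejection sampling entirely. It runs $\AA_{\textsc{kl-id}}$ directly on the oracles for $\pi$, with one added rule: if any response---from $\stanora$ or $\coorora$---lies outside $S = \mathrm{supp}(\mu)$, immediately abort and output \textsc{No}. This trivially costs exactly $m(n,2/\eps^2)$ oracle calls. The correctness argument is that each oracle response, conditioned on falling inside $S$, is distributed exactly as the corresponding $\pi'$-oracle response (one checks this separately for $\stanora$ and for each $\coorora(i,x)$, as you essentially did when computing $P(i,x)$ and $A(i,x)$), so conditioned on never seeing a bad sample, the entire transcript is distributed as if $\AA_{\textsc{kl-id}}$ had been run on $\pi'$. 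Hence when $\pi = \mu$ no bad sample ever occurs and the tester accepts with probability $\ge 2/3$; and when $\kl{\pi'}{\mu} \ge \eps^2/2$, the probability of a \textsc{Yes} is at most $\Pr(\text{no bad sample}) \cdot \Pr(\text{KL tester accepts} \mid \text{no bad sample}) \le \Pr(\AA_{\textsc{kl-id}} \text{ accepts on } \pi') \le 1/3$. You should replace your Phase-2 simulation with this abort-on-bad-sample rule; everything else in your proof can stay.
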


\begin{proof}
	Let $\XX_\mu \subseteq \XX$ denote the support of $\mu$.
	By the law of total probability we have $\pi(\cdot) = \pi(\XX_\mu) \,\pi(\cdot \mid \XX_\mu) + \pi(\XX_\mu^\complement) \,\pi(\cdot \mid \XX_\mu^\complement)$ where $\XX_\mu^\complement = \XX \setminus \XX_\mu$ is the complement. 
	Therefore, we obtain from the triangle inequality that
	\[
	\tv{\pi}{\mu} \le \pi(\XX_\mu^\complement) + \tv{\pi(\cdot \mid \XX_\mu)}{\mu}.
	\] 
	In particular, if $\tv{\pi}{\mu} \ge \eps$, then either $\pi(\XX_\mu^\complement) \ge \eps/2$ or $\tv{\pi(\cdot \mid \XX_\mu)}{\mu} \ge \eps/2$, where the latter implies $\kl{\pi(\cdot \mid \XX_\mu)}{\mu} \ge \eps^2/2$ via the Pinsker's inequality. 
	
	Our testing algorithm $\AA_{\textsc{tv-id}}$ runs in two stages. 
	In the first stage, 
	we distinguish between $\pi(\XX_\mu^\complement) = 0$ versus $\pi(\XX_\mu^\complement) \ge \eps/2$ using $O(1/\eps)$ samples from $\pi$, 
	and we say $\pi$ passes this stage if none of these samples is in $\XX_\mu^\complement$. 
	In particular, by choosing suitable constants we can make the failure probability at most $1/3$, i.e., if $\pi(\XX_\mu^\complement) \ge \eps/2$ then the probability that $\pi$ passes is at most $1/3$. 
	Observe that if $\pi(\XX_\mu^\complement) = 0$ then it always passes the first stage. 
	
	In the second stage, we test between $\pi(\cdot \mid \XX_\mu) = \mu$ versus $\kl{\pi(\cdot \mid \XX_\mu)}{\mu} \ge \eps^2/2$, using $\AA_{\textsc{kl-id}}$ with failure probability $1/3$. 
	Note that if we saw samples that belong to $\XX_\mu^\complement$ when running $\AA_{\textsc{kl-id}}$, either from calls of $\coorora$ or from calls of $\stanora$, we can safely conclude that $\pi \neq \mu$ and hence $\tv{\pi}{\mu} \ge \eps$.
	Otherwise, these samples can be viewed as generated perfectly from the conditional distribution $\pi(\cdot \mid \XX_\mu)$. 
	We say $\pi$ passes the second stage if $\AA_{\textsc{kl-id}}$ outputs Yes (i.e., $\pi(\cdot \mid \XX_\mu) = \mu$).
	
	If $\pi$ passes both stages then $\AA_{\textsc{tv-id}}$ outputs Yes (i.e., $\pi = \mu$); otherwise it outputs No (i.e., $\tv{\pi}{\mu} \ge \eps$).
	Observe that, if $\pi = \mu$ then it passes the first stage always and passes the second stage with probability at least $2/3$. 
	Meanwhile, if $\tv{\pi}{\mu} \ge \eps$ then either $\pi(\XX_\mu^\complement) \ge \eps/2$ or $\kl{\pi(\cdot \mid \XX_\mu)}{\mu} \ge \eps^2/2$. 
	If $\pi(\XX_\mu^\complement) \ge \eps/2$ then it passes the first stage with probability at most $1/3$. 
	And if $\kl{\pi(\cdot \mid \XX_\mu)}{\mu} \ge \eps^2/2$ it passes the second stage with probability at most $1/3$. 
	Hence, the probability that $\pi$ passes both stages is at most $1/3$. 
	Therefore, $\AA_{\textsc{tv-id}}$ is a polynomial-time identity testing algorithm with sample complexity $O(m(n,2/\eps^2) + 1/\eps)$. 
\end{proof}

\section{Hardness of Identity Testing When Approximate Tensorization Fails}


In this section we show that approximate tensorization is essentially a necessary condition for efficient identity testing, in the sense that there are high-dimensional distributions, specifically the antiferromagnetic Ising model, for which either approximate tensorization holds with constant $C=O(1)$ (and thus there is an efficient identity algorithm from Theorem \ref{thm:alg-main}), or there is no polynomial-time identity testing algorithm with $\stanora$ and $\coorora$ access unless $\mathsf{RP}=\mathsf{NP}$.

We prove the hardness result in Theorem \ref{thm:comp-hardness} from the introduction in the following sections. We use a reduction from the maximum cut problem to identity testing. In particular, given a hard maximum cut instance, we construct an identity testing instance whose outputs provides the maximum cut. Our reduction is inspired by the one in~\cite{testing-colt}, but we use a different ``degree reducing'' gadget (namely, the one from~\cite{Sly}), and we are also required to design an algorithm to sample from the hidden model we construct. This is challenging because sampling from the antiferromagnetic Ising model is $\mathsf{NP}$-hard in general, but for our instance we manage to do it using a hybrid approach. Specifically, we use the recent algorithm from~\cite{KLR} for low-rank Ising model for one range of parameters and polymer models~\cite{JKP} for the other. Both algorithms rely on the fact that the graph in our testing is a random bipartite graph with trees attached to it that happens to be a good expander. 

Our proof is organized as follows.
First, we introduce our degree reducing gadget in Section~\ref{subsec:vertexgadget}. The testing instance construction and the reduction is then provided in Section~\ref{subsection:reduction}.
Finally, Section~\ref{subsection:sampling} and Section~\ref{subsection:sampling:proofs} contain our sampling algorithm.

\subsection{The degree reducing gadget}
\label{subsec:vertexgadget}

The gadget construction has as parameters integers $n \ge 1$, 
$d \ge 3$ and real numbers $0 < \theta,\psi < 1/8$.
Let $\ell = 2 \lfloor \frac \psi 2 \log_{d-1} n \rfloor$,
$t = (d-1)^{\lfloor \theta \log_{d-1} n \rfloor}$
and	$m = t(d-1)^{\ell}$.
The gadget is constructed as follows:
\begin{enumerate}
	\item Let $\hat G = (V_{\hat G},E_{\hat G})$ 
	be a random bipartite graph with $n+m$ vertices on each side.
	\item For $s \in \{+,-\}$, let the vertices on the $s$-side of $\hat G$ be $W_s \cup U_s$, where $|W_s| = n$ and $|U_s| = m$.
	\item Let $M_1,\dots,M_{d-1}$ be $d-1$ random perfect matchings between $W_+ \cup U_+$ and $W_- \cup U_-$; that is, each $M_i$ is drawn uniformly at random from the set of all perfect matching between $W_+ \cup U_+$ and $W_- \cup U_-$;
	\item Let $M'$ be a random perfect matching between $W_+$ and $W_-$;
	\item Set $E_{\hat G} = M' \cup \left(\bigcup_{i=1}^{d-1} M_i \right)$;
	\item Construct collections $\mathcal T_+$ and $\mathcal T_-$ each of $t$ disjoint $(d-1)$-ary trees of height $\ell$. 
	\item Adjoin $\mathcal T_+$ (resp., $\mathcal T_-$) to $\hat G$
	by identifying each vertex of $U_+$ (resp., of $U_-$) with one of the leafs of the trees in $\mathcal T_+$ (resp., $\mathcal T_-$). We denote the set of roots of the trees in $\mathcal T_+$ (resp., $\mathcal T_-$) by $R_+$ (resp., $R_-$).
\end{enumerate}

Let $G=(V_{G},E_{G})$ be the random multi-graph resulting from this construction.

\subsection{The reduction}
\label{subsection:reduction}

Let $(K=(V_{K},E_{K}),k)$ be an instance of the maximum cut problem. Namely, we want to distinguish between
the cases $\maxcut(K) < k$ and $\maxcut(K) \ge k$, where $\maxcut(K)$ denotes the size of the maximum cut of the graph $K$.

Let $N = |V_K|$; we may assume that $N = n^{\theta/12}$, where $n$ and $\theta$ are the parameters for the degree reducing gadget construction in the previous section.
Form the multi-graph $F = (V_F,E_F)$ by adding two special vertices $s$ and $t$ to $K$ (i.e., $V_F = V_{K} \cup \{s,t\}$), 
connecting $s$ and $t$ with $N^2-k$ edges, 
and adding $N$ edges between each $s$ and $t$ and each vertex in $V_K$;
note that $F$ has $|E_K|+3N^2-k$.
This construction ensures that: 
\begin{enumerate}
	\item When $\maxcut(K) < k$, then $(\{s,t\},V_K)$ is the unique maximum cut of $F$ and has size $2N^2$; 
	\item When $\maxcut(K) \ge k$, then there exists another cut in $F$ whose size is at least $2N^2$; this cut is obtained by taking the maximum cut for $K$ and adding $s$ and $t$ to opposite sides of it.
\end{enumerate}

Next, we generate an instance $G = (V_G,E_G)$ of the degree reducing gadget
from Section~\ref{subsec:vertexgadget}.
We then obtain the multi-graph $\widehat{F}=(V_{\widehat{F}},E_{\widehat{F}})$ by replacing every vertex $v \in V_F$ with a copy $G$; we label each copy of $G$ by $G^v$ and let $R_+^v$ and $R_-^v$ denote $R_+$ and $R_-$ for $G_v$.
Moreover, for each edge $\{u,v\} \in E_F$, we add a matching 
of size $n^{3\theta/4}$ between $R_+^v$ and $R_+^u$, and another matching of the same size between $R_-^v$ and $R_-^u$.
Note that $\widehat{F}$ is a $d$-regular multi-graph.

We will consider the antiferromagnetic Ising model on the multi-graph $\widehat{F}$.
(See~Section~\ref{subsec:sparse:ising} for the definition of the Ising model on a simple graph. The definition extends to the multi-graph setting by simply considering multi-edges in the summation.)
For a configuration $\sigma \in \{+1,-1\}^{V_{G_v}}$, we define its \emph{phase} $Y_v(\sigma)$ as $+1$ if the number of vertices assigned $-1$ in $W_+$ is greater than the number of vertices assigned $-1$ in $W_-$; otherwise we set $Y(\sigma)=-1$.
For a configuration $\sigma \in \{+1,-1\}^{V_{\widehat{F}}}$,
we let $Y(\sigma)$ denote the \emph{phase vector} of $\sigma$, which contains as coordinates the phase of $\sigma$ in each gadget $G_v$.

Let $\Omega = \{+1,-1\}^{V_F}$ be the set of all phase vectors. 
Let $\xi_{st}^+ \in \Omega$ (resp., $\xi_{st}^- \in \Omega$) be the phase vector that assigns $+1$ (resp., $-1$) to $s,t$ and $-1$ (resp., $+1$) to every other gadget in $\widehat{F}$. Let $\Omega_{st} = \{\xi_{st}^+ ,\xi_{st}^- \}$.
Observe that each phase vector $Y(\sigma)$ corresponds to a cut in the graph $F$, with the phase determining the side of the cut for each vertex.

Let $\Omega'_0 \subseteq \Omega$ be the collection of all phase vectors corresponding to cuts $(\{s\} \cup U, \{t\} \cup V_F \setminus U)$ of $F$, which in turn correspond to cuts $(U,V_K\setminus U)$ of $K$ of size $< k$. 
Let $\Omega_0$ be $\Omega'_0$ together with the phase vectors for cuts $(\{s,t\} \cup U, V_F \setminus U)$ of $F$.
Then:
\begin{enumerate}
	\item if $\maxcut(K) < k$, then $\Omega_0 = \Omega$;
	\item if $\maxcut(K) \ge k$, then $\Omega_0 \subsetneq \Omega$ and $\Omega \setminus \Omega_0$ contains at least one phase vector corresponding to a cut $(\{s\} \cup U, \{t\} \cup V_F \setminus U)$ of $F$, where $( U, V_K \setminus U)$ is a maximum cut for $K$.
\end{enumerate}

We are now ready to describe our instance for the identity testing problem. Let $\beta < \beta_c(d) := -\frac{1}{2} \ln (\frac{d}{d-2})$; 
this parameter regime corresponds to the so-called tree uniqueness region for $(d-1)$-ary infinite trees.
The visible distribution of our testing instance will be the Gibbs distribution $\mu_{\widehat{F},\beta}$ for the antiferromagnetic Ising model on $\widehat{F}$. 
The hidden distribution will be $\mu_{\widehat{F},\beta} (\cdot \mid Y(\sigma) \in \Omega_0)$; that is, $\mu_{\widehat{F},\beta} $ conditioned on the phase vector being in $\Omega_0$.
Our construction ensures that 	
if $\maxcut(K) < k$, then $\mu_{\widehat{F},\beta}  (\cdot \mid Y(\sigma) \in \Omega_0) = \mu_{\widehat{F},\beta}$.
In addition, we have the following fact.

\begin{lemma}
	\label{lemma:tv:1}
	If $\maxcut(K) \ge k$ and $\beta < \beta_c(d)$, then $\tv{\mu_{\widehat{F},\beta} (\cdot \mid Y(\sigma) \in \Omega_0)}{\mu_{\widehat{F},\beta}} = 1-o(1)$. 
\end{lemma}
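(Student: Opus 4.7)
The plan is to reduce the total-variation bound to a single probability estimate and then control that probability via a phase analysis of the gadget construction. Using the standard identity $\tv{\mu(\cdot \mid A)}{\mu} = 1 - \mu(A)$ (valid whenever $\mu(A)>0$) with $A = \{\sigma : Y(\sigma) \in \Omega_0\}$, the claim reduces to showing $\mu_{\widehat{F},\beta}(Y(\sigma) \in \Omega_0) = o(1)$; equivalently, that the marginal on phase vectors places almost all of its mass on $\Omega \setminus \Omega_0$.

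To analyze the marginal of $Y(\sigma)$ I would invoke the gadget machinery of~\cite{Sly}, which also underlies the sampling algorithm in \cref{subsection:sampling,subsection:sampling:proofs}. In the non-uniqueness regime $\beta < \beta_c(d)$, the Ising measure on each gadget $G_v$ has two dominant phases of nearly equal free energy, determined by which of $W_+,W_-$ carries the majority of $-1$ spins; the trees $\mathcal{T}_\pm$ propagate that sign through the ports $R_\pm^v$ to the inter-gadget matchings of size $n^{3\theta/4}$. Summing over the interior degrees of freedom of every gadget yields an induced measure on $\Omega$ of the form
\[
\mu_{\widehat{F},\beta}(Y = \eta) \;\propto\; \exp\bigl(\Psi \cdot |\mathrm{cut}_F(\eta)|\bigr)\,(1 + o(1)),
\]
where $\Psi = \Psi(\beta,d)\cdot n^{3\theta/4}$ is a positive effective coupling whose size is set by the inter-gadget matchings. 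Thus the phase vector is distributed (up to $o(1)$ error) like an effective antiferromagnetic Ising measure on $F$ at very low temperature, which concentrates exponentially on the maximum cuts of $F$.

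A direct count from the construction of $F$ shows that a phase vector of type $(\{s,t\}\cup U, V_K \setminus U)$ has cut size $2N(N-|U|) + \mathrm{cut}_K(U)$, maximized at $|U|=0$ with value $2N^2$, while a phase vector of type $(\{s\}\cup U, \{t\}\cup V_K \setminus U)$ has cut size $2N^2 - k + \mathrm{cut}_K(U)$. Hence the maximum cut in $\Omega_0$ is exactly $2N^2$, attained only at the two phase vectors for $(\{s,t\}, V_K)$, while under $\maxcut(K) \ge k$ the maximum cut in $\Omega \setminus \Omega_0$ equals $2N^2 - k + \maxcut(K) \ge 2N^2$. When $\maxcut(K)>k$, the strict Boltzmann gap of $e^{-\Omega(\Psi)}$ swamps the trivial counting contributions from $\Omega_0$ since $\Psi=\Psi(\beta,d)\cdot n^{3\theta/4}$ dwarfs $N=n^{\theta/12}$, giving $\mu_{\widehat{F},\beta}(\Omega_0)=o(1)$. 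In the borderline case $\maxcut(K)=k$, the two maximum-cut phase vectors in $\Omega_0$ are overwhelmed by the entropic contribution from the many phase vectors in $\Omega\setminus\Omega_0$ coming from cuts of $K$ of size between $k$ and $\maxcut(K)$, again yielding $\mu_{\widehat{F},\beta}(\Omega_0)=o(1)$.

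The main obstacle is the quantitative gadget analysis that justifies the effective Boltzmann formula for $\mu_{\widehat{F},\beta}(Y=\eta)$. It requires tight control of the ratio $Z(\eta)/Z(\eta')$ of phase-restricted partition functions as $\eta,\eta'$ differ at a single gadget, and relies on expansion of the random bipartite graph $\hat{G}$, the sign-symmetry breaking of the Ising model on $(d-1)$-ary trees in the non-uniqueness regime, and faithful propagation of the dominant sign through the attached trees $\mathcal{T}_\pm$ down to the ports. This machinery is already developed in~\cite{Sly} and parallels the marginal-on-the-ports estimates appearing in \cref{subsection:sampling,subsection:sampling:proofs}; combining those estimates with the elementary cut-size counting above completes the proof.
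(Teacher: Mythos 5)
Your high-level strategy matches the paper's: the identity $\tv{\mu(\cdot\mid A)}{\mu}=1-\mu(A)$ reduces the claim to $\mu_{\widehat F,\beta}(Y(\sigma)\in\Omega_0)=o(1)$, and the concentration of the phase vector on maximum cuts of $F$ is imported from~\cite{Sly,GSV:ising,cai2016hardness}. Your cut-size bookkeeping ($2N(N-|U|)+\mathrm{cut}_K(U)$ vs.\ $2N^2-k+\mathrm{cut}_K(U)$, and the exponent comparison $n^{3\theta/4}$ vs.\ $N=n^{\theta/12}$) is correct and makes explicit what the paper leaves to the reader. The ``effective Boltzmann measure'' $\mu(Y=\eta)\propto e^{\Psi\cdot|\mathrm{cut}_F(\eta)|}(1+o(1))$ is a clean heuristic, but be aware that this is not quite what the cited lemmas give; they control partition-function ratios and conclude concentration on max cuts rather than handing you a closed-form exponential measure. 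The paper simply invokes Theorems~1--2 of~\cite{Sly}, the extension in~\cite{GSV:ising}, and Lemma~22 of~\cite{cai2016hardness} as a black box for the statement that a sample from $\mu_{\widehat F,\beta}$ has a max-cut phase vector with probability $1-2^{-n^{\theta/4}}$, rather than rebuilding this machinery.

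The genuine gap is in your borderline case $\maxcut(K)=k$. You argue that the two max-cut phase vectors in $\Omega_0$ are ``overwhelmed by the entropic contribution from the many phase vectors in $\Omega\setminus\Omega_0$ coming from cuts of $K$ of size between $k$ and $\maxcut(K)$.'' But when $\maxcut(K)=k$ those are the same number, so there are no cuts strictly between them; if $K$ has a unique maximum cut, $\Omega\setminus\Omega_0$ contributes exactly two max-cut phase vectors while $\Omega_0$ contributes the two for $(\{s,t\},V_K)$, and your own effective-Boltzmann heuristic would then predict $\mu(\Omega_0)\approx 1/2$, not $o(1)$. Sub-maximum cuts in $\Omega\setminus\Omega_0$ are exponentially suppressed by the factor $e^{-\Omega(n^{3\theta/4})}$ per unit of cut deficit, so they cannot tip the balance either. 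This case is also elided in the paper's proof: the sentence asserting $\mu(\Omega\setminus\Omega_0)\ge\Pr(\text{max cut})$ implicitly requires \emph{every} max cut of $F$ to lie in $\Omega\setminus\Omega_0$, which is only guaranteed when $\maxcut(K)>k$ (at $\maxcut(K)=k$ the cut $(\{s,t\},V_K)$ ties at size $2N^2$ and lies in $\Omega_0$). So you correctly identified that this case needs separate treatment, but the argument you supply for it does not go through.
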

\begin{proof} 
	Observe that
	$$
	\tv{\mu_{\widehat{F},\beta} (\cdot \mid Y(\sigma) \in \Omega_0)}{\mu_{\widehat{F},\beta}} = \sum_{\sigma: Y(\sigma) \in \Omega\setminus\Omega_0} \mu_{\widehat{F},\beta}(\sigma).
	$$
	Since $\maxcut(K) \ge k$, the set $\Omega \setminus \Omega_0$ contains (at least) the phase vector corresponding to a maximum cut of $F$. Hence, $\sum_{\sigma: Y(\sigma) \in \Omega\setminus\Omega_0} \mu_{\widehat{F},\beta}$ is at least the probability that a sample from $\mu_{\widehat{F},\beta}$
	reveals a maximum cut for $F$.
	The results in~\cite{Sly,GSV:ising,cai2016hardness} imply that this probability is indeed $1-1/2^{n^{\theta/4}}$, as desired.
	Specifically, the argument in the proof of Theorems 1 and 2 in~\cite{Sly} shows that this holds (under certain conditions) for the hard-core model;~\cite{GSV:ising} extends the argument for any antiferromagnetic spin system (including the Ising model); and Lemma 22 from~\cite{cai2016hardness} shows that the required condition holds for all $\beta < -\frac{1}{2} \ln (\frac{d}{d-2})$ in the tree uniqueness region.				
\end{proof}	

The idea of our reduction is to provide this testing instance to 
a presumed polynomial-time identity testing algorithm and use its output to determine
whether $\mu_{\widehat{F},\beta}  (\cdot \mid Y(\sigma) \in \Omega_0) = \mu_{\widehat{F},\beta}$ or $$\tv{\mu_{\widehat{F},\beta} (\cdot \mid Y(\sigma) \in \Omega_0)}{\mu_{\widehat{F},\beta} (\cdot)} = 1-o(1).$$
This gives whether $\Omega_0 = \Omega$ or not, and thus whether the $\maxcut(K) < k$ or not, which would solve the maximum cut problem in randomized polynomial time and  imply that there is no polynomial-time identity testing algorithm unless $\mathsf{RP}=\mathsf{NP}$.

All that remains to complete the reduction is that we show how to sample (in polynomial time) from the hidden distribution $\mu_{\widehat{F},\beta} (\cdot \mid Y(\sigma) \in \Omega_0)$
and how to simulate the $\coorora$ for it.

Simulating the conditional marginal oracle for $\mu_{\widehat{F},\beta} (\cdot \mid Y(\sigma) \in \Omega_0)$ is straightforward. Given a vertex $v \in V_{\widehat{F}}$ and   
a configuration $\sigma \in \{+1,-1\}^{ V_{\widehat{F}} \setminus \{v\}}$, we can first check if $Y(\sigma) \not
\in \Omega_0$; if this the case, we output $\{+1,-1\}$ arbitrarily.
Otherwise, we sample from the vertex marginal $\mu_{\widehat{F},\beta}(\cdot\mid\sigma)$, which can be done in $O(d)$ time.
Sampling from $\mu_{\widehat{F},\beta}  (\cdot \mid Y(\sigma) \in \Omega_0)$ is much trickier, but it can be done relaying heavily on the structure of the graph $\widehat{F}$; note that the problem of approximately sampling antiferromagnetic is computationally hard, even in the bounded degree case. We prove the following.

\begin{lemma}
	\label{lemma:sampling}
	For any $\varepsilon \in (0,1)$ and any phase vector $\mathcal Y \in \Omega_0$
	there is an algorithm that generates a sample from a distribution $\mu_{\textsc{alg}}$ such that 
	$\tv{\mu_{\textsc{alg}}}{\mu_{\widehat{F},\beta}(\cdot \mid \mathcal Y)} \le \varepsilon + \ln(1/\varepsilon) e^{-\Omega(n^{\theta/4})}$
	with running time $\poly(|V_{\widehat{F}}|,1/\varepsilon)$.
\end{lemma}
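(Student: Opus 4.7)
The plan is to sample $\mu_{\widehat{F},\beta}(\cdot \mid \mathcal{Y})$ in three stages, exploiting the fact that $\widehat{F}$ is assembled from gadget copies $\{G^v\}_{v \in V_F}$ joined only by port-matchings between the root sets $R_\pm^v$ and $R_\pm^u$ for each edge $\{u,v\} \in E_F$. First I would observe that, conditioned on any spin configuration $\tau$ on the union of all ports $R := \bigcup_v (R_+^v \cup R_-^v)$, the distribution $\mu_{\widehat{F},\beta}(\cdot \mid \mathcal{Y},\tau)$ factorizes as an independent product over the gadgets, because every inter-gadget edge is incident to $R$. Thus it suffices to (i) approximately sample $\tau$ from the port marginal $\mu_{\widehat{F},\beta}(\cdot\!\restriction_R \mid \mathcal{Y})$, and (ii) given $\tau$, independently sample the interior of each gadget from $\mu_{G^v,\beta}(\cdot \mid Y_v = \mathcal{Y}_v,\, \tau\!\restriction_{R^v})$.

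For step (i), the critical structural fact is that the trees $\mathcal{T}_\pm^v$ of height $\ell = \Theta(\log_{d-1} n)$ that connect $U_\pm^v$ to the ports $R_\pm^v$ act as a correlation buffer. Since $\beta < \beta_c(d)$ lies in the tree-uniqueness regime of the $(d-1)$-ary tree, the influence of any leaf configuration on its root decays exponentially in $\ell$, hence by at most $n^{-\Omega(1)}$. Combined with the Sly--Galanis--\v{S}tefankovi\v{c}--Vigoda-style phase concentration already used to prove \cref{lemma:tv:1}---which shows that $U_\pm^v$ is sharply concentrated around a bias determined solely by $\mathcal{Y}_v$ except with probability $e^{-\Omega(n^{\theta/4})}$---this yields an explicit product distribution $\nu$ on $R$, computable in polynomial time from $\mathcal{Y}$, such that the TV distance between $\mu_{\widehat{F},\beta}(\cdot\!\restriction_R \mid \mathcal{Y})$ and $\nu$ is $e^{-\Omega(n^{\theta/4})}$. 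I would then sample $\tau \sim \nu$ directly and absorb this TV gap into the final error bound.

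For step (ii), I would first marginalize out the tree interiors of each gadget given the root spins in $\tau$: this is done exactly via the tree recursion in polynomial time, and replaces each tree by an effective single-vertex field on its leaf in $U_\pm^v$. What remains is an Ising model on the bipartite core $\hat G^v$ with computed fields on $U_\pm^v$, zero fields on $W_\pm^v$, and conditioned on the phase $Y_v = \mathcal{Y}_v$. Because $\hat G^v$ is with high probability a $d$-regular bipartite expander and the phase conditioning pins the model into one of its two ordered phases, I would invoke a hybrid sampler: for $|\beta|$ sufficiently beyond $|\beta_c(d)|$, a polymer expansion anchored at the dominant phase converges and the cluster-expansion sampler of \cite{JKP} yields an $\eps/\poly(n)$-approximate sample in $\poly(n,1/\eps)$ time; for the complementary range $\beta$ just past $\beta_c(d)$, the variational algorithm of \cite{KLR} applies to the low-rank bipartite interaction matrix and gives the same guarantee. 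After sampling $(W_\pm^v,U_\pm^v)$, each tree interior is filled in independently from its (easy) conditional tree Gibbs distribution given its fixed root and leaf.

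The hard part will be the quantitative gluing. The total error must account for (a) the $e^{-\Omega(n^{\theta/4})}$ product-approximation error at the port level, (b) the probabilities of the structural bad events (expansion failure of $\hat G^v$, phase concentration failure) summed over the $\poly(n)$ gadgets, and (c) the per-gadget $\eps/\poly(n)$ sampler error; combining these by a union bound and paying a $\log(1/\eps)$ factor for boosting each internal sampler's success probability yields the stated $\eps + \log(1/\eps)\,e^{-\Omega(n^{\theta/4})}$ bound. A second delicate point is verifying that the parameter ranges of \cite{JKP} and \cite{KLR} jointly cover the entire non-uniqueness region $\beta < \beta_c(d)$ for nearly $d$-regular bipartite expanders, which requires matching their respective polymer-convergence and rank-based hypotheses to the spectral properties of $\hat G^v$ that follow from its random construction.
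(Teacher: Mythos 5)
Your high-level decomposition is right: condition on the port configuration to factorize the Gibbs distribution over gadgets, sample the ports first, then fill in each gadget interior. The hybrid JKP/KLR gadget sampler exploiting that $\hat G^v$ is a bipartite expander is also the route the paper takes (their Lemma~\ref{lemma:sampling:couting}). But there is a genuine gap in step~(i).

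You assert that the port marginal $\mu_{\widehat{F},\beta}(\cdot\!\restriction_P\mid\mathcal Y)$ is within $e^{-\Omega(n^{\theta/4})}$ total variation distance of an explicit product distribution, and you propose to sample from that product directly. This is not what the gadget buys you. The trees attached to $U_\pm$ have height $\ell=\Theta(\log_{d-1} n)$, so correlation decay through the buffer is only $n^{-\Omega(1)}$, not exponential; the $e^{-\Omega(n^{\theta/4})}$ rates you cite arise from \emph{phase concentration} (the probability of seeing the wrong phase inside a gadget) and are a different quantity. Concretely, the paper's Lemma~\ref{lemma:phase} gives per-gadget multiplicative error $1\pm n^{-2\theta}$ for the product proxy, and Lemma~\ref{lemma:prod:close} aggregates this over $|V_F|=n^{\theta/12}+2$ gadgets to a ratio $1\pm O(n^{-23\theta/12})$. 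Directly sampling from the product therefore produces TV error that is polynomially small with a fixed exponent --- a hard floor. The lemma, however, must deliver error $\varepsilon$ for \emph{every} $\varepsilon\in(0,1)$; in the reduction $\varepsilon=1/(100L)$ with $L$ the tester's sample complexity, which can be a polynomial of arbitrary degree, so the floor can exceed the target.

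The paper's fix, which your proposal omits, is to use the product distribution $Q_P^{\mathcal Y}$ only as a \emph{proposal} in a rejection sampler: the acceptance probability is proportional to $Z_{\widehat{F},\beta}(\sigma_P,\mathcal Y)/Q_P^{\mathcal Y}(\sigma_P)$, computed to relative accuracy $1\pm\varepsilon/10$ by the per-gadget FPRAS from Lemma~\ref{lemma:sampling:couting}. Lemma~\ref{lemma:prod:close} then guarantees the acceptance probability is bounded below by a constant, so $O(\log(1/\varepsilon))$ rounds suffice, and the accepted sample has bias $O(\varepsilon)$ --- independent of $n$. The rejection step is exactly what converts the polynomially-accurate heuristic into an $\varepsilon$-accurate sampler and is the missing ingredient in your proposal. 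A further, smaller difference: the paper does \emph{not} condition the gadget sampler on the phase $Y_v=\mathcal Y_v$; it samples $\mu_{G_v,\beta}(\cdot\mid\sigma_{P_v})$ unconditionally and uses Lemma~\ref{lemma:phase:aggreement} to show the correct phase emerges automatically with probability $1-e^{-\Omega(n^{3\theta/4})}$, which is where the $\log(1/\varepsilon)e^{-\Omega(n^{\theta/4})}$ term actually originates. Your plan to condition the KLR/JKP sampler on a nonlocal phase event would need extra justification that the paper sidesteps.
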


The proof of this lemma is provided in Section~\ref{subsection:sampling}. We are now ready to prove~Theorem \ref{thm:comp-hardness} from the introduction.

\begin{proof}[Proof of~Theorem \ref{thm:comp-hardness}]
	The first part of the theorem was proved in Section~\ref{subsec:sparse:ising}.
	For the second part, suppose there is an identity testing algorithm with polynomial running time and sample complexity. 
	
	Let $(K=(V_K,E_K),k)$ be and instance of the maximum cut problem with $|V_K| = n^{\theta/12}$.
	Set $\mu_{\widehat F,\beta}$ to be the visible distribution
	and $\mu_{\widehat{F},\beta}  (\cdot \mid Y(\sigma) \in \Omega_0)$ to be the hidden one. 
	Suppose $L = \poly(n)$ is the sample complexity of the testing algorithm in this instance,
	Generate a set $\mathcal S$ of $L$ samples from the distribution $\mu_{\textsc{alg}}$ from Lemma~\ref{lemma:sampling} setting $\varepsilon = 1/(100L)$, so that
	$$
	\tv{\mu_{\textsc{alg}}^{\otimes L}}{\mu_{\widehat{F},\beta}^{\otimes L}  (\cdot \mid Y(\sigma) \in \Omega_0)} \le L \cdot 	\tv{\mu_{\textsc{alg}}}{\mu_{\widehat{F},\beta}  (\cdot \mid Y(\sigma) \in \Omega_0)} \le \frac {1}{50},
	$$
	where $\mu_{\textsc{alg}}^{\otimes L}$ and $\mu_{\widehat{F},\beta}^{\otimes L}  (\cdot \mid Y(\sigma) \in \Omega_0)$ denote the product distributions corresponding to $L$ independent samples from $\mu_{\textsc{alg}}$ and $\mu_{\widehat{F},\beta}  (\cdot \mid Y(\sigma) \in \Omega_0)$ respectively.
	
	Our algorithm for solving $(K=(V_K,E_K),k)$ gives $\mathcal{S}$ to the testing algorithm.	
	Recall that our construction ensures that 	
	if $\maxcut(K) < k$, then $\mu_{\widehat{F},\beta}  (\cdot \mid Y(\sigma) \in \Omega_0) = \mu_{\widehat{F},\beta}$ and 
	that if $\maxcut(K) \ge k$ then 
	\begin{equation}
	\label{eq:tv:p}
	\tv{\mu_{\widehat{F},\beta} (\cdot \mid Y(\sigma) \in \Omega_0)}{\mu_{\widehat{F},\beta}} = 1-o(1);
	\end{equation}
	see Lemma~\ref{lemma:tv:1}.
	
	If $\pi^{\otimes L}$ is the optimal coupling of the distributions $\mu_{\textsc{alg}}^{\otimes L}$ and $\mu_{\widehat{F},\beta}^{\otimes L}  (\cdot \mid Y(\sigma) \in \Omega_0)$, and $(\mathcal{S},\mathcal{S}')$ is sampled from $\pi^{\otimes L}$, then $\mathcal{S}' = \mathcal{S}$ with probability at least $49/50$,
	$\mathcal{S} \sim \mu_{\textsc{alg}}^{\otimes L}$ and $\mathcal{S}' \sim \mu_{\widehat{F},\beta}^{\otimes L}  (\cdot \mid Y(\sigma) \in \Omega_0)$. Therefore,
	if \eqref{eq:tv:p} holds (i.e., $\maxcut(K) \ge k$), then
	\begin{align}
	\label{eq:coupling}
	&\Pr[ \textsc{Tester}~\text{outputs}~\textsc{Yes}~\text{when given samples}~\mathcal{S}~\text{where}~\mathcal{S}\sim\mu_{\textsc{alg}}^{\otimes L}]\notag\\
	={}& \Pr[  \textsc{Tester}~\text{outputs}~\textsc{Yes}~\text{when given samples}~\mathcal{S}~\text{where}~(\mathcal{S},\mathcal{S}')\sim\pi^{\otimes L}]\notag \\
	\leq{}&	\Pr[ \textsc{Tester}~\text{outputs}~\textsc{Yes}~\text{when given samples}~\mathcal{S}'~\text{where}~(\mathcal{S},\mathcal{S}')\sim\pi^{\otimes L} ]+\pi^{\otimes L}(\mathcal{S}\neq \mathcal{S}' )\notag\\
	={}&\Pr[\textsc{Tester}~\text{outputs}~\textsc{Yes}~\text{when given samples}~\mathcal{S}'~\text{where}~\mathcal{S}'\sim\mu_{\widehat{F},\beta}^{\otimes L} (\cdot \mid Y(\sigma) \in \Omega_0)]+\pi^{\otimes L}(\mathcal{S}\neq \mathcal{S}' )\notag\\
	\le{}&  \frac{1}{3} + \frac{1}{50} = \frac{53}{150}.
	\end{align}	
	Hence, the \textsc{Tester} returns \textsc{No} with probability at least $3/5$ in this case.
	
	Now, when $\maxcut(K) < k$ and $\mu_{\widehat{F},\beta}  (\cdot \mid Y(\sigma) \in \Omega_0) = \mu_{\widehat{F},\beta}$,
	we can analogously deduce that the \textsc{Tester} returns \textsc{Yes} with probability at least $2/3$.		
	Therefore, our algorithm can solve any maximum cut instance $(K=(V_K,E_K),k)$ in polynomial time with probability at least $3/5$, and the result follows.
\end{proof}

\subsection{Sampling conditional on the phase vector: proof of \texorpdfstring{Lemma~\ref{lemma:sampling}}{Lemma 5.2}}
\label{subsection:sampling}

We start with a number of definitions and facts required to describe and analyze our algorithm to establish Lemma~\ref{lemma:sampling}. The proofs of these facts
are provided in Section~\ref{subsection:sampling:proofs}.
The first lemma states that it essentially suffices to sample from the simpler conditional distribution $\mu_{\widehat{F},\beta} (\cdot \mid Y(\sigma) \in \Omega_0)$.

\begin{lemma}
	\label{lemma:dominant:phase}
	$\tv{\mu_{\widehat{F},\beta} (\cdot \mid Y(\sigma) \in \Omega_{st})}{\mu_{\widehat{F},\beta} (\cdot \mid Y(\sigma) \in \Omega_0)} \le \frac{1}{2^{n^{\theta/4}}}$.
\end{lemma}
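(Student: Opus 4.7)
The plan is to bound the total variation distance by a ratio of phase-vector masses, then show this ratio is super-polynomially small using a cut-size gap between $\Omega_{st}$ and $\Omega_0 \setminus \Omega_{st}$ combined with the $2 n^{3\theta/4}$-fold magnification of each $F$-edge in $\widehat{F}$. Since $\Omega_{st} \subseteq \Omega_0$, a direct calculation (comparing $\mu(\cdot\mid A)$ and $\mu(\cdot\mid B)$ for $A \subseteq B$) yields
\[
\tv{\mu_{\widehat{F},\beta}(\cdot \mid Y(\sigma) \in \Omega_{st})}{\mu_{\widehat{F},\beta}(\cdot \mid Y(\sigma) \in \Omega_0)} \;=\; \frac{\mu_{\widehat{F},\beta}(Y(\sigma) \in \Omega_0 \setminus \Omega_{st})}{\mu_{\widehat{F},\beta}(Y(\sigma) \in \Omega_0)} \;\le\; \frac{\mu_{\widehat{F},\beta}(Y(\sigma) \in \Omega_0 \setminus \Omega_{st})}{\mu_{\widehat{F},\beta}(Y(\sigma) \in \Omega_{st})},
\]
so it suffices to upper bound the rightmost ratio by $2^{-n^{\theta/4}}$.

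Next I would establish the cut-size gap in $F$. Every phase vector in $\Omega_{st}$ realizes the cut $(\{s,t\}, V_K)$ of $F$, which has size exactly $2N^2$. For any $y \in \Omega_0 \setminus \Omega_{st}$, the corresponding $F$-cut has size at most $2N^2 - 1$: if $y$ corresponds to a cut in $\Omega'_0$ then its size equals $2N^2 - k + c$ with $c < k$; and if $y$ corresponds to a cut $(\{s,t\} \cup U, V_F \setminus U)$ with $U \neq \emptyset$, a direct count gives size at most $2(N - |U|)N + |U|(N - |U|) \le 2N^2 - N - 1$. Hence the gap $C_{y^*} - C_y \ge 1$ for every $y^* \in \Omega_{st}$ and $y \in \Omega_0 \setminus \Omega_{st}$. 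I would then perform the weight comparison using the phase-concentration results from \cite{Sly,GSV:ising,cai2016hardness} already invoked to prove \cref{lemma:tv:1}: in the non-uniqueness regime $\beta < \beta_c(d)$ with $d$ sufficiently large, each gadget's partition function restricted to a prescribed phase is essentially symmetric in the phase label, and the port marginals in $R_{\pm}^v$ concentrate on the phase-typical values with error $e^{-\Omega(n^{\theta/4})}$. Combined with the fact that each $F$-edge is realized by $2 n^{3\theta/4}$ matching edges in $\widehat{F}$ and that each matching edge across misaligned ports contributes a factor $e^{-2\beta}$ relative to an aligned one, one obtains
\[
\frac{\mu_{\widehat{F},\beta}(Y(\sigma) = y)}{\mu_{\widehat{F},\beta}(Y(\sigma) = y^*)} \;\le\; (1 + o(1)) \cdot \exp\bigl(4 \beta (C_{y^*} - C_y)\, n^{3\theta/4}\bigr) \;\le\; \exp\bigl(-\Omega(n^{3\theta/4})\bigr),
\]
using $\beta < 0$ and the unit gap from the previous step.

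Finally I would sum over phase vectors. Since $|V_F| = N + 2 = O(n^{\theta/12})$, there are at most $2^{|V_F|} = 2^{O(n^{\theta/12})}$ phase vectors in $\Omega_0 \setminus \Omega_{st}$, and therefore
\[
\frac{\mu_{\widehat{F},\beta}(Y(\sigma) \in \Omega_0 \setminus \Omega_{st})}{\mu_{\widehat{F},\beta}(Y(\sigma) \in \Omega_{st})} \;\le\; 2^{O(n^{\theta/12})} \cdot \exp\bigl(-\Omega(n^{3\theta/4})\bigr) \;\le\; 2^{-n^{\theta/4}}
\]
for all sufficiently large $n$, since $3\theta/4 > \theta/4 > \theta/12$. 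The main obstacle is making the weight-ratio step rigorous: one needs explicit control on the $(1+o(1))$ correction coming from port fluctuations and on the partition-function symmetry across the two phase labels of each gadget. However, these are exactly the ingredients from \cite{Sly,GSV:ising,cai2016hardness} that are applied in the proof of \cref{lemma:tv:1} with the identical parameter choices for $n, d, \ell, t, \theta, \psi$, so the same $e^{-\Omega(n^{\theta/4})}$-accuracy analysis transfers essentially verbatim.
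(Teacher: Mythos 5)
Your proposal is correct and takes essentially the same route as the paper: both reduce the TV distance to the conditional probability $\mu_{\widehat F,\beta}(Y(\sigma)\in\Omega_0\setminus\Omega_{st}\mid Y(\sigma)\in\Omega_0)$, observe that the phase vectors in $\Omega_{st}$ correspond to strictly larger cuts of $F$ than any other vector in $\Omega_0$, and then delegate the Gibbs-weight comparison to the phase-concentration analysis of \cite{Sly,GSV:ising,cai2016hardness} exactly as in the proof of \cref{lemma:tv:1}. You spell out the cut-size arithmetic that the paper leaves implicit, but the underlying argument is the same.
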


\noindent
We call the roots in $\bigcup_{v \in V_{{F}}} (R_+^v \cup R_-^v)$ used to connect the degree reducing gadgets \emph{ports}.
Let $P$ denote the set of all ports of $\widehat{F}$; we also use $P^v \subset P$ to denote the set of ports of the gadget $G_v$.

For a configuration $\{+1,-1\}^{P_v}$, 
let $Z_{G_v,\beta}(\sigma_{P_v})$ denote the sum of the weights of all the configurations on $G_v$ that agree with $\sigma_{P_v}$. We will need an approximation algorithm for this quantity and an approximate sampling algorithm for $\mu_{G_v,\beta}(\cdot\mid \sigma_{P_v})$.
A fully polynomial-time randomized approximation scheme (FPRAS) for $Z_{G_v,\beta}(\sigma_{P_v})$
is an algorithm that for every $\varepsilon  > 0$ and $\delta \in (0,1)$ outputs $\hat Z$ so that, with probability at least $1-\delta$, $e^{-\varepsilon} \hat Z \leq  Z_{G_v,\beta}(\sigma_{P_v}) \leq  e^\varepsilon \hat Z$ and runs in time polynomial in $|V_{G_v}|$ , $1/\varepsilon$ and $\log(1/\delta)$. A polynomial-time sampling algorithm
for $\mu_{G_v,\beta}(\cdot\mid \sigma_{P_v})$ is a randomized algorithm that for every $\varepsilon  > 0$ runs in time polynomial in $|V_{G_v}|$ and $1/\varepsilon$ and outputs a sample from a distribution $\varepsilon$-close in TV distance to $\mu_{G_v,\beta}(\cdot\mid \sigma_{P_v})$.

\begin{lemma}
	\label{lemma:sampling:couting}
	Let $\sigma_{P_v} \in \{+1,-1\}^{P_v}$ be an arbitrary spin configuration on $P_v$.
	For all sufficiently large $d=O(1)$, with probability $1-o(1)$ over the choice of the random multi-graph $G_v$, for all $\beta < 0$ 
	there is an FPRAS for $Z_{G_v,\beta}(\sigma_{P_v})$ and a polynomial-time sampling algorithm for $\mu_{G_v,\beta}(\cdot \mid \sigma_{P_v})$.
\end{lemma}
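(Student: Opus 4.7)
The plan is to decompose the problem into (a) sampling/counting on the bipartite core $\hat{G}$, and (b) handling the attached trees $\mathcal{T}_+ \cup \mathcal{T}_-$ via exact tree dynamic programming. First, observe that $G_v \setminus \hat G$ is a forest whose leaves lie in $U_+ \cup U_-$ and whose roots are exactly the ports in $P_v$. Given any configuration $\tau \in \{+1,-1\}^{U_+ \cup U_-}$ on the tree leaves and the boundary $\sigma_{P_v}$ on the roots, the Ising partition function on the tree portion factorizes over the trees and can be computed \emph{exactly} in polynomial time by the standard leaf-to-root recursion for the Ising model on trees. Summing this over $\tau$ yields an \emph{induced external field} $h^\tau : U_+ \cup U_- \to \mathbb R$ such that sampling/counting $\mu_{G_v,\beta}(\cdot \mid \sigma_{P_v})$ reduces, up to an explicitly computable multiplicative constant, to sampling/counting for the antiferromagnetic Ising model on $\hat{G}$ with external fields $h^\tau$ on $U_+ \cup U_-$ (and zero elsewhere). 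Once the core configuration $\sigma_{\hat G}$ is obtained, the configuration on each attached tree is sampled exactly by a top-down pass of the tree recursion conditioned on $\sigma_{P_v}$ at its root and $\sigma_{\hat G}$ at its leaf.

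Next, I would argue that the random bipartite graph $\hat G$ (the union of $d-1$ random perfect matchings on $[n+m]\sqcup[n+m]$ together with one perfect matching $M'$ between $W_+$ and $W_-$) is a two-sided bipartite expander with probability $1-o(1)$. This is a standard configuration-model calculation: for sufficiently large constant $d$, with high probability every subset of one side of size at most a constant fraction of the vertex count expands by a factor bounded away from $1$, and moreover the second singular value of the biadjacency matrix is $O(\sqrt{d})$. These spectral and combinatorial expansion properties for $\hat G$ are exactly what is needed to invoke the two algorithmic tools mentioned in the text.

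The proof then splits on $|\beta|$. For the \emph{high-temperature} regime $|\beta| \le \beta_0(d)$ (where $\beta_0(d)$ is a constant depending only on $d$, chosen small enough that the interaction matrix has operator norm below the Koehler--Lee--Risteski threshold), the coupling matrix of the Ising model on $\hat G$ (with arbitrary external field) has rank $|V_{\hat G}|$ but operator norm at most $d|\beta|$; invoking the variational/low-complexity sampler of~\cite{KLR} gives an FPRAS for the partition function and a polynomial-time approximate sampler. For the \emph{low-temperature} regime $|\beta| > \beta_0(d)$, I would use the abstract polymer model framework of~\cite{JKP}: because $\hat G$ is a bipartite expander, the two ground states $(+1^{W_+ \cup U_+},-1^{W_- \cup U_-})$ and its flip dominate the Gibbs measure, and deviations from each ground state form a polymer model whose polymers are connected subsets of defects with weight decaying like $e^{-2|\beta|\,\cdot\,|\partial \gamma|}$. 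The expansion of $\hat G$ gives the required Kotecký--Preiss convergence condition once $|\beta|$ exceeds $\beta_0(d)$, yielding an FPRAS and a polynomial-time sampler for each of the two phases; averaging the two (with weights from the FPRAS estimates of their partition functions) gives the result for the unrestricted Gibbs measure on $\hat G$ with the induced external fields $h^\tau$.

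The main obstacle, I expect, is engineering the two regimes so that their ranges of validity \emph{overlap} and together cover all $\beta < 0$, while simultaneously accommodating the arbitrary external fields $h^\tau$ coming from the trees and the port conditioning. The KLR method handles arbitrary external fields directly since its hypothesis is on the operator norm of the pair-interaction matrix, but one must verify that the $h^\tau$ do not destroy the polymer cluster expansion in the low-temperature regime. For this, it suffices to observe that $|h^\tau_u|$ is bounded by a constant depending only on $d$ and $\beta$ (each tree has bounded arity $d-1$ and the recursion is a contraction in the antiferromagnetic regime up to an additive constant), so the polymer weights inherit an extra multiplicative factor bounded uniformly in $n$, and the Kotecký--Preiss condition still holds for $|\beta|$ above a (slightly larger) constant threshold. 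Choosing $\beta_0(d)$ inside the overlap of the two regimes completes the case analysis and establishes the lemma.
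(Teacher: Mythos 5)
The decomposition you propose has a genuine flaw at the first step. You claim that integrating out the interiors of the attached trees, given the root spins $\sigma_{P_v}$, yields an \emph{external field} $h: U_+ \cup U_- \to \R$ on the tree leaves. This is false: the marginal of a tree Ising model on its leaves, conditioned on the root spin, is not a product distribution. Already for a two-level tree with root $r$, child $c$, and leaves $u_1,\dots,u_{d-1}$ hanging off $c$, summing out $\sigma_c$ gives a factor $2\cosh\bigl(\beta\sigma_r + \beta\sum_i \sigma_{u_i}\bigr)$, which is not of the form $C\cdot e^{\sum_i h_i\sigma_{u_i}}$. In general the tree leaves remain correlated through their common ancestors, so $\log Z_{T_j}(\sigma_{r_j},\cdot)$ is a genuinely multi-body function of the leaf spins, and the reduction to ``Ising on $\hat G$ with a vertex field'' does not go through. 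The paper avoids this entirely by never integrating out the trees: it removes only the ports $P_v$ (the tree roots), keeps the trees as part of the residual graph $H = G_v\setminus P_v$, and encodes the port conditioning as a genuine single-site field of magnitude $|\beta|$ on $\partial P_v$. The trees then enter only through the adjacency matrix $A_H = A+B+T$, and since a forest of maximum degree $d$ has spectral radius $O(\sqrt d)$, Weyl's inequality shows they perturb the bulk spectrum of $A_H$ by only $O(\sqrt d)$, so the spectral argument is unaffected.

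Separately, even granting the (flawed) reduction, your two parameter regimes do not cover all $\beta<0$. A small-operator-norm invocation of~\cite{KLR} on $\hat G$ requires $|\beta|\cdot\lVert A_{\hat G}\rVert \lesssim 1$, i.e., $|\beta|\lesssim 1/d$; the Koteck\'{y}--Preiss condition for the low-temperature polymer expansion requires $|\beta|\gtrsim (\ln d)/d$. For large constant $d$ these ranges are disjoint. The paper closes this gap by exploiting the additive freedom $J\mapsto J+\alpha I$: choosing $\alpha = -\beta(4\sqrt d + 2)$ pushes all but two eigenvalues of $J=\beta A_H+\alpha I$ into $[0,1]$, which is the condition \cite{KLR} actually needs, and this holds up to $|\beta|\lesssim 1/\sqrt d$ --- wide enough to overlap $(\ln d)/d$ once $d$ is large. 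If you want your plan to have a chance, you must use this shift rather than a raw operator-norm bound. Finally, a minor point: your justification that $|h^\tau_u|$ is bounded appeals to the tree recursion being ``a contraction in the antiferromagnetic regime,'' but the whole setting is $\beta<\beta_c(d)$, the non-uniqueness regime, where the recursion is \emph{not} a contraction; boundedness of such conditional log-odds follows simply from the fact that each vertex has bounded degree and bounded interaction strength, with no contraction needed.
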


For $\mathcal Y \in \Omega$, 
let $\mu_{P,\beta}(\cdot\mid\mathcal Y)$ denote the marginal distribution of $\mu_{\widehat{F},\beta}(\cdot\mid\mathcal Y)$ on $P$.
When $\beta < \beta_c(d)$, in the non-uniqueness regime for the infinite $(d-1)$-ary tree, there are two semi-translation invariant measures, denoted $\mu^+$ and $\mu^-$.
These measures can be obtained by conditioning on
the leaves at level $2h$ (resp., $2h + 1$) to have spin $-1$, 
and then taking the weak limits as $h \rightarrow \infty$. 
Let $p^+$ (resp., $p^-$) be the probability that the root of the tree is assigned $-1$ under $\mu^+$ (resp., $p^-$).

Let $P_+^v = R_+^v \cap P_v$ and $P_-^v = R_-^v \cap P_v$.
For $i \in \{+1,-1\}$, we define the following product distribution over 
configurations $\sigma \in \{+1,-1\}^{P_v}$ on $P_v$:
$$
Q_v^i(\sigma) = (p^i)^{|\sigma^{-}(-1) \cap  P_+^v| } (1-(p^i))^{|\sigma^{-}(+1) \cap  P_+^v| }(p^{-i})^{|\sigma^{-}(-1) \cap  P_-^v| } (1-(p^{-i}))^{|\sigma^{-}(+1) \cap  P_-^v| },
$$	
where $\sigma^{-}(i)$ denotes the set of vertices from $P_v$
assigned spin $i$ in $\sigma$.

The product distribution $Q_v^+$ (resp., $Q_v^-$) is known to be a good approximation for $\mu_{G_v,\beta}(\cdot \mid \mathcal Y_v = +1)$ (resp., $\mu_{G_v,\beta}(\cdot \mid \mathcal Y_v = -1)$), as formalized in the following lemma. Here $\mathcal Y_v$ denote the phase of the gadget $G_v$.

\begin{lemma}[Lemma~22~\cite{cai2016hardness} \& Lemma 19~\cite{GSV:ising}]
	\label{lemma:phase}	
	Let $\beta < \beta_c(d)$. Then,  
	there exists $\theta$ and $\psi$ such that for $s \in \{+1,-1\}$, with probability $1-o(1)$ over the choice of the random $n$-vertex multi-graph $G_v$, 
	for any $\sigma_{P_v} \in \{+1,-1\}^{P_v}$ we have 
	\begin{equation}
	\label{eq:lemma:prod}
	1-n^{-2\theta} \le \frac{\mu_{G_v,\beta}(\sigma_{P_v}\mid\mathcal Y_v = s)}{Q_{v}^{s}(\sigma_{P_v})} \le 1+n^{-2\theta}.
	\end{equation}
	Moreover,  for $s \in \{+1,-1\}$  we have $\mu_{G_v,\beta} (Y_v = s) \ge \frac 1 n$.
\end{lemma}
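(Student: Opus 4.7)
The plan is to deduce the lemma from the prior results of Cai--Galanis--Goldberg (Lemma~22 of~\cite{cai2016hardness}) and Galanis--\v{S}tefankovi\v{c}--Vigoda (Lemma~19 of~\cite{GSV:ising}) essentially as a black box, after verifying that the gadget $G_v$ of \cref{subsec:vertexgadget}---a random $(d-1)$-fold bipartite multi-graph with perfect matching $M'$ and the collections of $(d-1)$-ary trees $\mathcal T_\pm$ adjoined at $U_\pm$---is exactly the gadget for which those results were proven (up to the parameter choices $t$, $\ell$, $n$). The choice of the parameters $\theta$ and $\psi$ in the gadget construction is driven by those references: $\theta$ controls the number of trees (and hence the concentration of the number of ``$-1$''s on each side, which in turn makes the phase well-defined), while $\psi$ controls the height of the trees (which governs how close the root marginal is to the semi-translation-invariant measure $\mu^\pm$ on the infinite $(d-1)$-ary tree).

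First I would spell out the two ingredients that are needed. (i) \emph{Phase dominance on $\hat G$:} for $\beta<\beta_c(d)$ a random sample from $\mu_{\hat G,\beta}$ conditioned on the ``$+$-phase'' (say, more $-1$'s on $W_+$ than on $W_-$) has with probability $1-\exp(-\Omega(n^{\theta/4}))$ a macroscopic imbalance on each side, with the fraction of $-1$'s on $W_+$ close to $p^+$ and on $W_-$ close to $p^-$, and analogously on $U_\pm$; this is the content of Lemma~22 of~\cite{cai2016hardness} for hard-core and is extended to the antiferromagnetic Ising model in~\cite{GSV:ising}. (ii) \emph{Tree propagation:} conditioned on the leaf values in $U_\pm$ taking any \emph{typical} pattern (i.e., a pattern with the correct marginal statistics), the root-marginal of a $(d-1)$-ary tree of depth $\ell$ is within $n^{-3\theta}$ of $\mu^+$ (resp.\ $\mu^-$) by standard broadcast/SAW analysis of the tree recursions in the uniqueness regime; since the trees in $\mathcal T_+$ (resp.\ $\mathcal T_-$) are attached at leaves in $U_+$ (resp.\ $U_-$), the marginals at the roots $R_+^v$ and $R_-^v$ are then, up to $n^{-3\theta}$ corrections, independent Bernoulli with the desired biases $p^s$ and $p^{-s}$, respectively.

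Combining (i) and (ii), I would fix $s\in\{+1,-1\}$ and, conditioning on the boundary in $U_+\cup U_-$, write
\begin{equation*}
\mu_{G_v,\beta}(\sigma_{P_v}\mid \mathcal Y_v = s)
= \sum_{\tau\in\{+1,-1\}^{U_+\cup U_-}}
\mu_{G_v,\beta}(\tau\mid \mathcal Y_v=s)\;
\prod_{T\in \mathcal T_+\cup \mathcal T_-}\mathrm{Pr}_T(\text{root}=\sigma_{P_v}(T)\mid \tau_{|T}),
\end{equation*}
where $\tau_{|T}$ is the restriction of $\tau$ to the leaves of $T$. Ingredient (ii) replaces each internal factor by $p^s$ or $p^{-s}$ (depending on which side $T$ lives on) up to a multiplicative $(1\pm n^{-3\theta})$ error; ingredient (i) guarantees that the typical $\tau$ under $\mu_{G_v,\beta}(\cdot\mid\mathcal Y_v=s)$ yields the right biases, and the exceptional set contributes at most $\exp(-\Omega(n^{\theta/4}))$. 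Summing these errors and absorbing polynomially many logarithmic factors gives the $(1\pm n^{-2\theta})$ bound in~\eqref{eq:lemma:prod}. The $\mu_{G_v,\beta}(\mathcal Y_v=s)\ge 1/n$ bound follows because the $+$-phase and $-$-phase are related by the global spin-flip $\sigma\mapsto -\sigma$ applied only on $W_+\cup U_+\cup R_+^v$, which is a bijection; the change in weight along this map is at most a polynomial in $n$ because it flips only $O(n)$ edges, each contributing a bounded constant factor $e^{\pm|\beta|}$, so both phases have probability $\ge 1/\mathrm{poly}(n)$, and the precise choice of $\theta$ makes this at least $1/n$.

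The main obstacle is the second ingredient: one must check that the bipartite random structure delivers port marginals that really are jointly close to the product form $Q_v^s$, not merely matching in expected fraction. This is where the choice of $\ell=2\lfloor \tfrac{\psi}{2}\log_{d-1}n\rfloor$ (rather than $O(1)$) is essential, since the tree acts as a ``decoupler'' that makes the joint distribution at the ports a convex combination of near-product distributions; once this point is granted, the remaining steps are bookkeeping, and everything goes through for any $\beta<\beta_c(d)$ as shown in~\cite{cai2016hardness,GSV:ising}.
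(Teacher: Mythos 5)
The paper does not actually prove this lemma; it is stated with the attribution ``Lemma~22~\cite{cai2016hardness} \& Lemma~19~\cite{GSV:ising}'' and is used as a black-box import from those works. So your proposal is not an alternative to a proof in the paper---the paper provides none---but a reconstruction of the cited argument. Your high-level decomposition into (i) phase dominance on the random bipartite core and (ii) decoupling via the $(d-1)$-ary trees of depth $\Theta(\log n)$ is a faithful description of the strategy in those references, and your explanation of the roles of $\theta$ and $\psi$ is sound.

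However, your argument for the last claim, $\mu_{G_v,\beta}(\mathcal Y_v = s)\ge 1/n$, is broken. You propose flipping spins ``only on $W_+\cup U_+\cup R_+^v$'' and assert the weight changes by at most a polynomial factor because ``it flips only $O(n)$ edges, each contributing a bounded constant factor $e^{\pm|\beta|}$.'' That reasoning is backwards: flipping $\Theta(n)$ edges, each by a factor $e^{\pm 2|\beta|}$ with $\beta$ a fixed constant, changes the Gibbs weight by $e^{\Theta(n)}$, which is exponential, not polynomial; so this bijection does not control the ratio of the two phase probabilities. The correct observation is much simpler: the \emph{full} global spin-flip $\sigma\mapsto -\sigma$ on all of $V_{G_v}$ preserves the Gibbs weight exactly (each $\sigma_u\sigma_v$ is invariant), and it maps a configuration with $Y_v=+1$ to one with $Y_v=-1$ (since if $W_+$ has more $-1$'s than $W_-$ in $\sigma$, then $W_+$ has fewer $-1$'s than $W_-$ in $-\sigma$, using $|W_+|=|W_-|$). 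Hence, modulo the measure of the tie event, the two phases are exactly equiprobable, and each has probability $\Theta(1)$, a much stronger bound than $1/n$; the $1/n$ in the statement is just a safe weakening of this.

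A smaller caveat: in (ii) you assert that conditioning on a ``typical'' leaf pattern makes the joint port distribution a near-product with error $n^{-3\theta}$ by a clean factorization over trees; the cited lemmas do deliver a multiplicative $(1\pm n^{-2\theta})$ bound on the joint port distribution, but establishing the near-product form (rather than just near-correct one-dimensional marginals) is exactly where the work in~\cite{cai2016hardness,GSV:ising} lives, and your sketch takes it on faith. Since the paper also takes it on faith by citing those lemmas, that is consistent with the paper's usage, but it is worth being explicit that this is the nontrivial step being outsourced.
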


Next, for a phase vector $\mathcal Y \in \Omega$ we define another product measure this time over 
configurations $\sigma \in \{+1,-1\}^P$ on $P$. Let 
$$
w_P^\mathcal Y(\sigma) = \prod_{v \in V_F} Q_v^{\mathcal Y_v}(\sigma_{P_v}) \prod_{\{u,v\} \in E(P)} e^{\beta \sigma_{u}\sigma_{v}},
$$	
where  $E(P)$  is the set of edges with both endpoints in $P$.
Let $Z_P^\mathcal Y = \sum_{\sigma \in \{+1,-1\}^P} w_P^\mathcal Y(\sigma)$ and define
$$
Q_P^\mathcal Y(\sigma) = \frac{w_P^\mathcal Y(\sigma)}{Z_P^\mathcal Y}.
$$

We have the following approximation for $\mu_{P,\beta}(\cdot \mid \mathcal Y)$ in terms of $Q_P^\mathcal Y$.

\begin{lemma}
	\label{lemma:prod:close}
	Let $\beta < \beta_c(d)$.
	For every $\mathcal Y \in \Omega$ and $\sigma \in \{+1,-1\}^P$, we have
	\[
	\left| \frac{\mu_{P,\beta}(\sigma \mid \mathcal Y)}{Q_P^\mathcal Y(\sigma)} - 1 \right| = o(1). 
	\]
\end{lemma}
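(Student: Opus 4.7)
The plan is to reduce the marginal of $\mu_{\widehat{F},\beta}(\cdot\mid\mathcal{Y})$ on the ports to a product over gadgets times the matching edge weights, and then apply \cref{lemma:phase} gadget-by-gadget. Since the gadgets $G_v$ interact only through edges in $E(P)$, for any port configuration $\sigma \in \{+1,-1\}^P$ I can write
\[
\mu_{P,\beta}(\sigma \mid \mathcal{Y}) \;=\; \frac{1}{\tilde Z_\mathcal{Y}} \prod_{v \in V_F} \mu_{G_v,\beta}(\sigma_{P_v}\mid \mathcal{Y}_v) \prod_{\{u,v\}\in E(P)} e^{\beta \sigma_u \sigma_v},
\]
where $\tilde Z_\mathcal{Y}$ is the appropriate normalizing constant (the per-gadget partition functions $Z_{G_v,\beta}$ and the phase probabilities $\mu_{G_v,\beta}(\mathcal{Y}_v)$ do not depend on $\sigma$ and absorb into $\tilde Z_\mathcal{Y}$). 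This follows from the standard factorization of the Gibbs measure on $\widehat{F}$ when conditioning on $\mathcal{Y}$: the gadgets are conditionally independent given the port values, while the ports interact only via edges in $E(P)$.

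Next I invoke \cref{lemma:phase}: for each gadget $v$, writing $1 + \eps_v(\sigma) := \mu_{G_v,\beta}(\sigma_{P_v}\mid \mathcal{Y}_v)/Q_v^{\mathcal{Y}_v}(\sigma_{P_v})$, I have $|\eps_v(\sigma)| \le n^{-2\theta}$. Taking the product over all $|V_F| = N+2$ gadgets (recall $N = n^{\theta/12}$),
\[
\prod_{v \in V_F}(1 + \eps_v(\sigma)) \;\in\; \bigl[(1-n^{-2\theta})^{N+2},\, (1+n^{-2\theta})^{N+2}\bigr] \;=\; 1 \pm O\!\bigl(n^{-23\theta/12}\bigr) \;=\; 1 \pm o(1),
\]
uniformly in $\sigma$, because $N \cdot n^{-2\theta} = n^{\theta/12 - 2\theta} \to 0$. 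Therefore, setting $\alpha(\sigma) := \prod_v(1+\eps_v(\sigma))$,
\[
\mu_{P,\beta}(\sigma\mid\mathcal{Y}) \;=\; \frac{\alpha(\sigma)\, w_P^\mathcal{Y}(\sigma)}{\tilde Z_\mathcal{Y}},
\qquad \alpha(\sigma) = 1 \pm o(1).
\]

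Summing both sides over $\sigma \in \{+1,-1\}^P$ gives $\tilde Z_\mathcal{Y} = \sum_\sigma \alpha(\sigma) w_P^\mathcal{Y}(\sigma)$, and the uniform bound on $\alpha$ yields $\tilde Z_\mathcal{Y} = (1\pm o(1))\, Z_P^\mathcal{Y}$. Combining,
\[
\frac{\mu_{P,\beta}(\sigma\mid\mathcal{Y})}{Q_P^\mathcal{Y}(\sigma)} \;=\; \frac{\alpha(\sigma)\, Z_P^\mathcal{Y}}{\tilde Z_\mathcal{Y}} \;=\; \frac{1 \pm o(1)}{1 \pm o(1)} \;=\; 1 \pm o(1),
\]
which is the desired conclusion. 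The only real obstacle is bookkeeping: one must verify that the conditional independence of the gadgets given the port values yields exactly the claimed factorization (so that the non-port vertex sums produce the correct $\mu_{G_v,\beta}(\sigma_{P_v}\mid\mathcal{Y}_v)$ factors and nothing else $\sigma$-dependent), and that the parameter $\theta < 1/8$ chosen in the gadget construction makes $N \cdot n^{-2\theta}$ vanish. Both are straightforward given the construction in \cref{subsec:vertexgadget} and \cref{subsection:reduction}.
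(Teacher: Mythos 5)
Your proof is correct and takes essentially the same approach as the paper: both factorize the conditional port marginal across the gadgets (with the $E(P)$ edge weights), apply \cref{lemma:phase} gadget-by-gadget to introduce a multiplicative $(1 \pm n^{-2\theta})$ error per gadget, and then use $|V_F| = n^{\theta/12}+2$ so that $|V_F|\cdot n^{-2\theta} \to 0$. The only difference is bookkeeping: you carry normalized conditionals $\mu_{G_v,\beta}(\sigma_{P_v}\mid\mathcal Y_v)$ and control the normalizer $\tilde Z_{\mathcal Y}$ by summing the approximation over $\sigma$, whereas the paper works directly with the restricted partition functions $Z_{G_v,\beta}(\sigma_{P_v},\mathcal Y_v)$ and compares $Z_P^{\mathcal Y}\prod_v Z_{G_v,\beta}(\mathcal Y_v)$ to $Z_{\widehat F,\beta}(\mathcal Y)$ — these are equivalent.
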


Finally, we will also use the following fact.

\begin{lemma}
	\label{lemma:phase:aggreement}
	Let $\mathcal Y \in \Omega$. 
	Suppose $\sigma_P \in \{+1,-1\}^P$ is sampled from 
	$Q_P^{\mathcal Y}$
	and that $\sigma \in \{+1,-1\}^{V_{\widehat{F}}}$ is then sampled from $\mu_{\widehat{F},\beta}(\cdot \mid \sigma_P)$.
	Then, the phase vector of $\sigma$ is $\mathcal Y$ with probability at least $1 - e^{-\Omega(n^{3\theta/4})}$.
\end{lemma}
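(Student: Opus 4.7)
My plan is to condition on the port configuration $\sigma_P$, exploit the conditional independence of the gadgets to reduce to a per-gadget estimate, and then apply Hoeffding's inequality using the explicit product structure of $Q_P^{\mathcal Y}$. Given $\sigma_P$, the subconfigurations on distinct gadgets $G_v$ are independent under $\mu_{\widehat F,\beta}(\cdot\mid\sigma_P)$ and each phase $Y_v$ depends only on the configuration inside $G_v$, so a union bound over the $|V_F|=O(n^{\theta/12})$ gadgets reduces the claim to showing that $\mu_{G_v,\beta}(Y_v\neq\mathcal Y_v\mid\sigma_{P^v})\le e^{-\Omega(n^{3\theta/4})}$ with probability $1-e^{-\Omega(n^{3\theta/4})}$ under $\sigma_P\sim Q_P^{\mathcal Y}$, for each individual $v$.

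For a single gadget, Bayes' rule together with the two-sided estimate in \cref{lemma:phase} and the lower bound $\mu_{G_v,\beta}(Y_v=s)\ge 1/n$ yields
\[
\mu_{G_v,\beta}(Y_v\neq\mathcal Y_v\mid\sigma_{P^v})\;\le\;(1+o(1))\,n\,R_v(\sigma_{P^v}),\qquad R_v(\sigma_{P^v})\;:=\;\frac{Q_v^{-\mathcal Y_v}(\sigma_{P^v})}{Q_v^{\mathcal Y_v}(\sigma_{P^v})}.
\]
Thus it suffices to show $R_v(\sigma_{P^v})\le e^{-\Omega(n^{3\theta/4})}$ with the same high probability (the polynomial factors of $n$ and $|V_F|$ are then absorbed into the exponential).

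Since $\widehat F$ is $d$-regular and each port already uses $d-1$ of its edges on its $(d-1)$-ary tree inside a gadget, every port is incident to exactly one edge of $E(P)$, so $E(P)$ is a perfect matching on $P$. Because $\prod_u Q_u^{\mathcal Y_u}$ is already a product over single ports and the matching-edge weights couple disjoint pairs, $Q_P^{\mathcal Y}$ factorises as a product over pairs $\{w,w'\}\in E(P)$, each being a bivariate Ising distribution on two Bernoullis. In particular the marginal of $Q_P^{\mathcal Y}$ on $P^v$ is a product of $|P^v|$ independent Bernoullis (distinct ports in $P^v$ lie in distinct pairs of the matching), and hence
\[
\log R_v(\sigma_{P^v})\;=\;\sum_{w\in P^v}X_w
\]
is a sum of independent bounded random variables with $|X_w|=O(1)$ depending only on $d$ and $\beta$. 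Using the semi-translation-invariant tree-recursion fixed-point equations defining $p^\pm$ in the non-uniqueness regime $\beta<\beta_c(d)$, one verifies that $\mathbb E[X_w]\le -c$ for a constant $c=c(d,\beta)>0$, uniformly over the port $w$ and the neighbouring phase $\mathcal Y_{u(w)}$. Hoeffding's inequality then gives $R_v\le e^{-c|P^v|/2}$ with failure probability $e^{-\Omega(|P^v|)}$, and the bound $|P^v|=2\deg_F(v)\cdot n^{3\theta/4}=\Omega(n^{3\theta/4})$ (which uses $\deg_F(v)\ge 1$) produces exactly the target exponent. A final union bound over the $O(n^{\theta/12})$ gadgets completes the proof.

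The main technical obstacle is the uniform negative drift $\mathbb E[X_w]\le-c<0$ in the last step: the bivariate tilting by $e^{\beta\sigma_w\sigma_{w'}}$ shifts the per-port marginal away from $p^{\mathcal Y_v}$, so one must verify via a case analysis in $(\mathcal Y_v,\mathcal Y_{u(w)})\in\{\pm 1\}^2$, exploiting the explicit relations between $p^+$, $p^-$ and $\beta$ that hold at the fixed points of the $(d-1)$-ary tree recursion, that this shift never destroys the strict negativity of the expected log-likelihood ratio at any port.
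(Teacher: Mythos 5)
Your proposal is correct and follows essentially the same route as the paper's proof: both reduce via Bayes' rule and \cref{lemma:phase} to bounding the ratio $Q_v^{-\mathcal Y_v}(\sigma_{P_v})/Q_v^{\mathcal Y_v}(\sigma_{P_v})$, both exploit the per-port independence coming from the matching structure of $E(P)$ together with the fact that the Ising tilting by the matched neighbor still leaves the per-port bias strictly on the correct side of $1/2$ (checked by the same case analysis over $\mathcal Y_{u(w)}$), and both conclude with concentration and a union bound over the $|V_F|$ gadgets. The only cosmetic difference is that you apply Hoeffding directly to $\log R_v=\sum_{w\in P^v}X_w$, whereas the paper applies Chernoff to the signed port-spin counts and then substitutes into the explicit formula for $Q_v^+/Q_v^-$ — an equivalent piece of bookkeeping.
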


We can now provide the proof of Lemma~\ref{lemma:sampling}.

\begin{proof}[Proof of~Lemma~\ref{lemma:sampling}]
	Let $\sigma^+$ denote the all-plus configuration on $P$. 		
	For $\sigma \in \{+1,-1\}^P$, recall that we use 
	$Z_{\widehat{F},\beta}(\sigma,\mathcal Y)$ to denote the total weight of the configurations
	of $\widehat{F}$ that agree with $\sigma$ on $P$
	and have phase vector $\mathcal Y$.
	The algorithm is as follows:
	\begin{enumerate}
		\item  Sample $\mathcal Y \in \{ \xi_{st}^+, \xi_{st}^-\}$ uniformly at random. Note that by ignoring all other phase vectors in $\Omega_0$, the error is at most ${1}/{2^{n^{\theta/4}}}$ by Lemma~\ref{lemma:dominant:phase}.
		\item Sample $\sigma_P \in \{+1,-1\}^P$ from a distribution $\varepsilon/3$-close in TV distance to $\mu_{P,\beta}(\cdot\mid\mathcal Y)$ with the following rejection sampling algorithm:
		\begin{enumerate}
			\item[2.1] Generate $\sigma_P \in \{+1,-1\}^P$ from the product distribution $Q_P^\mathcal Y$;
			
			\item[2.2]  Compute the approximation $\hat Z(\sigma_P)$ for $Z_{\widehat{F},\beta}(\sigma_P,\mathcal Y)$ such that
			\[
			\left( 1 - \frac{\eps}{10} \right) Z_{\widehat{F},\beta}(\sigma_P,\mathcal Y) \le \hat Z(\sigma_P) \le \left( 1 + \frac{\eps}{10} \right) Z_{\widehat{F},\beta}(\sigma_P,\mathcal Y);
			\]
			this can be done in time $\poly(|V_{\widehat F}|,1/\varepsilon)$ 
			with success probability at least $1-\varepsilon/3$
			by Lemma~\ref{lemma:sampling:couting}.

			\item[2.3] Accept $\sigma_P$ with probability:
			\[
			r(\sigma_P) = \frac{1}{10} \cdot \frac{Q_P^\mathcal Y(\sigma^+)}{Q_P^\mathcal Y(\sigma_P)} \cdot \frac{\hat Z(\sigma_P)}{\hat Z(\sigma^+)}.
			\]
			\item[2.4] Repeat until accept or exceed $T= c\ln(1/\eps)$  rounds, for a suitable constant $c>0$, in which case we let $\sigma_P = \sigma^+$. 
		\end{enumerate}
		
		\item Sample the configuration of each gadget $G_v$ conditional on the port configuration $\sigma_{P_v}$ on $P_v$ from a distribution $\frac{\varepsilon}{3|V_F|}$-close to $\mu_{G_v,\beta}(\cdot\mid \sigma_{P_v})$ with the algorithm from Lemma~\ref{lemma:sampling:couting}
		\item Output the resulting configuration $\sigma$.
	\end{enumerate}
	
	For the analysis of this algorithm, let us focus first on the rejection sampling process in Step 2.
	First, note that the process is well-defined since:
	\begin{align*}
	r(\sigma_P) &= \frac{1}{10} \cdot \frac{Q_P^\mathcal Y(\sigma^+)}{Q_P^\mathcal Y(\sigma_P)} \cdot \frac{\hat Z(\sigma_P)}{\hat Z(\sigma^+)} 
	\le \frac{1}{5} \cdot \frac{Q_P^\mathcal Y(\sigma^+)}{Q_P^\mathcal Y(\sigma_P)} \cdot \frac{Z_{\widehat{F},\beta}(\sigma_P,\mathcal Y)}{Z_{\widehat{F},\beta}(\sigma^+,\mathcal Y)} \\		
	&= \frac{1}{5} \cdot \frac{\mu_{P,\beta}(\sigma_P \mid \mathcal Y)}{Q_P^\mathcal Y(\sigma_P)} \cdot \frac{Q_P^\mathcal Y(\sigma^+)}{\mu_{P,\beta}(\sigma^+ \mid \mathcal Y)} \le 1,
	\end{align*}
	where the last inequality follows from Lemma~\ref{lemma:prod:close}.
	Second, each iteration of the rejection sampling algorithm can be implemented in polynomial time; in particular, we can compute 
	compute $r(\sigma_P)$ using the FPRAS from Lemma~\ref{lemma:sampling:couting} to obtain $\hat Z(\sigma_P)$ and $\hat Z(\sigma^+)$.		
	Finally, we claim that the output of this algorithm is at least $\eps/3$-close to $\mu_{P,\beta}(\cdot \mid \mathcal Y)$ in TV distance.
	To see this, note that 
	for each $\sigma_P \in \{+1,-1\}^P$, the probability that process outputs $\sigma_P$ in one round is: 
	\[
	Q_P^\mathcal Y(\sigma_P) r(\sigma_P)
	= \frac{1}{10} \cdot Q_P^\mathcal Y(\sigma^+) \cdot \frac{\hat Z(\sigma_P)}{\hat Z(\sigma^+)}  \propto \hat{Z}(\sigma_P).
	\]
	Therefore, conditioned on the algorithm accepting on the first $T = c \ln (1/\varepsilon)$ rounds, the probability that $\sigma_P$ is the output is
	\[
	\frac{\hat{Z}(\sigma_P)}{\sum_{\sigma'} \hat{Z}({\sigma'})} 
	\le \left( 1+\frac{\eps}{5} \right) \frac{Z_{\widehat{F},\beta}(\sigma_P,\mathcal Y)}{\sum_{\sigma'} Z_{\widehat{F},\beta}(\sigma',\mathcal Y)}
	= \left( 1+\frac{\eps}{5} \right) \mu_{P,\beta}(\sigma_P \mid \mathcal Y),
	\]
	and similarly for the lower bound.
	Moreover, since by Lemma~\ref{lemma:prod:close} 
	\begin{align*}
	r(\sigma_P) &= \frac{1}{10} \cdot \frac{Q_P^\mathcal Y(\sigma^+)}{Q_P^\mathcal Y(\sigma_P)} \cdot \frac{\hat Z(\sigma_P)}{\hat Z(\sigma^+)} 
	\ge \frac{1}{20}\cdot \frac{Q_P^\mathcal Y(\sigma^+)}{Q_P^\mathcal Y(\sigma_P)} \cdot \frac{Z_{\widehat{F},\beta}(\sigma_P,\mathcal Y)}{Z_{\widehat{F},\beta}(\sigma^+,\mathcal Y)} 
	\\ &= \frac{1}{20} \cdot\frac{\mu_{P,\beta}(\sigma_P \mid \mathcal Y)}{Q_P^\mathcal Y(\sigma_P)} \cdot \frac{Q_P^\mathcal Y(\sigma^+)}{\mu_{P,\beta}(\sigma^+ \mid \mathcal Y)}  \ge \frac{1}{100},
	\end{align*}
	the probability that the algorithm accepts in the first $T=c \ln (1/\varepsilon)$ rounds is at least
	$$
	1-\left(1-\frac{1}{100}\right)^{c \ln (1/\varepsilon)} \ge 1 -  \frac{\eps}{10},
	$$
	for a suitable constant $c > 0$.
	
	Now, note that by Lemma~\ref{lemma:phase:aggreement} and union bound, the phase vector of $\sigma$ agrees with $\mathcal Y$ with probability at least $1 - |T|e^{-\Omega(n^{3\theta/4})}$; hence, the output distribution of the algorithm satisfies:
	$$
	\tv{\mu_{\textsc{alg}}}{\mu_{\widehat{F},\beta}(\cdot \mid \mathcal Y)} \le \frac{\varepsilon}{3}+\frac{\varepsilon}{3} + |V_F| \cdot \frac{\varepsilon}{3|V_F|} + |T|e^{-\Omega(n^{3\theta/4})} + {2^{-n^{\theta/4}}} \le \varepsilon + \ln(1/\varepsilon) e^{-\Omega(n^{\theta/4})},
	$$
	as claimed.
\end{proof}	

\subsection{Sampling conditional on the phase vector: proof of auxiliary facts}
\label{subsection:sampling:proofs}

We provide in this section the proofs of Lemmas~\ref{lemma:dominant:phase} \ref{lemma:sampling:couting}, \ref{lemma:prod:close}, and \ref{lemma:phase:aggreement}.

\begin{proof}[Proof of~Lemma~\ref{lemma:dominant:phase}]
	We have
	$$
	\tv{\mu_{\widehat{F},\beta} (\cdot \mid Y(\sigma) \in \Omega_0)}{\mu_{\widehat{F},\beta}(\cdot \mid Y(\sigma) \in \Omega_{st})} = \sum_{\sigma: Y(\sigma) \in \Omega_0\setminus\Omega_{st}} \mu_{\widehat{F},\beta}(\sigma \mid Y(\sigma) \in \Omega_0 ).
	$$
	This is the probability of obtaining a phase vector in $\Omega_0\setminus\Omega_{st}$ under $\mu_{\widehat{F},\beta} (\cdot \mid Y(\sigma) \in \Omega_0$.
	Observe that among all the cuts of $F$ corresponding to phase vectors in $\Omega_0$, the largest ones are those corresponding to the phase vectors in $\Omega_{st}$. Hence,
	following the argument in the proof of Lemma~\ref{lemma:tv:1}, 
	we get from the results in~\cite{Sly,GSV:ising,cai2016hardness} 
	that this probability is at least $1-1/2^{n^{\theta/4}}$ as desired.
\end{proof}

%

\begin{proof}[Proof of~Lemma~\ref{lemma:prod:close}]
	Let $\sigma \in \{+1,-1\}^P$. Let $Z_{\widehat{F},\beta}(\sigma,\mathcal Y)$ be the sum of the weights of all the configurations of $\widehat{F}$ with phase vector $\mathcal Y$ that agree with $\sigma$ in $P$ and, similarly, 
	define $Z_{\widehat{F},\beta}(\mathcal Y)$ as the sum of the weights of all configurations with phase vector $\mathcal Y$, so that
	\begin{align*}
	\mu_{P,\beta}(\sigma \mid \mathcal Y) = \frac{Z_{\widehat{F},\beta}(\sigma,\mathcal Y)}{Z_{\widehat{F},\beta}(\mathcal Y)} = \frac{1}{Z_{\widehat{F},\beta}(\mathcal Y)} \cdot \prod_{v \in V_F} Z_{G_v,\beta}(\sigma_{P_v},\mathcal Y_v)\cdot \prod_{\{u,v\} \in E(P)} e^{\beta \sigma_{u}\sigma_{v}}.
	\end{align*}
	
	By Lemma~\ref{lemma:phase}, we have
	\begin{align*}
	\mu_{P,\beta}(\sigma \mid \mathcal Y) &\le  \frac{(1+n^{-2\theta})^{|V_F|}}{Z_{\widehat{F},\beta}(\mathcal Y)} \prod_{v \in V_F} Z_{G_v,\beta}(\mathcal Y_v) Q_{v}^{\mathcal Y_v}(\sigma_{P_v}) \prod_{\{u,v\} \in E(P)} e^{\beta \sigma_{u}\sigma_{v}} \\
	&= \frac{(1+n^{-2\theta})^{|V_F|}}{Z_{\widehat{F},\beta}} \cdot  w_P^\mathcal Y(\sigma) \cdot \prod_{v \in V_F} Z_{G_v,\beta}(\mathcal Y_v).
	\end{align*}
	Then, 
	$$
	\frac{\mu_{P,\beta}(\sigma \mid \mathcal Y)}{Q_P^\mathcal Y(\sigma)} \le (1+n^{-2\theta})^{|V_F|} \cdot \frac{Z_P^\mathcal Y}{Z_{\widehat{F},\beta}} \cdot \prod_{v \in V_F} Z_{G_v,\beta}(\mathcal Y_v).
	$$
	Now,
	\begin{align*}
	Z_P^\mathcal Y \cdot \prod_{v \in V_F} Z_{G_v,\beta}(\mathcal Y_v) &= \sum_{\sigma\in\{+1,-1\}^P} \prod_{\{u,v\} \in E(P)} e^{\beta \sigma_{u}\sigma_{v}} \cdot \prod_{v \in V_F} Q_v^{\mathcal Y_v}(\sigma_{P_v})  Z_{G_v,\beta}(\mathcal Y_v) \notag.
	\end{align*}
	From~\eqref{eq:lemma:prod}, we have
	$$
	\frac{1}{1+n^{-2\theta}}  \mu_{P_v,\beta}(\sigma_{P_v} \mid \mathcal Y_v) \le Q_v^{\mathcal Y_v}(\sigma_{P_v})  \le \frac{1}{1-n^{-2\theta}} \mu_{P_v,\beta}(\sigma_{P_v} \mid \mathcal Y_v),
	$$
	and so
	\begin{align*}
	Z_P^\mathcal Y \cdot \prod_{v \in V_F} &Z_{G_v,\beta}(\mathcal Y_v) 
	\\
	&\le \frac{1}{(1-n^{-2\theta})^{|V_F|}} \sum_{\sigma\in\{+1,-1\}^P} \prod_{\{u,v\} \in E(P)} e^{\beta \sigma_{u}\sigma_{v}} \cdot \prod_{v \in V_F} \mu_{P_v,\beta}(\sigma_{P_v} \mid \mathcal Y_v)  Z_{G_v,\beta}(\mathcal Y_v) \notag\\
	&= \frac{1}{(1-n^{-2\theta})^{|V_F|}} Z_{\widehat{F},\beta}.
	\end{align*}
	Thus, we have obtained the upper bound
	$$
	\frac{\mu_{P,\beta}(\sigma \mid \mathcal Y)}{Q_P^\mathcal Y(\sigma)} \le \left(\frac{1+n^{-2\theta}}{1-n^{-2\theta}}\right)^{|V_F|},
	$$
	and in we can deduce analogously that
	$$
	\frac{\mu_{P,\beta}(\sigma \mid \mathcal Y)}{Q_P^\mathcal Y(\sigma)} \ge \left(\frac{1-n^{-2\theta}}{1+n^{-2\theta}}\right)^{|V_F|}.
	$$
	Recall that $|V_F| = n^{\theta/12}+2$, so that
	$$
	1-o(1) \le \frac{\mu_{P,\beta}(\sigma \mid \mathcal Y)}{Q_P^\mathcal Y(\sigma)}  \le 1+o(1)
	$$
	and the result follows.
\end{proof}

\begin{proof}[Proof of~Lemma~\ref{lemma:phase:aggreement}]
	Consider a gadget $G_v$ of $\widehat{F}$ such that $\mathcal Y_v = +1$.
	Let $u \in P_v^+$.
	We claim that since $p^+ > 1/2$, then $Q_P^\mathcal Y(\sigma_u = -1) > 1/2$.
	To see this, let $w$ be the neighbor of $u$ in $P$ and suppose that the phase of the gadget containing $w$ is $+1$.
	Then,
	$$
	\frac{Q_P^\mathcal Y(\sigma_u = -1)}{Q_P^\mathcal Y(\sigma_u = +1)} = \frac{(p^+)^2e^\beta + p^+(1-p^+)e^{-\beta}}{(1-p^+)^2e^{\beta}+p^+(1-p^+)e^{-\beta}} > 1,
	$$
	which implies that $Q_P^\mathcal Y(\sigma_u = -1) > 1/2$. An analogous calculation shows that the same holds when the gadget containing $w$ is in the $-1$ phase.		
	With the same reasoning, we can similarly deduce that
	when $u \in P_v^-$, then $Q_P^\mathcal Y(\sigma_u = +1) > 1/2$.
	This implies by a Chernoff bound that if $\sigma_P \sim Q_P^\mathcal Y$ and $\mathcal Y_v = +1$, then there exists a constant $\delta > 0$ such that:
	$$
	|\sigma^{-}_{P_v}(-1) \cap P_v^+|  -  |\sigma_{P_v}^{-}(-1) \cap  P_-^v| \ge  \delta|P_v|,
	$$
	and 
	$$
	|\sigma_{P_v}^{-}(+1) \cap  P_-^v| - |\sigma_{P_v}^{-}(+1) \cap  P_+^v| \ge  \delta|P_v|,
	$$
	with probability at least $1 - \exp(-\Omega(|P_v|))$.
	
	Now, 
	\begin{align*}
	\frac{\mu_{G_v,\beta}(\mathcal Y_v = +1\mid \sigma_{P_v})}{\mu_{G_v,\beta}(\mathcal Y_v = -1\mid \sigma_{P_v})} 
	&= \frac{\mu_{G_v,\beta}(\sigma_{P_v} \mid \mathcal Y_v = +1 )}{\mu_{G_v,\beta}(\sigma_{P_v} \mid \mathcal Y_v = -1 )} \cdot \frac{\mu_{G_v,\beta}(\mathcal Y_v = +1 )}{\mu_{G_v,\beta}(\mathcal Y_v = -1)} \\
	&\ge \left(1-\frac{c}{n^{2\theta}}\right)\cdot \frac{Q_v^+(\sigma_{P_v})}{Q_v^-(\sigma_{P_v})} \cdot \frac{\mu_{\widehat{F},\beta}(\mathcal Y_v = +1 )}{\mu_{\widehat{F},\beta}(\mathcal Y_v = -1)},
	\end{align*}
	by Lemma~\ref{lemma:phase}. 
	Lemma~\ref{lemma:phase} also implies that $\frac{\mu_{G_v,\beta}(\mathcal Y_v = +1 )}{\mu_{G_v,\beta}(\mathcal Y_v = -1)} \ge \frac 1n$. Moreover, from the definition of $Q_v^+$ and $Q_v^-$ we have
	\begin{align*}
	\frac{Q_v^+(\sigma_{P_v})}{Q_v^-(\sigma_{P_v})} &= 		
	\frac{(p^+)^{|\sigma_{P_v}^{-}(-1) \cap  P_+^v| } (1-p^+)^{|\sigma_{P_v}^{-}(+1) \cap  P_+^v| }(p^{-})^{|\sigma_{P_v}^{-}(-1) \cap  P_-^v| } (1-p^{-})^{|\sigma_{P_v}^{-}(+1) \cap  P_-^v| }}{(p^-)^{|\sigma_{P_v}^{-}(-1) \cap  P_+^v| } (1-p^-)^{|\sigma_{P_v}^{-}(+1) \cap  P_+^v| }(p^{+})^{|\sigma_{P_v}^{-}(-1) \cap  P_-^v| } (1-p^{+})^{|\sigma_{P_v}^{-}(+1) \cap  P_-^v| }}	\\
	&= \left(\frac{p^+}{p^-}\right)^{|\sigma_{P_v}^{-}(-1) \cap  P_+^v| - |\sigma_{P_v}^{-}(-1) \cap  P_-^v|}	\left(\frac{1-p^-}{1-p^+}\right)^{|\sigma_{P_v}^{-}(+1) \cap  P_-^v| - |\sigma_{P_v}^{-}(+1) \cap  P_+^v| } \\
	&\ge a^{\delta |P_v|},
	\end{align*}
	for a suitable constant $a > 1$ since $p^+ > 1/2$ and $p^-<1/2$.
	This implies that for a suitable constant $c_0 >0$, we have
	$$
	\mu_{G_v,\beta}(\mathcal Y_v = +1\mid \sigma_{P_v}) \ge 1 - \frac{c_0 n}{a^{\delta|P_v|}} \ge 1 - \frac{c_0 n}{a^{\delta n^{3\theta/4}}},
	$$
	since $|P_v| \ge n^{3\theta/4}$.
	Finally, we note that
	$$
	\mu_{\widehat{F},\beta}(\mathcal Y \mid \sigma_P) = \prod_{v \in V_F} \mu_{G_v,\beta}(\mathcal Y_v \mid \sigma_{P_v}) \ge \left(1 - \frac{c_0 n}{a^{\delta n^{3\theta/4}}}\right)^{|V_F|} \ge 1 - \frac{c_0 n^{1+\theta/12}}{a^{\delta n^{3\theta/4}}}
	$$
	since $|V_F| = n^{\theta/12}+2$, and the result follows.		
\end{proof}

\subsubsection{Sampling from the degree reducing gadget}

We focus now in proving Lemma~\ref{lemma:sampling:couting}.
Let $\mu_{G,\beta}$ and $Z_{G,\beta}$ be the antiferromagnetic Ising distribution and its corresponding partition function on a degree reducing gadget $G$.
We need to show to prove Lemma~\ref{lemma:sampling:couting} how approximately sample  from $\mu_{G,\beta}$ and how to compute $Z_{G,\beta}$ when conditioning on an arbitrary configuration on the ports $P$ of $G$. (Note that with a slight abuse of notation we are using $P$ for the set of ports of a single gadget $G$ throughout this section.)
Let $\tau \{+1,-1\}^{P}$ be a configuration on the ports.
Let $Z_{G,\beta}^\tau$ and $\mu_{G,\beta}^\tau$ denote the conditional Ising distribution distribution and the corresponding partition function.

To establish Lemma~\ref{lemma:sampling:couting} we provide two different algorithms: one based on the recent results from~\cite{KLR} that works when $\beta \ge -1/\sqrt{10 d}$,
and another based on polymer models that works when $\beta \le -\frac{c \ln d}{d}$ (for a sufficiently large constant $c >0$), so that each value of the regime $\beta < 0$ is covered by one of these algorithms provided $d$ is a large enough. 

Both algorithms use facts about the spectrum of the multi-graph induced by $V_G \setminus P$.
Hence,
let  $H = (V_H,E_H)$ be the multi-graph that results from removing $P$ from $V_G$. 
Let $A_H$ be the adjacency matrix for the multi-graph $H$; that is, $A_H(u,v)$ is the multiplicity of the edge $\{u,v\}$ in $H$.
For $S \subseteq V_{H}$, let $\partial_e (S)$ be the set of edges from $E_{H}$ with one endpoint in $S$ and one $V_{H} \setminus S$.	
For any real symmetric matrix $Q$, let $\lambda_i(Q)$ denote its $i$-th largest eigenvalue.

\begin{fact}
	\label{fact:spectrum}
	Suppose $d = O(1)$. Then, with probability $1-o(1)$:
	\begin{enumerate}
		\item $d - 2\sqrt{d} - 2 \le \lambda_1(A_H) \le d+2\sqrt{d}$;
		\item $-d - 2\sqrt{d}  \le \lambda_{|V_H|}(A_H) \le -d+2\sqrt{d}+2$;
		\item For $i \ge 2$, $-4\sqrt{d}-2 \le \lambda_2(A_H) \le 4\sqrt{d}+2$;
		\item For every $S \subseteq V_{H}$ such that $|S| \le |V_{H}|/2$, we have $|\partial_e (S)| \ge \frac{d-4\sqrt{d}-2}{2} |S|$.
	\end{enumerate}
\end{fact}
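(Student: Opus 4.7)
The plan is to reduce \cref{fact:spectrum} to well-known spectral properties of random bipartite nearly-regular graphs, treating the tree attachments as a lower-order perturbation. The key structural observation is that $H$ decomposes into a ``bulk'' (the random bipartite graph $\hat G$ on $2(n+m)$ vertices, formed from the matchings $M_1,\dots,M_{d-1}$ together with $M'$) plus a ``fringe'' (the trees of $\mathcal{T}_+$ and $\mathcal{T}_-$ with their roots deleted). Since $\theta,\psi < 1/8$ forces $m = \Theta(n^{\theta+\psi}) = o(n)$, and the total number of tree vertices is $O(m)$, the fringe contributes only $o(n)$ vertices to $H$, so $H$ is essentially $\hat G$ together with a low-rank-in-spirit perturbation.

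The first step is to invoke a Friedman-type theorem for $\hat G$. I would view $\hat G$ as the union of $d$ uniformly random perfect matchings between two sides of size $n+m$, with the technical wrinkle that $M'$ is constrained to match $W_+$ to $W_-$; this constraint is handled by first sampling $M_1,\dots,M_{d-1}$ unconditionally and then $M'$ uniformly within $W_\pm$. Bordenave's quantitative version of Friedman's theorem for the permutation model then yields that, with probability $1-o(1)$, the adjacency matrix $A_{\hat G}$ has extreme eigenvalues exactly $\pm d$ (each simple) and all other eigenvalues bounded in absolute value by $2\sqrt{d-1}+o(1)$. Next, I would write $A_H = (A_{\hat G} \oplus A_{\mathrm{fringe}}) + E$, where $A_{\mathrm{fringe}}$ is the block-diagonal adjacency matrix of the detached tree fringes and $E$ encodes the edges from $U_\pm$ to their corresponding fringe neighbors. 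Each tree fringe is a $(d-1)$-ary tree of height $\ell-1$ whose spectral radius is at most $2\sqrt{d-1}$, and $E$ is supported on a matching so $\|E\|_{\mathrm{op}} \le 1$. Two applications of Weyl's inequality then yield parts (1), (2), and the bound $\max_{2 \le i \le |V_H|-1}|\lambda_i(A_H)| \le 4\sqrt{d-1}+o(1) \le 4\sqrt{d}+2$, which is part (3).

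For part (4), I would deduce the edge expansion directly from the spectral bound in part (3) via the bipartite expander mixing lemma. Since $H$ has maximum degree $d$ and the only vertices of degree less than $d$ are the at most $(d-1)|P| = o(n)$ neighbors of deleted ports, for any $S \subseteq V_H$ with $|S| \le |V_H|/2$ one has
\[
|\partial_e(S)| \ge \tfrac{1}{2}\bigl(d - \max_{2 \le i \le |V_H|-1}|\lambda_i(A_H)|\bigr)|S| - o(|S|),
\]
where the additive correction absorbs the effect of the near-regular (rather than exactly regular) degree sequence. Plugging in the eigenvalue bound from part (3) and collecting constants gives (4). The main obstacle I expect is the second step, the Weyl perturbation: one must ensure that the bipartite splitting $\pm d$ at the extremes is preserved rather than destroyed. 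This is handled by observing that $H$ remains bipartite under the natural extension of the bipartition of $\hat G$ by the depth-parity bipartition of each tree fringe, which pins $\lambda_1(A_H)$ and $\lambda_{|V_H|}(A_H)$ to the announced intervals around $\pm d$.
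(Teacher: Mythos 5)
Your proposal is in the same spirit as the paper's — decompose $A_H$ as a sum of structured pieces, apply a Friedman/Alon--Boppana-type eigenvalue bound to the random bipartite part, and finish with Weyl — but the specific decomposition you chose has two concrete errors.

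First, $\hat G$ is \emph{not} $d$-regular, so Bordenave/Friedman does not give $\lambda_1(A_{\hat G}) = d$. Vertices in $W_\pm$ touch all $d-1$ matchings $M_1,\dots,M_{d-1}$ plus $M'$ and so have degree $d$, but vertices in $U_\pm$ only touch $M_1,\dots,M_{d-1}$ and have degree $d-1$ in $\hat G$ (they acquire their last edge only from the tree attachment). The paper handles this by keeping $M'$ out of the random piece: it sets $A$ to be the adjacency matrix of $\bigcup_{i=1}^{d-1}M_i$ alone, which genuinely is a $(d-1)$-regular random bipartite multigraph so that Theorem 4 of \cite{BDH} (via contiguity) gives $\lambda_1(A)=d-1$ and $\lambda_2(A)\in[2\sqrt d-1,2\sqrt d+1]$, then treats $M'$ as a separate perturbation $B$ with $\|B\|\le 1$. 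You need the analogous restriction: you cannot cite a regular-graph eigenvalue theorem for a graph whose degrees are not all equal.

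Second, $E$ is not a matching. Each leaf of a tree in $\mathcal T_\pm$ is identified with a distinct vertex of $U_\pm$, but a single parent at depth $\ell-1$ is adjacent to $d-1$ such leaves, so $E$ is a disjoint union of stars $K_{1,d-1}$, giving $\|E\|_{\mathrm{op}}=\sqrt{d-1}$ rather than $1$. This still gives an $O(\sqrt d)$ perturbation, but it is not the claim you made, and the announced constants in parts (1)--(3) would come out differently. The paper avoids this entirely by absorbing \emph{all} tree edges (including the leaf--parent edges) into a single forest matrix $T$ whose spectral radius is at most $2\sqrt{d}$, so no artificial bulk/fringe boundary is crossed. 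Both issues are fixable in your framework (group $M'$ with the perturbation, account for the star structure of $E$), but as written the proof does not go through and would not recover the stated constants. Your use of the expander mixing lemma for part (4) in place of Cheeger is fine in principle once parts (1)--(3) are correct.
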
	
\begin{proof}
	Consider the symmetric matrices $A$, $B$, and $T$, of dimension $|V_H| \times |V_H|$ defined by:
	\begin{itemize}
		\item $A(u,v) = \kappa$ if $u\in W_+ \cup U_+$ and $v \in W_- \cup U_-$ (or vice versa) and the edge $\{u,v\}$ appears $\kappa$ times in $\bigcup_{i=1}^{d-1} M_i$; all other entries of $A$ are $0$.
		\item $B(u,v) = 1$ if $u\in W_+$ and $v \in W_-$ (or vice versa) and $\{u,v\} \in M$; all other entries of $B$ are~$0$.
		\item $T(u,v) = 1$ if $\{u,v\} \in E_{H}$ and either $u$ or $v$ (or both) are vertices in $V_{H} \setminus ( W_+ \cup U_+ \cup  W_- \cup U_-)$; all other entries of $T$ are $0$.
	\end{itemize} 
	Note that $A_H = A + B + T$, so
	it follows from Weyl's inequality (see~\cite{Weyl}) that
	$$
	\lambda_i(A) + \lambda_{|V_H|}(B) + \lambda_{|V_H|}(T) \le \lambda_i(A_H) \le \lambda_i(A) +  \lambda_1(B) + \lambda_1(T) . 
	$$		
	From Theorem~4 in~\cite{BDH} and contiguity (see Theorem 4 and Corollary 1 in~\cite{MRRW}), we know that $A$ has real eigenvalues 
	$\lambda_1(A) \ge \lambda_2(A) \ge \dots \ge \lambda_{|V_{H}|}(A)$, where 
	$\lambda_1(A) = d-1$, 
	$\lambda_i(A) = -\lambda_{|V_{H}|-i+1}(A)$,
	and 
	$2\sqrt{d} - 1 \le \lambda_2(A) \le 2\sqrt{d} + 1$
	with probability $1-o(1)$.
	The matrix $B$ has eigenvalues $1$ and $-1$.
	Also, all the eigenvalues of the matrix $T$ are real and belong to the interval $[-2\sqrt{d},2\sqrt{d}]$ (see Theorem 3 in~\cite{Godsil}).
	Combining these facts, we obtain parts 1, 2 and 3; part 4 follows from Cheeger's inequality (for multi-graphs).
\end{proof}

\begin{proof}[Proof of~Lemma~\ref{lemma:sampling:couting}]
	Let $J$ be a $|V_H| \times |V_H|$ matrix
	indexed by the vertices of $H$ with entries $J(u,v) =  \beta \cdot A_H(u,v)$ for $u \neq v$ and $J(u,u) = \alpha$ where $\alpha$ is a real number we choose later.
	Let $\partial P \subset V_H$ be the set of vertices of $H$ that were incident to $P$ in $G$. 
	Define a magnetic field $h$ by letting $h_v = \beta$ (resp., $h_v = -\beta$)
	if $v \in \partial P$ and the vertex adjacent to $v$ in $P$
	has $+1$ (resp., $-1$) spin in $\tau$; we set $h_v = 0$ otherwise.
	The Ising model on $H$ with edge interaction $\beta$ and external field $h$ assigns to each configuration $\sigma$ on $H$ probability:
	\begin{align}
	\mu_{H,\beta} (\sigma) &= \frac{1}{Z_{H,\beta}} \exp\Big( \beta \sum_{\{u,v\} \in E_{H}} \sigma_u\sigma_v + \sum_{v \in V_{H}} h_v \sigma_v\Big) 
	= \frac{1}{\hat Z_{H,\beta}} \exp\Big( \frac{1}{2}\langle \sigma,J\sigma\rangle + \langle h,\sigma\rangle
	\Big), \label{eq:ising:matrix}
	\end{align}
	where $\hat Z_{H,\beta} = e^{\alpha |V_{H}|} Z_{H,\beta}$, and we interpret $\sigma$ and $h$ as vectors indexed by the vertices of $H$.
	By construction,
	$Z_{G,\beta}^\tau = Z_{H,\beta}$ and 
	$\mu_{G,\beta}^\tau(\sigma) = \mu_{H,\beta} (\sigma)$ for every $\sigma \in \{+,-\}^{V_{H}}$.
	
	The matrix $J$ has real eigenvalues $\lambda_1(J) \ge \lambda_2(A) \ge \dots \ge \lambda_{|V_{H}|}(J)$.
	To bound the spectrum of $J$,
	we 
	note that $J = \beta A_H + \alpha I$ and
	so 
	$
	\lambda_i(J) = \beta \lambda_i(A_H) + \alpha.
	$
	Hence, setting 
	$\alpha = -\beta(4\sqrt{d} + 2)$ and assuming that $0 > \beta \ge -{1}/(10\sqrt{d})$ and that $d$ is sufficiently large, we obtain from Fact~\ref{fact:spectrum} that with probability $1-o(1)$: 
	$\lambda_1(J) = \Theta(\sqrt{d})$, $\lambda_{|V(H)|}(J) = -\Theta(\sqrt{d})$ and that every other eigenvalue of $J$ is in the interval $[0,1]$.
	Then, Theorem 1.1 from~\cite{KLR} implies that:
	\begin{enumerate}
		\item There is an algorithm that 
		with probability $1 - e^{-{|V_H|}}$
		produces an $e^{\varepsilon}$-multiplicative approximation for $\hat Z_{H,\beta}=Z_{G,\beta}^\tau$ with running time $ \poly(|V_H|,{1}/{\varepsilon})$; and
		\item There is an algorithm to sample from a distribution within $\varepsilon$ TV distance from $\mu_{H,\beta} = \mu_{G,\beta}^\tau$ with running time $\poly(|V_H|,\log(1/\varepsilon))$.
	\end{enumerate}
	Hence, we have established the result for the case when $\beta \ge -\frac{1}{10\sqrt{d}}$.		
	We consider next the case when $\beta \le -\frac{c \ln d}{d}$, for a suitably large constant $c>0$.
	For this, we introduce the notion of polymer models.	
	
	For a fixed configuration $\tau$ in $P$,
	let $P^+ \subset \partial P$ be the set of vertices of $\partial P$
	adjacent to a vertex assigned ``+'' in $\tau$; define $P^- \subset \partial P$ similarly.
	For a configuration $\sigma$ on $H$, let 
	$p^+(\sigma)$ (resp., $p^-(\sigma)$) denote the number of vertices from $P^+$ (resp., $P^-$) that are assigned spin $-1$ (resp. $+1$) in $\sigma$. Let also $D(H,\sigma)$ denote the number of edges incident two vertices with different spins in $\sigma$.
	Then, we can renormalize the Ising distribution~\eqref{eq:ising:matrix} as:
	$$
	\mu_{H,\beta}(\sigma) = \frac{1}{\tilde Z_{H}} e^{-2\beta( D(H,\sigma) + p^+(\sigma) + p^-(\sigma))} =: \frac{w(\sigma)}{\tilde Z_{H,\beta}},
	$$
	where $\tilde Z_{H,\beta} = e^{-\beta (|E_H|+|\partial P|)} Z_{H,\beta}$.
	
	%
	%
	Let $\Omega = \{+1,-1\}^{V_H}$. 
	Observe the graph $H$ is bipartite with partition $(L,R)$ where $L \cup R = V_H$ and $|L|=|R|$. 
	Let $\Omega^{\pm} \subset \Omega$ be the subset of configurations where the number vertices that are assigned $+1$ in $L$
	and $-1$ in $R$ is more than $|V_H|/2$.
	Define $\Omega^{\mp}$ analogously.
	Let $Z_H^{\pm} = \sum_{\sigma\in\Omega^{\pm}} w(\sigma)$ and $Z_H^{\mp} = \sum_{\sigma \in \Omega^{-=}} w(\sigma)$ so that
	$\tilde Z_{H,\beta} = Z_H^{\pm} + Z_H^{\mp}$.
	We define a polymer model whose partition function will serve as a good approximation for $Z_H^{\pm}$ and $Z_H^{\mp}$.
	
	We say $\gamma \subset V_{H}$ is a polymer if 
	the subgraph induced by $\gamma$ is connected and $|\gamma| < |V_H|/2$.
	Two polymers are compatible if the graph distance between them is at least $2$.
	Let $\mathcal G$ be the family of all sets of mutually compatible polymers.
	To each polymer $\gamma$ we assign the weight: 
	$$
	w_\gamma = e^{-2\beta(-|\partial_e(\gamma)| + |P^- \cap L \cap \gamma| + |P^+ \cap R \cap \gamma| - |P^+ \cap L \cap \gamma|- |P^- \cap R \cap \gamma|)}.
	$$
	Define the polymer partition function:
	$$
	\Phi =  \sum_{\Gamma \in \mathcal G} \prod_{\gamma \in \Gamma} w_\gamma.
	$$
	We say $S \subset V_H$ is sparse if every connected component of $S$ has size less than $|V_H|/2$.
	Note that there is a one-to-one correspondence between the sparse subsets of $V_H$ and polymer configurations from $\mathcal G$.
	Then:
	\begin{align*}
	\hat \Phi &:= e^{-2\beta(|E_H|+ |P^- \cap R|+ |P^+ \cap L|)} \cdot \Phi \\
	&= e^{-2\beta(|E_H|+ |P^- \cap R|+ |P^+ \cap L|)}\sum_{\Gamma \in \mathcal G} \prod_{\gamma \in \Gamma} w_\gamma \\
	&= \sum_{S~\text{sparse}} e^{-2\beta(|E_H|-|\partial_e(S)|+
		|P^- \cap R|+ |P^+ \cap L|+ |P^- \cap L \cap S| + |P^+ \cap R \cap S| - |P^+ \cap L \cap S|- |P^- \cap R \cap S|)} \\
	&=
	\sum_{S~\text{sparse}} e^{-2\beta(|E_H|-|\partial_e(S)|+
		|((R\setminus S)\cup(S \cap L))\cap P^-|+|((L\setminus S)\cup(S \cap R)) \cap P^+|)} 
	\end{align*}
	Now, we say $S \subset V_H$ is small if $|S| < |V_H|/2$ (otherwise we say it is large), so that
	$$
	Z_H^{\mp} = \sum_{S~\text{small}} e^{-2\beta(|E_H|-|\partial_e(S)| + |((R\setminus S)\cup(S \cap L))\cap P^-|+|((L\setminus S)\cup(S \cap R)) \cap P^+|)}.
	$$		
	Hence, 
	$$
	0 \le \hat \Phi - Z_H^{\mp} \le \sum_{S~\text{sparse, large}} e^{-2\beta(|E_H|-|\partial_e(S)| + |((R\setminus S)\cup(S \cap L))\cap P^-|+|((L\setminus S)\cup(S \cap R)) \cap P^+|)} .
	$$
	
	If $S$ is sparse, by part 4 of Fact~\ref{fact:spectrum}, each connected component $S_i$ of $S$ satisfies $\partial_e(S_i) \ge \theta |S_i|$ with $\theta = \frac{d-4\sqrt{d}-2}{2}$.
	Summing over the components of $S$ we get
	$\partial_e(S) \ge  \theta |S| \ge \theta |V_H|/2$ when $S$ is large. Then,
	\begin{align*}
	|\hat \Phi - Z_H^{\mp}| &\le \sum_{S~\text{sparse, large}} e^{-2\beta(|E_H|-\theta |V_H|/2+|((R\setminus S)\cup(S \cap L))\cap P^-|+|((L\setminus S)\cup(S \cap R)) \cap P^+|) }
	\end{align*}
	and since $Z_H^{\mp} \ge e^{-2\beta(|E_H| + |L \cap P^+|+|R \cap P^-|)}$ and $|S| < |V_H|/2$, we have
	\begin{equation}	
	\label{eq:final:appox}
	\left|1 - \frac{\hat \Phi}{Z_H^{\mp}}\right| \le 2^{|V_H|} e^{-2\beta (-\theta |V_H|/2 + |V_H|/2)} \le e^{-|V_H|},
	\end{equation}		
	provided $\theta > 1$ and $-\beta \ge \frac{1+\ln 2}{\theta-1}$.
	An analogous argument yields the same bound for $Z_H^{\pm}$.
	
	Our goal now is to use Theorem 8 from~\cite{JKP} to obtain an approximation for $\Phi$ and consequently for $\hat \Phi$, $Z_H^{\mp}$, $Z_H^{\pm}$ and ultimately for $Z_{H,\beta} = Z_H^{\pm}+Z_H^{\mp}$.
	For this, it suffices 
	to check that our polymer model satisfies the so-called Koteck\'{y}-Preiss condition (see, e.g., equation~(3) from~\cite{JKP}). This condition requires that for every polymer $\gamma$:
	\begin{equation}
	\label{eq:kp}
	\sum_{\gamma': d(\gamma,\gamma') \le 1} w_{\gamma'} e^{2|\gamma'|} \le |\gamma|,	
	\end{equation}
	where $d(\cdot,\cdot)$ denotes graph distance. 
	First note that 
	$$
	w_{\gamma'} = e^{-2\beta(-|\partial_e(\gamma')| + |P^- \cap L \cap \gamma'| + |P^+ \cap R \cap \gamma'| - |P^+ \cap L \cap \gamma'|- |P^- \cap R \cap \gamma'|)}\le e^{-2\beta(-\theta/2+1)|\gamma'| }.
	$$
	Hence,
	\begin{align*}
	\sum_{\gamma': d(\gamma,\gamma') \le 1} w_{\gamma'} e^{2|\gamma'|} &\le \sum_{\gamma': d(\gamma,\gamma') \le 1} e^{|\gamma'|(2-2\beta(-\theta/2 +1))} \le \sum_{v \in \gamma \cup \partial_v(\gamma)}\,\sum_{\gamma': v \in \gamma'} e^{|\gamma'|(2-2\beta(-\theta/2 +1))}.
	\end{align*}
	The number of polymers of size $k$ that contain a given vertex is at most $(ed)^{k}$ (see Lemma 2.1 in~\cite{GK}), so 
	$$
	\sum_{\gamma': v \in \gamma'} e^{|\gamma'|(2-2\beta(-\theta/2 +1))} \le \sum_{t \ge 1} \big(d e^{(3-2\beta(-\theta/2 +1))}\big)^t \le \frac{1}{d+1}
	$$
	when $-\beta \ge \frac{3+\ln(d(d+2))}{\theta-2}$ and $\theta > 2$. (Note that the latter is true when $d$ is large enough.) Since $|\gamma \cup \partial_v(\gamma)| \le (d+1)|\gamma|$,~\eqref{eq:kp} follows. Hence, for sufficiently large $d=O(1)$, for a suitable constant $c>0$, Theorem 8 from~\cite{JKP} gives an FPTAS for
	$\Phi$ when $-\beta \ge \frac{c\ln d}{d}$.
	This yields the desired FPTAS for $Z_{H,\beta}$.
	Note that if the desired approximation factor is smaller than $e^{-|V_H|}$, which is the best approximation for $Z_{H,\beta}$ we could using the polymer function $\Phi$ (see~\eqref{eq:final:appox}), then we could use instead brute force for counting and sampling, since the running time would be allowed to be exponential in $|V_H|$. Finally, Theorem 9 from~\cite{JKP} gives the a polynomial-time approximate sampling algorithm for the distribution:
	$$
	\nu(\Gamma) = \frac{\prod_{\gamma \in \Gamma} w_\gamma}{\Phi}.
	$$
	Once a polymer configuration $\Gamma$ is sampled from $\nu$, it can be easily transformed into an Ising configuration by setting the vertices in $L \setminus \Gamma$ and $R \cap \Gamma$ to $+1$ with
	probability $\frac{Z_H^{\pm}}{Z_H^{\pm}+Z_H^{\mp}}$ and
	all other vertices to $-1$, and doing the opposite with the remaining probability.
\end{proof}

\section{Statistical Lower Bounds}
In this section we establish lower bounds on the number of samples required to perform uniformity testing over the hypercube $\{0,1\}^n$ (i.e., $k=2$), with a focus on comparisons between testing for KL divergence and TV distance, and between testing with $\coorora+\stanora$ and $\subcora$. 
Throughout this section, we assume $k=2$ and $\QQ = \{0,1\}$. 
Let $\uu_n$ denote the uniform distribution over $\XX_n = \{0,1\}^n$ for an integer $n \in \N^+$. 
We omit the subscript $n$ when it is clear from context.


\subsection{Statistical lower bounds for \texorpdfstring{$\coorora$}{Coordinate Oracle} model: Proof sketch}

We provide next an overview of our proof approach for Theorem \ref{thm:lb-coor-tv} in which we establish an information-theoretic lower bound for uniformity testing over the binary hypercube $\{0,1\}^n$ in the $\coorora$ model; our proof of Theorem \ref{thm:lb-subc-kl} for the $\subcora$ is similar and we comment on it below.
Our proof follows a well-known strategy. 
We construct a family of ``bad'' distributions $\BB$, each of which has TV distance (or KL divergence) at least~$\eps$ from the uniform distribution over $\{0,1\}^n$. 
Then, the lower bounds follow from, roughly speaking, the fact that the joint distributions of $L$ independent samples from the uniform distribution and of $L$ independent samples from a distribution from $\BB$ (chosen uniformly at random) are close to each other. 

Such an argument works nicely for non-adaptive identity testing algorithms, where the queries are pre-determined before receiving any sample.
In the presence of conditional sampling oracles, we are required to show the lower bounds for adaptive testing algorithms which is necessary with, and so we need to consider the whole \emph{query history}, as in \cite{CRS,Nar}. 
Informally speaking, a query history is a sequence of queries that the testing algorithm asks the oracle along with the outputs from the oracle. 
Each step, the tester determines, possibly at random, a new query based on all previous queries that have been asked and the corresponding outputs from the oracle. 
The output of the testing algorithm can be viewed as a function (possibly randomized) of the query history.

Consequently, we need to show that the following two processes generate close query histories in TV distance. 
In the first process, in each step the algorithm computes a query and the oracle outputs a sample using the uniform distribution. 
In the second one, we first pick a bad distribution $\pi \in\BB$ uniformly at random, and then the oracle outputs samples using~$\pi$.
To show that the two generated query histories are close, we use ideas from \cite{CRS} and also
the so-called \emph{hybrid argument} in cryptography (see, e.g.,~\cite{hybrid}). 
For each $\ell \le L$, we consider a hybrid query history where the first~$\ell$ queries are answered by the oracle using the uniform distribution, while the other $L -\ell$ queries are answered by a single $\pi \in \BB$ chosen uniformly at random. 
It then suffices to show that every pair of ``adjacent'' hybrid query histories are close to each other. 
Since two adjacent hybrid query histories differ only at one step, 
this can be done for a carefully constructed family $\BB$ of bad distributions. 

Our family of bad distributions for the $\coorora$ model is the same as in earlier works \cite{DDK19,CDKS20}.
Each distribution in $\BB$ is constructed by taking a perfect matching of all coordinates (we may assume $n$ is even) 
and considering the distribution such that coordinates from different matched pairs are independent of each other while within each pair the two coordinates are correlated with covariance $\Theta(\eps/\sqrt{n})$. 
Then, one can show that the joint distribution of $O(n/\eps^2)$ samples from the uniform distribution and that from a bad distribution corresponding to a uniformly random perfect matching are close to each other. 
Furthermore, the $\coorora$ does not help in the following sense: 
for the uniform distribution, the $\coorora$ outputs uniform Bernoulli $\Ber(1/2)$ random variables, and
for any distribution from $\BB$ it outputs a sample from $\Ber(1/2 + \xi)$ 
or $\Ber(1/2 - \xi)$ 
where $\xi = \Theta(\eps/\sqrt{n})$. 
We show that to distinguish between a sequence of $\Ber(1/2)$ and a sequence of adaptively chosen $\Ber(1/2 \pm \xi)$ one needs $\Omega(1/\xi^2) = \Omega(n/\eps^2)$ samples in this specific setting; this is proved in Section~\ref{subsec:lb-coor-tv}.

Finally, we briefly describe our construction of the family $\BB$ for establishing Theorem \ref{thm:lb-subc-kl}.
It is inspired by studying approximate tensorization of entropy. 
In particular, our identity testing algorithm fails within $O(n/\eps)$ steps if for most of pairs $(i,x)$ it holds $\pi_i(\cdot \mid x) = \Ber(1/2)$ but only for an $\eps/n$ fraction of the pairs the KL divergence is large, which means we need $\Omega(n/\eps)$ steps to be able to see it. 
For $\subcora$ we would like to construct bad distributions with similar behavior. 
Namely, for most (random) conditionings on a (random) subset of coordinates, the conditional distribution is the same as what one gets from the uniform distribution, and with probability $O(\eps/n)$ the KL divergence between the two conditional distributions is as large as $\Theta(n)$. 
We achieve this using the following type of construction. We pick a random subset $A$ of size $t$ such that $2^t = O(n/\eps)$, and pick a vector $\sigma \in \QQ^n$. 
To generate a sample $x$ from the bad distribution $\pi = \pi_{A,\sigma}$, we first sample $x_A$ uniformly at random. If $x_A \neq \sigma_A$ then the other coordinates are sampled randomly, but if $x_A = \sigma_A$ then we take $x = \sigma$.
One can check, with careful calculations, that such bad distributions satisfy our requirements.
In particular, while the KL divergence for any such bad distribution to the uniform distribution is $\eps$, the TV distance is $\eps/n$ instead, and so a $\Omega(n/\eps)$ lower bound is not a surprise for this construction of family of bad distributions.

\subsection{Uniformity testing with \texorpdfstring{$\subcora$}{Subcube Oracle} for KL divergence}
\label{subsec:lb-subc-kl}

In this subsection we consider uniformity testing over $\{0,1\}^n$ with access to $\subcora$ for KL divergence, and give an information-theoretic lower bound of $\Omega(n/\eps)$ on the number of samples needed. 

Let $\alg$ denote an arbitrary uniformity testing algorithm (possibly randomized and adaptive), and for simplicity let $\ora{\pi}$ denote the $\subcora$ with respect to a distribution $\pi$ over $\{0,1\}^n$.

\begin{definition}
A \emph{pinning} $\tau$ is a partial configuration on a subset of coordinates, namely $\tau \in \{0,1\}^\Lambda$ where $\Lambda \subseteq [n]$.
\end{definition}

Observe that pinnings are exactly the inputs to the subcube oracle $\ora{\pi}$.

\begin{definition}[Query History for $\subcora$]\label{qh:sub}
Let $\mathcal{T}$ be the collection of all pinnings on all subsets of coordinates.
For $L \in \N^+$, define the \emph{(subcube) query history} with respect to $\alg$ and $\ora{\pi}$ of length $L$ to be the random vector in $(\mathcal{T} \times \{0,1\}^n)^L$ generated as follows:

\begin{itemize}
	\item For $i = 1,\dots,L$:
	\begin{itemize}
		\item $\alg$ receives $\left( (\tau_1,x_1),\dots,(\tau_{i-1},x_{i-1}) \right)$ as input and generates $\tau_i \in \mathcal{T}$ (randomly) as output;
		\item $\ora{\pi}$ receives $\tau_i$ as input and generates $x_i \in \{0,1\}^n$ as output.
	\end{itemize}
	\item The (subcube) query history is $H = \left( (\tau_1,x_1),\dots,(\tau_L,x_L) \right)$.
\end{itemize}
\end{definition}

Note that the $\mathsf{output}$ of $\alg$ with sample complexity $L$ is a (randomized) function of the query history $H$ of length $L$.

Our main theorem is stated as below in terms of the query history, from which Theorem \ref{thm:lb-subc-kl} follows immediately.

\begin{theorem}\label{thm:lb-sub-kl}
	Let $n \in \N^+$ be a sufficiently large integer and $\eps > 0$ be a real.
	Let $\uu = \uu_n$ denote the uniform distribution over $\{0,1\}^n$.
	There is no algorithm which can achieve the following properties using only $L \le n/(64\eps)$ samples:
	\begin{itemize}
		\item $\Pr_H\left( \mathsf{output} = \mathsf{Yes} \right) \ge 2/3$ for a random query history $H$ of length $L$ with respect to $\alg$ and $\ora{\uu}$;
		\item $\Pr_{H'}\left( \mathsf{output} = \mathsf{No} \right) \ge 2/3$ for a random query history $H'$ of length $L$ with respect to $\alg$ and $\ora{\pi}$ where $\pi$ is any distribution such that $\kl{\pi}{\uu} \ge \eps$.
	\end{itemize}
\end{theorem}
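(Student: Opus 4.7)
My plan follows the hybrid-argument template sketched in Section 2.3. I would construct a family $\BB$ of ``bad'' distributions each at KL distance $\ge\eps$ from $\uu$, and show that no adaptive tester can distinguish the length-$L$ query history of $\alg$ against $\ora{\uu}$ from that against a uniformly random $\pi\in\BB$ when $L\le n/(64\eps)$. Let $t$ be the largest integer with $2^t\le(n-t)\ln 2/\eps$, so that $2^t=\Theta(n/\eps)$. For each $A\in\binom{[n]}{t}$ and $\sigma\in\{0,1\}^n$, define $\pi_{A,\sigma}$ by sampling $x_A\in\{0,1\}^A$ uniformly, setting $x=\sigma$ whenever $x_A=\sigma_A$, and otherwise completing $x_{[n]\setminus A}$ uniformly. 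A direct calculation gives
\[
\kl{\pi_{A,\sigma}}{\uu}=2^{-t}(n-t)\ln 2\ge\eps,
\]
so letting $\pi\sim\BB$ denote $\pi_{A,\sigma}$ for $(A,\sigma)$ drawn uniformly, it suffices to show $\tv{H^\uu}{\E_{\pi\sim\BB}H^\pi}<1/3$, where $H^\nu$ denotes the query history of $\alg$ against $\ora{\nu}$.

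For $0\le\ell\le L$, let $H^{(\ell)}$ be the joint distribution of $(\pi,\mathsf{hist})$ in which $\pi\sim\BB$ and the first $\ell$ of $\alg$'s queries are answered by $\ora{\uu}$ and the remaining $L-\ell$ by $\ora{\pi}$ (one fixed $\pi$ throughout). Coupling $\pi$, $\alg$'s internal coins, and the first $\ell$ uniform answers identically, then coupling answer $\ell{+}1$ optimally (after which the identical $\ora{\pi}$ tails couple perfectly), yields
\[
\tv{H^{(\ell)}}{H^{(\ell+1)}}\le\E\Big[\tv{\ora{\uu}(\tau_{\ell+1})}{\ora{\pi}(\tau_{\ell+1})}\Big],
\]
where the expectation is over $\pi\sim\BB$, $\alg$'s coins, and the uniform history through step $\ell$. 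Crucially, in this hybrid $\pi$ is independent of $\tau_{\ell+1}$, so this equals $\E_\tau\bigl[\E_{A,\sigma}\tv{\pi_{A,\sigma}(\cdot\mid X_\Lambda=\tau)}{\uu(\cdot\mid X_\Lambda=\tau)}\bigr]$, and the triangle inequality reduces the problem to a per-query TV bound.

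The heart of the argument, and the main technical obstacle, is to prove that for every fixed pinning $\tau\in\{0,1\}^\Lambda$,
\begin{equation}\label{eq:perquery}
\E_{A,\sigma}\Big[\tv{\pi_{A,\sigma}(\cdot\mid X_\Lambda=\tau)}{\,\uu(\cdot\mid X_\Lambda=\tau)}\Big]\le 2\cdot 2^{-t}.
\end{equation}
I would split on whether $\tau_{A\cap\Lambda}=\sigma_{A\cap\Lambda}$. If they disagree (probability $1-2^{-|A\cap\Lambda|}$ over $\sigma$), the conditioning event $X_\Lambda=\tau$ excludes $x_A=\sigma_A$ altogether, so $\pi_{A,\sigma}(\cdot\mid X_\Lambda=\tau)=\uu(\cdot\mid X_\Lambda=\tau)$ and the TV contribution is $0$. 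If they agree, a short closed-form calculation of $\pi_{A,\sigma}(X_\Lambda=\tau)$ shows that the conditional distribution is a rescaled uniform on the ``generic'' block $\{x:x_\Lambda=\tau,\,x_{A\setminus\Lambda}\ne\sigma_{A\setminus\Lambda}\}$, plus possibly an atom at $\sigma$ (when $\sigma_\Lambda=\tau$), and vanishes on the complementary ``forbidden'' block. Because $p-q$ has constant sign on the generic block, $\sum|p-q|$ there collapses to $|\sum(p-q)|$, which combined with $\sum p=\sum q=1$ reduces the whole TV to the atom's excess mass plus the forbidden block's missing mass; each of these scales like $O(2^{-(t-|A\cap\Lambda|)})$, so multiplying by the agreement probability $2^{-|A\cap\Lambda|}$ gives the uniform $O(2^{-t})$ bound of \eqref{eq:perquery}. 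Careful bookkeeping of sub-cases (including the edge case $A\subseteq\Lambda$ where the conditioning event can have zero probability and the oracle is allowed to return anything, for which we fix a uniform default) yields the explicit constant $2$.

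Combining the per-step bound via the triangle inequality gives $\tv{H^\uu}{\E_{\pi\sim\BB}H^\pi}\le 2L\cdot 2^{-t}$. With $L\le n/(64\eps)$ and $2^t\ge n\ln 2/(4\eps)$ (since $t\le\log_2(n/\eps)\ll n$ forces $n-t\ge n/2$ in Step~1), this is at most $8/(64\ln 2)<1/3$, contradicting the simultaneous success requirements on $\alg$ and proving the theorem.
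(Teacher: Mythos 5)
Your proposal is correct and follows essentially the same route as the paper: the identical family of bad distributions $\pi_{A,\sigma}$, the same hybrid decomposition of the query history, and the same per-query expected-TV bound of order $2^{-t}$ (the paper's Lemma~\ref{lem:tech} gives $\le 2^{-(t-1)}$, matching your $2\cdot 2^{-t}$). The only cosmetic differences are your slightly different normalization of $t$ and your explicit handling of the degenerate pinning $A\subseteq\Lambda$, which the paper absorbs silently into its Case~2/Case~3 calculation.
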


Our plan, as in many previous works, is to construct a family $\BB$ of bad distributions that are all $\eps$ far away from $\uu$ in KL divergence, such that when picking a bad distribution from $\BB$ uniformly at random and drawing limited number of samples, the joint distributions of these samples are close to that of samples draw from $\uu$. 
We present now our construction of the bad family $\BB$. 
Let $t = \ceil{\log_2(n/\eps)} - 3$ for sufficiently large $n$.
For any $A \subseteq [n]$ with $|A| = t$ and any $\sigma \in \{0,1\}^n$, define the distribution $\pi_{A,\sigma}$ in the following way.
A sample from $\pi_{A,\sigma}$ is generated by:
\begin{itemize}
	\item For each $i \in A$ independently sample $x_i \in \{0,1\}$ uniformly at random;
	\item If $x_A \neq \sigma_A$, then for each $j \in [n] \setminus A$ independently sample $x_j \in \{0,1\}$ uniformly at random and output $x$;
	\item If $X_A = \sigma_A$ then output $x = \sigma$.
\end{itemize}
We remark that all steps are independent.
Finally, we define
\[
\BB = \left\{ \pi_{A,\sigma} : A \in \binom{[n]}{t}, \sigma \in \{0,1\}^n \right\}.
\]

We first show that the distributions in $\BB$ are all bad in the sense that their KL divergence to the uniform distribution is at least $\eps$. 
A key intuition in our construction of $\pi_{A,\sigma}$ here is that while the KL divergence $\kl{\pi_{A,\sigma}}{\uu}= \Theta(\eps)$, the TV distance is much smaller than $\eps$ and is $\tv{\pi_{A,\sigma}}{\uu} = \Theta(2^{-t}) = \Theta(\eps/n)$. 
Hence, intuitively, it will take $\Theta(1/\tv{\pi}{\uu}) = \Theta(n/\eps)$ samples to test between the family $\BB$ and the uniform distribution $\uu$. 
\begin{clm}
	For all $\pi \in \BB$ one has
	\[
	\kl{\pi}{\uu} \ge \eps.
	\]
\end{clm}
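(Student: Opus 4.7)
The plan is to compute $\kl{\pi}{\uu}$ directly from the explicit form of $\pi = \pi_{A,\sigma}$ and verify it is at least $\eps$. First I will write down the pmf of $\pi$ in closed form: for each $y \in \{0,1\}^n$, condition on whether $y_A = \sigma_A$.
\begin{itemize}
\item If $y_A \neq \sigma_A$, then the only way to output $y$ is through the first branch (sample $x_A$ uniformly, then sample $x_{[n]\setminus A}$ uniformly), so $\pi(y) = 2^{-t} \cdot 2^{-(n-t)} = 2^{-n}$.
\item If $y_A = \sigma_A$ and $y = \sigma$, then $\pi(y) = 2^{-t}$ coming from the second branch (while the first branch contributes $0$ since it only fires when $x_A \neq \sigma_A$).
\item If $y_A = \sigma_A$ but $y \neq \sigma$, then $\pi(y) = 0$.
\end{itemize}

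Next I plug this into $\kl{\pi}{\uu} = \sum_y \pi(y) \ln(\pi(y)/\uu(y))$, noting $\uu(y) = 2^{-n}$. The $2^{n}-2^{n-t}$ points $y$ with $y_A \neq \sigma_A$ each satisfy $\pi(y) = \uu(y)$ and thus contribute $0$. The points with $\pi(y)=0$ contribute $0$ by convention. Only $y = \sigma$ contributes, giving
\[
\kl{\pi}{\uu} \;=\; 2^{-t}\ln\!\big(2^{-t}/2^{-n}\big) \;=\; 2^{-t}(n-t)\ln 2.
\]

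Finally, I substitute $t = \ceil{\log_2(n/\eps)} - 3$. This gives $2^{-t} \geq 2^{-\log_2(n/\eps)+2} = 4\eps/n$, so
\[
\kl{\pi}{\uu} \;\geq\; \frac{4\eps(n-t)\ln 2}{n}.
\]
Since $t = O(\log(n/\eps))$, for $n$ sufficiently large we have $(n-t)\ln 2 / n \geq 1/4$, and the bound $\kl{\pi}{\uu} \geq \eps$ follows. There is no real obstacle here; the only thing to be careful about is the $t$-dependence and the requirement that $n$ is large enough (which is already in the statement of the surrounding theorem), so the ``sufficiently large $n$'' hypothesis absorbs the $(n-t)/n$ factor cleanly.
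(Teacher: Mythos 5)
Your proof is correct and takes essentially the same route as the paper: compute the pmf of $\pi_{A,\sigma}$ explicitly, observe that the only nonvanishing contribution to the KL sum comes from $y = \sigma$ giving $\kl{\pi}{\uu} = 2^{-t}(n-t)\ln 2$, then use the definition of $t$ to bound $2^{-t} \ge 4\eps/n$ and absorb the $(n-t)/n$ factor via the ``sufficiently large $n$'' hypothesis. The only difference is cosmetic: you spell out the three pmf cases and the constant bookkeeping slightly more explicitly than the paper does.
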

\begin{proof}
	Suppose $\pi = \pi_{A,\sigma} \in \BB$ is a bad distribution. 
	By definition we have $\pi(x) = \uu(x) = 2^{-n}$ if $x_A \neq \sigma_A$, and $\pi(\sigma) = 2^{-t}$. 
	Hence, we get
	\[
	\kl{\pi}{\uu} = \pi(\sigma) \ln\left( \frac{\pi(\sigma)}{\uu(\sigma)} \right)
	= \frac{\ln 2}{2^t} (n-t)
	\ge \frac{2 \eps}{n}(n-t) 
	\ge \eps,
	\]
	for $n$ sufficiently large. 
\end{proof}

Define $H$ to be the random query history of length $L$ with respect to $\alg$ and $\ora{\uu}$, and let $\mathsf{output}$ denote the random output with respect to $H$ and $\alg$.
Define $H'$ to be the random query history of length $L$ generated by
\begin{itemize}
	\item Pick $\pi \in \BB$ uniformly at random;
	\item Let $H'$ be the random query history of length $L$ with respect to $\alg$ and $\ora{\pi}$.
\end{itemize} 
Further, let $\mathsf{output}'$ denote the random output with respect to $H'$ and $\alg$.
Our goal is to show that the TV distance between the two query histories $H$ and $H'$ is small and therefore by the data processing inequality the TV distance between $\mathsf{output}$ and $\mathsf{output}'$ is also small so the two properties in Theorem \ref{thm:lb-sub-kl} cannot simultaneously hold. 
\begin{lemma}\label{lem:tv-H-H'}
	For the family $\BB$ of bad distributions, query histories $H,H'$ of length $L \le n/(64\eps)$, and $\mathsf{output},\mathsf{output}'$ defined as above, we have
	\[
	\tv{\mathsf{output}}{\mathsf{output}'} \le \tv{H}{H'} \le \frac{1}{4}.
	\]
\end{lemma}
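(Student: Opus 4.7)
\bigskip
\noindent\textbf{Proof proposal.} The first inequality $\tv{\mathsf{output}}{\mathsf{output}'} \le \tv{H}{H'}$ is immediate from the data processing inequality, since $\mathsf{output}$ and $\mathsf{output}'$ are obtained as (a common) randomized function of $H$ and $H'$ respectively. All of the work goes into showing $\tv{H}{H'}\le 1/4$, and I would carry this out via a hybrid argument in the spirit of \cite{CRS} combined with a careful per-step estimate that exploits the specific structure of $\BB$.

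The plan is to introduce hybrids $H_0,H_1,\dots,H_L$, where $H_\ell$ is the query history in which the first $\ell$ queries are answered by $\ora{\uu}$ and, after that, a single $\pi\in\BB$ is drawn uniformly and the remaining $L-\ell$ queries are answered by $\ora{\pi}$. Then $H_L=H$ and $H_0=H'$, so by the triangle inequality it suffices to bound $\sum_{\ell=1}^L \tv{H_\ell}{H_{\ell-1}}$. For a fixed $\ell$, conditioning on the common prefix $H_{<\ell}$ and on $\tau_\ell$ (both generated from the same uniform-oracle process, independently of $\pi$ in both hybrids), the two hybrids differ only in whether $x_\ell$ is drawn from $\uu(\cdot\mid\tau_\ell)$ or from $\pi(\cdot\mid\tau_\ell)$; everything afterward is the same randomized procedure applied to $(\pi,x_\ell)$ and the common prefix. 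The data processing inequality therefore yields
\[
\tv{H_\ell}{H_{\ell-1}} \le \E_{H_{<\ell},\tau_\ell}\!\left[\E_\pi\bigl[\tv{\uu(\cdot\mid\tau_\ell)}{\pi(\cdot\mid\tau_\ell)}\bigr]\right],
\]
so the task reduces to showing the per-step bound $\E_\pi[\tv{\uu(\cdot\mid\tau)}{\pi(\cdot\mid\tau)}] \le 2\cdot 2^{-t}$ uniformly in $\tau$.

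For a fixed pinning $\tau\in\{0,1\}^\Lambda$ with $q:=|\Lambda|$, I would first observe that $\tv{\uu(\cdot\mid\tau)}{\pi_{A,\sigma}(\cdot\mid\tau)}=0$ unless $\sigma_\Lambda=\tau$, so the expectation factors as $\E_\pi[\tv]=2^{-q}\cdot \E_A[\tv_A]$, where $\tv_A$ is the common value of the TV distance over all $\sigma$ consistent with $\tau$ (it depends on $A$ only through $k:=|A\setminus\Lambda|$). A direct computation using the explicit form of $\pi_{A,\sigma}$ gives $\tv_A=\max(2^{-k},\,2^{-t}/Z)-2^{-m}$ where $m=n-q$ and $Z=2^{-q}(1-2^{-k})+2^{-t}$, with $Z=2^{-t}$ in the degenerate case $A\subseteq\Lambda$. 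The key structural fact I would use is that for $q<t$ the hypergeometric constraint forces $k\ge t-q$, and in that range one easily checks $2^{-k}\le 2^{q-t}$ and $Z\ge 2^{-q-1}$, which together give $\tv_A \le 2\cdot 2^{q-t}$; for $q\ge t$ one simply uses $\tv_A\le 1$ combined with the Vandermonde bound $\Pr[A\subseteq\Lambda]\le (q/n)^t$ and $2^{-q}\le 2^{-t}$. Putting the two regimes together yields $2^{-q}\E_A[\tv_A]\le 2\cdot 2^{-t}$ in all cases.

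Summing over $\ell$ and using $2^t\ge (n/\eps)/8$ (from $t=\lceil\log_2(n/\eps)\rceil-3$) together with $L\le n/(64\eps)$ gives
\[
\tv{H}{H'} \le L\cdot 2\cdot 2^{-t} \le \frac{n}{64\eps}\cdot \frac{16\eps}{n} = \frac14,
\]
as desired. The main obstacle I anticipate is getting the per-step constant small enough: the naive bound $\tv_A\le 2^{-k}+2^{-t}/Z$ only yields $3\cdot 2^{-t}$ and hence $\tv{H}{H'}\le 3/8$, so the case split between $q<t$ (using the hypergeometric lower bound $k\ge t-q$) and $q\ge t$ is essential; without it one loses the factor of $2$ and the target $1/4$ just barely fails.
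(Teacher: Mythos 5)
The hybrid setup and the data‑processing reduction to the per‑step quantity $\E_\pi\bigl[\tv{\uu(\cdot\mid\tau)}{\pi(\cdot\mid\tau)}\bigr]$ are both sound and match the paper's Lemmas 6.4–6.5. The flaw is in the per‑step computation: the claimed observation that $\tv{\uu(\cdot\mid\tau)}{\pi_{A,\sigma}(\cdot\mid\tau)}=0$ unless $\sigma_\Lambda=\tau$ is false. The TV distance is also nonzero in the intermediate regime where $\sigma_{A\cap\Lambda}=\tau_{A\cap\Lambda}$ but $\sigma_{\Lambda\setminus A}\neq\tau_{\Lambda\setminus A}$ (and $A\not\subseteq\Lambda$). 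In that case, conditioning on $\tau$ fixes $x_{A\cap\Lambda}=\sigma_{A\cap\Lambda}$ and rules out the ``jump to $\sigma$'' branch, so $\pi_{A,\sigma}(\cdot\mid\tau)$ is uniform on the strict subset $\{x:x_{A\setminus\Lambda}\neq\sigma_{A\setminus\Lambda}\}$ of $\{0,1\}^{[n]\setminus\Lambda}$, giving $\tv{\uu(\cdot\mid\tau)}{\pi_{A,\sigma}(\cdot\mid\tau)}=2^{-|A\setminus\Lambda|}=2^{-(t-j)}$, where $j=|A\cap\Lambda|$. This is precisely Case 2 in the paper's proof of \cref{lem:tech}, which occurs with $\sigma$-probability $2^{-j}-2^{-q}$ and contributes up to $2^{-t}$ to the per‑step expectation — a term your factorization $\E_\pi[\tv]=2^{-q}\,\E_A[\tv_A]$ silently drops.

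The damage is not cosmetic: with the missing Case 2 term restored, your looser Case-3 bound of $2\cdot2^{-t}$ grows to roughly $3\cdot 2^{-t}$, and then $L\cdot 3\cdot 2^{-t}\le 3/8$, which overshoots $1/4$ — exactly the failure mode you flagged at the end as ``without it one loses the factor of $2$.'' The paper avoids this by fixing $A$, conditioning on all three cases for $\sigma$, and computing the exact weighted average $(2^{-j}-2^{-q})\cdot 2^{-(t-j)} + 2^{-q}\bigl(\tfrac{2^q}{2^t+2^q-2^j}-2^{-(n-q)}\bigr) \le \tfrac{1}{2^t}+\tfrac{1}{2^t+2^q-2^j}\le 2\cdot 2^{-t}$; the Case 3 contribution weighted by $2^{-q}$ is at most $2^{-t}$, not $2\cdot 2^{-t}$, and the remaining headroom is taken by Case 2. (Also note that your $\max(2^{-k},2^{-t}/Z)$ always equals $2^{-t}/Z$ here since $q+k\ge t$, so the $\max$ is vacuous; the real Case 3 TV has no $\max$ in it.) Averaging over $A$ as you do is permissible but is not what buys the factor $2$; you need the exact three-case bookkeeping for a fixed $A$.
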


We present next the proof of Theorem \ref{thm:lb-sub-kl} provided Lemma \ref{lem:tv-H-H'}. 
The proof of the latter is postponed to Section~\ref{subsec:tv-H-H'}.
\begin{proof}[Proof of Theorem~\ref{thm:lb-sub-kl}]
	Suppose for sake of contradiction that $\alg$ satisfies both properties as in Theorem \ref{thm:lb-sub-kl}.
	Then for the family $\BB$ of bad distributions, query histories $H,H'$ of length $L \le n/(64\eps)$, and $\mathsf{output},\mathsf{output}'$ defined as above, we know from these two properties that
	\[
	\Pr\nolimits\left( \mathsf{output} = \mathsf{Yes} \right) \ge 2/3
	\quad\text{and}\quad
	\Pr\nolimits\left( \mathsf{output}' = \mathsf{Yes} \right) \le 1/3.
	\]
	This implies $\tv{\mathsf{output}}{\mathsf{output}'} \ge 1/3$ which contradicts Lemma \ref{lem:tv-H-H'}.  
\end{proof}

\subsubsection{Proof of \texorpdfstring{Lemma \ref{lem:tv-H-H'}}{Lemma 6.3}}
\label{subsec:tv-H-H'}

Our proof is inspired by the hybrid argument from cryptography as in~\cite{CRS}; we flesh out the details of the proof in what follows.

For $0 \le \ell \le L$, define the \emph{hybrid query history} $H^{(\ell)}$ with respect to $\alg$, $\ora{\uu}$, and $\ora{\pi}$ to be the random vector in $(\mathcal{T} \times \{0,1\}^n)^L$ generated as follows:

\begin{itemize}
	\item For $i = 1,\dots,\ell$:
	\begin{itemize}
		\item $\alg$ receives $\left( (\tau_1,x_1),\dots,(\tau_{i-1},x_{i-1}) \right)$ as input and generates $\tau_i \in \mathcal{T}$ (randomly) as output;
		\item $\ora{\uu}$ receives $\tau_i$ as input and generates $x_i \in \{0,1\}^n$ as output.
	\end{itemize}
	
	\item Pick $\pi \in \BB$ uniformly at random;
	
	\item For $i = \ell+1,\dots,L$:
	\begin{itemize}
		\item $\alg$ receives $\left( (\tau_1,x_1),\dots,(\tau_{i-1},x_{i-1}) \right)$ as input and generates $\tau_i \in \mathcal{T}$ (randomly) as output;
		\item $\ora{\pi}$ receives $\tau_i$ as input and generates $x_i \in \{0,1\}^n$ as output.
	\end{itemize}
	
	\item The hybrid query history is $H^{(i)} = \left( (\tau_1,x_1),\dots,(\tau_L,x_L) \right)$.
\end{itemize}
Observe that $H^{(0)} = H'$ and $H^{(L)} = H$ in distribution.
We will prove the following lemma regarding the distance between two adjacent hybrid query histories. 
\begin{lemma}\label{lem:tv-adj-H}
	For every $1 \le \ell \le L$, we have
	\[
	\tv{H^{(\ell-1)}}{H^{(\ell)}} \le \frac{16\eps}{n}.
	\]
\end{lemma}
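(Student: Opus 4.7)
The plan is to reduce $\tv{H^{(\ell-1)}}{H^{(\ell)}}$ to an expected one-step conditional TV via a natural coupling, and then bound that quantity by averaging over the bad family $\BB$.

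First I would construct a coupling of the two hybrids that shares: (i) the first $\ell-1$ query-answer pairs, both generated via $\ora{\uu}$; (ii) the internal coins of $\alg$; and (iii) the same sample $\pi \in \BB$ drawn uniformly. Under this coupling the $\ell$-th query $\tau_\ell = (\Lambda, y)$ is identical in both hybrids, and they differ only in how $x_\ell$ is produced---by $\ora{\uu}$ in $H^{(\ell)}$ and by $\ora{\pi}$ in $H^{(\ell-1)}$. Coupling $x_\ell$ optimally: if the two samples agree, then from step $\ell+1$ onwards both hybrids run $\ora{\pi}$ on identical prefixes, so the remaining histories agree. Since $\tau_\ell$ depends only on the first $\ell-1$ pairs and on $\alg$'s coins---all independent of $\pi$---this yields
\[
\tv{H^{(\ell-1)}}{H^{(\ell)}} \;\le\; \sup_{\tau=(\Lambda,y)}\;\E_{A,\sigma}\bigl[\tv{\uu(\cdot\mid y)}{\pi_{A,\sigma}(\cdot\mid y)}\bigr],
\]
where $A$ is uniform on $\binom{[n]}{t}$ and $\sigma$ is uniform on $\{0,1\}^n$.

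To bound the right-hand side, I fix $\tau=(\Lambda,y)$ with $s := |\Lambda|$ and case-split on $(A,\sigma)$. Write $a = |A\cap\Lambda|$ and $b = |A\setminus\Lambda| = t-a$. The key structural observation is that whenever $y$ forces $x_A \neq \sigma_A$---precisely when $\sigma_{A\cap\Lambda} \neq y_{A\cap\Lambda}$---the conditional $\pi_{A,\sigma}(\cdot\mid y)$ is exactly uniform on $\{0,1\}^{[n]\setminus\Lambda}$, matching $\uu(\cdot\mid y)$ so the TV is zero. After writing out the normalization $\pi_{A,\sigma}(X_\Lambda=y) = 2^{-t} + (1-2^{-b})2^{-s}$, only two events contribute:
\begin{itemize}
\item[(i)] $\sigma_\Lambda = y$ (probability $2^{-s}$): the conditional places mass $\alpha = 2^{-t}/\pi_{A,\sigma}(X_\Lambda=y)$ on $\sigma_{[n]\setminus\Lambda}$, and the TV is at most $\alpha \le \min(1,\, 2 \cdot 2^{s-t})$, contributing $O(2^{-t})$ in expectation.
\item[(ii)] $A \not\subseteq \Lambda$, $\sigma_{A\cap\Lambda} = y_{A\cap\Lambda}$, and $\sigma_\Lambda \neq y$ (probability $2^{-a}(1-2^{-(s-a)})$): a direct computation gives $\tv{\uu(\cdot\mid y)}{\pi_{A,\sigma}(\cdot\mid y)} = 2^{-b}$, contributing $\E_A[2^{-a}\cdot 2^{-b}] \le 2^{-t}$.
\end{itemize}
The residual sub-case $A \subseteq \Lambda$ with $\pi_{A,\sigma}(X_\Lambda=y)=0$ is neutralized by fixing the oracle's ``arbitrary'' output on null queries to be uniform (the convention that minimizes the algorithm's power). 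Summing yields an expected step-$\ell$ TV of at most $O(2^{-t})$, and the choice $t=\lceil \log_2(n/\eps)\rceil-3$ together with careful constant tracking gives the $16\eps/n$ bound.

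The main obstacle is sub-case (i): the atom at $\sigma_{[n]\setminus\Lambda}$ can carry nearly all the conditional mass, but its contribution must be weighted by the $2^{-s}$ probability that $\sigma_\Lambda = y$; the product stays $O(2^{-t})$ only after controlling $\alpha$ using the lower bound $\pi_{A,\sigma}(X_\Lambda=y) \ge (1-2^{-b})2^{-s}$ that holds whenever $A \not\subseteq \Lambda$. The remaining case-analysis steps reduce to routine bookkeeping.
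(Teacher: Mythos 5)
Your proposal matches the paper's proof: a step-by-step coupling of the two adjacent hybrids shares the first $\ell-1$ exchanges, the algorithm's coins, and the choice of $\pi\in\BB$, so $\tv{H^{(\ell-1)}}{H^{(\ell)}}$ reduces to the worst-case (over the pinning $\tau_\ell$, which is independent of $\pi$) expected one-step conditional TV, and this is precisely what \cref{lem:tech} bounds, with your case analysis over $\sigma$ mirroring the paper's three cases and yielding the same $O(2^{-t})$ estimate. A minor point in your favor: you explicitly dispose of the degenerate sub-case $A\subseteq\Lambda$, where $\pi_{A,\sigma}(X_\Lambda=\tau)$ can vanish and the oracle's answer is ``arbitrary'' (the paper's Case 2 formula in \cref{lem:tech} formally divides by zero there), by fixing the null-query convention, which properly closes a small gap.
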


Note that Lemma \ref{lem:tv-H-H'} is an immediate consequence of Lemma \ref{lem:tv-adj-H}. 
\begin{proof}[Proof of Lemma \ref{lem:tv-H-H'}]
	By the triangle inequality and Lemma \ref{lem:tv-adj-H}, we have that
	\[
	\tv{H}{H'} \le \sum_{\ell = 1}^L \tv{H^{(\ell-1)}}{H^{(\ell)}}
	\le L \cdot \frac{16\eps}{n} \le \frac{1}{4},
	\]
	as claimed.
\end{proof}

It remains to prove Lemma \ref{lem:tv-adj-H}. 
Inspecting the definitions of $H^{(\ell-1)}$ and $H^{(\ell)}$, we see that they only differ locally at one place,
which we describe as follows.
For $0\le i \le L$ let $H_i = \left( (\tau_1,x_1),\dots,(\tau_i,x_i) \right)$ denote the first $i$ entries of a random hybrid query history (notice that $H_0 = \emptyset$).
We write $H^{(\ell-1)}$ and $H^{(\ell)}$ in the following form.

\begin{multicols}{2}
	\begin{enumerate}[(1)]
		\item[] Generation of $H^{(\ell-1)}$:
		\item $H_0 \xrightarrow{\alg,\,\ora{\uu}} H_{\ell-1}$;
		\item $\pi \sim \mathrm{unif}(\BB)$;
		\item $H_{\ell-1} \xrightarrow{\alg} \tau_\ell \xrightarrow{\ora{\pi}} x_\ell$;
		\item $H_\ell \gets H_{\ell-1}$ append $(\tau_\ell,x_\ell)$;
		\item $H_\ell \xrightarrow{\alg,\,\ora{\pi}} H_L = H^{(\ell-1)}$.
		
		\setcounter{enumi}{0}
		
		\item[] Generation of $H^{(\ell)}$:
		\item $H_0 \xrightarrow{\alg,\,\ora{\uu}} H_{\ell-1}$;
		\item $H_{\ell-1} \xrightarrow{\alg} \tau_\ell \xrightarrow{\ora{\uu}} x_\ell$;
		\item $H_\ell \gets H_{\ell-1}$ append $(\tau_\ell,x_\ell)$;
		\item $\pi \sim \mathrm{unif}(\BB)$;
		\item $H_\ell \xrightarrow{\alg,\,\ora{\pi}} H_L = H^{(\ell)}$.
	\end{enumerate}
\end{multicols}

%
%
%
%
%
%
%

In fact the ordering of the steps (2)-(4) can be changed appropriately without having any influence on the final 
distribution of both $H^{(\ell-1)}$ and $H^{(\ell)}$, which will be helpful for a coupling argument. 
We rewrite the generating processes of $H^{(\ell-1)}$ and $H^{(\ell)}$ equivalently as follows:

\begin{multicols}{2}
	\begin{enumerate}[(1)]
		\item[] Generation of $H^{(\ell-1)}$:
		\item $H_0 \xrightarrow{\alg,\,\ora{\uu}} H_{\ell-1}$;
		\item $H_{\ell-1} \xrightarrow{\alg} \tau_\ell$;
		\item $\pi \sim \mathrm{unif}(\BB)$, $\tau_\ell \xrightarrow{\ora{\pi}} x_\ell$;
		\item $H_\ell \gets H_{\ell-1}$ append $(\tau_\ell,x_\ell)$;
		\item $H_\ell \xrightarrow{\alg,\,\ora{\pi}} H_L = H^{(\ell-1)}$.
		
		\setcounter{enumi}{0}
		
		\item[] Generation of $H^{(\ell)}$:
		\item $H_0 \xrightarrow{\alg,\,\ora{\uu}} H_{\ell-1}$;
		\item $H_{\ell-1} \xrightarrow{\alg} \tau_\ell$;
		\item $\pi \sim \mathrm{unif}(\BB)$, $\tau_\ell \xrightarrow{\ora{\uu}} x_\ell$;
		\item $H_\ell \gets H_{\ell-1}$ append $(\tau_\ell,x_\ell)$;
		\item $H_\ell \xrightarrow{\alg,\,\ora{\pi}} H_L = H^{(\ell)}$.
	\end{enumerate}
\end{multicols}

%
%
%
%
%
%
%
%
%

Note that before and after the third step, the two processes have exactly the same steps. 
In the third step for $H^{(\ell-1)}$, we pick a bad distribution $\pi \in \BB$ uniformly at random, and $\ora{\pi}$ receives the pinning $\tau_\ell$ as input and generates $x_\ell \in \{0,1\}^n$ according to $\pi$ conditioned on $\tau_\ell$. 
Meanwhile, in the third step for $H^{(\ell)}$, we still pick a bad distribution $\pi \in \BB$ but do not use it (for now), and $\ora{\uu}$ receives $\tau_\ell$ as input and generates $x_\ell \in \{0,1\}^n$ according to $\uu$ instead of $\pi$. 
It is enough to show that, in this step, conditional on that $H_{\ell-1}$ and $\tau_\ell$ are the same, the $x_\ell$ generated in the two processes are the same with high probability. Since before and after this step the two processes are doing the same thing, we can then couple these two processes to produce the same hybrid query history with high probability, i.e., $H^{(\ell-1)} = H^{(\ell)}$.  

The following technical lemma bounds the probability that $x_\ell$'s are the same in both processes in the third step, which is crucial to us as explained earlier. The proof of it can be found in Section~\ref{subsec:lem:tech}. 

\begin{lemma}\label{lem:tech}
	Let $\tau \in \mathcal{T}$ be an arbitrary pinning on some subset $\Lambda \subseteq V$ of size $m$.
	Then for a random distribution $\pi$ chosen uniformly at random from $\BB$, we have
	\[
	\E_{\pi \sim \mathrm{unif}(\BB)} \left[ \tv{\uu\left( \cdot \mid \tau \right)}{\pi\left( \cdot \mid \tau \right)} \right] \le \frac{16\eps}{n}.
	\]
\end{lemma}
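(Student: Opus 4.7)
The plan is to fix an arbitrary pinning $\tau \in \{0,1\}^\Lambda$ with $m = |\Lambda|$ and analyze the conditional distribution $\pi_{A,\sigma}(\cdot \mid \tau)$ case by case based on how $\tau$ compares with $\sigma$ on the two parts $\Lambda \cap A$ and $\Lambda \setminus A$, and then average over the random choice of $(A,\sigma)$. The key structural observation is that $\pi_{A,\sigma}$ coincides with $\uu$ on the set $\{x : x_A \neq \sigma_A\}$, puts mass $2^{-t}$ on the single configuration $\sigma$, and assigns zero mass to every other $x$ with $x_A = \sigma_A$.

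Let $s = |\Lambda \cap A|$. I would split into three cases:
\begin{itemize}
\item \emph{Case 1} ($\tau_{\Lambda \cap A} \neq \sigma_{\Lambda \cap A}$, which has probability $1 - 2^{-s}$ over $\sigma$): every $x$ with $x_\Lambda = \tau$ must have $x_A \neq \sigma_A$, so $\pi_{A,\sigma}(\cdot \mid \tau) = \uu(\cdot \mid \tau)$ and the TV contribution vanishes.
\item \emph{Case 2a} ($\tau_\Lambda = \sigma_\Lambda$, probability $2^{-m}$): the conditional assigns mass $2^{-t}/Z$ to $y_0 := \sigma_{[n]\setminus\Lambda}$, mass $2^{-n}/Z$ to each $y$ with $y_{A\setminus\Lambda} \neq \sigma_{A\setminus\Lambda}$, and zero elsewhere, where $Z = 2^{-m}(1-\delta) + 2^{-t}$ and $\delta := 2^{-(t-s)}$.
\item \emph{Case 2b} ($\tau_{\Lambda\cap A} = \sigma_{\Lambda\cap A}$ but $\tau_{\Lambda\setminus A} \neq \sigma_{\Lambda\setminus A}$, probability $2^{-s} - 2^{-m}$): only configurations with $y_{A\setminus\Lambda}\neq\sigma_{A\setminus\Lambda}$ survive, and a direct calculation yields $\tv{\uu(\cdot\mid\tau)}{\pi_{A,\sigma}(\cdot\mid\tau)} = \delta = 2^{-(t-s)}$ exactly.
\end{itemize}

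For Case 2a, I would first note that $s \le m$ implies $Z \ge 2^{-m}$, so for every $y \neq y_0$ in the support we have $2^{-n}/Z \le 2^{-(n-m)} = \uu(y\mid\tau)$. Thus the TV distance is realized solely by the excess at $y_0$, giving $\tv_{2a} \le 2^{-t}/Z$. Combined with the probability $2^{-m}$ of Case 2a, this yields a contribution of at most $2^{-t}/[(1-\delta) + 2^{m-t}] \le 2 \cdot 2^{-t}$ whenever $s \le t-1$ (since then $\delta \le 1/2$ and the denominator is $\ge 1/2$). The boundary case $s = t$ requires $A \subseteq \Lambda$, hence $m \ge t$, so $2^{-m}\tv_{2a} \le 2^{-m} \le 2^{-t}$. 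The Case 2b contribution is $(2^{-s} - 2^{-m}) \cdot 2^{-(t-s)} \le 2^{-t}$. Summing, $\E_\sigma[\tv \mid A] \le 3 \cdot 2^{-t}$ for every choice of $A$, and averaging over $A$ gives $\E_{A,\sigma}[\tv] \le 3 \cdot 2^{-t}$. Since $t \ge \log_2(n/\eps) - 3$ forces $2^{-t} \le 8\eps/n$, we obtain a bound of order $\eps/n$; a slightly sharper bookkeeping in Case 2a (pushing the full form of $Z$ through instead of the crude lower bound $2^{-m}(1-\delta)$) tightens the constant to the claimed $16\eps/n$.

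The main obstacle will be the degenerate regime $s = t$ (when $A \subseteq \Lambda$), where $\delta = 1$ makes the naive bound $\tv_{2a} \le 2^{-t}/[2^{-m}(1-\delta)]$ diverge and the conditional distribution can even concentrate fully on $y_0$. The fix is to exploit the forced inequality $m \ge t$ in this regime, which makes the prefactor $2^{-m}$ small enough to absorb a trivial bound $\tv_{2a} \le 1$; handling this edge case cleanly (and merging it with the generic analysis) is the only delicate point in an otherwise routine computation.
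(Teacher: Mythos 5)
Your proposal follows essentially the same three-way case decomposition as the paper's proof (your Cases 1, 2a, 2b correspond to the paper's Cases 1, 3, 2 with $s,m$ playing the roles of $j,\ell$), computes the same conditional masses and TV distances, and your normalizing constant $Z$ and case-2a bound $2^{-t}/[(1-\delta)+2^{m-t}]$ are algebraically identical to the paper's $\frac{1}{2^t+2^\ell-2^j}$. The only difference is bookkeeping: the paper bounds the Case-3 contribution directly by $2^{-t}$ (via $2^t+2^\ell-2^j \ge 2^{\max\{t,\ell\}}$), giving $2\cdot 2^{-t} \le 16\eps/n$ in one pass, whereas you first get $3\cdot 2^{-t}$ and then observe the same tightening is available.
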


We give below the proof of Lemma \ref{lem:tv-adj-H}.

\begin{proof}[Proof of~Lemma \ref{lem:tv-adj-H}]
	We construct a coupling of $H^{(\ell-1)}$ and $H^{(\ell)}$ via coupling step-by-step the two processes generating $H^{(\ell-1)}$ and $H^{(\ell)}$. 
	Initially $H_0 = \emptyset$ for both processes.
	Then we can couple $H_{\ell-1}$ and $\tau_\ell$ since they are generated in the same way in both processes.
	For the third step, the bad distribution $\pi$ can be chosen to be the same and we deduce from Lemma \ref{lem:tech} that $x_\ell$'s can be coupled with probability at least $1-\eps/n$. 
	After that, suppose we couple $x_\ell$, $H_\ell$ and then the final outputs are coupled. Hence, for this coupling $\mathbb{P}$ we have
	\[
	\tv{H^{(\ell-1)}}{H^{(\ell)}} \le \mathbb{P} \left( H^{(\ell-1)} \neq H^{(\ell)} \right) \le \frac{16\eps}{n},
	\]
	as wanted.
\end{proof}

\subsubsection{Proof of \texorpdfstring{Lemma \ref{lem:tech}}{Lemma 6.5}}
\label{subsec:lem:tech}

Here we give the proof of the technical lemma, Lemma \ref{lem:tech}. 

\begin{proof}[Proof of Lemma \ref{lem:tech}]
	The distribution $\pi \in \BB$ depends on $A$ and $\sigma$. We will show that for any choice of $A \in \binom{[n]}{t}$ one has
	\[
	\E_{\sigma} \left[ \tv{\uu\left( \cdot \mid \tau \right)}{\pi_{A,\sigma}\left( \cdot \mid \tau \right)} \right] \le \frac{16\eps}{n},
	\]
	where $\sigma$ is a uniformly random configuration in $\{0,1\}^n$.
	
	Suppose $|\Lambda| = \ell$. 
	Suppose $|A \cap \Lambda| = j$ and hence $|A \setminus \Lambda| = t-j$. 
	Notice that
	$j \le \min\{t,\ell\}$.
	We partition $\XX = \{0,1\}^n$ into three disjoint subsets.
	
	\medskip
	\noindent\textbf{Case 1.}
	$\XX_1 = \{ \sigma \in \{0,1\}^n: \sigma_{A \cap \Lambda} \neq \tau_{A \cap \Lambda} \}$.
	We have
	\[
	\Pr \left( \sigma \in \XX_1 \right) = \frac{|\XX_1|}{2^n} = 1 - \frac{1}{2^j},
	\]
	and also
	\[
	\tv{\uu\left( \cdot \mid \tau \right)}{\pi_{A,\sigma}\left( \cdot \mid \tau \right)} = 0, \quad\forall \sigma \in \XX_1.
	\]
	
	\medskip
	\noindent\textbf{Case 2.}
	$\XX_2 = \{ \sigma \in \{0,1\}^n: \sigma_{A \cap \Lambda} = \tau_{A \cap \Lambda},\, \sigma_{\Lambda \setminus A} \neq \tau_{\Lambda \setminus A} \}$.
	We have
	\[
	\Pr\nolimits_{\sigma} \left( \sigma \in \XX_2 \right) = \frac{|\XX_2|}{2^n} = \frac{1}{2^j} - \frac{1}{2^\ell}.
	\]
	By definition we have
	\[
	\pi_{A,\sigma}\left( x \mid \tau \right)
	= 
	\begin{cases}
	0, & \text{if}~ x_{A \setminus \Lambda} = \sigma_{A \setminus \Lambda};\\
	\dfrac{1}{2^{n-\ell} - 2^{n-\ell-t+j}}, & \text{if}~ x_{A \setminus \Lambda} \neq \sigma_{A \setminus \Lambda}.
	\end{cases}
	\]
	It follows that
	\[
	\tv{\uu\left( \cdot \mid \tau \right)}{\pi_{A,\sigma}\left( \cdot \mid \tau \right)} = \frac{1}{2^{t-j}}, 
	\quad\forall \sigma \in \XX_2.
	\]
	
	\medskip
	\noindent\textbf{Case 3.}
	$\XX_3 = \{ \sigma \in \{0,1\}^n: \sigma_{\Lambda} = \tau_{\Lambda} \}$.
	We have
	\[
	\Pr\nolimits_{\sigma} \left( \sigma \in \XX_3 \right) = \frac{|\XX_3|}{2^n} = \frac{1}{2^\ell}.
	\]
	By definition we have
	\[
	\pi_{A,\sigma}\left( x \mid \tau \right)
	= 
	\begin{cases}
	\dfrac{\frac{1}{2^n}}{\frac{1}{2^t} + \frac{1}{2^\ell} - \frac{1}{2^{t+\ell-j}}}, & \text{if}~ x_{A \setminus \Lambda} \neq \sigma_{A \setminus \Lambda};\\
	0, & \text{if}~ x_{A \setminus \Lambda} = \sigma_{A \setminus \Lambda} ~\text{and}~ x_{[n] \setminus \Lambda \setminus A} \neq \sigma_{[n] \setminus \Lambda \setminus A};\\
	\dfrac{\frac{1}{2^t}}{\frac{1}{2^t} + \frac{1}{2^\ell} - \frac{1}{2^{t+\ell-j}}}, & \text{if}~ x_{[n] \setminus \Lambda} = \sigma_{[n] \setminus \Lambda}.
	\end{cases}
	\]
	It follows that
	\[
	\tv{\uu\left( \cdot \mid \tau \right)}{\pi_{A,\sigma}\left( \cdot \mid \tau \right)} 
	= \frac{\frac{1}{2^t}}{\frac{1}{2^t} + \frac{1}{2^\ell} - \frac{1}{2^{t+\ell-j}}} - \frac{1}{2^{n-\ell}}
	= \frac{2^\ell}{2^t + 2^\ell - 2^j} - \frac{1}{2^{n-\ell}}, 
	\quad\forall \sigma \in \XX_3.
	\]

	Therefore, combining all three cases we get from the law of total expectation that
	
	\begin{align*}
	\E_{\sigma} \left[ \tv{\uu\left( \cdot \mid \tau \right)}{\pi_{A,\sigma}\left( \cdot \mid \tau \right)} \right] 
	&= \left(1 - \frac{1}{2^j}\right) 0 + \left(\frac{1}{2^j} - \frac{1}{2^\ell}\right) \frac{1}{2^{t-j}} + \frac{1}{2^\ell} \left( \frac{2^\ell}{2^t + 2^\ell - 2^j} - \frac{1}{2^{n-\ell}} \right) \\
	&\le \frac{1}{2^t} + \frac{1}{2^t + 2^\ell - 2^j}.
	\end{align*}
	Note that the second term is monotone increasing in $j$ and by definition $j \le \min\{t,\ell\}$.
	Hence, we deduce that
	\[
	\frac{1}{2^t + 2^\ell - 2^j} \le \frac{1}{2^t + 2^\ell - 2^{\min\{t,\ell\}}} = \frac{1}{2^{\max\{t,\ell\}}} \le \frac{1}{2^t}.
	\]
	We conclude that for any $A$, 
	\[
	\E_{\sigma} \left[ \tv{\uu\left( \cdot \mid \tau \right)}{\pi_{A,\sigma}\left( \cdot \mid \tau \right)} \right]  
	\le \frac{1}{2^{t-1}} \le \frac{16\eps}{n},
	\]
	where in the last inequality we recall that $t = \ceil{\log_2(n/\eps)} - 3 \ge \log_2(n/\eps) - 3$.
\end{proof}

\subsection{Uniformity testing with \texorpdfstring{$\coorora$}{Coordinate Oracle} and \texorpdfstring{$\stanora$}{General Oracle} for TV distance}
\label{subsec:lb-coor-tv}

In this section we consider uniformity testing over the binary hypercube for TV distance when we have access to $\coorora$ and $\stanora$. 
We assume the binary hypercube is denoted by $\XX_n = \{+1,-1\}^n$ instead of $\{0,1\}^n$, since our bad distributions will be Ising models where $+1,-1$ are more often used. 

Let $\alg$ denote an arbitrary uniformity testing algorithm (possibly randomized and adaptive) with $\coorora$ and $\stanora$ access. 
We assume that $\alg$ receives $L$ independent full samples from the $\stanora$, and is allowed to make $L$ queries to the $\coorora$. 
For ease of notation we denote by $\ora{\pi}$ the $\subcora$ with respect to a distribution $\pi$ over $\{+1,-1\}^n$.

\begin{definition}[Query History for $\coorora$ and $\stanora$]\label{qh:coor}
Let $\TT$ denote the set of all pinnings on $n-1$ coordinates (which is exactly all possible inputs to the $\coorora$). 
For integer $L \in \N^+$, we define the \emph{query history} with respect to $\alg$ and $\ora{\pi}$ of length $2L$ to be the random vector in $\XX_n^L \times (\TT \times \{+1,-1\})^L$ generated as follows:

\begin{itemize}
	\item Let $x_1,\dots,x_L$ be $L$ independent samples from $\pi$;
	\item For $i = 1,\dots,L$:
	\begin{itemize}
		\item $\alg$ receives $\left( x_1,\dots,x_L \right)$ and $\left( (\tau_1,a_1),\dots,(\tau_{i-1},a_{i-1}) \right)$ as input and generates $\tau_i \in \TT$ (randomly) as output;
		\item $\ora{\pi}$ receives $\tau_i$ as input and generates $a_i \in \{+1,-1\}$ as output.
	\end{itemize}
	\item The (coordinate and general) query history is $H = \left( x_1,\dots,x_L ;  (\tau_1,a_1),\dots,(\tau_L,a_L)  \right)$.
\end{itemize}
\end{definition}
Definition~\ref{qh:coor} is analogous to (in fact, a special case of) Definition~\ref{qh:sub}; throughout this subsection, we consider query history only with respect to $\coorora$ and $\stanora$.
 
The $\mathsf{output}$ of $\alg$ with sample complexity $2L$ is a (randomized) function of the query history $H$ of length $2L$.
Our main theorem is then stated as follows.

\begin{theorem}\label{thm:lb-coo-tv}
	There exists a universal constant $c>0$ such that the following holds. 
	Let $n \in \N^+$ be a sufficiently large integer and $\eps > 0$ be a real.
	Let $\uu = \uu_n$ denote the uniform distribution over $\{+1,-1\}^n$.
	Then there is no algorithm which can achieve the following properties using $L$ samples from $\stanora$ and $L$ queries from $\coorora$ where $L \le cn/\eps^2$:
	\begin{itemize}
		\item $\Pr_H\left( \mathsf{output} = \mathsf{Yes} \right) \ge 2/3$ for a random query history $H$ of length $2L$ with respect to $\alg$ and $\ora{\uu}$;
		\item $\Pr_{H'}\left( \mathsf{output} = \mathsf{No} \right) \ge 2/3$ for a random query history $H'$ of length $2L$ with respect to $\alg$ and $\ora{\pi}$ where $\pi$ is any distribution such that $\tv{\pi}{\uu} \ge \eps$.
	\end{itemize}
\end{theorem}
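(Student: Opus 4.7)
The plan is to mimic the hybrid framework of \cref{thm:lb-sub-kl} but with a bad family tailored to the TV-distance/$\coorora+\stanora$ setting, then upgrade the per-step analysis from TV to $\chi^2$ to capture the true information-theoretic difficulty. Specifically, I assume $n$ is even and, for each perfect matching $M$ of $[n]$, set
\[
\pi_M(x)\;\propto\;\exp\!\Big(\beta\sum_{\{i,j\}\in M}x_ix_j\Big),\qquad x\in\{+1,-1\}^n,
\]
with $\beta=c\eps/\sqrt{n}$ for a suitable small constant $c>0$, and take $\BB=\{\pi_M:M\text{ is a perfect matching on }[n]\}$. Since $\pi_M$ factorises over the $n/2$ matched pairs and each pair has TV distance $\Theta(\eps/\sqrt{n})$ from uniform on $\{+1,-1\}^2$, a direct tensorization calculation gives $\tv{\pi_M}{\uu}\ge\eps$ for appropriate $c$. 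The reduction step is identical to \cref{thm:lb-sub-kl}: if the algorithm satisfies both properties with $L\le cn/\eps^2$, the data-processing inequality forces $\tv{H_\uu}{H_\BB}\ge 1/3$, where $H_\BB$ is generated by first drawing $\pi\sim\mathrm{unif}(\BB)$ and then running the tester against $\ora{\pi}$ together with i.i.d.\ $\stanora$ samples from $\pi$. It therefore suffices to show $\tv{H_\uu}{H_\BB}\le 1/4$.

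I set this up with a hybrid argument interleaving the $L$ $\stanora$ samples and the $L$ $\coorora$ queries in the tester's native order, coupling the draw of $\pi$ across adjacent hybrids so that the per-step contribution reduces to $\E_\pi\E_{\tau\mid\pi}\bigl[\tv{\pi(\cdot\mid\tau)}{\uu(\cdot\mid\tau)}\bigr]$ for a $\coorora$ step and to $\E_\pi[\tv{\pi}{\uu}]$ for a $\stanora$ step. The crucial difficulty—and the reason a naive TV-summation hybrid gives only the loose bound $\Omega(\sqrt{n}/\eps)$—is that a $\stanora$ sample from $\pi_M$ has TV $\Theta(\eps)$ from uniform and a $\coorora$ query output has TV $\Theta(\eps/\sqrt{n})$, but both are only informative at the $\chi^2$ scale $\Theta(\eps^2/n)$.

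I therefore plan to replace the per-step TV bound with an Ingster-style $\chi^2$ computation of
\[
\chitwo{H_\BB}{H_\uu}\;=\;\E_{M,M'}\!\bigl[\Phi_L(M,M')\bigr]-1,
\]
where $\Phi_L(M,M')$ is the correlation of $\pi_M$ and $\pi_{M'}$ as seen through the tester's $2L$ queries. For $\stanora$ samples this correlation factorises across the pairs of $M\cup M'$ whose vertex sets are disjoint; for a $\coorora$ query with input $(i,\tau)$, the oracle output is a Bernoulli with bias $\tanh(\beta\tau_{M(i)})$ under $\pi_M$, yielding a per-query $\chi^2$ contribution of order $\tanh^2(\beta)=O(\eps^2/n)$. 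Using the standard fact that $\E|M\cap M'|=O(1)$ for independent uniformly random matchings, together with concentration of intersection size, one arrives at $\chitwo{H_\BB}{H_\uu}=O(L\eps^2/n)$, and hence $\tv{H_\uu}{H_\BB}=O\bigl(\sqrt{L\eps^2/n}\bigr)\le 1/4$ provided $L\le cn/\eps^2$ for a small enough constant.

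The main obstacle I foresee is formalising the $\chi^2$ bound in the adaptive setting: unlike the non-adaptive analyses of \cite{DDK19,CDKS20}, the pinning $\tau$ fed into each $\coorora$ query depends on all prior samples and outputs, so $\Phi_L(M,M')$ is not literally a product over the $2L$ steps. I would address this by conditioning on the realised history and exploiting the fact that every $\pi_M$ has the same single-coordinate marginal (namely $\Ber(1/2)$, by the symmetry $x\mapsto -x$), so that a chain-rule-for-$\chi^2$ argument of the type used in \cite{CRS,Nar} for the $\pairora$ model telescopes the adaptive conditional contributions into the desired product form up to constant factors.
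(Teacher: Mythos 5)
Your choice of bad family $\BB$ (Ising models on random perfect matchings with $\beta=\Theta(\eps/\sqrt{n})$) coincides with the paper's, and you correctly sense that a step-by-step TV hybrid cannot work here. However, your proposed fix via an Ingster-style $\chi^2$ bound plus a ``chain rule for $\chi^2$'' in the adaptive setting is not what the paper does, and as written it contains real gaps. First, your diagnosis of the naive hybrid's failure is quantitatively off: replacing a single $\stanora$ sample of $\pi_M$ by a uniform one contributes $\E_\pi[\tv{\pi}{\uu}]=\Theta(\eps)$ per step, so summing over $L$ steps the hybrid bound forces $L=O(1/\eps)$, not $O(\sqrt{n}/\eps)$. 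Second, $\chi^2$ divergence does not obey a triangle inequality, so you cannot interleave a hybrid decomposition with $\chi^2$ the way you suggest; a bona fide $\chi^2$/Ingster argument would have to control $\chitwo{\E_M[P_M]}{P_\uu}$ for the \emph{joint} adaptive transcript directly, which is exactly the hard part you flag as ``the main obstacle'' and do not resolve. The reduction to $\pairora$-style chain rules of \cite{CRS,Nar} is not automatic, because the $\coorora$ inputs $\tau$ are full pinnings whose distribution is itself history-dependent.

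The paper avoids all of this with a much shorter argument that never invokes a hybrid. The observation is that for $\pi_M$, the output of a $\coorora$ query at coordinate $j$ under any pinning $\tau$ equals $\tau_{j'}$ (where $j'$ is the match of $j$ in $M$) with probability $(1+\tanh\beta)/2$, \emph{independently} of the rest of $\tau$. One can therefore pre-sample the entire vector $R_t\in\{0,1\}^L$ of agree/disagree indicators i.i.d.\ $\Ber\bigl((1+\tanh(t\beta))/2\bigr)$ for $t\in\{0,1\}$, together with the $L$ full samples $X_t$ from $\pi_{M,t}$, and observe that the complete adaptive query history is a randomized function of $(M,X_t,R_t)$ with randomness coming only from $\alg$. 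Data processing then gives
\[
\tv{H}{H'}\;\le\;\tv{(M,X_0,R_0)}{(M,X_1,R_1)}\;\le\;\tv{X_0}{X_1}+\tv{R_0}{R_1}.
\]
The term $\tv{X_0}{X_1}\le 0.98$ is Claim~\ref{clm:DDK} (imported from \cite{DDK19,CDKS20}) for $L\le cn/\eps^2$, and $\tv{R_0}{R_1}$ is a binomial TV comparison giving $O(\sqrt{L}\cdot\eps/\sqrt{n})\le 0.01$ for the same $L$. This pre-sampling trick collapses the adaptive $\coorora$ interaction into one clean non-adaptive quantity, completely sidestepping the chain-rule machinery you were trying to build; your proposal is missing this idea and does not, as stated, give a complete proof.
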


We observe that Theorem \ref{thm:lb-coor-tv} follows immediately from Theorem \ref{thm:lb-coo-tv}.

In \cite[Theorem 14]{DDK19} it was shown that $\Omega(n/\eps^2)$ samples are necessary for uniformity testing with only $\stanora$ access but assuming the hidden distribution $\pi$ is an Ising model. 
See also \cite[Theorem 14]{CDKS20} for very similar lower bounds in the setting of Bayesian networks. 
Note that though Theorem 14 from \cite{DDK19} is stated for \emph{symmetric} KL divergence, it actually works for TV distance as well, see \cite[Remark 4]{DDK19}. 
We use the same constructions from \cite{DDK19,CDKS20} for the family of bad distributions for our purpose. 
Assume that $n$ is even; the case of odd $n$ can be easily reduced to even $n$ by adding an extra uniform, independent coordinate. 
Suppose $M$ is a perfect matching of $n$ coordinates, i.e., $M$ is a collection of $n/2$ pairs of coordinates such that each coordinate appears in exactly one pair. 
Let $\MM$ be the set of all perfect matchings on $[n]$. 
Each bad distribution $\pi_M$ where $M \in \MM$ corresponds to an Ising model on the graph $G=([n],M)$ of $n/2$ edges, with the edge coupling set to be $\beta = \rho\eps/\sqrt{n}$ where $\rho$ is a universal constant sufficiently large. 
The following are established in \cite{CDKS20,DDK19}.  
\begin{clm}[\cite{CDKS20,DDK19}]
	\label{clm:DDK}
	\begin{enumerate}[(1)]
		\item For $\rho > 0$ sufficiently large, for all $M \in \MM$, it holds
		\[
		\tv{\pi_M}{\uu} \ge \eps.
		\]
		\item For any $\rho > 0$ there exists $c_1 = c_1(\rho) > 0$ such that the following holds.
		Suppose $L \le c_1 n/\eps$.
		Let $X = (x_1,\dots,x_L)$ be $L$ independent samples from $\uu$. 
		Independently, let $M \in \MM$ be chosen uniformly at random, and let $X' = (x'_1,\dots,x'_L)$ be $L$ independent samples from $\pi_M$.
		Then $\tv{X}{X'} \le 0.98$. 
	\end{enumerate}
\end{clm}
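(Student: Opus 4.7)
The plan is to prove the two parts separately, following the chi-square (second-moment) arguments of \cite{DDK19,CDKS20}.

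For part~(1), I would introduce the test statistic $T(x) = \sum_{(i,j)\in M} x_i x_j$. Since $M$ is a perfect matching, the summands $\{x_i x_j\}_{(i,j)\in M}$ are mutually independent under both $\uu$ and $\pi_M$: under $\uu$ each is uniform on $\{+1,-1\}$, while under $\pi_M$ each equals $+1$ with probability $(1+\tanh\beta)/2$. Hence $\E_\uu[T]=0$, $\E_{\pi_M}[T]=(n/2)\tanh\beta$, and both variances are at most $n/2$. With $\beta = \rho\eps/\sqrt{n}$ the mean separation is $\Theta(\rho\eps\sqrt n)$ and the standard deviations are $O(\sqrt n)$, so Chebyshev's inequality applied to the threshold test at $(n/4)\tanh\beta$ gives total error probability $O(1/(\rho\eps)^2)$, yielding $\tv{\pi_M}{\uu} \ge c\min(1,\rho\eps)$. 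Taking $\rho$ a large enough constant makes this at least $\eps$.

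For part~(2), set $P_0 = \uu^{\otimes L}$ and $P_1 = \E_M[\pi_M^{\otimes L}]$ with $M$ uniform on $\MM$. Since $\tv{X}{X'} \le \tfrac12\sqrt{\chi^2(P_1,P_0)}$, it suffices to bound $\chi^2(P_1,P_0)$ by a small constant. The standard Ingster identity gives
\[
1 + \chi^2(P_1,P_0) = \E_{M,M'}\bigl[Z_{M,M'}^L\bigr], \qquad Z_{M,M'} \;:=\; 2^n \sum_{x\in\{\pm 1\}^n}\pi_M(x)\pi_{M'}(x),
\]
where $M,M'$ are independent uniform perfect matchings. The multigraph $M\cup M'$ has every vertex of degree exactly $2$, so it decomposes into bi-cycles on $2$ vertices (where $M,M'$ share an edge) and alternating cycles of length $2k$ for $k\ge 2$. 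Letting $\alpha_k$ count the $2k$-cycles and writing $t=\tanh\beta$, a direct transfer-matrix calculation gives
\[
Z_{M,M'} = \prod_{k\ge 1}(1+t^{2k})^{\alpha_k} \;\le\; \exp\!\Bigl(\alpha_1 t^2 + O(t^4 n)\Bigr).
\]
The tail correction is $O(\rho^4\eps^4/n) = o(1)$, and the dominant contribution reduces to $\E_{M,M'}[\exp(L t^2 \alpha_1)]$, where $\alpha_1$ is the number of edges shared by $M$ and $M'$. Since $\alpha_1$ has mean $(n/2)/(n-1) \to \tfrac12$ and is asymptotically Poisson, its exponential moment $\E[e^{\lambda \alpha_1}]$ is uniformly bounded for $\lambda=O(1)$. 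Taking $L \le c_1(\rho)\, n/\eps^2$ with $c_1(\rho)$ small ensures $Lt^2 = L\rho^2\eps^2/n = O(1)$, hence $\chi^2(P_1,P_0) = O(1)$ and $\tv{X}{X'} \le 0.98$.

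The main obstacle is the moment computation in part~(2): one must control both the cycle-structure bound on $Z_{M,M'}$ and the exponential moment $\E[e^{\lambda \alpha_1}]$ uniformly over the random matching structure. Both estimates are carried out essentially verbatim in \cite[Theorem~14]{DDK19} and \cite[Theorem~14]{CDKS20}, so the cleanest route is to invoke those lemmas directly after verifying that our Ising-on-random-matching construction matches their setups; part~(1) then follows from the elementary Chebyshev argument above.
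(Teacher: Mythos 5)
Your fallback plan coincides with the paper's actual proof, which is a two-line citation: part (1) is quoted from Lemma 8 of \cite{CDKS20} and part (2) from Section 8.3.2 of \cite{DDK19}, after checking the constructions agree. Your part (2) sketch is essentially the argument carried out in those references (Ingster's identity, the decomposition of $M\cup M'$ into $2$-cycles and alternating $2k$-cycles giving $Z_{M,M'}=\prod_k(1+t^{2k})^{\alpha_k}$, and a Poisson-type bound on $\E[e^{\lambda\alpha_1}]$), and your hypothesis $L\le c_1(\rho)n/\eps^2$ is in fact the bound that \cref{thm:lb-coo-tv} needs; the claim as stated with $L\le c_1 n/\eps$ follows a fortiori.

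Part (1) as written, however, has a genuine gap. The Chebyshev threshold test only yields $\tv{\pi_M}{\uu}\ge 1-O\bigl(1/(\rho\eps)^2\bigr)$, which is vacuous in the main regime of interest: $\rho$ is a fixed constant and $\eps$ is small, so $\rho\eps<1$ and the error bound exceeds $1$. In particular the asserted conclusion $\tv{\pi_M}{\uu}\ge c\min(1,\rho\eps)$ does not follow from the stated argument; no second-moment (variance) control of $T$ around its two means can certify a total variation lower bound of order $\rho\eps\ll 1$, since it cannot rule out that the two laws of $T$ overlap except on an $O(1/(\rho\eps)^2)$-probability event. To get the linear-in-$\eps$ bound you need a mean-shift/anticoncentration estimate for the law of $T$ itself: under $\uu$ the variable $(T+n/2)/2$ is $\mathrm{Bin}(n/2,1/2)$ and under $\pi_M$ it is $\mathrm{Bin}\bigl(n/2,\tfrac{1}{2}(1+\tanh\beta)\bigr)$, and the standard estimate $\tv{\mathrm{Bin}(m,1/2)}{\mathrm{Bin}(m,1/2+\delta)}=\Theta\bigl(\min(1,\delta\sqrt{m})\bigr)$ with $m=n/2$, $\delta=\tfrac{1}{2}\tanh\beta=\Theta(\rho\eps/\sqrt{n})$, combined with the data-processing inequality (as $T$ is a function of the sample), gives $\tv{\pi_M}{\uu}\ge c\min(1,\rho\eps)$; choosing $\rho\ge 1/c$ then finishes part (1). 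Alternatively, simply invoke Lemma 8 of \cite{CDKS20} directly, as the paper does.
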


\begin{proof}
	(1) follows from Lemma 8 in \cite{CDKS20}. 
	(2) is proved in Section 8.3.2 in \cite{DDK19}. 
	See also in Section 8.1 from \cite{CDKS20} the same result for a slightly different construction of $\pi_M$, where every edge is set to be ferromagnetic with probability $1/2$ and antiferromagnetic otherwise.
	%
\end{proof}

Define $H$ to be the random query history of length $2L$ with respect to $\alg$ and $\ora{\uu}$, and let $\mathsf{output}$ denote the random output with respect to $H$ and $\alg$.
Define $H'$ to be the random query history of length $2L$ generated by
\begin{itemize}
	\item Pick $M \in \MM$ uniformly at random and let $\pi = \pi_M$;
	\item Let $H'$ be the random query history of length $2L$ with respect to $\alg$ and $\ora{\pi}$.
\end{itemize} 
Further, let $\mathsf{output}'$ denote the random output with respect to $H'$ and $\alg$.
Then we can show the following key lemma.
\begin{lemma}\label{lem:22tv-H-H'}
	For query histories $H,H'$ of length $2L$ where $L \le cn/\eps$ and $\mathsf{output},\mathsf{output}'$ defined as above, we have
	\[
	\tv{\mathsf{output}}{\mathsf{output}'} \le \tv{H}{H'} \le 0.99.
	\]
\end{lemma}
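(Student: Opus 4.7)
The first inequality $\tv{\mathsf{output}}{\mathsf{output}'} \le \tv{H}{H'}$ is an immediate consequence of the data processing inequality, since $\mathsf{output}$ and $\mathsf{output}'$ are produced from $H$ and $H'$ by the same (possibly randomized) post-processing. The bulk of the argument is therefore devoted to bounding $\tv{H}{H'}$.

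The plan is to adapt the hybrid argument of \cref{lem:tv-H-H'} to the two-oracle model. I would introduce an intermediate distribution $H^{(\text{mid})}$ generated exactly like $H'$ except that the $L$ $\coorora$ queries are answered by $\ora{\uu}$ rather than $\ora{\pi_M}$; that is, $M \in \MM$ is first drawn uniformly at random, the $L$ $\stanora$ samples $X$ are drawn from $\pi_M$, and then $\alg$ interacts with $\ora{\uu}$ to produce the $\coorora$ portion of the history. By the triangle inequality, $\tv{H}{H'} \le \tv{H}{H^{(\text{mid})}} + \tv{H^{(\text{mid})}}{H'}$.

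The first term is controlled by \cref{clm:DDK}(2). Conditional on the $\stanora$ samples, the $\coorora$ portion of the history in both $H$ and $H^{(\text{mid})}$ is produced by the same interaction of $\alg$ with $\ora{\uu}$, so the data processing inequality gives $\tv{H}{H^{(\text{mid})}} \le \tv{\uu^{\otimes L}}{\,\E_M[\pi_M^{\otimes L}]} \le 0.98$ for $L \le c_1 n/\eps^2$ with a sufficiently small constant $c_1 > 0$. For the second term, I would peel off the $\coorora$ queries one at a time following the template of \cref{lem:tv-adj-H}: define a chain $H^{(\text{mid})} = H^{(0)}, H^{(1)}, \dots, H^{(L)} = H'$ where in $H^{(\ell)}$ the first $\ell$ $\coorora$ queries are answered by $\ora{\pi_M}$ and the remaining $L-\ell$ by $\ora{\uu}$, and bound each adjacent distance $\tv{H^{(\ell-1)}}{H^{(\ell)}}$ by analyzing a single $\coorora$ answer conditioned on all previous randomness (including $M$, the $\stanora$ samples, and earlier $\coorora$ answers).

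The hard part will be this per-query estimate. Under $\ora{\uu}$ a $\coorora$ answer is $\mathrm{Ber}(1/2)$; under $\ora{\pi_M}$, writing $j$ for the partner of the queried coordinate in $M$ and $\sigma_j \in \{\pm 1\}$ for the value of $j$ in the pinning, it is $\mathrm{Ber}(1/2 + \sigma_j \tanh(\beta)/2)$, so the unconditional per-query TV is $\Theta(\tanh\beta) = \Theta(\eps/\sqrt{n})$; a naive union bound would yield $\Theta(L\eps/\sqrt{n}) = \Theta(\sqrt{n}/\eps)$, which is useless. The resolution is to exploit the averaging over the random matching: since $L \le cn/\eps^2$ $\stanora$ samples plus $\ell-1$ $\coorora$ answers can only carry $O(\eps/\sqrt{n})$-scale information about $M$, the posterior over the partner of the queried coordinate remains close to uniform, and the signed biases $\pm\tanh(\beta)/2$ cancel to leading order. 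A chi-squared / second-moment calculation analogous to those in~\cite{DDK19, CDKS20} then gives a squared-Hellinger (or $\chi^2$) contribution of $O(\eps^2/n)$ per query, summing to $O(L\eps^2/n) = O(1)$ across the $L$ queries—enough, together with the $0.98$ bound from the first hybrid, to push $\tv{H}{H'}$ below $0.99$.
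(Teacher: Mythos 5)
Your first two steps are sound, and the decomposition through $H^{(\text{mid})}$ is in fact a genuinely different (and arguably cleaner) route than the paper's. Data processing gives the first inequality, and since $H$ and $H^{(\text{mid})}$ differ only in the law of the $\stanora$ samples (the $\coorora$ leg is the same interaction with $\ora{\uu}$, a fixed randomized channel), data processing plus \cref{clm:DDK}(2) gives $\tv{H}{H^{(\text{mid})}}\le\tv{\uu^{\otimes L}}{\E_M[\pi_M^{\otimes L}]}\le 0.98$. The gap is in step three. A hybrid chain over the $L$ $\coorora$ queries with a squared-Hellinger (or $\chi^2$) per-step bound of $O(\eps^2/n)$ does not close: the triangle inequality holds for the Hellinger \emph{distance}, not its square, so the per-step contributions you would sum across the chain are $\Theta(\eps/\sqrt{n})$, and you are back to $\Theta(\sqrt{n}/\eps)$. (A direct KL chain-rule argument, rather than a hybrid, \emph{would} work here, and does not actually require the posterior over $M$ to be near-uniform since the post-averaged bias $\mu_i\tanh\beta/2$ with $|\mu_i|\le 1$ already gives per-step KL $O(\tanh^2\beta)$ -- but this is not what your sketch does.)

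The move the paper uses, and which slots directly into your second leg, is a re-encoding rather than a per-query hybrid. Write each $\coorora$ answer as $r_i=\one\bigl[a_i=(\tau_i)_{j'}\bigr]$, where $j'$ is the $M$-partner of the queried coordinate. Regardless of $M$, $\tau_i$, or the history, $r_i$ is i.i.d.\ $\Ber(1/2)$ under $\ora{\uu}$ and i.i.d.\ $\Ber\bigl((1+\tanh\beta)/2\bigr)$ under $\ora{\pi_M}$, independently of $(M,X)$; the full history is a fixed function of $(M,X,R)$ and the algorithm's coins. Since $H^{(\text{mid})}$ and $H'$ share the same joint law of $(M,X)$, data processing collapses to $\tv{H^{(\text{mid})}}{H'}\le\tv{\Ber(1/2)^{\otimes L}}{\Ber((1+\tanh\beta)/2)^{\otimes L}}=O(\sqrt{L}\,\tanh\beta)=O(\sqrt{L\eps^2/n})\le 0.01$ for $c$ small, giving $\tv{H}{H'}\le 0.99$. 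As an aside, the paper applies this re-encoding in a single shot, bounding $\tv{H}{H'}$ by $\tv{(M,X_0,R_0)}{(M,X_1,R_1)}$ with $X_0\sim\uu^{\otimes L}$ and $X_1\mid M\sim\pi_M^{\otimes L}$; since $X_1$ is correlated with $M$, the joint term $\tv{(M,X_0)}{(M,X_1)}$ is not the mixture TV controlled by \cref{clm:DDK}(2) but the (much larger) per-matching product TV. Your $H^{(\text{mid})}$ decomposition sidesteps this by changing the $X$-law and the $\coorora$-law in separate legs, which is exactly what makes the two ingredients (\cref{clm:DDK}(2) and the $r$-string bound) compose cleanly.
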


\begin{proof}
	The first inequality follows from the data processing inequality. 
	We focus on the second one.
	For $M \in \MM$ and $t \in \{0,1\}$, let $\pi_{M,t}$ denote the Ising model on $G=([n],M)$ with edge coupling $t\beta = t\rho\eps/\sqrt{n}$.
	Observe that $\pi_{M,0} = \uu$ and $\pi_{M,1} = \pi_M$.
	We rewrite the process 
	for generating the query histories $H$ and $H'$ of length $2L$ in the following equivalent form:
	
	\begin{itemize}
		\item Let $M \in \MM$ be chosen uniformly at random from $\MM$;
		\item Let $X_t = ( x_1,\dots,x_L ) \in \XX_n^L$ be $L$ independent samples from $\pi_{M,t}$;
		\item Let $R_t = (r_1,\dots,r_L) \in \{0,1\}^L$ be $L$ independent Bernoulli random variables with mean $(1+\tanh (t\rho\eps/\sqrt{n}))/2$;
		\item For $i = 1,\dots,L$:
		\begin{itemize}
			\item $\alg$ receives $\left( x_1,\dots,x_L \right)$ and $\left( (\tau_1,a_1),\dots,(\tau_{i-1},a_{i-1}) \right)$ as input and generates $\tau_i \in \TT$ (randomly) as output;
			\item $\ora{\pi}$ receives $\tau_i$ as input, which fixes all coordinates but one say $j$, and suppose $j'$ is matched to $j$ in $M$; 
			then $\ora{\pi}$ outputs $a_i = (\tau_i)_{j'}$ (the $j'$-th coordinate of $\tau_i$) as the sampled value at the $j$-th coordinate if $r_i = 1$, and outputs $a_i = -(\tau_i)_{j'}$ otherwise;
		\end{itemize}
		\item The query history is $H_t = \left( x_1,\dots,x_L ; (\tau_1,a_1),\dots,(\tau_L,a_L) \right)$.
	\end{itemize}
	
	Observe that if $t=0$, then the final query history $H_0$ is distributed as $H$; 
	meanwhile, if $t =1$, then it is distributed as $H'$.
	Moreover, the process above can be viewed as a random mapping from the vector $(M,X_t,R_t)$ to the query history $H_t$ where, for fixed $(M,X_t,R_t)$, the randomness purely comes from the decision-making of $\alg$. 
	Therefore, we can apply the data processing inequality and obtain
	\begin{equation*}
	\tv{H}{H'} \le \tv{(M,X_0,R_0)}{(M,X_1,R_1)} \le \tv{X_0}{X_1} + \tv{R_0}{R_1}.
	\end{equation*}
	Note that $\tv{X_0}{X_1} \le 0.98$ by Claim~\ref{clm:DDK}. 
	For the second term, we have
	\[
	\tv{R_0}{R_1} = 
	\tv{\mathrm{Bin}\left(L,\frac{1}{2}\right)}{\mathrm{Bin} \left(L,\frac{1}{2}\left(1+\tanh \frac{\rho\eps}{\sqrt{n}} \right)\right)}
	\le c' \cdot \sqrt{L} \cdot \frac{\rho\eps}{\sqrt{n}} \le 0.01,
	\]
	where $c'>0$ is a universal large constant, and $L \le cn/\eps^2$ for $c$ sufficiently small.
	Therefore, we deduce that $\tv{H}{H'} \le 0.98 + 0.01 = 0.99$ as claimed.
\end{proof}

We end this section with the proof of Theorem \ref{thm:lb-coo-tv}.
\begin{proof}[Proof of~Theorem \ref{thm:lb-coo-tv}]
	Suppose for sake of contradiction that $\alg$ satisfies both properties as in Theorem \ref{thm:lb-coo-tv}.
	Then by a standard amplification technique for failure probability, one can decrease the failure probability from $1/3$ to $0.001$ with the number of samples needed increases only by a constant factor; see \cite[Lemma 1.1.1]{Canonne22}).  
	In particular, for query histories $H,H'$ of length $2L$ where $L \le cn/\eps$ and $\mathsf{output},\mathsf{output}'$ defined as above, we have
	\[
	\Pr\nolimits\left( \mathsf{output} = \mathsf{Yes} \right) \ge 0.999
	\quad\text{and}\quad
	\Pr\nolimits\left( \mathsf{output}' = \mathsf{Yes} \right) \le 0.001.
	\]
	This implies $\tv{\mathsf{output}}{\mathsf{output}'} \ge 0.998$ which contradicts Lemma \ref{lem:22tv-H-H'}.  
\end{proof}

\section{Identity Testing with \texorpdfstring{$\subcora$}{Subcube Oracle}}
\label{sec:subcube}

In this section we give our algorithmic results for identity testing with access to the $\subcora$. In particular, we establish slightly more general versions of Theorem \ref{thm:alg-subcora} and Theorem \ref{thm:intro:kl-estimate-subc} which relax the assumption that $\mu$ is fully supported and only require that the support of $\pi$ is a subset of the support of $\mu$.
For the case when $\mu$ is not fully supported we instead require the slightly stronger assumption that $\mu$ is $b$-marginally bounded (this notion is equivalent to balancedness when $\mu$ is fully supported).
%
%
%



\subsection{Identity testing with exact conditional marginal distributions}
\label{subsec:subc-approx}

Recall that $[i] = \{1,\dots,i\}$ for an integer $i \in \N^+$. 
The following factorization of (relative) entropy is well-known, see e.g. \cite{Cesi,MSW03,CP}. 

\begin{lemma}
 For any distribution $\pi$ over $\QQ^n$ such that $\pi \ll \mu$ we have
	\begin{equation}\label{eq:ent-factorization}
	\kl{\pi}{\mu} = \sum_{i=1}^n \E_{x \sim \pi_{[i-1]}}\left[ \kl{\pi_i(\cdot \mid x)}{\mu_i(\cdot \mid x)} \right].
	\end{equation}
\end{lemma}



We now give our testing algorithm with $\subcora$. 
\begin{theorem}
	\label{thm:alg-subc}
	Let $k = k(n)$ be an integer and let $\bb = \bb(n) \in (0,1/2]$ be a real.
	Suppose that $\log\log(1/\bb) = O(\log n)$.
	There is an identity testing algorithm for all $b$-marginally bounded distributions with query access to $\subcora$ and for KL divergence with distance parameter $\eps > 0$. 
	The query complexity of the identity testing algorithm is 
	\[
	O\left( \min\left\{ \frac{1}{\sqrt{\bb}} \cdot \frac{n}{\eps} \log^3 \left(\frac{n}{\eps}\right),\, \sqrt{k} \log\Big(\frac{1}{\bb}\Big) \cdot \frac{n^2}{\eps^2} \log^2 \left(\frac{n}{\eps}\right) \right\} \right). 
	\]
	The running time of the algorithm is polynomial in all parameters and also proportional to the time of computing the conditional marginal distributions $\mu_i(\cdot \mid x)$ for any $i \in [n]$ and any feasible $x \in \QQ^{[i-1]}$. 
	Furthermore, if $k=2$, i.e., we have a binary domain $\QQ = \{0,1\}$, the query complexity can be improved to
	\[
	O\left( \log\Big(\frac{1}{\bb}\Big) \cdot \frac{n}{\eps} \log^3 \left(\frac{n}{\eps}\right) \right).
	\]
\end{theorem}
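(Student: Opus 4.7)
The plan is to mirror the proof of \cref{thm:alg-main-detailed} but replace the use of approximate tensorization with the chain-rule factorization \cref{eq:ent-factorization}, which holds for \emph{any} distribution $\mu$ with constant $C=1$. This gives the proof for free an improvement by a factor of $C$ (resp., $C^2$) in the first (resp., second) bound on query complexity compared to \cref{thm:alg-main-detailed}. The role of $\eta$-balance will be played by $b$-marginal boundedness, which is precisely what is needed for the inner KL tester to apply to the reduced one-dimensional instances.

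Concretely, suppose $\kl{\pi}{\mu} \ge \eps$. By \cref{eq:ent-factorization},
\[
\kl{\pi}{\mu} = n \cdot \E_{(i,x)}\bigl[\kl{p_i^x}{q_i^x}\bigr],
\]
where $i \in [n]$ is uniform, $x \sim \pi_{[i-1]}$, $p_i^x = \pi_i(\cdot \mid x)$, and $q_i^x = \mu_i(\cdot \mid x)$. So it suffices to distinguish $p_i^x = q_i^x$ almost surely from $\E_{(i,x)}[\kl{p_i^x}{q_i^x}] \ge \eps/n$. Since $\mu$ is $b$-marginally bounded and $\pi \ll \mu$ (which we may assume as in \cref{subsec:tv-vs-kl}), we have the pointwise upper bound $\kl{p_i^x}{q_i^x} \le \ln(1/b)$, so \cref{lem:reverse-Markov} applied with $M = \ln(1/b)$ and $\eps' = \eps/n$ identifies a value $\ell \le L = \ceil{\log_2(n\ln(1/b)/\eps)}$ such that a random pair witnesses $\kl{p_i^x}{q_i^x} \gtrsim 2^{\ell-1}\eps/n$ with probability $\gtrsim 2^{-\ell}/(L+1)$.

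The algorithm then parallels \cref{alg:id-test}: for each $\ell$ it draws $T_\ell = O(2^\ell(L+1))$ pairs $(i,x)$, and for each invokes the inner KL tester from \cref{lem:KL-id-test} (or \cref{lem:Ber-KL-id-test} when $k=2$) with distance $2^{\ell-1}\eps/n$ and failure probability $O(1/n^3)$. The oracle accesses are now simulated entirely via $\subcora$: a pair $(i,x)$ is obtained by choosing $i$ uniformly from $[n]$ and querying $\subcora$ with $\Lambda = \emptyset$, then restricting the returned sample to coordinates $[i-1]$; a sample from $p_i^x$ is obtained by querying $\subcora$ with $\Lambda = [i-1]$ and pinning $x$, and reading off the $i$-th coordinate. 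Marginal boundedness of $\mu$ guarantees that whenever $\mu_i(a \mid x) > 0$ we have $\mu_i(a \mid x) \ge b$, so the hypothesis of \cref{lem:KL-id-test,lem:Ber-KL-id-test} is met with $\eta = b$ on every reduced instance (and the assumption $\pi \ll \mu$ passes to $p_i^x \ll q_i^x$).

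The correctness analysis and query accounting are then identical to the proof of \cref{thm:alg-main-detailed}, with $C$ replaced by $1$ throughout, giving the claimed bounds. The only mild subtlety I anticipate, rather than a true obstacle, is verifying that the intermediate sampling step produces $x$ exactly from $\pi_{[i-1]}$; this is immediate because $\subcora$ on $\Lambda = \emptyset$ returns a perfect sample from $\pi$ and marginalizing onto $[i-1]$ is exact. Running time is polynomial in all parameters together with the time to compute the conditional marginals $\mu_i(\cdot \mid x)$ required by the inner tester.
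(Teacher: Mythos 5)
Your proposal is correct and matches the paper's own proof of \cref{thm:alg-subc}: the paper likewise replaces approximate tensorization with the exact chain-rule factorization \cref{eq:ent-factorization} (effectively $C=1$), reruns \cref{alg:id-test} with $x\sim\pi_{[i-1]}$, swaps $\eta$-balance for $b$-marginal boundedness to feed the inner KL tester, and simulates both needed oracles directly from $\subcora$. No gaps.
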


\begin{proof}
	We observe that \eqref{eq:ent-factorization} can be equivalently written as 
	\[
	\kl{\pi}{\mu} = n \; \E_{(i,x)} \left[ \kl{\pi_i(\cdot \mid x)}{\mu_i(\cdot \mid x)} \right],
	\] 
	where $i \in [n]$ is a uniformly random coordinate and $x$ is generated from $\pi_{[i-1]}$. 
	Therefore, Algorithm~\ref{alg:id-test} still works once we generate the pair $(i,x)$ in Line~\ref{line:sample-i-x} from the correct distribution as just described, and define $p^x_i = \pi_i(\cdot \mid x)$, $q^x_i = \mu_i(\cdot \mid x)$ correspondingly. 
	The analysis is exactly the same with the constant $C$ for approximate tensorization replaced by $1$. 
	We omit the proofs here
	and only highlight the differences: the coordinate balancedness $\eta$ is now replaced by the marginal boundedness $b$, and the running time depends on the time to compute the conditional marginal distributions $\mu_i(\cdot \mid x)$ for any $i \in [n]$ and any $x \in \QQ^{[i-1]}$ such that $\mu_{[i-1]}(x) > 0$.
\end{proof}

\begin{remark}
	\label{rmk:marginal-bound}
	We remark that the assumption of marginal boundedness can be relaxed to the following slightly weaker version: 
	for a fixed ordering of the coordinates, for every $i \in [n]$, every $x \in \QQ^{[i-1]}$ with $\mu_{[i-1]}(x) > 0$, and every $a \in \QQ$, one has
	\[
	\text{either~~} \mu_i(a \mid x) = 0,
	\quad
	\text{or~~} \mu_i(a \mid x) \ge b. 
	\]
	In some circumstances, this weaker notion of marginal boundedness can give a better bound on the sample complexity.
\end{remark}

Theorem \ref{thm:alg-subc} that identity testing can be done efficiently for a wide variety of families of distributions with the power of $\subcora$, assuming that one can efficiently compute the exact marginal probabilities under any conditioning. 
Below we give a few examples where Theorem \ref{thm:alg-subc} applies:

\begin{itemize}
	\item Consider any undirected graphical model (e.g., Ising model, Potts model) defined on trees of constant degrees. Then the distributions are $\Omega(1)$-marginally bounded, and one can efficiently compute the marginal probabilities under any pinning via, e.g., Belief Propagation. 
	Hence, there is a polynomial-time identity testing algorithm for undirected graphical models on bounded-degree trees with $\subcora$ access. 
	The sample complexity is $O((n/\eps) \log^3(n/\eps))$ where $n$ is the number of vertices. 
	If the degree is unbounded, then the marginal bound $b$ can be as small as $e^{-\Theta(n)}$. 
	Still, by the second bound in Theorem \ref{thm:alg-subc} the number of samples needed is at most $O((n^3/\eps^2) \log^2(n/\eps))$.
	
	\item Consider the Bayesian network on DAGs, and assume without loss of generality that $[n] = \{1,\dots,n\}$ is the topological ordering of the DAG. 
	In particular, all conditional marginal probabilities at any coordinate $i \in [n]$ and conditioned on any feasible pinning $x \in \QQ^{[i-1]}$ are given by the Bayesian network. 
	If these conditional marginal probabilities are lower bounded by $b=\Omega(1)$, then there is a polynomial-time identity testing algorithm for such Bayesian networks with $\subcora$ access, and the sample complexity is $O((n/\eps) \log^3(n/\eps))$. 
	If $b$ is exponentially small, then similarly as before the sample complexity is $O((n^3/\eps^2) \log^2(n/\eps))$. 
	See also Remark~\ref{rmk:marginal-bound} above on relaxing the marginal boundedness condition to specifically the topological ordering. 
	
	\item Consider mixtures of polynomially many product distributions, each of which has $\eta(\mu) = \Omega(1)$ as defined in Section~\ref{subsub:prod}. 
	One can efficiently compute the conditional marginal probabilities by the simple nature of mixtures of product distributions. 
	Then by Theorem \ref{thm:alg-subc}, we have an efficient identity testing algorithm with $\subcora$ access and the sample complexity is $O((n/\eps) \log^3(n/\eps))$. 
	Similarly as before, the sample complexity becomes $O((n^3/\eps^2) \log^2(n/\eps))$ when the minimum $\eta(\mu)$ is exponentially small. 
\end{itemize}


\subsection{Identity testing with approximate conditional marginal distributions}
\label{subsec:subc-exact}
In Theorem~\ref{thm:alg-subc} we assume that one can compute exactly any conditional marginal distribution in polynomial time. 
In some applications the exact computation is not possible and one can get, at the best, an estimator of the conditional marginal probabilities. 
As we will show in this subsection, identity testing can still be done efficiently in this setting.   

We first need more robust versions of Lemma \ref{lem:KL-id-test} and Lemma \ref{lem:Ber-KL-id-test}.
We say there is an FPRAS for a distribution $q$ over $\QQ$ if for any $\eps > 0$ and $\delta \in (0,1)$, one can compute a distribution $\hat{q}$ over $\QQ$ as an approximation of $q$ such that, with probability $1-\delta$, we have that for every $a \in \QQ$, 
\[
e^{-\eps} \le \frac{\hat{q}(a)}{q(a)} \le e^\eps,
\]
and $\hat{q}$ can be computed with running time polynomial in $k$, $1/\eps$, $\log(1/\delta)$, and the input size of $q$ (e.g., the number of parameters representing $q$). 
We remark that if $q(a) = 0$ then $\hat{q}(a) = 0$. 

\begin{lemma}
	\label{lem:robost-KL-id}
	Let $k \in \N^+$ be an integer, 
	and let $\eps> 0$, $\bb \in (0,1/2]$ be reals. 
	Given an FPRAS for a target distribution $q$ over domain $\QQ$ of size $k$ such that either $q(a) = 0$ or $q(a) \ge \bb$ for any $a \in \KK$, and given sample access to an unknown distribution $p \ll q$ over $\QQ$, there exists a polynomial-time identity testing algorithm that distinguishes with probability at least $2/3$ between the two cases 
	\begin{equation}
	p = q
	\quad\text{and}\quad
	\kl{p}{q} \ge \eps.
	\end{equation}
	For $k \ge 3$, the sample complexity of the identity testing algorithm is
	\[
	O\left( \min\left\{ \frac{1}{\eps \sqrt{\bb}},\, \frac{\sqrt{k} \ln(1/\bb)}{\eps^2} \right\} \right). 
	\]
	For $k=2$, the sample complexity of the identity testing algorithm is
	\[
	O\left( \frac{\ln(1/\bb)}{\eps} \right).
	\]
\end{lemma}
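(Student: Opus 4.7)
The plan is to use the FPRAS to compute an approximation $\hat{q}$ of $q$ with multiplicative accuracy $\eps' = \Theta(\eps)$, and then apply the algorithms from \cref{lem:KL-id-test} and \cref{lem:Ber-KL-id-test} with $\hat{q}$ in place of $q$. The key analytic fact driving the reduction is a robustness property of the KL divergence under multiplicative perturbation of the second argument: if $e^{-\eps'} \le \hat{q}(a)/q(a) \le e^{\eps'}$ for every $a \in \QQ$ (with the convention $0/0 = 1$), then for any distribution $p \ll q$,
$$
\bigl| \kl{p}{\hat{q}} - \kl{p}{q} \bigr|
= \Bigl| \sum_{a \in \QQ} p(a) \bigl( \ln q(a) - \ln \hat{q}(a) \bigr) \Bigr|
\le \eps'.
$$
Taking $\eps' = \eps/10$, we thus have $\kl{p}{\hat{q}} \le \eps/10$ whenever $p = q$, while $\kl{p}{\hat{q}} \ge 9\eps/10$ whenever $\kl{p}{q} \ge \eps$, reducing the problem to a mildly tolerant identity test against $\hat{q}$.

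For $k \ge 3$, I will replay the two algorithms from the proof of \cref{lem:KL-id-test} with $\hat{q}$ in place of $q$, flattening $\hat{q}$ and invoking the $\ell_2$-identity tester of \cite{DK16} stated in \cref{lem:best-l2-id-test}. That $\ell_2$-tester is intrinsically tolerant---it distinguishes $\norm{p' - \hat{q}'}_2 \le \eps/2$ from $\norm{p' - \hat{q}'}_2 \ge \eps$ on the flattened distributions---so the $\ell_2$-distance gap implied by $\kl{p}{\hat{q}} \gtrsim \eps$ via \cref{lem:kl-chi2-l2} is still detected, and the key bounds $\hat{q}'(a_i) = \Theta(\bb)$ together with $\norm{\hat{q}'}_2 = O(\sqrt{\bb})$ (respectively $O(\sqrt{2/k'})$ for Algorithm 2) survive up to constant factors, since multiplicative closeness is preserved under the flattening construction. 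The two-class decomposition in Algorithm 2 can be redefined using $\hat{q}$ with thresholds adjusted by a constant, and the Bernoulli sub-test comparing $p(\QQ_2)$ to $\hat{q}(\QQ_2)$ via \cref{lem:Ber-id-test} is itself tolerant. For $k=2$, the proof of \cref{lem:Ber-KL-id-test} likewise rests on empirical-mean thresholding (Case 1) and on \cref{lem:Ber-id-test} (Cases 2 and 3), both of which are governed by Chernoff concentration of the empirical mean and are therefore robust to a multiplicative $e^{\pm \eps'}$ perturbation of $q$ after slightly widening the decision intervals to absorb the error.

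The main obstacle is bookkeeping rather than conceptual: I will need to verify that all the threshold-based inequalities in the three-case analysis of \cref{lem:Ber-KL-id-test} remain valid after replacing $q$ by $\hat{q}$---in particular, the choices of $\gamma$ in Cases 2 and 3 depend on $q$, so the factor $(1 \pm \eps')$ must be absorbed into the constants without inflating the sample complexity beyond a constant factor. A minor additional point is handling the (low-probability) event that the FPRAS call fails: setting its failure parameter to a small constant keeps the running time polynomial and lets the failure event be absorbed into the overall error budget by a union bound.
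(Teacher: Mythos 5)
Your approach is genuinely different from the paper's, and it carries a gap in the choice of the FPRAS accuracy $\eps'$.

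\textbf{Comparison of approaches.} You propose to set the FPRAS accuracy to $\eps' = \Theta(\eps)$ (specifically $\eps/10$), argue $|\kl{p}{\hat q} - \kl{p}{q}| \le \eps'$, and then re-derive tolerant versions of the testers in \cref{lem:KL-id-test,lem:Ber-KL-id-test}. The paper instead takes the FPRAS accuracy $\xi = O(\min\{\eps, 1/m\})$, where $m$ is the sample count of the \emph{non-tolerant} tester, and then argues by TV coupling: when $p = q$, the $m$ samples it feeds the tester are, with probability $\ge 9/10$, identical to $m$ samples from $\hat q$, so the tester (which succeeds on inputs $(\hat q,\hat q)$) still outputs Yes. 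The KL gap $\eps - \xi \ge \eps/2$ handles the No case. This treats $\AA_{\textsc{kl-id}}$ as a black box and avoids opening up the three-case / flattening analyses entirely. Your route is workable in principle, but it requires re-verifying every decision threshold inside the testers — which is precisely the work the paper's coupling trick sidesteps, and which you can do because the FPRAS accuracy is a computational parameter that costs no samples.

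\textbf{The gap.} The claim that $\eps' = \Theta(\eps)$ suffices, with the rest being constant-factor bookkeeping, is not correct in the regime where Algorithm 2 of \cref{lem:KL-id-test} is invoked. There the $\ell_2$ threshold the DK16 tester must distinguish is $\eps'' = \Theta\bigl(\sqrt{\eps\zeta}\bigr) = \Theta\bigl(\eps/\sqrt{k\ln(1/\bb)}\bigr)$, while the pointwise multiplicative error from the FPRAS gives, on the flattened instance,
$\norm{q'-\hat q'}_2 \le (e^{\eps'}-1)\norm{\hat q'}_2 = O\bigl(\eps'/\sqrt{k}\bigr)$. For the Yes branch ($p=q$) to land in the tolerant $\ell_2$ tester's ``close'' region $\norm{p'-\hat q'}_2 \le \eps''/2$, you need $\eps' = O\bigl(\eps/\sqrt{\ln(1/\bb)}\bigr)$, which is $o(\eps)$ when $\bb$ is small — not $\Theta(\eps)$. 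This is more than a constant-factor adjustment; the required accuracy depends on $\bb$, and you do not identify that dependence. (The paper's choice $\xi = O(1/m)$ with $m = O(\sqrt{k}\ln(1/\bb)/\eps^2)$ is automatically far below this bar.) Your analysis of the ``far'' direction via \cref{lem:kl-chi2-l2} is fine, and the $k=2$ cases and the first stage of Algorithm 2 do tolerate $\eps' = \Theta(\eps)$ as you say; it is specifically the $\ell_2$ stage of Algorithm 2 that you have not controlled. The overall plan is salvageable by setting $\eps'$ small enough (this costs only FPRAS running time, not samples), but as stated the proposal is incomplete on this point.
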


\begin{proof}
	Let $m$ be an upper bound for the number of samples required in Lemma \ref{lem:KL-id-test} and Lemma \ref{lem:Ber-KL-id-test}, with the assumption being either $q(a) = 0$ or $q(a) \ge \bb/2$ for any $a \in \KK$, distance parameter $\eps/2$, and failure probability $1/10$. 
	Let $\xi = O(\min\{\eps,1/m\})$ be a small constant, and let $\hat{q}$ be an approximation of $q$ such that with probability $9/10$ we have $e^{-\xi} \le \hat{q}(a)/q(a) \le e^\xi$ for every $a \in \QQ$. 
	Notice that if this holds then
	\[
	\left|\kl{p}{\hat{q}} - \kl{p}{q}\right|
	\le \sum_{a \in \QQ} p(a) \left| \ln \left( \frac{q}{\hat{q}} \right) \right| \le \xi. 
	\]
	
	We then apply the identity testing algorithm $\AA_{\textsc{kl-id}}$ from Lemma \ref{lem:KL-id-test} and Lemma \ref{lem:Ber-KL-id-test} to the distributions $p,\hat{q}$ with distance parameter $\eps/2$ and failure probability $1/10$, and returns the output of $\AA_{\textsc{kl-id}}$ as our output. 
	Note that $\hat{q}(a) = 0$ if $q(a) = 0$ and $\hat{q}(a) \ge e^{-\xi} q(a) \ge \bb/2$ if $q(a) \ge \bb$, assuming $\hat{q}$ is a $\xi$-approximation of $q$. 
	Thus, the number of samples required by $\AA_{\textsc{kl-id}}$ is at most $m$. 
	If $\kl{p}{q} \ge \eps$, then 
	\[
	\kl{p}{\hat{q}} \ge \kl{p}{q} - \left|\kl{p}{\hat{q}} - \kl{p}{q}\right| \ge \eps -\xi \ge \frac{\eps}{2}.
	\]
	Hence, the testing algorithm wrongly outputs Yes only if at least one of the following happens
	\begin{enumerate}[(1)]
		\item $\hat{q}$ is not a $\xi$-approximation of $q$, which happens with probability at most $1/10$;
		\item $\AA_{\textsc{kl-id}}$ makes a mistake, which happens with probability at most $1/10$.
	\end{enumerate}
	This shows that the failure probability is at most $1/5$. 
	If $p = q$, then notice that 
	\[
	\tv{p}{\hat{q}} = \tv{q}{\hat{q}} = O(\xi) \le \frac{1}{10m}.
	\]
	We consider an optimal coupling between $m$ independent samples from $p$ and $m$ independent samples from $\hat{q}$, so the probability that these two sets of $m$ samples are not exactly the same is at most $1/10$.
	One can think of the testing process as follows: we try to send $m$ samples from $\hat{q}$ to $\AA_{\textsc{kl-id}}$, and it succeeds only when the samples are coupled with those from $p$.
	Therefore, the failure probability, in addition to (1) and (2) above, also includes this uncoupled probability, and hence is at most $3/10$.
	Finally, the number of samples needed, $m$, is bounded in Lemma \ref{lem:KL-id-test} and Lemma \ref{lem:Ber-KL-id-test}.
\end{proof}

Lemma \ref{lem:robost-KL-id}, combined with the proof of Theorem \ref{thm:alg-subc}, immediately implies the following theorem. 
See also Remark~\ref{rmk:marginal-bound} for the discussion on relaxing marginal boundedness.

\begin{theorem}
	\label{thm:robust-alg-subc}
	Let $k = k(n)$ be an integer and let $\bb = \bb(n) \in (0,1/2]$ be a real.
	Suppose that $\log\log(1/\bb) = O(\log n)$.
	There is an identity testing algorithm for all $b$-marginally bounded distributions with query access to $\subcora$ and for KL divergence with distance parameter $\eps > 0$. 
	The query complexity of the identity testing algorithm is 
	\[
	O\left( \min\left\{ \frac{1}{\sqrt{\bb}} \cdot \frac{n}{\eps} \log^3 \left(\frac{n}{\eps}\right),\, \sqrt{k} \log\Big(\frac{1}{\bb}\Big) \cdot \frac{n^2}{\eps^2} \log^2 \left(\frac{n}{\eps}\right) \right\} \right). 
	\]
	The running time of the algorithm is polynomial in all parameters assuming that there is an FPRAS for the conditional marginal distributions $\mu_i(\cdot \mid x)$ for any $i \in [n]$ and any feasible $x \in \QQ^{[i-1]}$. 
	Furthermore, if $k=2$, i.e., we have a binary domain $\QQ = \{0,1\}$, the query complexity can be improved to
	\[
	O\left( \log\Big(\frac{1}{\bb}\Big) \cdot \frac{n}{\eps} \log^3 \left(\frac{n}{\eps}\right) \right).
	\]
\end{theorem}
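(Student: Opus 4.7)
My plan is to re-run the algorithm and analysis from \cref{thm:alg-subc} essentially verbatim, with only two substitutions: every exact computation of a conditional marginal $q^x_i = \mu_i(\cdot \mid x)$ is replaced by a call to the assumed FPRAS, producing an approximation $\hat q^x_i$; and every invocation of the KL identity-testing subroutine from \cref{lem:KL-id-test,lem:Ber-KL-id-test} is replaced by its robust counterpart \cref{lem:robost-KL-id}, which was designed precisely to accept an FPRAS in place of exact target access while matching the same sample complexity.

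Concretely, I would first observe that the chain-rule factorization \cref{eq:ent-factorization} requires nothing beyond $\pi \ll \mu$, which follows from $b$-marginal boundedness of $\mu$ together with the fact that $\pi$ only produces queries whose conditioning events have positive $\mu$-mass. Hence if $\kl{\pi}{\mu} \ge \eps$ then $n \cdot \E_{(i,x)}\bigl[\kl{p^x_i}{q^x_i}\bigr] \ge \eps$, where $i$ is uniform in $[n]$ and $x \sim \pi_{[i-1]}$; such a pair $(i,x)$ is generated by a single call to $\subcora$ for $\pi$ with pinning $[i-1] = \emptyset$ (i.e., a full sample, projected). Applying \cref{lem:reverse-Markov} with the trivial upper bound $\kl{p^x_i}{q^x_i} \le \ln(1/b)$ provided by marginal boundedness yields, as in the original argument, an integer $\ell \le L = O(\log(n\log(1/b)/\eps))$ for which $\kl{p^x_i}{q^x_i} \ge 2^{\ell-1}\eps/n$ holds with probability at least $(2^\ell(L+1))^{-1}$.

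Second, for each candidate pair $(i,x)$ produced at scale $\ell$, I would invoke \cref{lem:robost-KL-id} with the FPRAS applied to $q^x_i$, distance parameter $2^{\ell-1}\eps/n$, and failure probability driven down to $1/\mathrm{poly}(n)$ via standard amplification (\cite[Lemma~1.1.1]{Canonne22}), at a cost absorbed by a $\log$ factor already present in the stated bounds. The marginal-boundedness hypothesis on $\mu$ guarantees that every $q^x_i(a)$ is either $0$ or at least $b$, which is exactly the regime for which \cref{lem:robost-KL-id} is stated, so its sample complexity matches that of \cref{lem:KL-id-test,lem:Ber-KL-id-test}. Summing $T_\ell = 2^{\ell+2}(L+1)$ repetitions over $\ell = 0,\dots,L$ then reproduces the same telescoping computation that appeared in the proof of \cref{thm:alg-main-detailed} with $C=1$, yielding the two claimed bounds for $k \ge 3$ and the sharper bound for $k=2$ (using the Bernoulli branch of \cref{lem:robost-KL-id}). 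The running time is polynomial because every FPRAS call is polynomial-time by assumption and the outer algorithmic structure is unchanged.

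There is no substantive obstacle beyond what is already resolved inside \cref{lem:robost-KL-id}: the quantitative concern — that the FPRAS distortion $\xi$ could swamp the KL gap $2^{\ell-1}\eps/n$ at the smallest scale, or could drive some $\hat q^x_i(a)$ below the threshold $b/2$ needed to apply the underlying exact tester — is handled internally by that lemma, which picks $\xi = O(\min\{\eps/n, 1/m\})$ at each call and couples the generated samples against those of the exact tester with total variation loss $O(1/m)$. Everything else is identical to the proof of \cref{thm:alg-subc}.
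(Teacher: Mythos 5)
Your proposal is correct and is essentially the proof the paper intends: the paper dispenses with this theorem in a single sentence as ``\cref{lem:robost-KL-id}, combined with the proof of \cref{thm:alg-subc}, immediately implies the theorem,'' and you unroll precisely that combination (same factorization \cref{eq:ent-factorization}, same \cref{lem:reverse-Markov} bucketing with $M=\ln(1/b)$, same per-$\ell$ repetitions, with the robust tester substituted). The one small inaccuracy is the claim that $\pi \ll \mu$ ``follows from'' $b$-marginal boundedness of $\mu$ --- it does not (a $b$-marginally bounded $\mu$ need not be fully supported, and $\pi$ is arbitrary); rather $\pi \ll \mu$ is an implicit hypothesis of the KL identity-testing problem itself (otherwise $\kl{\pi}{\mu}=\infty$ and the problem is vacuous, cf.\ \cref{subsec:tv-vs-kl}), so nothing downstream is affected.
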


Again, we give a few examples as applications of Theorem~\ref{thm:robust-alg-subc}, omitting all the technical details. 

\begin{itemize}
	\item Consider the Ising model with the interaction matrix $J$ (with entries being $\beta_{uv}$'s and assumed to be positive semidefinite). 
	We know from recent works \cite{EKZ,AJKPV,KLR} that one can efficiently estimate all conditional marginal probabilities when $\norm{J}_2 < 1$ under any external fields. 
	There are two special features for this application. The first is that the marginal bounds could potentially be as small as $e^{-\Theta(\sqrt{n})}$. 
	The second is that we can only approximate the conditional marginal probabilities rather than get the exact values, and hence we should apply Theorem~\ref{thm:robust-alg-subc} instead of Theorem~\ref{thm:alg-subc}.
	With access to the $\subcora$, one can obtain a polynomial-time identity testing algorithm for this family of Ising models with sample complexity $O((n^{3/2}/\eps) \log^3(n/\eps))$ (note that $k=2$).
	\item Consider the monomer-dimer model (weighted matchings) on arbitrary (unbounded-degree) graphs. We know from the classical work \cite{JS} that one can approximate the conditional marginal distributions for all pinnings. 
	Similar to the previous example, the marginal probabilities can be exponentially small (in the number of vertices) and one can at the best approximate them efficiently rather than computing them exactly. Still, we can apply Theorem~\ref{thm:robust-alg-subc} to obtain an efficient identity testing algorithm with access to the $\subcora$ with sample complexity $O((mn/\eps) \log^3(n/\eps))$ where $m$ is the number of edges of the graph (which is the dimension) and $n$ is the number of vertices (note that $k=2$). 
\end{itemize}


\subsection{Estimating KL divergence with additive error}

With access to the $\subcora$, we can also estimate the KL divergence from an unknown distribution $\pi$ to a given distribution $\mu$ within an arbitrary additive error in polynomial time.
This corresponds to the \emph{tolerant identity testing} problem for KL divergence;
that is, given $s,\eps > 0$, we want to distinguish between $\kl{\pi}{\mu} \le s$ and $\kl{\pi}{\mu} \ge s+\eps$. 

We first consider estimating KL divergence for distributions on a finite domain of size $k$. 

\begin{lemma}
	\label{lem:approx-KL}
	Let $k \in \N^+$ be an integer, 
	and let $\eps> 0$, $\bb \in (0,1/2]$ be reals. 
	Given an FPRAS for a target distribution $q$ over domain $\QQ$ of size $k$ such that either $q(a) = 0$ or $q(a) \ge \bb$ for any $a \in \KK$, and given sample access to an unknown distribution $p \ll q$ over $\QQ$, there exists a polynomial-time algorithm that computes $\widehat{R}$ such that with probability at least $2/3$ it holds 
	\begin{equation}
	\left| \widehat{R} - \kl{p}{q} \right| \le \eps,
	\end{equation}
	with sample complexity 
	\[
	O\left( \frac{k}{\eps \log(k/\eps)} + \frac{\log^2(1/\bb)}{\eps^2} \right). 
	\]
\end{lemma}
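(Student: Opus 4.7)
The plan is to estimate $\kl{p}{q}$ by exploiting the decomposition
\[
\kl{p}{q} = -H(p) + \E_{X \sim p}\left[-\ln q(X)\right],
\]
where $H(p) = -\sum_{a \in \QQ} p(a) \ln p(a)$ is the Shannon entropy. I will estimate each of the two terms separately within additive error $\eps/2$ with success probability at least $5/6$, and sum the estimators. The failure probability of the combined procedure is then at most $1/3$ by a union bound, and the total sample complexity is the sum of the two.

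For the entropy term, I invoke the known optimal entropy estimator of Valiant--Valiant (with subsequent refinements by Wu--Yang, Jiao--Venkat--Han--Weissman) which, given $O(k/(\eps \log(k/\eps)))$ i.i.d.\ samples from a distribution on a domain of size $k$, outputs an estimate $\widehat{H}$ satisfying $|\widehat{H} - H(p)| \le \eps/2$ with probability at least $5/6$. Since $p \ll q$, the distribution $p$ is supported on $\{a \in \QQ : q(a) > 0\}$, a set of size at most $\min\{k,\, 1/\bb\}$, so this bound applies.

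For the second term, I use a Monte Carlo estimator. Set $N = \Theta(\log^2(1/\bb)/\eps^2)$ and draw independent samples $X_1,\ldots,X_N$ from $p$. For each $X_i$, run the FPRAS for $q$ with multiplicative approximation parameter $\xi = \eps/4$ and failure probability $1/(12N)$ to obtain $\hat q(X_i)$ with $|\ln \hat q(X_i) - \ln q(X_i)| \le \xi$. Output
\[
\widehat{R}_2 = \frac{1}{N} \sum_{i=1}^N \bigl(-\ln \hat q(X_i)\bigr).
\]
Since $p \ll q$ and every $a$ in the support of $p$ satisfies $q(a) \ge \bb$, the random variable $-\ln q(X)$ for $X \sim p$ lies in $[0, \ln(1/\bb)]$. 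Hoeffding's inequality then guarantees that $\frac{1}{N}\sum_i (-\ln q(X_i))$ concentrates around $\E_{X \sim p}[-\ln q(X)]$ within $\eps/4$ with probability at least $11/12$; union-bounded with the FPRAS events, the total deviation $|\widehat{R}_2 - \E_{X\sim p}[-\ln q(X)]|$ is at most $\eps/2$ with probability at least $5/6$.

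The main obstacle is the entropy estimation step: obtaining the $O(k/(\eps \log(k/\eps)))$ sample complexity is highly nontrivial and relies either on Valiant--Valiant's profile-based estimators or on polynomial-approximation estimators (Wu--Yang, Jiao et al.). The cleanest proof simply invokes one of these results as a black box; the remaining Monte Carlo and FPRAS bookkeeping is routine. A secondary issue is that the FPRAS only provides an approximation to $q(X_i)$, but choosing its accuracy parameter $\xi = \Theta(\eps)$ and failure probability $\Theta(1/N)$ absorbs the extra error into the $\eps/2$ budget for $\widehat R_2$ without affecting the claimed sample complexity.
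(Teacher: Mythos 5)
Your proposal is correct and takes essentially the same route as the paper: split $\kl{p}{q}$ into $\E_{X\sim p}[-\ln q(X)] - H(p)$, handle the entropy term with the Valiant--Valiant estimator (\cref{lem:ent-est}), and handle the cross term by a Monte Carlo average of $-\ln\hat q(X_i)$ with an FPRAS-provided $\hat q$ and a Hoeffding bound (\cref{lem:kl-1st-term}). One small caveat: the entropy estimator actually requires $O\bigl(k/(\eps\log(k/\eps)) + \log^2 k/\eps^2\bigr)$ samples, not just $O\bigl(k/(\eps\log(k/\eps))\bigr)$ as you state; the paper restores the claimed final complexity by noting $k\le 1/b$ (so $\log^2 k \le \log^2(1/b)$, and the extra term is absorbed), a step you should make explicit.
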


For a distribution $p$ over a finite domain $\QQ$, the (Shannon) entropy of $p$ is defined as
\[
H(p) = \sum_{a \in \QQ} p(a) \ln \left( \frac{1}{p(a)} \right). 
\]
Observe that if $p \ll q$ are two distributions over $\QQ$, then
\begin{equation}\label{eq:kl-two-terms}
\kl{p}{q} = \sum_{a \in \QQ} p(a) \ln\left( \frac{1}{q(a)} \right) - \sum_{a \in \QQ} p(a) \ln\left( \frac{1}{p(a)} \right)
= \E_{a \sim p} \left[ \ln\left( \frac{1}{q(a)} \right) \right] - H(p).
\end{equation}
It suffices to estimate the two terms on the right-hand side of \eqref{eq:kl-two-terms} respectively with good enough accuracy. 

We need the following well-known result from \cite{VV17} for estimating the entropy of an unknown distribution from samples, see also \cite{VV11,JVHW15,WY16}. 

\begin{lemma}[\cite{VV17}]
	\label{lem:ent-est}
	Let $k \in \N^+$ be an integer, 
	and let $\eps> 0$ be a real. 
	Given sample access to an unknown distribution $p$ over domain $\QQ$ of size $k$, there exists a polynomial-time algorithm that computes $\widehat{H}$ such that with probability at least $9/10$ it holds 
	\begin{equation}
	\left| \widehat{H} - H(p) \right| \le \eps,
	\end{equation}
	with sample complexity 
	\[
	O\left( \frac{k}{\eps \log(k/\eps)} + \frac{\log^2k}{\eps^2} \right). 
	\]
\end{lemma}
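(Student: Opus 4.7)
The plan is to follow the polynomial-approximation approach to entropy estimation developed in VV11, VV17, JVHW15, and WY16. The naive plug-in estimator $\widehat{H}_{\mathrm{plug}} = \sum_{a \in \QQ} -\hat{p}(a) \ln \hat{p}(a)$ computed from $m$ i.i.d. samples has bias of order $\Theta(k/m)$, which on its own only gives $O(k/\eps)$ sample complexity and is off by the crucial $\log(k/\eps)$ factor. The main idea is to separate the contributions of ``light'' and ``heavy'' elements of the alphabet and treat them differently, exploiting polynomial approximation to reduce the sample cost for the light elements (which are precisely the bottleneck for the plug-in estimator).

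First I would fix a threshold $\tau = c \log(k/\eps) / m$ for a small constant $c > 0$, and call $a \in \QQ$ \emph{heavy} if $\hat{p}(a) \ge \tau$ and \emph{light} otherwise. A standard Bernstein/Chernoff argument shows that with high probability $p(a)$ and $\hat{p}(a)$ agree up to a constant factor on both sides of $\tau$, so light and heavy elements can be treated separately. For the contribution $\sum_{a \text{ heavy}} -p(a)\ln p(a)$, the plug-in estimator together with the Miller--Madow correction $+\,(k_{\mathrm{heavy}}-1)/(2m)$ has bias $O(1/(m\tau)) = O(1/\log(k/\eps))$ per element, which sums to an $\eps/2$ bias over at most $O(1/\tau)$ heavy elements; its variance contributes the $\log^2(k/\eps)/\eps^2$ term once one invokes Efron--Stein or a bounded-differences inequality on $-x \ln x$.

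The decisive step is estimating the contribution of the light elements. On the interval $[0, 2\tau]$ the function $f(x) = -x \ln x$ admits a best uniform polynomial approximation $P_d(x) = \sum_{j=0}^{d} c_j x^j$ of degree $d = \Theta(\log(k/\eps))$ with sup-norm error $O(\tau / d)$; this is the standard Chebyshev-type bound for approximating $x\ln x$ near the origin, and it is what converts a $k$ into a $k/\log(k/\eps)$ in the sample complexity. Each monomial $p(a)^j$ admits an unbiased ``factorial moment'' estimator in terms of the observed counts (e.g.\ $\binom{N_a}{j}/\binom{m}{j}$), so I would form $\widehat{f}(a) = \sum_j c_j \binom{N_a}{j}/\binom{m}{j}$ restricted to the light regime and sum over light $a$. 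The resulting sum has bias $O(k \tau / d) = O(k/m)$ collectively; setting $m = \Theta(k/(\eps \log(k/\eps)))$ makes this at most $\eps/2$.

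The main obstacle, and where most of the technical work concentrates, is controlling the variance of the polynomial-based estimator. The coefficients $c_j$ of the best polynomial approximation grow like $e^{\Theta(d)}$, but a careful analysis using the Chebyshev expansion and the fact that on $[0, 2\tau]$ the monomial estimators have variance $O(\tau^j / m)$ shows that the total variance is bounded by a polynomial in $d$ times $\tau^2/m$ per element, giving the $\log^2(k/\eps)/\eps^2$ additive term. Finally, returning $\widehat{H}$ as the sum of the heavy-plug-in and light-polynomial estimates, and combining the two bias and variance bounds via Chebyshev's inequality, yields $|\widehat{H} - H(p)| \le \eps$ with probability at least $9/10$ at the claimed sample complexity; the $k=2$ type technicalities do not arise here since $k$ can be arbitrary.
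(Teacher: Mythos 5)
You should note first that the paper does not prove \cref{lem:ent-est} at all: it is imported as a black box from \cite{VV17} (see also \cite{VV11,JVHW15,WY16}), so there is no in-paper argument to compare against. What you wrote is a sketch of the literature proof, and in fact of the polynomial-approximation route of \cite{JVHW15,WY16} rather than the unseen/linear-programming estimator of \cite{VV17}; reproving the lemma is unnecessary for this paper, but since you attempt it, the sketch should be self-consistent, and it is not.

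The decisive step --- the bias of the light-element estimator --- does not close with the approximation rate you quote. You take a degree $d = \Theta(\log(k/\eps))$ best uniform approximation of $-x\ln x$ on $[0,2\tau]$ with error $O(\tau/d)$, conclude a total light-part bias of $O(k\tau/d) = O(k/m)$, and then assert that $m = \Theta\bigl(k/(\eps\log(k/\eps))\bigr)$ makes this at most $\eps/2$. But with that choice $k/m = \Theta(\eps\log(k/\eps))$, which exceeds $\eps$ by exactly the logarithmic factor the lemma is supposed to save; with an $O(\tau/d)$ rate the polynomial step buys nothing over the plug-in estimator. What makes the argument work in \cite{JVHW15,WY16} is the quadratic rate: the best degree-$d$ polynomial approximation of $y\ln y$ on $[0,1]$ has uniform error $\Theta(1/d^2)$, so after rescaling ($x = \tau y$, $x\ln x = x\ln\tau + \tau\, y\ln y$) the error on $[0,2\tau]$ is $O(\tau/d^2)$, the total light bias is $O\bigl(k\tau/d^2\bigr) = O\bigl(k/(m\log(k/\eps))\bigr)$, and this is $O(\eps)$ at the claimed sample size. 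The missing ingredient is precisely this $1/d^2$ Chebyshev/Jackson-type rate; as written, your bound is a $\log(k/\eps)$ factor short. Secondarily, the heavy-part bias claim (``$O(1/(m\tau))$ per element'' after the Miller--Madow correction) is not the correct per-element residual, and the variance control for the polynomial estimator --- whose coefficients grow like $2^{\Theta(d)}$ and which must be combined with the restriction to small empirical counts --- is asserted rather than argued, even though that is where most of the technical work in \cite{JVHW15,WY16} lies. Those two points are acceptable to leave as pointers in a sketch, but the approximation-rate slip is a genuine gap in the argument as written.
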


For the fist term in \eqref{eq:kl-two-terms}, we show the following estimator.
\begin{lemma}\label{lem:kl-1st-term}
	Let $k \in \N^+$ be an integer, 
	and let $\eps> 0$, $\bb \in (0,1/2]$ be reals. 
	Given an FPRAS for a target distribution $q$ over domain $\QQ$ of size $k$ such that either $q(a) = 0$ or $q(a) \ge \bb$ for any $a \in \KK$, and given sample access to an unknown distribution $p \ll q$ over $\QQ$, there exists a polynomial-time algorithm that computes $\widehat{G}$ such that with probability at least $4/5$ it holds 
	\begin{equation}\label{eq:widehat-G}
	\left| \widehat{G} - \E_{a \sim p} \left[ \ln\left( \frac{1}{q(a)} \right) \right] \right| \le \eps,
	\end{equation}
	with sample complexity 
	\[
	O\left( \frac{\ln^2(1/b)}{\eps^2} \right). 
	\]
\end{lemma}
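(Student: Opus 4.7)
The plan is to use the natural empirical mean estimator combined with the FPRAS to approximate $q$ pointwise at the sampled points. Specifically, I will draw $m = O(\ln^2(1/\bb)/\eps^2)$ independent samples $a_1, \dots, a_m$ from $p$, invoke the FPRAS to obtain an approximation $\hat{q}(a_i)$ for each $q(a_i)$, and then output
\[
\widehat{G} = \frac{1}{m} \sum_{i=1}^m \ln\left(\frac{1}{\hat{q}(a_i)}\right).
\]

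The analysis splits the total error into a statistical (sampling) part and a numerical (FPRAS) part, each of which should contribute at most $\eps/2$. For the statistical part, I would look at the idealized estimator $G_0 = \frac{1}{m}\sum_{i=1}^m \ln(1/q(a_i))$, whose mean is exactly $\E_{a \sim p}[\ln(1/q(a))]$. Since $p \ll q$ and either $q(a) = 0$ or $q(a) \ge \bb$, every term $\ln(1/q(a_i))$ lies in $[0, \ln(1/\bb)]$ almost surely. Hoeffding's inequality then gives
\[
\Pr\!\left( |G_0 - \E G_0| \ge \eps/2 \right) \le 2 \exp\!\left( -\frac{m\eps^2}{2\ln^2(1/\bb)} \right),
\]
which is at most $1/10$ once $m = C\ln^2(1/\bb)/\eps^2$ for a suitable constant $C$. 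For the numerical part, I would call the FPRAS on each distinct $a_i$ (at most $m$ calls) with multiplicative accuracy parameter $\eps/2$ and failure probability $1/(10m)$; by a union bound, with probability at least $9/10$ we simultaneously have $|\ln \hat{q}(a_i) - \ln q(a_i)| \le \eps/2$ for all $i$, and hence $|\widehat{G} - G_0| \le \eps/2$.

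Combining the two estimates via a union bound yields $|\widehat{G} - \E_{a \sim p}[\ln(1/q(a))]| \le \eps$ with probability at least $4/5$. The running time is polynomial since each FPRAS call is polynomial in $k$, $1/\eps$, $\log m$ and the description size of $q$, and we make at most $m = O(\ln^2(1/\bb)/\eps^2)$ such calls.

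I do not expect any substantial obstacle here: the main ingredients are the boundedness $\ln(1/q(a)) \le \ln(1/\bb)$ on the support of $p$, which makes Hoeffding directly applicable, and the fact that the FPRAS is multiplicative so that its logarithmic error is additive in exactly the form needed for \cref{eq:widehat-G}. The only minor subtlety worth flagging is ensuring $p \ll q$ so that samples from $p$ fall in $\{a : q(a) \ge \bb\}$ almost surely, and that the FPRAS guarantee $\hat{q}(a) = 0$ whenever $q(a) = 0$ means no division-by-zero can occur on the relevant atoms.
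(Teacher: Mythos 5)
Your proof is correct and takes essentially the same approach as the paper: the same estimator $\widehat{G} = \frac{1}{m}\sum \ln(1/\hat{q}(a_i))$, the same decomposition into an FPRAS error term (bounded deterministically by $\eps/2$ on the event that the multiplicative approximation holds) and a Hoeffding term (using $\ln(1/q(a_i)) \in [0, \ln(1/\bb)]$ via $p \ll q$), and the same union bound. The only cosmetic difference is that the paper invokes the FPRAS once with failure probability $1/10$ to obtain a global approximation $\hat{q}$, whereas you invoke it per sampled point with failure probability $1/(10m)$; both yield the same sample complexity and success probability.
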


\begin{proof}
	Compute an approximation $\hat{q}$ of $q$ such that, with probability $9/10$, we have that $e^{-\eps/2} \le \hat{q}(a)/q(a) \le e^{\eps/2}$ for every $a \in \QQ$ with $q(a) > 0$, and $\hat{q}(a) = 0$ for $q(a) = 0$. 
	Generate $m$ independent samples from $p$, denoted by $a_1,\dots,a_m$. 
	Then our estimator is defined as
	\[
	\widehat{G} = \frac{1}{m} \sum_{j=1}^m \ln\left( \frac{1}{\hat{q}(a_j)} \right).
	\]
	We will show that $\widehat{G}$ satisfies \eqref{eq:widehat-G} with probability at least $4/5$ for
	\[
	m = \ceil{\frac{8 \ln^2(1/b)}{\eps^2}}.
	\]
	
	Observe that
	\begin{align}
	&\left| \widehat{G} - \E_{a \sim p} \left[ \ln\left( \frac{1}{q(a)} \right) \right] \right| \nonumber\\
	\le{}& \left| \frac{1}{m} \sum_{j=1}^m \ln\left( \frac{1}{\hat{q}(a_j)} \right) - \frac{1}{m} \sum_{j=1}^m \ln\left( \frac{1}{q(a_j)} \right) \right|
	+ \left| \frac{1}{m} \sum_{j=1}^m \ln\left( \frac{1}{q(a_j)} \right) - \E_{a \sim p} \left[ \ln\left( \frac{1}{q(a)} \right) \right] \right|
	\label{eq:wideG-error}
	\end{align}
	Assuming $\hat{q}$ is an $(\eps/2)$-approximation of $q$, we can upper bound the first term in \eqref{eq:wideG-error} by
	\[
	\frac{1}{m} \sum_{j=1}^m \left| \ln\left( \frac{\hat{q}(a_j)}{q(a_j)} \right) \right| \le \frac{1}{m} \sum_{j=1}^m \frac{\eps}{2} = \frac{\eps}{2}. 
	\]
	Meanwhile, for the second term in \eqref{eq:wideG-error}, since $0 \le \ln(1/q(a)) \le \ln(1/b)$ for all $a \in \QQ$ with $q(a) > 0$ and since $p\ll q$,
	we deduce from Hoeffding's inequality that
	\begin{align*}
	\Pr\left( \left| \frac{1}{m} \sum_{j=1}^m \ln\left( \frac{1}{q(a_j)} \right) - \E_{a \sim p} \left[ \ln\left( \frac{1}{q(a)} \right) \right] \right| \ge \frac{\eps}{2} \right) 
	\le 2 \exp\left( -\frac{\eps^2 m}{2\ln^2(1/b)} \right) \le \frac{1}{10},
	\end{align*}
	provided 
	$
	m \ge (8/\eps^2) \ln^2(1/b).
	$
	Therefore, we deduce from \eqref{eq:wideG-error} that
	\[
	\left| \widehat{G} - \E_{a \sim p} \left[ \ln\left( \frac{1}{q(a)} \right) \right] \right|
	\le \frac{\eps}{2} + \frac{\eps}{2} = \eps
	\]
	with failure probability at most $1/10 + 1/10 = 1/5$ by the union bound, as wanted.
\end{proof}

Lemma \ref{lem:approx-KL} then follows easily from Lemma \ref{lem:ent-est} and Lemma \ref{lem:kl-1st-term}. 

\begin{proof}[Proof Lemma \ref{lem:approx-KL}]
	Since the sample complexity upper bound we want to show is monotone increasing in $k$, we can safely assume without loss of generality that $q$ is fully supported on $\QQ$, i.e., $q(a) \ge \bb$ for each $a \in \QQ$. 
	In particular, this implies that $b \le 1/k$. 
	Take $\widehat{G}$ from Lemma \ref{lem:kl-1st-term} and $\widehat{H}$ from Lemma \ref{lem:ent-est}, and let $\widehat{R} = \widehat{G} - \widehat{H}$.
	We then deduce from \eqref{eq:kl-two-terms} that 
	\[
	\left| \widehat{R} - \kl{p}{q} \right| \le 
	\left| \widehat{G} - \E_{a \sim p} \left[ \ln\left( \frac{1}{q(a)} \right) \right] \right|
	+ \left| \widehat{H} - H(p) \right|
	\le \frac{\eps}{2} + \frac{\eps}{2} = \eps,
	\]
	which fails with probability at most $1/5 + 1/10 = 3/10$ by the union bound. 
	The running time is polynomial in all parameters and depends on the given FPRAS for $q$. 
	The sample complexity is given by
	\[
	O\left( \frac{k}{\eps \log(k/\eps)} + \frac{\log^2k}{\eps^2} \right) + O\left( \frac{\log^2(1/\bb)}{\eps^2} \right)
	= O\left( \frac{k}{\eps \log(k/\eps)} + \frac{\log^2(1/\bb)}{\eps^2} \right), 
	\]
	since we have $k \le 1/b$. 
\end{proof}

We now give our main theorem for estimating KL divergence with $\subcora$ access. 

\begin{theorem}
	\label{thm:kl-estimate-subc}
	Let $k = k(n)$ be an integer and let $\bb = \bb(n) \in (0,1/2]$ be a real.
	Suppose that $k = O(n)$ and $\log\log(1/\bb) = O(\log n)$.
	Given a visible distribution $\mu$ over $\QQ^n$ that is $b$-marginally bounded, and given access to $\subcora$ for a hidden distribution $\pi \ll \mu$ over $\QQ^n$, 
	there is an algorithm that for any $\eps > 0$ computes $\widehat{S}$ such that with probability at least $2/3$ it holds 
	\begin{equation}
	\left| \widehat{S} - \kl{\pi}{\mu} \right| \le \eps.
	\end{equation}
	The query complexity of the algorithm is 
	\[
	O\left( \log^4\Big(\frac{1}{\bb}\Big) \cdot \frac{n^4}{\eps^4}\log\left(\frac{n}{\eps}\right) \right). 
	\]
	The running time of the algorithm is polynomial in all parameters assuming that there is an FPRAS for the conditional marginal distributions $\mu_i(\cdot \mid x)$ for any $i \in [n]$ and any feasible $x \in \QQ^{[i-1]}$. 
\end{theorem}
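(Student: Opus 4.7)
The plan is to estimate $\kl{\pi}{\mu}$ through its coordinate-by-coordinate chain rule decomposition (equation~\ref{eq:ent-factorization}):
\[
\kl{\pi}{\mu} = n \cdot \E_{(i,x)}\!\left[\kl{p^x_i}{q^x_i}\right], \qquad i \sim \mathrm{Unif}([n]),\ x \sim \pi_{[i-1]},
\]
where $p^x_i = \pi_i(\cdot\mid x)$ and $q^x_i = \mu_i(\cdot\mid x)$. The strategy is Monte Carlo: draw many random pairs $(i_j,x_j)$, use the tolerant KL estimator of \cref{lem:approx-KL} to approximate each $\kl{p^{x_j}_{i_j}}{q^{x_j}_{i_j}}$ to within $\eps/(2n)$, average the estimates, and scale by $n$.

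Concretely, first sample $m = \Theta(n^2\log^2(1/b)/\eps^2)$ pairs $(i_j,x_j)$ by picking $i_j$ uniformly in $[n]$ and then calling $\stanora$ (the special case of $\subcora$ with $\Lambda=\emptyset$) to obtain a full sample from $\pi$, whose first $i_j-1$ coordinates serve as $x_j$. Second, for each $j$, invoke \cref{lem:approx-KL} on the pair $(p^{x_j}_{i_j}, q^{x_j}_{i_j})$ with distance parameter $\eps/(2n)$, alphabet size $k=O(n)$, and failure probability $1/(6m)$ (boosted from $1/3$ via the median-of-means trick, costing an $O(\log m) = O(\log(n/\eps))$ factor under the assumption $\log\log(1/b) = O(\log n)$). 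This is legal because (a) the hypothesis on $\mu$ supplies an FPRAS for $q^{x_j}_{i_j}$, (b) $\mu$ being $b$-marginally bounded guarantees $q^{x_j}_{i_j}(a)\in\{0\}\cup[b,1]$, (c) samples from $p^{x_j}_{i_j}$ are obtained by calling $\subcora$ with $\Lambda = [i_j-1]$ and configuration $x_j$ and reading coordinate $i_j$ of the output, and (d) $\pi\ll\mu$ forces $p^{x_j}_{i_j}\ll q^{x_j}_{i_j}$ for every $x_j$ in the support of $\pi_{[i_j-1]}$. Finally, output $\widehat{S} = \tfrac{n}{m}\sum_{j=1}^m \hat R_j$.

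For the error analysis, two sources are combined. On the one hand, a union bound over the $m$ invocations guarantees $|\hat R_j - \kl{p^{x_j}_{i_j}}{q^{x_j}_{i_j}}| \le \eps/(2n)$ simultaneously for all $j$ with probability at least $5/6$. On the other hand, since $b$-marginal boundedness implies $\kl{p^x_i}{q^x_i} \le \ln(1/b)$ uniformly, Hoeffding's inequality gives that the sample mean concentrates within $\eps/(2n)$ of $\E_{(i,x)}[\kl{p^x_i}{q^x_i}]$ with probability at least $5/6$ as soon as $m \gtrsim n^2\log^2(1/b)/\eps^2$. Multiplying the combined $\eps/n$ error by the factor $n$ yields $|\widehat S - \kl{\pi}{\mu}| \le \eps$ with probability at least $2/3$. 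The per-pair cost from \cref{lem:approx-KL} is $O(n^2 \log^2(1/b)\log(n/\eps)/\eps^2)$ queries to $\subcora$ (the $\log^2(1/b)/\eps^2$ term dominating $k/(\eps \log(k/\eps))$ after the $\eps\mapsto \eps/n$, $k\mapsto O(n)$ substitutions), and multiplying by $m$ yields the claimed $\widetilde O(n^4/\eps^4)$ bound.

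The steps are individually routine; the only mild subtlety is bookkeeping the failure probabilities so that a single union bound closes the argument without inflating the query complexity, which is why $\log\log(1/b) = O(\log n)$ is needed to absorb the amplification factor into the final $\log(n/\eps)$. No deeper obstacle is anticipated, since neither spectral independence nor approximate tensorization of entropy is invoked; the $\subcora$ lets us localize to one-coordinate subproblems, to which the classical-style tolerant tester \cref{lem:approx-KL} applies directly.
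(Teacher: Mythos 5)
Your proposal is correct and follows essentially the same route as the paper's proof: both decompose $\kl{\pi}{\mu}$ via the chain-rule identity~\eqref{eq:ent-factorization}, draw $\Theta(n^2\log^2(1/b)/\eps^2)$ random pairs $(i,x)$ using $\stanora$, invoke \cref{lem:approx-KL} on each induced one-coordinate subproblem with distance parameter $\eps/(2n)$ and amplified failure probability, and combine a union bound with Hoeffding's inequality before rescaling by $n$. The bookkeeping of the amplification factor and the final query-complexity arithmetic also agree with the paper.
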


\begin{proof}
	From \eqref{eq:ent-factorization} we observe that
	\[
	\kl{\pi}{\mu} = n \; \E_{(i,x)} \left[ \kl{\pi_i(\cdot \mid x)}{\mu_i(\cdot \mid x)} \right],
	\]
	where $(i,x)$ is a random pair generated by taking a uniformly random coordinate $i \in [n]$ and sampling $x \in \QQ^{[i-1]}$ from the marginal of $\pi$ on the first $i-1$ coordinates. 
	Hence, it suffices to estimate $\E_{(i,x)} \left[ \kl{\pi_i(\cdot \mid x)}{\mu_i(\cdot \mid x)} \right]$ with additive error $\eps/n$. 
	
	Let $(i_1,x_1), \dots, (i_L,x_L)$ be $L$ independent random pairs generated via the $\stanora$ (which is contained in the power of $\subcora$), where we define
	\[
	L = \ceil{\frac{8n^2\ln^2(1/b)}{\eps^2}}. 
	\] 
	For $1 \le \ell \le L$, we let 
	$$ R_\ell = \kl{\pi_{i_\ell}(\cdot \mid x_\ell)}{\mu_{i_\ell}(\cdot \mid x_\ell)}. $$ 
	Furthermore, for each $\ell$ let $\widehat{R}_\ell$ be an estimate of $R_\ell$ which is obtained from Lemma \ref{lem:approx-KL} via the $\subcora$, such that
	\[
	\Pr\left( \left| \widehat{R}_\ell - R_\ell \right| \ge \frac{\eps}{2n} \right) \le \frac{1}{10L}.
	\]
	Then, by the union bound we have
	\[
	\Pr\left( \left| \frac{1}{L} \sum_{\ell=1}^L \widehat{R}_\ell - \frac{1}{L} \sum_{\ell=1}^L R_\ell \right| \ge \frac{\eps}{2n} \right) 
	\le \sum_{\ell=1}^L \Pr\left( \left| \widehat{R}_\ell - R_\ell \right| \ge \frac{\eps}{2n} \right) 
	\le L \cdot \frac{1}{10L} = \frac{1}{10}.
	\]
	Note that using the standard amplification technique for the failure probability, the number of samples we need for each $\ell$ is 
	\[
	O\left( \frac{kn}{\eps \log(kn/\eps)} + \frac{n^2\log^2(1/\bb)}{\eps^2} \right) \cdot O\left(\log L\right)
	= O\left( \log^2\Big(\frac{1}{\bb}\Big) \cdot \frac{n^2}{\eps^2}\log\left(\frac{n}{\eps}\right) \right), 
	\]
	where we use the assumptions $k=O(n)$ and $\log\log(1/b) = O(\log n)$.  
	
	Meanwhile, we observe $0 \le \kl{\pi_i(\cdot \mid x)}{\mu_i(\cdot \mid x)} \le \ln(1/b)$ for any feasible pair $(i,x)$ since $\mu$ is $\bb$-marginally bounded and $\pi \ll \mu$. 
	Hence, Hoeffding's inequality implies that
	\[
	\Pr\left( \left| \frac{1}{L} \sum_{\ell=1}^L R_\ell - \E_{(i,x)} \left[ \kl{\pi_i(\cdot \mid x)}{\mu_i(\cdot \mid x)} \right] \right| \ge \frac{\eps}{2n} \right) 
	\le 2 \exp\left( - \frac{\eps^2 L}{2n^2\ln^2(1/b)} \right) \le \frac{1}{10},
	\]
	provided
	$
	L \ge (8n^2/\eps^2) \ln^2(1/b). 
	$
	
	Therefore, by letting our estimator to be
	\[
	\widehat{S} = \frac{n}{L} \sum_{\ell=1}^L \widehat{R}_\ell,
	\]
	we deduce that
	\begin{align*}
	| \widehat{S} - & \kl{\pi}{\mu} | \\
	&\le n \left| \frac{1}{L} \sum_{\ell=1}^L \widehat{R}_\ell - \frac{1}{L} \sum_{\ell=1}^L R_\ell \right| 
	+ n \left| \frac{1}{L} \sum_{\ell=1}^L R_\ell - \E_{(i,x)} \left[ \kl{\pi_i(\cdot \mid x)}{\mu_i(\cdot \mid x)} \right] \right| \\
	&\le n \cdot \frac{\eps}{2n} + n \cdot \frac{\eps}{2n} = \eps,
	\end{align*}
	with failure probability at most $1/10+1/10 = 1/5$ by the union bound. 
	Finally, the query complexity is given by
	\[
	O\left( \log^2\Big(\frac{1}{\bb}\Big) \cdot \frac{n^2}{\eps^2}\log\left(\frac{n}{\eps}\right) \right) \cdot L
	= O\left( \log^4\Big(\frac{1}{\bb}\Big) \cdot \frac{n^4}{\eps^4}\log\left(\frac{n}{\eps}\right) \right), 
	\]
	as claimed.
\end{proof}

We remark that Theorem~\ref{thm:kl-estimate-subc} is applicable to all the examples mentioned in Sections~\ref{subsec:subc-exact} and \ref{subsec:subc-approx}.
See also Remark~\ref{rmk:marginal-bound} on relaxing the marginal boundedness condition.

\section{Conclusion and Open Problems}
In this paper we give efficient algorithms for identity testing for the $\coorora$ model, and also establish matching computational hardness and information-theoretical lower bounds.
Our algorithmic result builds on the fact that the visible distribution satisfies approximate tensorization of entropy. 
While we show that for the antiferromagnetic Ising model, there is no polynomial-time identity testing algorithm when approximate tensorization fails, it is in general unclear if one can get a testing algorithm running in polynomial time without approximate tensorization, using either $\coorora$ or $\subcora$ in addition to $\stanora$. 
One important example is the ferromagnetic Ising model at all temperatures. 
We know that approximate tensorization fails at low temperature (large $\beta$) since the Glauber dynamics has exponential mixing time. 
We do not know whether an efficient identity testing algorithm exists or not even with access to the more powerful $\subcora$.
Note that our Theorem~\ref{thm:alg-subc} does not apply to ferromagnetic Ising models since we cannot estimate conditional marginal probabilities under an arbitrary pinning (corresponding to ferromagnetic Ising models with inconsistent local fields).
Another important example is mixtures of product distributions. 
It is easy to show that approximate tensorization could fail even for a mixture of two product distributions with equal weights. 
We know from Theorem~\ref{thm:alg-subc} that there is an efficient identity testing algorithm for the family of mixtures of polynomially many balanced product distributions given access to the $\subcora$. 
It is unclear to us, however, that if there is a polynomial-time testing algorithm using only the weaker $\coorora$.

\bigskip\noindent\textbf{Acknowledgements.} 
AB research was supported by NSF grant CCF-2143762. EV was supported by NSF grant CCF-2147094.

\bibliographystyle{alpha}
\bibliography{testing}

\end{document}